\title{A Decentralized Analysis of Multiparty  Protocols}
\author{Bas van den Heuvel and Jorge A. P\'{e}rez}
\tikzstyle{Process}=[fill=none, draw=none, shape=circle, inner sep=-.5]
\tikzstyle{block}=[fill=none, draw=black, shape=rectangle, rounded corners=.5mm, inner sep=.5mm]
\tikzstyle{crt}=[-, draw=cbBlue, line width=.8pt]
\tikzstyle{ci}=[-, draw=cbRed, line width=.8pt]
\tikzstyle{thick}=[-, line width=1pt]
\tikzstyle{thicker}=[-, line width=1.2pt]
\tikzset{grid/.style={gray,very thin,opacity=.3}}
\newcommand{\lipicsEnd}{}
\definecolor{cblRed}{RGB}{154,0,3}
\definecolor{cblBlue}{RGB}{41,59,107}
\definecolor{cblBlueLt}{RGB}{0,144,156}
\definecolor{cblYellow}{RGB}{255,176,0}
\definecolor{cblPink}{RGB}{216,0,106}
\definecolor{cblPurple}{RGB}{25,0,138}
\definecolor{cblPurpleLt}{RGB}{193,99,228}
\definecolor{cblOrange}{RGB}{189,74,0}
\definecolor{cblGreen}{RGB}{47,80,18}
\newcommand{\todid}[1]{\ignorespaces}
\newcommand{\tsrep}[1]{\ignorespaces}
\let\cref@old@stepcounter\stepcounter
\def\stepcounter#1{%
  \cref@old@stepcounter{#1}%
  \cref@constructprefix{#1}{\cref@result}%
  \@ifundefined{cref@#1@alias}%
    {\def\@tempa{#1}}%
    {\def\@tempa{\csname cref@#1@alias\endcsname}}%
  \protected@edef\cref@currentlabel{%
    [\@tempa][\arabic{#1}][\cref@result]%
    \csname p@#1\endcsname\csname the#1\endcsname}}
\crefname{line}{line}{lines}
\Crefname{line}{Line}{Lines}
\newtheorem{theorem}{Theorem}
\newtheorem{lemma}[theorem]{Lemma}
\newtheorem{proposition}[theorem]{Proposition}
\newtheorem{definition}{Definition}
\newtheorem{example}{Example}
\newtheorem*{example*}{Example}
\newtheorem{notation}{Notation}
\newcommand{\defref}[1]{Def.~\labelcref{#1}}
\newcommand{\secref}[1]{\S\,\labelcref{#1}}
\newcommand{\figref}[1]{Fig.~\labelcref{#1}}
\newcommand{\appref}[1]{App.~\labelcref{#1}}
\newcommand{\thex}{{\arabic{example}}}
\newcommand{\etal}{\emph{et al.}\xspace}
\newcommand{\rott}[1]{\mathpalette\rot{#1}}
\newcommand{\rot}[2]{\rotatebox[origin=c]{180}{$#1{#2}$}}
\newcommand{\ol}{\overline}
\newcommand{\sff}[1]{\relax\ifmmode\mathsf{#1}\else\textsf{#1}\fi}
\newcommand{\mathsc}[1]{\text{\normalfont\scshape#1}}
\newcommand{\scc}[1]{\relax\ifmmode\mathsc{#1}\else\textsc{#1}\fi}
\newcommand{\mbb}[1]{\mathbb{#1}}
\newcommand{\mcl}[1]{\mathcal{#1}}
\newenvironment{wfit}{\begin{varwidth}{\textwidth}}{\end{varwidth}}
\let\obigcap\bigcap
\renewcommand{\bigcap}{{\mathchoice{\textstyle}{}{}{}\obigcap}}
\let\obigcup\bigcup
\renewcommand{\bigcup}{{\mathchoice{\textstyle}{}{}{}\obigcup}}
\let\oprod\prod
\renewcommand{\prod}{{\mathchoice{\textstyle}{}{}{}\oprod}}
\newcommand{\sepr}{\; \mbox{\large{$\mid$}}\;}
\newcommand{\<}{\langle}
\renewcommand{\>}{\rangle}
\let\onu\nu
\renewcommand{\nu}[1]{(\bm{\onu} #1)}
\let\oparallel\|
\renewcommand{\|}{\mathbin{|}}
\newcommand{\fwd}{\mathbin{\leftrightarrow}}
\newcommand{\0}{\bm{0}}
\newcommand{\call}[1]{{\langle #1 \rangle}}
\newcommand{\subst}[1]{\{#1\}}
\newcommand{\puts}{\mathbin{\triangleleft}}
\renewcommand{\gets}{\mathbin{\triangleright}}
\newcommand{\redd}{\longrightarrow}
\newcommand{\xredd}[1]{\mathbin{\textcolor{cblPurpleLt}{\xrightharpoondown{\textcolor{black}{#1}}}}}
\newcommand{\nredd}{{\centernot\longrightarrow}}
\newcommand{\rred}[1]{\shortrightarrow_{#1}}
\newcommand{\brred}[1]{\beta_{#1}}
\newcommand{\krred}[1]{\kappa_{#1}}
\newcommand{\tensor}{\mathbin{\otimes}}
\newcommand{\parr}{\mathbin{\rott{\&}}}
\let\ooplus\oplus
\renewcommand{\oplus}{{\ooplus}}
\newcommand{\pri}{\sff{o}}
\newcommand{\opri}{\kappa}
\newcommand{\lift}[1]{{\uparrow^{#1}}}
\newcommand{\dual}[1]{\ol{#1}}
\newcommand{\fn}{\mathrm{fn}}
\newcommand{\bn}{\mathrm{bn}}
\newcommand{\pr}{\sff{pr}}
\newcommand{\dom}{\sff{dom}}
\newcommand{\frv}{\mathrm{frv}}
\newcommand{\an}{\mathrm{an}}
\newcommand{\live}{\mathrm{live}}
\newcommand{\infAx}[2]{\AxiomC{} \RightLabel{#2} \UnaryInfC{#1}}
\newcommand{\infAss}[1]{\AxiomC{#1}}
\newcommand{\infUn}[2]{\RightLabel{#2} \UnaryInfC{#1}}
\newcommand{\infBin}[2]{\RightLabel{#2} \BinaryInfC{#1}}
\newcommand{\infTer}[2]{\RightLabel{#2} \TrinaryInfC{#1}}
\newsavebox{\topprooftreebox}
\newlength{\topprooftreewidth}
\NewDocumentEnvironment{topprooftree}{m}%
    {\begin{lrbox}{\topprooftreebox}\ignorespaces}%
    {\DisplayProof\end{lrbox}\begin{center}\settowidth{\topprooftreewidth}{\topprooftreebox}\makebox[\topprooftreewidth]{\minibox{{#1}\\\usebox{\topprooftreebox}}}\end{center}}
\newcommand{\gend}{\sff{end}}
\newcommand{\gskip}{\sff{skip}}
\newcommand{\mto}{\mathbin{\twoheadrightarrow}}
\newcommand{\snd}{{!}}
\newcommand{\rcv}{{?}}
\newcommand{\sdot}{\mathbin{.}}
\renewcommand{\merge}{\sqcup}
\newcommand{\wrt}{\mathbin{\textcolor{cblPurple}{\upharpoonright}}}
\newcommand{\onto}{\mathbin{\textcolor{cblPink}{\downharpoonright}}}
\newcommand{\under}{\mathbin{\upharpoonright}}
\renewcommand{\part}{\sff{prt}}
\newcommand{\deps}{\sff{deps}}
\newcommand{\dep}{\sff{depon}}
\newcommand{\detdep}{\mathrm{ddep}}
\newcommand{\hdep}{\mathrm{hdep}}
\newcommand{\error}[1]{{\mathchoice{\textstyle}{}{}{}\mathsf{alarm}({#1})}}
\newcommand{\relprop}[4]{{\textcolor{cblBlueLt}{\llparenthesis}\mkern-1mu #4 \textcolor{cblBlueLt}{\rrparenthesis}}_{#1 \textcolor{cblBlueLt}{\rangle} #2}^{#3}}
\newcommand{\msgprop}[1]{{\textcolor{cblOrange}{\llparenthesis} #1 \textcolor{cblOrange}{\rrparenthesis}}}
\newcommand{\lmed}[3]{{\llbracket #3 \rrbracket}_{#1}^{#2}}
\newcommand{\mdm}[2]{{\sff{O}}_{#1}[#2]}
\newcommand{\rtr}{\mcl{R}}
\newcommand{\lsys}{\mathrm{ri}}
\newcommand{\sys}{\mathrm{net}}
\newcommand{\hub}{\mcl{H}}
\newcommand{\hole}{[\,]}
\newcommand{\rch}[1]{{\color{cblPurple}#1}}
\newcommand{\ich}[1]{{\color{cblPink}#1}}
\newcommand{\ci}[2]{{\ich{#1}}_{\ich{#2}}}
\newcommand{\crt}[2]{{\rch{#1}}_{\rch{#2}}}
\definecolor{fxtarget}{rgb}{0.8300,0.1400,0.1400}
\newcommand{\revstart}[1][]{%
    \ignorespaces%
    \ifthenelse{\equal{#1}{1}}{\color{cblPurple}}{%
    \ifthenelse{\equal{#1}{2}}{\color{cblPink}}{%
    \ifthenelse{\equal{#1}{3}}{\color{cblOrange}}%
    {\color{cblBlue}}%
    }}%
    \ignorespaces%
}
\newcommand{\revend}{\normalcolor\ignorespaces}
\renewcommand{\revstart}[1][]{\ignorespaces}
\renewcommand{\revend}{\ignorespaces}
\newcommand{\recdef}[2]{\mathrm{recdef}(#1,#2)}
\newcommand{\ctxpri}[2]{\mathrm{ctxpri}^{#1}(#2)}
\newcommand{\recpri}[2]{\mathrm{varpri}(#1,#2)}
\newcommand{\ctxbind}[1]{\mathrm{ctxbind}(#1)}
\newcommand{\subbind}[2]{\mathrm{subbind}(#1,#2)}
\newcommand{\deepunfold}[2]{\mathrm{deepUnfold}(#1,#2)}
\newcommand{\unfold}{\mathrm{unfold}}
\newcommand{\activ}[2]{\mathrm{active}(#1,#2)}
\newcommand{\recactiv}[2]{\mathrm{recactive}(#1,#2)}
\newcommand{\complet}{\circlearrowright}
\newcommand{\gsub}[1]{\leq_{#1}}
\newcommand{\recctx}[2]{\mathrm{recCtx}(#1,#2)}
\newcommand{\pn}{\mathrm{pn}}
\newcommand{\chr}{\mathrm{char}}
\newcommand{\tand}{\text{and}}
\newcommand{\B}{\mbb{B}}
\newcommand{\tst}{\text{s.t.}}
\let\oxrightarrow\xrightarrow
\renewcommand{\xrightarrow}[1]{\oxrightarrow{\vspace{-.8ex}#1}}
\def\ih#1{IH\textsubscript{#1}}
\newsavebox{\boxCyclic}
\newsavebox{\boxGauth}
\newsavebox{\boxGauthInterl}
\newsavebox{\boxDecentral}
\newsavebox{\boxCentral}
\newsavebox{\boxMedium}
\begin{document}

\maketitle

\begin{abstract}
    Protocols provide the unifying glue in concurrent and distributed software today; verifying that message-passing programs conform to such governing protocols is important but difficult.
    Static approaches based on multiparty session types (MPST) use protocols as types to avoid protocol violations and deadlocks in programs.
    An elusive problem for MPST is to ensure \emph{both} protocol conformance \emph{and} deadlock freedom for implementations with interleaved and delegated protocols.

    We propose a decentralized analysis of multiparty protocols, specified as global types and implemented as interacting processes in an asynchronous $\pi$-calculus.
    Our solution rests upon two novel notions: \emph{router processes} and \emph{relative types}.
    While router processes use the global type to enable the composition of participant implementations in arbitrary process networks, relative types extract from the global type the intended interactions and dependencies between \emph{pairs} of participants.
    In our analysis, processes are typed using APCP, a type system that ensures protocol conformance and deadlock freedom with respect to \emph{binary} protocols, developed in prior work.
    Our decentralized, router-based analysis enables the sound and complete transference of protocol conformance and deadlock freedom from APCP to multiparty protocols.
\end{abstract}

{\small
\tableofcontents
}

\newpage

\section{Introduction}
\label{s:introduction}

This paper presents a new approach to the analysis of the \emph{protocols} that pervade concurrent and distributed software.
Such protocols provide an essential unifying glue between communicating programs; ensuring that communicating programs implement protocols correctly, avoiding protocol violations and deadlocks, is an important but difficult problem.
Here, we study \emph{multiparty session types (MPST)}~\cite{journal/acm/HondaYC16}, an approach to correctness in message-passing programs that uses governing multiparty protocols as types in program verification.

As a motivating example, let us consider a \emph{recursive authorization protocol}, adapted from an example by Scalas and Yoshida~\cite{conf/popl/ScalasY19}.
It involves three participants: a Client, a Server, and an Authorization service.
Intuitively, the protocol proceeds as follows.
The Server requests the Client either to \emph{login} or to \emph{quit} the protocol.
In the case of login, the Client sends a password to the Authorization service, which then may authorize the login with the Server; subsequently, the protocol can be performed again: this is useful when, e.g., clients must renew their authorization privileges after some time.
In the case of quit, the protocol ends.

MPST use \emph{global types} to specify multiparty protocols.
The authorization protocol just described can be specified by the following global type between Client (`$c$'), Server (`$s$'), and Authorization service~(`$a$'):
\begin{equation}\label{eq:Gauth}
    G_{\sff{auth}} = \mu X \sdot s \mto c
    \left\{ \begin{array}{l}
            \sff{login} \sdot c \mto a \big\{ \sff{passwd}\<\sff{str}\> \sdot a \mto s \{ \sff{auth}\<\sff{bool}\> \sdot X \} \big\},
            \\
            \sff{quit} \sdot c \mto a \{ \sff{quit} \sdot \bullet \}
    \end{array} \right\}
\end{equation}
After declaring a recursion  on the variable $X$ (`$\mu X$'), the global type $G_{\sff{auth}}$ stipulates that $s$ sends to $c$ (`$s \mto c$') a label \sff{login} or \sff{quit}.
The rest of the protocol depends on this choice by $s$.
In the \sff{login}-branch, $c$ sends to $a$ a label \sff{passwd} along with a \sff{str}ing value (`$\<\sff{str}\>$') and $a$ sends to $s$ a label \sff{auth} and a \sff{bool}ean value, after which the protocol loops to the beginning of the recursion (`$X$').
In the \sff{quit}-branch, $c$ sends to $a$ a label \sff{quit} after which the protocol ends (`$\bullet$').

In MPST, participants are implemented as distributed processes that communicate asynchronously.
Each process must correctly implement its corresponding portion of the protocol; these individual guarantees ensure that the interactions between processes conform to the given global type.
Correctness follows from \emph{protocol fidelity} (processes interact as stipulated by the protocol), \emph{communication safety} (no errors or mismatches in messages), and \emph{deadlock freedom} (processes  never get stuck waiting for each other).
Ensuring that implementations satisfy these properties is a challenging problem, which is further compounded by two common and convenient features in interacting processes: \emph{delegation} and \emph{interleaving}.
We motivate them in the context of our example:
\begin{itemize}
    \item
        \emph{Delegation,} or higher-order channel passing, can effectively express that the Client transparently reconfigures its involvement by asking another participant (say, a  Password Manager) to act on its behalf;

    \item
        \emph{Interleaving} arises when a single process implements more than one role, as in, e.g., an implementation of both the Server and the Authorization service in a sequential process.
\end{itemize}
Note that while delegation is explicitly specified in a global type, interleaving arises in its implementation as interacting processes, not in its specification.

MPST have been widely studied from foundational and applied angles~\cite{conf/concur/BettiniCDLDY08,journal/lmcs/CastagnaDP12,journal/lmcs/HuBYD12,conf/ecoop/ScalasDHY17,conf/ice/BarbaneraD19,journal/asep/BejleriDVEM19,conf/popl/ScalasY19,journal/tcs/CastellaniDGH20,conf/ecoop/ImaiNYY20,journal/pacmpl/MajumdarYZ20}.
The original theory by Honda \etal~\cite{conf/popl/HondaYC08} defines a behavioral type system~\cite{journal/csur/HuttelLVCCDMPRT16,journal/ftpl/AnconaBB0CDGGGH16} for a $\pi$-calculus, which exploits linearity to ensure protocol fidelity and communication safety; most derived works retain this approach and target the same properties.
Deadlock freedom is hard to ensure by typing when implementations feature delegation and interleaving.
In simple scenarios without interleaving and/or delegation, deadlock freedom is easy, as it concerns a single-threaded protocol.
In contrast, deadlock freedom for processes running multiple, interleaved protocols (possibly delegated) is a much harder problem, addressed only by some advanced type systems~\cite{conf/concur/BettiniCDLDY08,conf/coordination/PadovaniVV14,journal/mscs/CoppoDYP16}.

In this paper, we tackle the problem of ensuring that networks of interacting processes correctly implement a given global type in a deadlock free manner, while supporting delegation and  interleaving.
Our approach is informed by the differences between \emph{orchestration} and \emph{choreography}, two salient approaches to the coordination and organization of interacting processes in service-oriented paradigms~\cite{journal/computer/Peltz03}:
\begin{itemize}
    \item
        In \emph{orchestration}-based approaches, processes interact through a coordinator process which ensures that they all follow the protocol as intended.
        Quoting Van der Aalst, in an orchestration ``the conductor tells everybody in the orchestra what to do and makes sure they all play in sync''~\cite{book/vdAalst09}.

    \item
        In \emph{choreography}-based approaches, processes interact directly following the protocol without external coordination.
        Again quoting Van der Aalst, in a choreography ``dancers dance following a global scenario without a single point of control''~\cite{book/vdAalst09}.
\end{itemize}

Specification and analysis techniques based on MPST fall under the choreography-based approach.
The global type provides the protocol's specification; based on the global type, implementations for each participant interact directly with each other, without an external coordinator.

As we will see, the contrast between orchestration and choreography is relevant here because it induces a different \emph{network topology} for interacting processes.
In an orchestration, the resulting process network is \emph{centralized}: all processes must connect to a central orchestrator process.
In a choreography, the process network is \emph{decentralized}, as processes can directly connect to each other.

\paragraph*{Contributions}

We develop a new decentralized analysis of multiparty  protocols.
\begin{itemize}
	\item
        Here `analysis' refers to (i)~ways of specifying such protocols as interacting processes \emph{and} (ii)~techniques to verify that those processes satisfy the intended correctness properties.

    \item
        Also, aligned with the above discussion, `decentralized' refers to the intended network topology for processes, which does not rely on an external coordinator.
\end{itemize}
Our decentralized analysis of global types enforces protocol fidelity, communication safety, and deadlock freedom for process implementations, while uniformly supporting delegation, interleaving, and asynchronous communication.

\begin{figure}
    \begin{mdframed}
        \mbox{}\hfill%
        \begin{minipage}{.45\textwidth}
            \centering
            \begin{tikzpicture}
    \begin{scope}[local bounding box=bndP]
        \node (P) [minimum height=5.5mm, draw=black, shape=rectangle] at (0, 0) {$P$};
        \node (rtrP) [below=4.7mm of P.west, anchor=west] {router};
    \end{scope}
    \node [fit=(bndP), draw=black, shape=rectangle, inner sep=0mm] (rtdP) {};

    \begin{scope}[local bounding box=bndQ]
        \node (Q) [minimum height=5.5mm, right=4.7cm of P, draw=black, shape=rectangle] {$Q$};
        \node (rtrQ) [below=4.7mm of Q.east, anchor=east] {router};
    \end{scope}
    \node [fit=(bndQ), draw=black, shape=rectangle, inner sep=0mm] (rtdQ) {};

    \begin{scope}[local bounding box=bndR]
        \node (R) [minimum height=5.5mm, below right=3.4cm of P, draw=black, shape=rectangle] {$R$};
        \node (rtrR) [above=4.7mm of R.east, anchor=east] {router};
    \end{scope}
    \node [fit=(bndR), draw=black, shape=rectangle, inner sep=0mm] (rtdR) {};

    \draw [-] (bndP) to (bndQ);
    \draw [-] (bndQ) to (bndR);
    \draw [-] (bndR) to (bndP);
\end{tikzpicture}
        \end{minipage}%
        \hfill\vline\hfill%
        \begin{minipage}{.45\textwidth}
            \centering
            \begin{tikzpicture}
    \node (O) [draw=black, shape=rectangle] at (0, 0) {
        $\begin{array}{c}
            \text{medium} \\ \text{or} \\ \text{arbiter}
        \end{array}$
    };
    \node (P) [above left=.5cm of O, minimum height=5.5mm, draw=black, shape=rectangle] {$P$};
    \node (Q) [above right=.5cm of O, minimum height=5.5mm, draw=black, shape=rectangle] {$Q$};
    \node (R) [below=.5cm of O, minimum height=5.5mm, draw=black, shape=rectangle] {$R$};

    \draw [-] (O) to (P);
    \draw [-] (O) to (Q);
    \draw [-] (O) to (R);
\end{tikzpicture}
        \end{minipage}%
        \hfill\mbox{}
    \end{mdframed}

    \caption{Given processes $P$, $Q$, and $R$  implementing the roles of $c$, $s$, and $a$, respectively, protocol $G_\sff{auth}$ can be realized as a choreography of routed implementations (our approach, left) and as an orchestration of implementations, with a medium or arbiter process (previous works, right).}
    \label{f:chorOrch}
\end{figure}

The \emph{key idea} of our analysis is to exploit global types to generate \emph{router processes} (simply \emph{routers}) that enable  participant implementations to communicate directly.
There is a router per participant; it is intended to serve as a ``wrapper'' for an individual participant's implementation.
The composition of an implementation with its corresponding router is called a \emph{routed implementation}.
Collections of routed implementations can then be connected in arbitrary \emph{process networks} that correctly realize the multiparty protocol, subsuming centralized and decentralized topologies.

Routers are \emph{synthesized} from global types, and do not change the behavior of the participant implementations they wrap; they merely ensure that networks of routed implementations correctly behave as described by the given multiparty protocol.
Returning to Van der Aalst's analogies quoted above, we may say that in our setting participant implementations are analogous to skilled but barefoot dancers, and that routers provide them with the appropriate shoes to dance without a central point of control.
To make this analogy a bit more concrete, \Cref{f:chorOrch} (left) illustrates the decentralized process network formed by routed implementations of the participants of $G_\sff{auth}$: once wrapped by an appropriate router, implementations $P$, $Q$, and $R$ can be composed directly in a decentralized process network.

A central technical challenge in our approach is to ensure that compositions of routed implementations conform to their global type.
The channels that enable the arbitrary composition of routed implementations need to be typed in accordance with the given multiparty protocol.
Unfortunately, the usual notion of projection in MPST, which obtains a single participant's perspective from a global type, does not suffice: we need a local perspective that is relative to the \emph{two participants} that the connected routed implementations represent.
To this end, we introduce a new notion, \emph{relative projection}, which isolates the exchanges of the global type that relate to pairs of participants.
In the case of $G_\sff{auth}$, for instance, we need three relative types, describing the protocol for $a$ and $c$, for $a$ and $s$, and for $c$ and $s$.

A derived challenge is that when projecting a global type onto a pair of participants, it is possible to encounter \emph{non-local choices}: choices by other participants that affect the protocol between the two participants involved in the projection.
To handle this, relative projection explicitly records non-local choices in the form of \emph{dependencies}, which inform the projection's participants that they need to coordinate on the results of the non-local choices.

To summarize, our decentralized analysis of global types relies on three intertwined novel notions:
\begin{itemize}
    \item
        \textbf{Routers} that wrap participant implementations in order to enable their composition in arbitrary network topologies, whilst guaranteeing that the resulting process networks correctly follow the given global type in a deadlock free manner.

    \item
        \textbf{Relative Types} that type the channels between routed implementations, obtained by means of a new notion of projection of global types onto pairs of participants.

    \item
        \textbf{Relative Projection} and \textbf{Dependencies} that make it explicit in relative types that participants need to coordinate on non-local choices.
\end{itemize}
\noindent
The key ingredients of our decentralized analysis for $G_{\sff{auth}}$ are jointly depicted in \Cref{l:sumdiag}.

\smallskip

With respect to prior analyses of multiparty protocols, a distinguishing feature of our work is its natural support of decentralized process networks, as expected in a choreography-based approach.
Caires and P\'{e}rez~\cite{conf/forte/CairesP16} connect participant implementations through a central coordinator, called \emph{medium process}.
This medium process is generated from a global type, and intervenes in all exchanges to ensure that the participant implementations follow the multiparty protocol.
The composition of the medium with the participant implementations can then be analyzed using a type system for binary sessions.
In a similar vein, Carbone \etal~\cite{conf/concur/CarboneLMSW16} define a type system in which they use global types to validate choreographies of participant implementations.
Their analysis of protocol implementations---in particular, deadlock freedom---relies on encodings into another type system where participant implementations connect to a central coordinator, called the \emph{arbiter process}.
Similar to mediums, arbiters are generated from the global type to ensure that participant implementations follow the protocol as intended.
Both these approaches are clear examples of orchestration, and thus do not support decentralized network topologies.

To highlight the differences between our decentralized analysis and prior approaches, compare the choreography of routed implementations in \Cref{f:chorOrch}~(left) with  an implementation of $G_\sff{auth}$  in the style of Caires and P\'{e}rez and of Carbone \etal, given in \Cref{f:chorOrch}~(right).
These prior works rely on orchestration because the type systems they use for verifying process implementations restrict connections between processes: they only admit a form of process composition that makes it impossible to simultaneously connect three or more participant implementations~\cite{conf/express/DardhaP15}.
In this paper, we overcome this obstacle by relying on APCP (Asynchronous Priority-based Classical Processes)~\cite{report/vdHeuvelP21B}, a type system that allows for more general forms of process composition.
By using annotations on types, APCP prevents \emph{circular dependencies}, i.e., cyclically connected processes that are stuck waiting for each other.
This is how our approach supports networks of routed participants in both centralized and decentralized topologies, thus subsuming  choreography and orchestration approaches.

\begin{figure}[!t]
    \begin{mdframed}
        \medskip
        \begin{center}
            \begin{tikzpicture}
    \begin{scope}[local bounding box=BNDc]
        \node [align=center] (G) at (0,0) {$G_\sff{auth}$};

        \node [below=15mm of G,xshift=-10mm,anchor=center,align=center] (L) {$L_c$};
        \draw [-to] (G) -- (L) node [midway,left,text=cblRed,align=center] {local projection \\ (\secref{ss:locproj})};

        \node [below=15mm of G,xshift=10mm,anchor=center,align=center] (R) {$R_{cs},R_{ca}$};
        \draw [-to] (G) -- (R) node [midway,right,text=cblRed,align=center] {relative projection \\ (\secref{ss:relproj})};

        \node [below=15mm of L,anchor=center,align=center] (P) {$P$};
        \draw [|-|] (L) -- (P) node [midway,left,text=cblRed,align=center] {type check \\ in APCP (\secref{s:apcp})};

        \node [below=15mm of R,anchor=center,align=center] (RTR) {$\rtr_c$};
        \draw [-to] (R) -- (RTR) node [midway,right,text=cblRed,align=center] {router synthesis \\ (\secref{ss:routers})};
    \end{scope}

    \node [right=70mm of G,anchor=center,align=center] (G2) {$G_\sff{auth}$};

    \begin{scope}[local bounding box=BNDnet]
        \node [below=20mm of G2,xshift=-20mm,anchor=center,align=center,draw=black] (RIMPLc) {$P \| \rtr_c$};
        \node [below=3mm of RIMPLc.center,anchor=north,align=center] {client};
        \node [above=2mm of RIMPLc.center,anchor=south,align=center] {\vphantom{l}};
        \node [left=0mm of RIMPLc.west,anchor=east] {\hphantom{s}};

        \node [below=20mm of G2,anchor=center,align=center,draw=black] (RIMPLs) {$Q \| \rtr_s$};
        \node [below=3mm of RIMPLs.center,anchor=north,align=center] {server\vphantom{l}};
        \node [above=2mm of RIMPLs.center,anchor=south,align=center] {\vphantom{l}};

        \node [below=20mm of G2,xshift=20mm,anchor=center,align=center,draw=black] (RIMPLa) {$R \| \rtr_a$};
        \node [below=3mm of RIMPLa.center,anchor=north,align=center] {authorization \\ service};
        \node [above=2mm of RIMPLa.center,anchor=south,align=center] {\vphantom{l}};
        \node [right=0mm of RIMPLc.east,anchor=west] {\hphantom{s}};
        \node [below=10mm of RIMPLa.center,anchor=north,align=center] {\vphantom{l}};
    \end{scope}
    \node [fit=(BNDnet),draw=black,dashed,inner xsep=1mm,inner ysep=0mm,rounded corners=2mm] (BOXnet) {};
    \node [below=1mm of BOXnet.south,anchor=north,align=center] {network of routed implementations \\ of $G_\sff{auth}$ (\defref{d:networks})};

    \draw [-to] (G2) -- ([yshift=1mm,xshift=3mm]RIMPLc.north);
    \draw [-to] (G2) -- ([yshift=1mm,xshift=0mm]RIMPLs.north);
    \draw [-to] (G2) -- ([yshift=1mm,xshift=-3mm]RIMPLa.north);

    \node [fit=(BNDc),draw=black,inner xsep=3mm,inner ysep=0mm,rectangle callout,rounded corners=2mm,callout absolute pointer={(RIMPLc.west)}] (BOXc) {};
    \node [below=1mm of BOXc.south,anchor=north,align=center] {routed implementation of $c$ (\defref{d:routedImplementations})};
\end{tikzpicture}
        \end{center}
    \end{mdframed}
    \caption{Decentralized analysis of $G_\sff{auth}$ into a network of routed implementations. The definition of $G_\sff{auth}$ contains message types. Focusing on the client $c$ (on the left), $L_c$ denotes a session type, whereas $R_{cs}$ and $R_{ca}$ are relative types with respect to the server and the authorization service, respectively.  \label{l:sumdiag}}
\end{figure}

\paragraph*{Outline}

This paper is structured as follows.  Next, \Cref{s:apcp} recalls APCP as introduced by Van den Heuvel and P\'{e}rez~\cite{report/vdHeuvelP21B} and summarizes the correctness properties for asynchronous processes derived from typing. The following three sections develop and illustrate our contributions:
\begin{itemize}
    \item
        \Cref{s:mpst} introduces relative types and relative projection, and defines well-formed global types, a class of global types that includes protocols with non-local choices.
    \item
        \Cref{s:routers} introduces the synthesis of routers.
        A main result is their typability in APCP (\Cref{t:routerTypes}).
        We establish deadlock freedom for networks of routed implementations (\Cref{t:globalDlFree}), which we transfer to multiparty protocols via an operational correspondence result (\Cref{t:completeness,t:soundness}).
        Moreover, we show that our approach strictly generalizes prior analyses based on centralized topologies (\Cref{t:mediumBisim}).
    \item
        \Cref{s:routersInAction} demonstrates our contributions in action, with a full development of the routed implementations for $G_\sff{auth}$, and an example of the flexible support for \emph{delegation} and \emph{interleaving} enabled by our router-based approach and APCP.
\end{itemize}
We discuss further related works in \secref{s:relwork} and conclude the paper in \secref{s:concl}.
We use colors to improve readability.

\section{APCP: Asynchronous Processes, Deadlock Free by Typing}
\label{s:apcp}

We recall APCP as defined by Van den Heuvel and P\'{e}rez~\cite{report/vdHeuvelP21B}.
APCP is a type system for asynchronous $\pi$-calculus processes (with non-blocking outputs)~\cite{conf/ecoop/HondaT91,report/Boudol92}, with support for recursion and cyclic connections.
In this type system, channel endpoints are assigned linear types that represent two-party (binary) \emph{session types}~\cite{conf/concur/Honda93}.
Well-typed APCP processes  {preserve typing} (\Cref{t:subjectRed}) and are {deadlock free} (\Cref{t:closedDLFree}).

At its basis, APCP combines Dardha and Gay's Priority-based Classical Processes (PCP)~\cite{conf/fossacs/DardhaG18} with DeYoung \etal's continuation-passing semantics for asynchrony~\cite{conf/csl/DeYoungCPT12}, and adds recursion,  inspired by the work of Toninho \etal~\cite{conf/tgc/ToninhoCP14}.
We refer the interested reader to the work by Van den Heuvel and P\'{e}rez~\cite{report/vdHeuvelP21B} for a motivation of design choices and proofs of results.

\paragraph{Process Syntax}

\begin{figure}[t]
    \begin{mdframed}
        Process syntax:
        \begin{align*}
            P,Q ::=
            &~ x[y,z]
            & \text{output}
            & ~~~ \sepr ~~
              x(y,z) \sdot P
            & \text{input}
            \\[-3pt]
            \sepr\!
            &~ x[z] \triangleleft i
            & \text{selection}
            & ~~~ \sepr ~~
              x(z) \triangleright \{i{:}~ P\}_{i \in I}
            & \text{branching}
            & ~~~ \sepr ~~
              \nu{x y}P
            & \text{restriction}
            \\[-3pt]
            \sepr\!
            &~ P \| Q
            & \text{parallel}
            & ~~~ \sepr ~~
              \0
            & \text{inaction}
            & ~~~ \sepr ~~
              x \fwd y
            & \text{forwarder}
            \\[-3pt]
            \sepr\!
            &~ \mu X(\tilde{z}) \sdot P
            & \text{recursive loop}
            & ~~~ \sepr ~~
              X\call{\tilde{z}}
            & \text{recursive call}
        \end{align*}

        \vspace{-1em}
        \hbox to \textwidth{\leaders\hbox to 3pt{\hss . \hss}\hfil}

        \smallskip
        Structural congruence:
        \begin{align*}
            P \equiv_\alpha P' \implies {}
            &
            &
            P
            &\equiv P'
            &
            x \fwd y
            &\equiv y \fwd x
            \\
            &
            &
            P \| Q
            &\equiv Q \| P
            &
            \nu{x y} x \fwd y
            &\equiv \0
            \\
            &
            &
            P \| \0
            &\equiv P
            &
            P \| (Q \| R)
            &\equiv (P \| Q) \| R
            \\
            x,y \notin \fn(P) \implies {}
            &
            &
            P \| \nu{x y}Q
            &\equiv \nu{x y}(P \| Q)
            &
            \nu{x y}\0
            &\equiv \0
            \\
            |\tilde{z}| = |\tilde{y}| \implies {}
            &
            &
            \mu X(\tilde{z}) \sdot P
            &\equiv P \big\{(\mu X(\tilde{y}) \sdot P \subst{\tilde{y}/\tilde{z}}) / X\call{\tilde{y}}\big\}
            &
            \nu{x y}P
            &\equiv \nu{y x}P
            \\
            &
            &
            &
            &
            \nu{x y}\nu{z w} P
            &\equiv \nu{z w}\nu{x y} P
        \end{align*}

        \vspace{-1em}
        \hbox to \textwidth{\leaders\hbox to 3pt{\hss . \hss}\hfil}

        \smallskip
        Reduction:
        \begin{align*}
            & \brred{\scc{Id}}
            &
            z,y \neq x \implies {}
            &
            &
            \nu{y z}( x \fwd y \| P )
            &\redd P \subst{x/z}
            \\
            & \brred{\tensor\parr}
            &
            &
            &
            \nu{x y}( x[a,b] \| y(v,z) \sdot P )
            & \redd P \subst{a/v, b/z}
            \\
            & \brred{\oplus\&}
            &
            j \in I \implies {}
            &
            &
            \nu{x y}( x[b] \triangleleft j \| y(z) \triangleright \{i{:}~ P_i\}_{i \in I} )
            &\redd P_j \subst{b/z}
            \\[8pt]
            & \krred{\parr}
            &
            x \notin \tilde{v}, \tilde{w} \implies {}
            &
            &
            \nu{\tilde{v} \tilde{w}}(x(y,z) \sdot P \| Q)
            &\redd x(y,z) \sdot \nu{\tilde{v} \tilde{w}}(P \| Q)
            \\
            & \krred{\&}
            &
            x \notin \tilde{v}, \tilde{w} \implies {}
            &
            & \nu{\tilde{v} \tilde{w}}(x(z) \triangleright \{i{:}~ P_i\}_{i \in I} \| Q)
            & \redd x(z) \triangleright \{i{:}~ \nu{\tilde{v} \tilde{w}}(P_i \| Q)\}_{i \in I}
        \end{align*}

        \noindent
        \mbox{}\hfill%
        \begin{wfit}
            \begin{prooftree}
                \infAss{
                    $(P \equiv P') \wedge (P' \redd Q') \wedge (Q' \equiv Q)$
                }
                \infUn{
                    $P \redd Q$
                }{$\rred{\equiv}$}
            \end{prooftree}
        \end{wfit}%
        \hfill%
        \begin{wfit}
            \begin{prooftree}
                \infAss{
                    $\raisebox{9pt}{} P \redd Q$
                }
                \infUn{
                    $\nu{x y} P \redd \nu{x y} Q$
                }{$\rred{\onu}$}
            \end{prooftree}
        \end{wfit}%
        \hfill%
        \begin{wfit}
            \begin{prooftree}
                \infAss{
                    $\raisebox{9pt}{} P \redd Q$
                }
                \infUn{
                    $P \| R \redd Q \| R$
                }{$\rred{\|}$}
            \end{prooftree}
        \end{wfit}%
        \hfill\mbox{}
    \end{mdframed}

    \caption{Definition of the process language of APCP.}
    \label{f:procdef}
\end{figure}

We write $x, y, z, \ldots$ to denote (channel) \emph{endpoints} (also known as \emph{names}), and write $\tilde{x}, \tilde{y}, \tilde{z}, \ldots$ to denote sequences of endpoints.
Also, we write $i, j, k, \ldots$ to denote \emph{labels} for choices and $I, J, K, \ldots$ to denote sets of labels.
We write $X, Y, \ldots$ to denote \emph{recursion variables}, and $P,Q, \ldots$ to denote processes.

\Cref{f:procdef} (top) gives the syntax of processes, which communicate asynchronously by following a continuation-passing style.
The output action `$x[y,z]$' denotes the sending of endpoints $y$ and $z$ along $x$: while the former is the message, the latter is the protocol's continuation; both $y$ and $z$ are free.
The input prefix `$x(y,z) \sdot P$' blocks until a message $y$ and a continuation endpoint $z$ are received on $x$, binding $y$ and $z$ in $P$.
The selection action `$x[z] \puts i$' sends a label $i$ and a continuation endpoint $z$ along $x$.
The branching prefix `$x(z) \gets \{i{:}~ P_i\}_{i \in I}$' blocks until it receives a label $i \in I$ and a continuation endpoint $z$ on $x$, binding $z$ in each $P_i$.
Restriction `$\nu{x y} P$' binds $x$ and $y$ in $P$, thus declaring them as the two endpoints of the same channel and enabling communication, as in Vasconcelos~\cite{journal/ic/Vasconcelos12}.
The process `$\mkern-1mu P \| Q \mkern1mu$' denotes the parallel composition of $P$ and $Q$.
The process `$\0$' denotes inaction.
The forwarder process `$x \fwd y$' is a primitive copycat process that links together $x$ and $y$.
The prefix `$\mu X(\tilde{z}) \sdot P$' defines a recursive loop, where $\mu$ binds any free occurrences of $X$ in $P$ and the endpoints $\tilde{z}$ form a context for $P$.
The recursive call `$X\call{\tilde{z}}$' loops to its corresponding $\mu X$, providing the endpoints $\tilde{z}$ as context.
We only consider contractive recursion, disallowing processes with subexpressions of the form `$\mu X_1(\tilde{z}) \ldots \mu X_n(\tilde{z}) \sdot X_1\call{\tilde{z}}$'.

Endpoints and recursion variables are free unless they are bound somewhere.
We write `$\fn(P)$' and `$\frv(P)$' for the sets of free names and free recursion variables of $P$, respectively.
Also, we write `$P \subst{x/y}$' to denote the capture-avoiding substitution of the free occurrences of $y$ in $P$ for $x$.
The notation `$P\big\{(\mu X(\tilde{y}) \sdot Q) / X\call{\tilde{y}}\big\}$' denotes the substitution of occurrences of recursive calls `$X\call{\tilde{y}}$' for any sequence of names $\tilde{y}$  in $P$ with the recursive loop `$\mu X(\tilde{y}) \sdot Q$', which we call \emph{unfolding} recursion.
We write sequences of substitutions `$P \subst{x_1/y_1} \ldots \subst{x_n/y_n}$' as `$P \subst{x_1/y_1, \ldots, x_n/y_n}$'.

In an output `$x[y,z]$', both $y$ and $z$ are free, as mentioned above; they can be bound to a continuation process using parallel composition and restriction, as in, e.g., $\nu{y a}\nu{z b}(x[y,z] \| P_{a,b})$.
The same applies to selection `$x[z] \puts i$'.
We introduce useful notations that elide the restrictions and continuation endpoints:

\begin{notation}[Derivable Actions and Prefixes]\label{n:sugar}
    We use the following syntactic sugar:
    \begin{align*}
        \ol{x}[y] \cdot P
        & := \nu{y a}\nu{z b}(x[a,b] \| P\subst{z/x})
        &
        \ol{x} \puts \ell \cdot P
        & := \nu{z b}(x[b] \puts \ell \| P\subst{z/x})
        \\
        x(y) \sdot P
        & := x(y,z) \sdot P\subst{z/x}
        &
        x \gets \{i{:}~ P_i\}_{i \in I}
        & := x(z) \gets \{i{:}~ P_i\subst{z/x}\}_{i \in I}
    \end{align*}
\end{notation}

\noindent
Note the use of `${}\cdot{}$' instead of `${}\sdot{}$' in output and selection actions to stress that they are non-blocking.

\paragraph{Operational Semantics}

We define a reduction relation for processes ($P \redd Q$) that formalizes how complementary actions on connected endpoints may synchronize.
As usual for $\pi$-calculi, reduction relies on \emph{structural congruence} ($P \equiv Q$), which  equates the behavior of processes with minor syntactic differences; it is the smallest congruence relation satisfying the axioms in \Cref{f:procdef} (center).

Structural congruence defines the following properties of our process language.
Processes are equivalent up to $\alpha$-equivalence.
Parallel composition is associative and commutative, with unit `$\0$'.
The forwarder process is symmetric, and equivalent to inaction if both endpoints are bound together through restriction.
A parallel process may be moved into or out of a restriction as long as the bound channels do not appear free in the moved process: this is \emph{scope inclusion} and \emph{scope extrusion}, respectively.
Restrictions on inactive processes may be dropped, and the order of endpoints in restrictions and of consecutive restrictions does not matter.
Finally, a recursive loop is equivalent to its unfolding, replacing any recursive calls with copies of the recursive loop, where the call's endpoints are pairwise substituted for the contextual endpoints of the loop.

We define the reduction relation  by the axioms and closure rules in \Cref{f:procdef} (bottom).
Axioms labeled `$\beta$' are \emph{synchronizations} and those labeled `$\kappa$' are  \emph{commuting conversions}, which allow pulling prefixes on free channels out of restrictions; they are not necessary for deadlock freedom, but they are usually presented in Curry-Howard interpretations of linear logic as session types~\cite{conf/concur/CairesP10,conf/icfp/Wadler12,conf/fossacs/DardhaG18,conf/csl/DeYoungCPT12}.

Rule $\beta_{\scc{Id}}$ implements the forwarder as a substitution.
Rule $\beta_{\tensor \parr}$ synchronizes an output and an input on connected endpoints and substitutes the message and continuation endpoint.
Rule $\beta_{\oplus \&}$ synchronizes a selection and a branch:
the received label determines the continuation process, substituting the continuation endpoint appropriately.
Rule $\kappa_{\parr}$ (resp.\ $\kappa_{\&}$) pulls an input (resp.\ a branching) prefix on free channels out of enclosing restrictions.
Rules $\rightarrow_\equiv$, $\rightarrow_\onu$, and $\rightarrow_{\|}$ close reduction under structural congruence, restriction, and parallel composition, respectively.

\begin{notation}[Reductions]\label{n:redd}
    We write `$\redd_{\beta}$' for reductions derived from $\beta$-axioms, and `$\redd^\ast$' for the reflexive, transitive closure of `$\redd$'.
     Also, we write `$P \redd^\star Q$' if $P \redd^\ast Q$ in a finite number of steps, and `$P \nredd^\ast Q$' for the non-existence of a series of reductions from $P$ to $Q$.
\end{notation}

\paragraph{Session Types}

\begin{table}[t]
    \begin{mdframed}
        \begin{tabularx}{\textwidth}{l||X}
            \textbf{Session Type} & \textbf{Endpoint Behavior} \\ \hline
            \\[-7pt]
            $A \tensor^\pri B$ & output an endpoint of type $A$, then behave as $B$ \\
            $A \parr^\pri B$ & input an endpoint of type $A$, then behave as $B$ \\
            ${\oplus}^\pri \{ i{:}~ A_i \}_{i \in I}$ & select a label $i \in I$, then behave as $A_i$ \\
            $\&^\pri \{ i{:}~ A_i \}_{i \in I}$ & receive a choice for a label $i \in I$, then behave as $A_i$ \\
            $\bullet$ & closed session; no behavior
        \end{tabularx}
    \end{mdframed}

    \caption{Session types and their associated endpoint behaviors (cf.\ \Cref{d:props}).}
    \label{tbl:types}
\end{table}

The type system  assigns session types to channel endpoints.
We present session types as linear logic propositions following, e.g., Wadler~\cite{conf/icfp/Wadler12}, Caires and Pfenning~\cite{conf/esop/CairesP17}, and Dardha and Gay~\cite{conf/fossacs/DardhaG18}.
We extend these propositions with recursion and \emph{priority} annotations on connectives.
Intuitively, actions typed with lower priority should be performed before those with higher priority.
We write $\pri, \opri, \pi, \rho, \ldots$ to denote priorities, and `$\omega$' to denote the ultimate priority that is greater than all other priorities  and cannot be increased further.
That is, $\forall t \in \mbb{N}.~\omega > t$ and $\forall t \in \mbb{N}.~\omega + t = \omega$.

\begin{definition}[Session Types]\label{d:props}
    The following grammar defines the syntax of \emph{session types} $A,B$.
    Let $\pri \in \mbb{N} \cup \{\omega\}$.
    \begin{align*}
        A,B
        &::= A \tensor^\pri B \sepr A \parr^\pri B \sepr {\oplus}^\pri \{i: A\}_{i \in I} \sepr \&^\pri \{i: A\}_{i \in I} \sepr \bullet \sepr \mu X \sdot A \sepr X
    \end{align*}
\end{definition}

\noindent
\Cref{tbl:types} gives \emph{session types} and the behavior that is expected of an endpoint with each type  (recursive types entail a communication behavior only after unfolding).
Note that `$\bullet$' does not require a priority, as closed endpoints do not exhibit behavior and thus are non-blocking.
We define `$\bullet$' as a single, self-dual type for closed endpoints (cf.\ Caires~\cite{report/Caires14} and Atkey \etal~\cite{chapter/AtkeyLM16}).

Type `$\mu X \sdot A$' denotes a recursive type, in which $A$ may contain occurrences of the recursion variable `$X$'.
As customary, `$\mu$' is a binder: it induces the standard notions of $\alpha$-equivalence, substitution (denoted `$A\subst{B/X}$'), and free recursion variables (denoted `$\frv(A)$').
We work with tail-recursive, contractive types, disallowing types of the form `$\mu X_1 \ldots \mu X_n \sdot X_1$'.
 We postpone the formalization of the unfolding of recursive types, as it requires additional definitions to ensure consistency of priorities in types.

\emph{Duality}, the cornerstone of session types and linear logic, ensures that the two endpoints of a channel have matching actions.
Furthermore, dual types must have matching priority annotations.
The following inductive definition of duality suffices for our tail-recursive types (cf.\ Gay \etal~\cite{conf/places/GayTV20}).

\begin{definition}[Duality]\label{d:duality}
    The \emph{dual} of session type $A$, denoted `$\ol{A}$', is defined inductively as follows:
    \begin{align*}
        \ol{A \tensor^\pri B}
        &:= \ol{A} \parr^\pri \ol{B}
        &
        \ol{{\oplus}^\pri \{ i: A_i \}_{i \in I}}
        &:= \&^\pri \{ i: \ol{A_i} \}_{i \in I}
        &
        \ol{\bullet}
        &:= \bullet
        &
        \ol{\mu X \sdot A}
        &:= \mu X \sdot \ol{A}
        \\
        \ol{A \parr^\pri B}
        &:= \ol{A} \tensor^\pri \ol{B}
        &
        \ol{\&^\pri \{ i: A_i \}_{i \in I}}
        &:= {\oplus}^\pri \{ i: \ol{A_i} \}_{i \in I}
        &
        &
        &
        \ol{X}
        &:= X
    \end{align*}
\end{definition}

The priority of a type is determined by the priority of the type's outermost connective:

\begin{definition}[Priorities]\label{d:priority}
    For session type $A$, `$\pr(A)$' denotes its \emph{priority}:
    \begin{align*}
        \pr(A \tensor^\pri B)
        := \pr(A \parr^\pri B)
        &:= \pri
        &
        \pr(\mu X \sdot A)
        &:= \pr(A)
        \\
        \pr(\oplus^\pri\{i{:}~ A_i\}_{i \in I})
        := \pr(\&^\pri\{i{:}~ A_i\}_{i \in I})
        &:= \pri
        &
        \pr(\bullet)
        := \pr(X)
        &:= \omega
    \end{align*}
\end{definition}

\noindent
The priority of `$\bullet$' and `$X$' is $\omega$: they denote ``final'', non-blocking actions of protocols.
Although `$\tensor$' and `$\oplus$' also denote non-blocking actions, their priority is not constant:
duality ensures that the priority for `$\tensor$' (resp.\ `$\oplus$') matches the priority of a corresponding `$\parr$' (resp.\ `$\&$'), which  denotes a blocking action.

Having defined the priority of types, we now turn to formalizing the unfolding of recursive types.
Recall the intuition that actions typed with lower priority should be performed before those with higher priority.
Based on this rationale, we observe that unfolding should increase the priorities of the unfolded type.
This is because the actions related to the unfolded recursion should be performed \emph{after} the prefix.
The following definition \emph{lifts} priorities in types:

\begin{definition}[Lift]\label{d:lift}
    For proposition $A$ and $t \in \mbb{N}$, we define `$\mkern2mu\lift{t}A\mkern-3mu$' as the \emph{lift} operation:
    \begin{align*}
        \lift{t}(A \tensor^\pri B)
        &:= (\lift{t}A) \tensor^{\pri+t} (\lift{t}B)
        &
        \lift{t}(\oplus^\pri \{i{:}~ A_i\}_{i \in I})
        &:= \oplus^{\pri+t} \{i{:}~ \lift{t}A_i\}_{i \in I}
        &
        \lift{t}\bullet
        &:= \bullet
        \\
        \lift{t}(A \parr^\pri B)
        &:= (\lift{t}A) \parr^{\pri+t} (\lift{t}B)
        &
        \lift{t}(\&^\pri \{i{:}~ A_i\}_{i \in I})
        &:= \&^{\pri+t} \{i{:}~ \lift{t}A_i\}_{i \in I}
        \\
        \lift{t}(\mu X \sdot A)
        &:= \mu X \sdot \lift{t}(A)
        &
        \lift{t}X
        &:= X
    \end{align*}
\end{definition}

\begin{definition}\label{d:unf}
    The \emph{unfolding} of `$\mu X \sdot A$' is `$A\subst{\mu X \sdot (\lift{t} A)/X}$', denoted `$\unfold^t(\mu X \sdot A)$', where $t \in \mbb{N}$.
\end{definition}

When unfolding $\mu X \sdot A$ as $\unfold^t(\mu X \sdot A)$,
the ``lifter'' $t$ will depend on the highest priority of the types appearing in a typing context.
The highest priority of a type is defined as follows:

\begin{definition}[Highest Priority]\label{d:maxpr}
    For session type $A$, `$\max_\pr(A)$' denotes its \emph{highest priority}:
    \begin{align*}
        \max_\pr(A \tensor^\pri B) := \max_\pr(A \parr^\pri B)
        &:= \max(\max_\pr(A), \max_\pr(B), \pri)
        &
        \max_\pr(\mu X \sdot A)
        &:= \max_\pr(A)
        \\
        \max_\pr(\oplus^\pri \{i: A_i\}_{i \in I}) := \max_\pr(\&^\pri \{i: A_i\}_{i \in I})
        &:= \max(\max_{i \in I}(\max_\pr(A_i)), \pri)
        &
        \max_\pr(\bullet) := \max_\pr(X)
        &:= 0
    \end{align*}
\end{definition}

\noindent
Notice how, in contrast to \Cref{d:priority}, the highest priority of `$\bullet$' and `$X$' is 0: this is because they do not contribute to the increase in priority needed for unfolding recursive types.

\paragraph{Type Checking}
The typing (or, type checking) rules of APCP enforce that channel endpoints implement their ascribed session types, while ensuring that actions with lower priority are performed before those with higher priority (cf.\ Dardha and Gay~\cite{conf/fossacs/DardhaG18}).
They enforce the following laws:
\begin{enumerate}
    \item
        an action with priority $\pri$ must be prefixed only by inputs and branches with priority strictly smaller than $\pri$---this law only applies to inputs and branches, because outputs and selections are not prefixes;

    \item
        dual actions leading to synchronizations must have equal priorities (cf.\ Def.\ \labelcref{d:props}).
\end{enumerate}
Judgments are of the form `$P \vdash \Omega; \Gamma$':
\begin{itemize}
    \item
        $P$ is a process;
    \item
        $\Gamma$ is a context that assigns types to channels (`$x{:}~ A$');
    \item
        $\Omega$ is a context that assigns tuples of types to recursion variables (`$X{:}~ (A, B, \ldots)$').
\end{itemize}
A judgment `$P \vdash \Omega; \Gamma$' then means that $P$ can be typed in accordance with the type assignments for names recorded in $\Gamma$ and the recursion variables in $\Omega$.
Intuitively, the recursive context $\Omega$ ensures that the context endpoints concur between recursive definitions and calls.
Both contexts $\Gamma$ and $\Omega$ obey \emph{exchange}: assignments may be silently reordered.
$\Gamma$ is \emph{linear}, disallowing \emph{weakening} (i.e., all assignments must be used) and \emph{contraction} (i.e., assignments may not be duplicated).
$\Omega$ allows weakening and contraction, because a recursive definition may be called \emph{zero or more} times.

The empty context is written `$\emptyset$'.
We write `$\lift{t} \Gamma$' to denote the component-wise extension of lift (\Cref{d:lift}) to typing contexts.
Also, we write `$\pr(\Gamma)$' to denote the least priority of all types in $\Gamma$ (\Cref{d:priority}).
An assignment `$\tilde{z}{:}~ \tilde{A}$' means `$z_1{:}~ A_1, \ldots, z_k{:}~ A_k$'.

\begin{figure}[t]
    \begin{mdframed}
        {
            \mbox{}\hfill%
            \begin{wfit}
                \begin{prooftree}
                    \infAss{
                        $\vphantom{; \Gamma}$
                    }
                    \infUn{
                        $\0 \vdash \Omega; \emptyset$
                    }{\scc{Empty}}
                \end{prooftree}
            \end{wfit}%
            \hfill%
            \begin{wfit}
                \begin{prooftree}
                    \infAss{
                        $P \vdash \Omega; \Gamma$
                    }
                    \infUn{
                        $P \vdash \Omega; \Gamma, x{:}~ \bullet$
                    }{$\bullet$}
                \end{prooftree}
            \end{wfit}%
            \hfill%
            \begin{wfit}
                \begin{prooftree}
                    \infAss{
                        $\vphantom{; \Gamma}$
                    }
                    \infUn{
                        $x \fwd y \vdash \Omega; x{:}~ \ol{A}, y{:}~ A$
                    }{\scc{Id}}
                \end{prooftree}
            \end{wfit}%
            \hfill\mbox{}

            \smallskip
            \mbox{}\hfill%
            \begin{wfit}
                \begin{prooftree}
                    \infAss{
                        $P \vdash \Omega; \Gamma$
                        $\raisebox{1.8ex}{}$
                    }
                    \infAss{
                        $Q \vdash \Omega; \Delta$
                    }
                    \infBin{
                        $P \| Q \vdash \Omega; \Gamma, \Delta$
                    }{\scc{Mix}}
                \end{prooftree}
            \end{wfit}%
            \hfill%
            \begin{wfit}
                \begin{prooftree}
                    \infAss{
                        $P \vdash \Omega; \Gamma, x{:}~ A, y{:}~ \ol{A}$
                    }
                    \infUn{
                        $\nu{x y} P \vdash \Omega; \Gamma$
                    }{\scc{Cycle}}
                \end{prooftree}
            \end{wfit}%
            \hfill\mbox{}

            \smallskip
            \mbox{}\hfill%
            \begin{wfit}
                \begin{prooftree}
                    \infAss{
                        $\vphantom{\pr(\Gamma)}$
                    }
                    \infUn{
                        $x[y,z] \vdash \Omega; x{:}~ A \tensor^\pri B, y{:}~ \ol{A}, z{:}~ \ol{B}$
                    }{$\tensor$}
                \end{prooftree}
            \end{wfit}%
            \hfill%
            \begin{wfit}
                \begin{prooftree}
                    \infAss{
                        $P \vdash \Omega; \Gamma, y{:}~ A, z{:}~ B$
                    }
                    \infAss{
                        $\pri < \pr(\Gamma)$
                    }
                    \infBin{
                        $x(y, z) \sdot P \vdash \Omega; \Gamma, x{:}~ A \parr^\pri B$
                    }{$\parr$}
                \end{prooftree}
            \end{wfit}%
            \hfill\mbox{}

            \smallskip
            \mbox{}\hfill%
            \begin{wfit}
                \begin{prooftree}
                    \infAss{
                        $j \in I \vphantom{P_i \pr(\Gamma)}$
                    }
                    \infUn{
                        $x[z] \triangleleft j \vdash \Omega; x{:}~ {\oplus}^\pri\{i: A_i\}_{i \in I}, z{:}~ \ol{A_j} \vphantom{j}$
                    }{$\oplus$}
                \end{prooftree}
            \end{wfit}%
            \hfill%
            \begin{wfit}
                \begin{prooftree}
                    \infAss{
                        $\forall i \in I.~ P_i \vdash \Omega; \Gamma, z{:}~ A_i$
                    }
                    \infAss{
                        $\pri < \pr(\Gamma)$
                    }
                    \infBin{
                        $x(z) \triangleright \{i: P_i\}_{i \in I} \vdash \Omega; \Gamma, x{:}~ \&^\pri\{i: A_i\}_{i \in I}$
                    }{$\&$}
                \end{prooftree}
            \end{wfit}%
            \hfill\mbox{}

            \medskip
            \mbox{}\hfill%
            \begin{wfit}
                \begin{prooftree}
                    \infAss{
                        $t \in \mbb{N} > \max_\pr(\tilde{A})$
                    }
                    \infAss{
                        $P \vdash \Omega, X{:}~ {\tilde{A}}; \tilde{z}{:}~ \tilde{U}$ ~ where each $U_i = \unfold^t(\mu X \sdot A_i)$
                    }
                    \infAss{
                        $\forall A_i \in \tilde{A}.~ A_i \neq X$
                    }
                    \infTer{
                        $\mu X(\tilde{z}) \sdot P \vdash \Omega; \tilde{z}{:}~ \widetilde{\mu X \sdot A}$
                    }{\scc{Rec}}
                \end{prooftree}
            \end{wfit}%
            \hfill\mbox{}

            \medskip
            \mbox{}\hfill%
            \begin{wfit}
                \begin{prooftree}
                    \infAx{
                        $X\call{\tilde{z}} \vdash \Omega, X{:}~ \tilde{A}; \tilde{z}{:}~ \widetilde{\mu X \sdot A}$
                    }{\scc{Var}}
                \end{prooftree}
            \end{wfit}%
            \hfill\mbox{}

            \medskip
            \hbox to \textwidth{\leaders\hbox to 3pt{\hss . \hss}\hfil}

            \medskip
            \mbox{}\hfill%
            \begin{wfit}
                \begin{prooftree}
                    \infAss{
                        $P \vdash \Omega; \Gamma, y{:}~ A, x{:}~ B$
                    }
                    \infUn{
                        $\ol{x}[y] \cdot P \vdash \Omega; \Gamma, x{:}~ A \tensor^\pri B$
                    }{$\tensor^\star$}
                \end{prooftree}
            \end{wfit}%
            \hfill%
            \begin{wfit}
                \begin{prooftree}
                    \infAss{
                        $P \vdash \Omega; \Gamma, x{:}~ A_j$
                    }
                    \infAss{
                        $j \in I$
                    }
                    \infBin{
                        $\ol{x} \triangleleft j \cdot P \vdash \Omega; \Gamma, x{:}~ {\oplus}^\pri\{i: A_i\}_{i \in I}$
                    }{$\oplus^\star$}
                \end{prooftree}
            \end{wfit}%
            \hfill%
            \begin{wfit}
                \begin{prooftree}
                    \infAss{
                        $P \vdash \Omega; \Gamma$
                    }
                    \infAss{
                        $t \in \mbb{N}$
                    }
                    \infBin{
                        $P \vdash \Omega; \lift{t} \Gamma$
                    }{\scc{Lift}}
                \end{prooftree}
            \end{wfit}%
            \hfill\mbox{}%
        }%
    \end{mdframed}

    \caption{The typing rules of APCP (top) and admissible rules (bottom).}
    \label{f:apcpInf}
\end{figure}

\Cref{f:apcpInf} (top) gives the typing rules.
In typing rules, we often write `$\Gamma, x{:}~ A$' (or similarly for $\Omega$) to denote \emph{disjoint union}, i.e.\ $x \notin \dom(\Gamma)$.

Some type-preserving transformations of typing derivations correspond to process reductions (cf.\ \Cref{t:subjectRed}).
Other such transformations correspond to structural congruences (cf.\ \Cref{f:procdef} (middle)); we sometimes use this explicitly in typing derivations in the form of a rule `$\equiv$'.
If $P \equiv Q$ and $P \vdash \Omega; \Gamma$ and $Q \vdash \Omega; \Gamma'$ where $\Gamma$ and $\Gamma'$ are equal up to the unfolding of recursive types, then we say that $P \vdash \Omega; \Gamma'$ and $Q \vdash \Omega; \Gamma$; in the context of a typing derivation, we equate recursive types and their unfoldings.

We describe the typing rules from a \emph{bottom-up} perspective.
Axiom `\scc{Empty}' types an inactive process with no endpoints.
Rule `$\bullet$' silently removes a closed endpoint to the typing context.
Axiom `\scc{Id}' types forwarding between endpoints of dual type.
Rule `\scc{Mix}' types the parallel composition of two processes that do not share assignments on the same endpoints.
Rule `\scc{Cycle}' types a restriction, where the two restricted endpoints must be of dual type.
Note that a single application of `\scc{Mix}' followed by `\scc{Cycle}' coincides with the usual rule `\scc{Cut}' in type systems based on linear logic~\cite{conf/concur/CairesP10,conf/icfp/Wadler12}.
Axiom `$\tensor$' types an output action; this rule does not have premises to provide a continuation process, leaving the free endpoints to be bound to a continuation process using `\scc{Mix}' and `\scc{Cycle}'.
Similarly, axiom `$\oplus$' types an unbounded selection action.
Priority checks are confined to rules `$\parr$' and `$\&$', which type an input and a branching prefix, respectively.
In both cases, the used endpoint's priority must be lower than the priorities of the other types in the continuation's typing context.

Rule `\scc{Rec}' types a recursive definition by introducing a recursion variable to the recursion context whose tuple of types concurs with the contents of the recursive types in the typing context, where contractivity is guaranteed by requiring that the eliminated recursion variable may not appear unguarded in each of the context's types.
At the same time, the recursive types in the context are unfolded, and their priorities are lifted by a common value, denoted $t$ in the rule, that must be greater than the highest priority appearing in the original types (cf.\ \Cref{d:maxpr}).
Using a ``common lifter'', i.e., lifting the priorities of all types by the same amount is crucial: it maintains the relation between the priorities of the types in the context.

Axiom `\scc{Var}' types a recursive call on a variable in the recursive context.
The rule requires that all the types in the context are recursive on the recursion variable called, and that the types inside the recursive definitions concur with the respective types assigned to the recursion varialbe in the recursive context.
As mentioned before, the types associated to the introduced and consequently eliminated recursion variable is crucial in ensuring that a recursion is called with endpoints of the same type as required by its definition.

The binding of output and selection actions to continuation processes (\Cref{n:sugar}) is derivable in APCP.
The corresponding typing rules in \Cref{f:apcpInf} (bottom) are admissible using `\scc{Mix}' and `\scc{Cycle}' (cf.~\cite{report/vdHeuvelP21B}).
\Cref{f:apcpInf} (bottom) also includes an admissible rule `\scc{Lift}' that lifts a process' priorities.

The following result assures that, given a type, we can construct a process with an endpoint typable with the given type:

\begin{proposition}\label{p:charProc}
    Given a type $A$, there exists a $P$ such that $P \vdash \Omega; x{:}~A$.
\end{proposition}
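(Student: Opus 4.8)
The plan is to proceed by induction on the structure of the session type $A$, constructing in each case a witness process $P$ together with a typing derivation for $P \vdash \Omega; x{:}~A$. For the base cases, when $A = \bullet$ I take $P = \0$ and apply rule `$\bullet$' to the derivation $\0 \vdash \Omega; \emptyset$ obtained from axiom `\scc{Empty}'; when $A = X$ is a free recursion variable, the type is not closed and I simply appeal to the fact that we are constructing a process relative to whatever $\Omega$ binds $X$, using axiom `\scc{Var}' with a singleton context (or, more cleanly, I restrict attention to types where every recursion variable is bound by some enclosing $\mu$, handling $X$ only under its binder).

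For the inductive cases I would build $P$ from the appropriate action prefix followed by a continuation supplied by the induction hypothesis. For $A = A' \tensor^\pri B$, the output axiom `$\tensor$' already gives $x[y,z] \vdash \Omega; x{:}~A' \tensor^\pri B, y{:}~\ol{A'}, z{:}~\ol{B}$ with no continuation required, but to obtain a process typed with the single assignment $x{:}~A$ I instead use the derivable output $\ol{x}[y] \cdot P'$ via rule `$\tensor^\star$': by the induction hypothesis there is a $P'_1 \vdash \Omega; y{:}~A'$ and a $P'_2 \vdash \Omega; x{:}~B$, and composing them (using `\scc{Mix}') gives the premise $P' \vdash \Omega; y{:}~A', x{:}~B$ needed by `$\tensor^\star$'. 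The dual case $A = A' \parr^\pri B$ uses rule `$\parr$' on an input prefix $x(y,z) \sdot P'$, where the induction hypothesis supplies continuations typed with $y{:}~A'$ and $z{:}~B$; here the side condition $\pri < \pr(\Gamma)$ must be discharged. The selection and branching cases $\oplus^\pri\{i{:}~A_i\}_{i \in I}$ and $\&^\pri\{i{:}~A_i\}_{i \in I}$ are analogous, using `$\oplus^\star$' (choosing any fixed $j \in I$, noting $I \neq \emptyset$) and `$\&$' (requiring a continuation for every branch $i \in I$ from the induction hypothesis), respectively.

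The recursive case $A = \mu X \sdot A'$ is where the priority bookkeeping becomes delicate, and I expect this to be the main obstacle. Here I would take $P = \mu X(x) \sdot P'$ and apply rule `\scc{Rec}', which requires choosing a lifter $t > \max_\pr(A')$ and a continuation $P' \vdash \Omega, X{:}~A'; x{:}~\unfold^t(\mu X \sdot A')$. The subtlety is that the induction hypothesis, as stated, produces a process for the type $A'$ but with $X$ treated as a bound variable inside the enlarged recursion context $\Omega, X{:}~A'$, and the context endpoint must carry the \emph{unfolded} type $\unfold^t(\mu X \sdot A')$ rather than $A'$ itself. I would therefore strengthen the induction hypothesis to range over an arbitrary recursion context $\Omega$ (so that the recursive call axiom `\scc{Var}' is available for the variable case) and argue that the witness for the unfolded type can be assembled from the witness for the body together with a recursive call $X\call{x}$ at the position of the bound variable; the common-lifter discipline of `\scc{Rec}' then keeps the priority relations consistent. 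The genuinely non-routine point is verifying that the priority side conditions of `$\parr$', `$\&$', and `\scc{Rec}' can always be met simultaneously: since we are free to \emph{choose} the priority annotations on the connectives of the witness (or, equivalently, to apply the admissible rule `\scc{Lift}' to adjust them), these constraints can be satisfied by picking priorities that strictly increase as one moves inward through the prefixes, which is exactly the discipline the type system is designed to permit.
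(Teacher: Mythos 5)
Your proposal is correct and follows essentially the same route as the paper: the paper's proof inductively defines a characteristic process $\chr^x(A)$ whose clauses coincide case by case with your construction (bound output $\ol{x}[y]$ with parallel continuations for $\tensor$, input for $\parr$, an arbitrary $j \in I$ for selection, a branch per label for branching, $\0$ for $\bullet$, $\mu X(x)$ for recursive types, and $X\call{x}$ for variables). If anything, you are more explicit than the paper about the unfolding and priority bookkeeping in the recursive case, which the paper deliberately elides.
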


\begin{proof}
    We inductively define a function `$\chr^x(A)$' that, given a type $A$ and an endpoint $x$, constructs a process that performs the behavior described by $A$:
    \begin{align*}
        \chr^x(A \tensor^\pri B)
        &:= \ol{x}[y] \cdot (\chr^y(A) \| \chr^x(B))
        &
        \chr^x(\oplus^\pri \{i:A_i\}_{i \in I})
        &:= \ol{x} \puts j \cdot \chr^x(A_j)
        \quad \text{[any $j \in I$]}
        \\
        \chr^x(A \parr^\pri B)
        &:= x(y) \sdot (\chr^y(A) \| \chr^x(B))
        &
        \chr^x(\&^\pri \{i:A_i\}_{i \in I})
        &:= x \gets \{i:\chr^x(A_i)\}_{i \in I}
        \\
        \chr^x(\bullet)
        &:= \0
        \qquad\qquad
        \chr^x(\mu X \sdot A)
        := \mu X(x) \sdot \chr^x(A)
        \qquad\qquad
        \chr^x(X)
        := X\call{x}
        \span\span
    \end{align*}
    For finite types,  we have: $\chr^x(A) \vdash \emptyset; x{:}~A$.
    For simplicity, we omit details about recursive types, which require unfolding.
    For closed, recursive types, we have: $\chr^x(\mu X \sdot A) \vdash \emptyset; x{:}~ \mu X \sdot A$.
\end{proof}

\paragraph{Type Preservation}

Well-typed processes satisfy protocol fidelity, communication safety, and deadlock freedom.
The first two properties follow directly from \emph{type preservation} (also known as \emph{subject reduction}), which ensures that reduction preserves typing.
In contrast to Caires and Pfenning~\cite{conf/concur/CairesP10} and Wadler~\cite{conf/icfp/Wadler12}, where type preservation corresponds to the elimination of (top-level) applications of rule \scc{Cut}, in APCP it corresponds to the more general elimination of (top-level) applications of rule \scc{Cycle}.

\begin{theorem}[Type Preservation~\cite{report/vdHeuvelP21B}]\label{t:subjectRed}
    If \,$P \vdash \Omega; \Gamma$ and $P \redd Q$, then $Q \vdash \Omega; \lift{t}\Gamma$ for $t \in \mbb{N}$.
\end{theorem}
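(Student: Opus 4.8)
The plan is to prove type preservation by induction on the derivation of the reduction $P \redd Q$, following the closure rules and axioms of \Cref{f:procdef} (bottom). The statement generalizes the usual subject reduction result because APCP reductions include commuting conversions ($\kappa$-rules), which move prefixes across restrictions and may therefore lift priorities; this is exactly why the conclusion allows $Q \vdash \Omega; \lift{t}\Gamma$ for some $t \in \mbb{N}$ rather than $Q \vdash \Omega; \Gamma$ verbatim. The base cases are the $\beta$- and $\kappa$-axioms, and the inductive cases are the three closure rules $\rred{\equiv}$, $\rred{\onu}$, and $\rred{\|}$.

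First I would handle the $\beta$-axioms. For each one, the redex is typed by a specific configuration of typing rules: a top-level application of \scc{Cycle} (that binds the two communicating endpoints $x,y$), beneath which sits a \scc{Mix} splitting the two parallel components, each typed by the appropriate action/prefix rule. For $\brred{\tensor\parr}$, the output $x[a,b]$ is typed by axiom `$\tensor$' giving $x{:}~A \tensor^\pri B$ together with $a{:}~\ol{A}, b{:}~\ol{B}$, while the input $y(v,z) \sdot P$ is typed by rule `$\parr$' with $y{:}~\ol{A} \parr^\pri \ol{B}$ (duality forces matching priorities, by \defref{d:props}). After reduction, I would exhibit a derivation for $P\subst{a/v,b/z}$ by substituting the endpoint typings; the key observation is that the priority side-condition $\pri < \pr(\Gamma)$ on the `$\parr$' rule is no longer needed because the input prefix has been consumed. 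The cases $\brred{\scc{Id}}$ and $\brred{\oplus\&}$ are analogous, the former using that \scc{Id} types a forwarder at dual types so the substitution $\subst{x/z}$ merges the two sides consistently, and the latter using that the selected branch $j \in I$ is precisely the one whose continuation type $A_j$ was recorded. In all $\beta$-cases the context $\Gamma$ is preserved exactly, so one may take $t = 0$.

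For the $\kappa$-axioms, which are the real source of the $\lift{t}$ in the statement, I would analyze $\krred{\parr}$ and $\krred{\&}$. Here the redex pulls an input (resp.\ branching) prefix on a free channel $x$ out of a restriction $\nu{\tilde{v}\tilde{w}}$. The subtlety is that on the right-hand side the input on $x$ now guards the restriction, and rule `$\parr$' imposes $\pr(x) < \pr(\Gamma)$ on the continuation's context; re-establishing this ordering after the conversion is what may force us to lift the priorities of the residual types by a uniform amount $t$, using the admissible rule \scc{Lift} from \Cref{f:apcpInf} (bottom). I would show that choosing $t$ large enough restores the priority constraints while preserving duality (since \scc{Lift} shifts all priorities by the same $t$, matching dual annotations are kept matching). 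The inductive closure cases are then routine: $\rred{\onu}$ applies the IH under \scc{Cycle} and propagates the lifter; $\rred{\|}$ applies the IH under \scc{Mix}, lifting the untouched context $R$ by the same $t$ via \scc{Lift} so that the two branches agree; and $\rred{\equiv}$ requires a separate lemma that structural congruence preserves typing up to unfolding of recursive types, which is where the recursion machinery (Rules \scc{Rec}/\scc{Var} and \defref{d:unf}) enters.

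I expect the main obstacle to be the structural-congruence case $\rred{\equiv}$, specifically the unfolding axiom $\mu X(\tilde{z}) \sdot P \equiv P\{(\mu X(\tilde{y}) \sdot P\subst{\tilde{y}/\tilde{z}})/X\call{\tilde{y}}\}$. Showing that typing is preserved across this equivalence demands reconciling the unfolding of \emph{processes} with the unfolding of recursive \emph{types} (\defref{d:unf}), and crucially tracking how the lifter $t$ in rule \scc{Rec} interacts with the priority bookkeeping: the unfolded copy's types must have their priorities lifted by $t > \max_\pr(\tilde{A})$ so that the recursion's actions are scheduled after the guarding prefix, and I must verify that this is exactly compatible with the tacit convention (stated after \Cref{f:apcpInf}) that equates recursive types with their unfoldings inside derivations. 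Getting the priority arithmetic to line up here—so that the global invariant ``lower priority fires first'' survives both unfolding and commuting conversion—is the delicate heart of the argument; the remaining cases are comparatively mechanical. Since this is the central result recalled from prior work~\cite{report/vdHeuvelP21B}, I would ultimately defer the full priority calculations to that reference and present the shape of the derivation transformations for the representative $\beta$- and $\kappa$-cases.
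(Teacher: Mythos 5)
The paper does not actually prove this theorem: it is recalled verbatim from the cited prior work, and the text explicitly defers all proofs of APCP's meta-theory to that reference, so there is no in-paper proof to compare against. Your overall plan --- induction on the derivation of $P \redd Q$, with the $\beta$-axioms preserving the context exactly, and the structural-congruence case for recursion unfolding being the delicate step --- is the expected route, and your treatment of the $\beta$-cases and of the closure rules is fine.

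There is, however, one concrete error: your claim that the $\kappa$-rules are ``the real source of the $\lift{t}$'' and that one can ``choose $t$ large enough'' to restore the priority side-conditions after a commuting conversion. The lift operation adds the \emph{same} $t$ to every priority in $\Gamma$, so for any two priorities $\pri_1 < \pri_2$ iff $\pri_1 + t < \pri_2 + t$: a uniform lift of the whole context can never repair a violated (or establish a missing) ordering between the pulled-out prefix on $x$ and the priorities contributed by the other parallel component $Q$ in $\krred{\parr}$/$\krred{\&}$. Whatever argument the cited work uses for the commuting conversions, it is not ``lift by a large enough $t$''; that case has to be discharged by a direct analysis of which priorities can actually occur in $\Delta$, not by \scc{Lift}. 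The genuine origin of the $\lift{t}$ in the statement is the case you correctly flag as the heart of the proof but treat as a separate concern: the unfolding axiom $\mu X(\tilde{z}) \sdot P \equiv P\{\ldots\}$ under $\rred{\equiv}$. Rule \scc{Rec} types the body at $\unfold^t(\mu X \sdot A_i) = A_i\subst{\mu X \sdot (\lift{t}A_i)/X}$, so the unfolded process is typed at types whose priorities have been shifted by the lifter $t > \max_\pr(\tilde{A})$ relative to the folded ones; reconciling the two sides of the congruence is exactly what forces the conclusion to read $Q \vdash \Omega; \lift{t}\Gamma$ rather than $Q \vdash \Omega; \Gamma$. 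You should therefore move the lift bookkeeping out of the $\kappa$-cases and into the recursion-unfolding case, and replace the ``choose $t$ large enough'' step for $\krred{\parr}$/$\krred{\&}$ with an argument that does not rely on lifting at all.
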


\paragraph{Deadlock Freedom}

The deadlock freedom result for APCP adapts that for PCP~\cite{conf/fossacs/DardhaG18}.
As mentioned before, binding  asynchronous outputs and selections to continuations involves additional, low-level uses of \scc{Cycle}, which we cannot eliminate through process reduction.
Therefore,  top-level deadlock freedom holds for \emph{live processes} (\Cref{t:dlFree}).
A process is live if it is equivalent to a restriction on \emph{active names} that perform unguarded actions.
This way, e.g., in `$x[y,z]$' the name $x$ is active, but $y$ and $z$ are not.

\begin{definition}[Active Names]\label{d:an}
    The \emph{set of active names} of $P$, denoted `$\an(P)\mkern-2mu$', contains the (free) names that are used for unguarded actions (output, input, selection, branching):
    \begin{align*}
        \an(x[y,z])
        & := \{x\}
        &
        \an(x(y,z) \sdot P)
        & := \{x\}
        &
        \an(\0)
        & := \emptyset
        \\
        \an(x[z] \puts j)
        & := \{x\}
        &
        \an(x(z) \gets \{i{:}~ P_i\}_{i \in I})
        & := \{x\}
        &
        \an(x \fwd y)
        & := \{x,y\}
        \\
        \an(P \| Q)
        & := \an(P) \cup \an(Q)
        &
        \an(\mu X(\tilde{x}) \sdot P)
        & := \an(P)
        \\
        \an(\nu{x y}P)
        & := \an(P) \setminus \{x,y\}
        &
        \an(X\call{\tilde{x}})
        & := \emptyset
    \end{align*}
\end{definition}

\begin{definition}[Live Process]\label{d:live}
    A process $P$ is \emph{live}, denoted `$\mkern2mu\live(P)\mkern-3mu$', if there are names $x,y$ and process $P'$ such that $P \equiv \nu{x y}P'$ with $x,y \in \an(P')$.
\end{definition}

\noindent
We additionally need to account for recursion: as recursive definitions do not entail reductions, we must fully unfold them before eliminating \scc{Cycle}s.

\begin{lemma}[Unfolding]\label{l:unfold}
    If $P \vdash \Omega; \Gamma$, then there is  a  process $P^\star$ such that $P^\star \equiv P$ and $P^\star$ is not of the form `$\mu X(\tilde{z}); P'\mkern-2mu$' and $P^\star \vdash \Omega; \Gamma$.
\end{lemma}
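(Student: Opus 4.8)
The plan is to induct on the number of recursive-loop binders occurring at the head of $P$, repeatedly applying the unfolding axiom of structural congruence until no leading recursive loop remains; typing is then carried along by subject congruence.

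First I would fix a measure. Call the \emph{leading depth} of $P$ the largest $n$ such that $P = \mu X_1(\tilde z_1) \sdot \cdots \sdot \mu X_n(\tilde z_n) \sdot Q$ with $Q$ not of the form $\mu Y(\tilde w) \sdot Q'$. If $n = 0$, then $P$ is already not a top-level recursive loop, so I take $P^\star := P$, and the claim holds trivially with $P^\star \equiv P$ and the same typing derivation. For the inductive step ($n > 0$), write $P = \mu X_1(\tilde z_1) \sdot R$ with $R = \mu X_2(\tilde z_2) \sdot \cdots \sdot \mu X_n(\tilde z_n) \sdot Q$, and unfold the outermost binder using the structural congruence axiom for recursion:
\[
    P \equiv R\big\{(\mu X_1(\tilde y) \sdot R\subst{\tilde y/\tilde z_1}) / X_1\call{\tilde y}\big\} =: P_1 .
\]
The key observation is that this substitution only replaces recursive calls to $X_1$. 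By contractivity---no subexpression of the form $\mu X_1(\tilde z_1) \sdot \cdots \sdot \mu X_m(\tilde z_m) \sdot X_1\call{\tilde z_1}$ is permitted---the call $X_1\call{\cdot}$ never occurs at the head of $R$. Hence, since the substitution distributes under the binders $\mu X_2, \ldots, \mu X_n$ (which, up to $\alpha$-renaming, bind variables distinct from $X_1$), we obtain $P_1 = \mu X_2(\tilde z_2) \sdot \cdots \sdot \mu X_n(\tilde z_n) \sdot Q'$ for some $Q'$, so $P_1$ has leading depth $n-1$. Applying the induction hypothesis to $P_1$ gives a process $P^\star \equiv P_1 \equiv P$ that is not of the form $\mu Y(\tilde w) \sdot P'$, as required.

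For typing, I would rely on subject congruence. Each unfolding step above is an instance of the structural congruence axiom for recursion, which the type system handles by equating a recursive type $\mu X \sdot A$ with its unfolding $\unfold^t(\mu X \sdot A)$ within derivations, as noted just before the typing rules and realized concretely by rules \scc{Rec} and \scc{Var}. Thus a derivation of $\mu X_1(\tilde z_1) \sdot R \vdash \Omega; \tilde z_1 {:}~ \widetilde{\mu X_1 \sdot A}$ ending in \scc{Rec} transforms into a derivation of $P_1$ under the same $\Omega$ and a context equal up to unfolding of recursive types; since we equate such contexts, $P_1 \vdash \Omega; \Gamma$. Chaining these steps yields $P^\star \vdash \Omega; \Gamma$.

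The main obstacle is the termination argument: I must guarantee that unfolding the head binder genuinely decreases the leading depth rather than re-exposing a recursive loop. This is precisely where contractivity is essential---it ensures the unfolded call $X_1\call{\cdot}$ is guarded and so cannot resurface at the head after substitution, making the leading depth strictly decrease and the induction well-founded.
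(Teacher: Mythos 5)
The paper states this lemma without proof---it is recalled from the cited APCP report, whose proofs the paper explicitly defers to---so there is no in-paper argument to compare against. Your proof is correct and is the natural one: peel off leading $\mu$-binders one at a time via the structural-congruence axiom for recursion, with contractivity guaranteeing termination and the paper's convention of identifying recursive types with their unfoldings carrying the typing along. One spot to tighten: what must be checked is not that $X_1\call{\cdot}$ ``never occurs at the head of $R$'' (the head of $R$ is $\mu X_2$, which the substitution leaves alone anyway), but that the innermost body $Q$ is neither a recursive loop nor a bare call $X_j\call{\tilde z}$ for any $j \in \{1,\ldots,n\}$; contractivity forbids exactly these, since $\mu X_j(\tilde z_j)\sdot\cdots\sdot\mu X_n(\tilde z_n)\sdot X_j\call{\tilde z}$ would be a disallowed subexpression. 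With that, the head constructor of $Q$ survives the substitution unchanged, so the leading depth strictly decreases and the induction goes through. A last minor bookkeeping point worth a sentence in a polished write-up: the derivation of $P \vdash \Omega;\Gamma$ need not end in \scc{Rec} (e.g.\ trailing applications of rule $\bullet$ may sit above it); these commute harmlessly with the unfolding.
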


\noindent
Deadlock freedom, given next, states that typable processes that are live can reduce.
It follows from an analysis of the priorities in the typing of the process, which makes it possible to find a pair of non-blocked, parallel, dual actions on connected endpoints, such that a communication can occur.
The analysis also considers the possibility that a blocking action is on an endpoint which is not connected (i.e., the endpoint is free), in which case a commuting conversion can be performed.
Confer the full proof by Van den Heuvel and Pérez~\cite[Theorem~5]{report/vdHeuvelP21B} for more details.

\begin{theorem}[Deadlock Freedom]\label{t:dlFree}
    If $P \vdash \emptyset; \Gamma$ and $\live(P)$, then there is process $Q$ such that $P \redd Q$.
\end{theorem}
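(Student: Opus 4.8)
The plan is to combine the structural characterization of live processes with the priority discipline enforced by typing in order to locate a reducible redex. First I would invoke the Unfolding Lemma (\Cref{l:unfold}) to replace $P$ by a structurally congruent $P^\star \vdash \emptyset; \Gamma$ that is not of the form $\mu X(\tilde{z}) \sdot P'$; this strips away the only typable top-level construct that does not itself enable a reduction, and lets me reason purely about the outermost prefixes of the process. Since $\redd$ is closed under $\equiv$ (rule $\rred{\equiv}$), any reduction found for $P^\star$ transfers back to $P$. Next, using $\live(P)$ I would write $P^\star \equiv \nu{x y}P'$ with $x,y \in \an(P')$, and then, repeatedly applying scope extrusion together with associativity and commutativity of $\|$, bring $P^\star$ into a normal form $\nu{\tilde{v}\tilde{w}}\big(\prod_k R_k\big)$ in which every top-level restriction is pulled outward and each $R_k$ is a single unguarded action, a forwarder, or $\0$. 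The restriction witnessing liveness guarantees that at least one co-restricted pair has both of its endpoints in $\an(P')$.

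The heart of the argument is a minimal-priority analysis. Among all restricted endpoints that are active, I would choose one, $w$, whose session type has least priority $o$; by \Cref{d:duality} its partner $w'$ carries the same priority $o$. Because $w$ is active, the action on $w$ is unguarded, and typing forces it to realize the outermost connective of $w$'s type. I then inspect the action on $w'$:
\begin{itemize}
    \item if it is also unguarded, then duality makes the two actions complementary (output/input, selection/branch, or a pair of forwarded endpoints), so exactly one of $\brred{\tensor\parr}$, $\brred{\oplus\&}$, or $\brred{\scc{Id}}$ fires;
    \item if it is guarded, then its guard is an unguarded input or branch whose priority is strictly below $o$ (the first priority law: a blocking prefix has priority strictly smaller than the actions it guards). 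Its subject is thus an active name of priority $<o$; were it bound, this would contradict minimality of $o$, so it must be free, and since the whole thread lies within the scope of $\nu{w w'}$, a commuting conversion $\krred{\parr}$ or $\krred{\&}$ applies.
\end{itemize}
In every case $P^\star$---and hence $P$---reduces, which is what we want.

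I expect the delicate step to be precisely this minimal-priority analysis: one must argue that no blocking prefix of priority $\ge o$ can sit above the chosen dual action, which rests on the invariant that priorities strictly increase under input and branching prefixes (the premises $\pri < \pr(\Gamma)$ in rules $\parr$ and $\&$) combined with the equality of dual priorities (\Cref{d:props,d:duality}). Some care is also needed to verify that the normal form is genuinely reachable by $\equiv$ (in particular that scope extrusion never captures a free name, and that consecutive and reordered restrictions are justified by the congruence axioms), and that recursion variables---which by \Cref{d:an} contribute no active names---are correctly excluded after unfolding so that they cannot spuriously be selected as the minimal-priority endpoint. The remaining bookkeeping is routine.
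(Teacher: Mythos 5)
Your proposal is correct and follows essentially the same route as the paper, which only sketches the argument (deferring details to the cited technical report): unfold recursion, then use a priority analysis to locate either a pair of non-blocked, parallel, dual actions on connected endpoints (yielding a $\beta$-synchronization) or a blocking prefix on a free endpoint (yielding a commuting conversion). Your minimal-priority selection of a restricted active endpoint and the contradiction with minimality when the blocking subject is bound is precisely the intended mechanism.
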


\noindent
We now state the deadlock freedom result formalized by Van den Heuvel and Pérez~\cite{report/vdHeuvelP21B}.
Following, e.g., Caires and Pfenning~\cite{conf/concur/CairesP10} and Dardha and Gay~\cite{conf/fossacs/DardhaG18}, it concerns processes typable under empty contexts.
This way, the reduction guaranteed by \Cref{t:dlFree} corresponds to a synchronization ($\beta$-rule), rather than a commuting conversion ($\kappa$-rule).

\begin{theorem}[Deadlock Freedom for Processes Typable under Empty Contexts~\cite{report/vdHeuvelP21B}]\label{t:closedDLFree}
    If $P \vdash \emptyset; \emptyset$, then either $P \equiv \0$ or $P \redd_\beta Q\mkern2mu$ for some $Q$.
\end{theorem}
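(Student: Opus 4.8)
The plan is to reason by a dichotomy on whether $P$ is live (\Cref{d:live}), and in the live case to upgrade the generic reduction of \Cref{t:dlFree} into a synchronization by exploiting the emptiness of the context.

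\emph{If $P$ is not live,} I would show $P \equiv \0$ by induction on the derivation of $P \vdash \emptyset; \emptyset$, first using \Cref{l:unfold} to assume $P$ is not a top-level recursive loop. Since the context is empty, $P$ has no free names, so the only primitive rules that can conclude an empty context at the root are \scc{Empty}, \scc{Mix}, and \scc{Cycle} (the rules $\bullet$, \scc{Id}, and the action axioms all produce a nonempty context, and \scc{Rec}/\scc{Var} are excluded by the unfolding step). The \scc{Empty} case is immediate. For \scc{Mix} we have $P = P_1 \| P_2$ with both components typed under the empty context; since liveness of a component would entail liveness of the composition, both are non-live, and the induction hypothesis gives $P \equiv \0 \| \0 \equiv \0$. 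The crux is \scc{Cycle}, where $P = \nu{x y}P'$ with $P' \vdash \emptyset; x{:}A, y{:}\ol{A}$: non-liveness forces at least one of $x,y$ to be guarded in $P'$, but as $x$ and $y$ are the only free names of $P'$, such a guard must be a blocking prefix on one endpoint sitting above the dual action on the other. Here I would invoke the priority discipline (\Cref{d:priority}): duality forces the two actions to carry equal priorities, whereas rules $\parr$ and $\&$ require the guarding prefix's priority to be \emph{strictly} below that of its continuation---a contradiction. Hence the only typable possibility is that $x,y$ carry no behaviour, so $\nu{x y}P' \equiv P'$ and the induction hypothesis closes the case.

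\emph{If $P$ is live,} I would apply \Cref{t:dlFree} directly to obtain $Q$ with $P \redd Q$, and then argue that this reduction can be taken to be a synchronization $P \redd_\beta Q$. The key observation is again that $\Gamma = \emptyset$ leaves $P$ without free names. Inspecting the mechanism of \Cref{t:dlFree}, the redex it exposes is a pair of unguarded dual actions on a \emph{connected} (restricted) endpoint pair---precisely the pair witnessing liveness---and its commuting-conversion alternative ($\krred{\parr}$, $\krred{\&}$) is triggered only when a blocking action sits on a \emph{free} endpoint. With no free endpoints available, no commuting conversion applies to that redex, so the guaranteed reduction is a $\beta$-synchronization.

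The main obstacle I anticipate lies in this live case: cleanly justifying that the reduction delivered by \Cref{t:dlFree} is specifically a $\beta$-step rather than a $\kappa$-step. This requires peering into the proof of \Cref{t:dlFree} to confirm that its commuting-conversion branch is reachable only through a free endpoint, which the empty context excludes---doing so without re-proving \Cref{t:dlFree} is the delicate point. A secondary subtlety is the \scc{Cycle} subcase above, where the priority argument ruling out a self-blocking cycle must be spelled out carefully against the side conditions of rules $\parr$ and $\&$ and the equal-priority requirement of duality.
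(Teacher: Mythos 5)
The paper does not actually prove this statement: it is recalled from the APCP report~\cite{report/vdHeuvelP21B}, and the only justification offered in the surrounding text is the remark that, for processes typable under empty contexts, the reduction guaranteed by \Cref{t:dlFree} is a synchronization rather than a commuting conversion, because commuting conversions are only triggered by prefixes on \emph{free} endpoints. Your live case reconstructs exactly this rationale, and the difficulty you flag---that one must look inside the proof of \Cref{t:dlFree} to confirm its $\kappa$-branch is reachable only through a free endpoint---is real but is precisely the move the paper itself relies on; on that half your proposal is aligned with the intended argument.

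Two genuine gaps remain. First, your dichotomy does not match the theorem's disjunction: with the definitions as recalled here, $\nu{xy}(x \fwd y)$ is typable under empty contexts and is live (since $\an(x \fwd y) = \{x,y\}$), yet it admits no $\beta$-reduction---rule $\brred{\scc{Id}}$ requires the restricted pair to avoid the forwarder's other endpoint---and is dispatched only by the congruence $\nu{xy}(x \fwd y) \equiv \0$. So some live processes fall into the $P \equiv \0$ disjunct, and your live branch cannot conclude $P \redd_\beta Q$ without first eliminating such closed forwarders (and arguing that unfolding does not reintroduce them). Second, in the \scc{Cycle} subcase of the non-live branch you refute only the configuration in which the prefix on one of $x,y$ directly guards the dual action on the other. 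The general obstruction is a cycle of blocking dependencies threaded through several restrictions \emph{inside} $P'$, where no single connected pair exhibits your equal-versus-strictly-smaller priority clash; the contradiction there comes from chaining the strict inequalities of rules $\parr$ and $\&$ around the whole cycle (equivalently, taking an unguarded action of least priority and showing its dual cannot be guarded), which is essentially the content of \Cref{t:dlFree} itself rather than a short side remark. Relatedly, once $x$ and $y$ are shown to carry no behaviour, the premise $P'$ is typed under $x{:}\bullet, y{:}\bullet$ rather than the empty context, so your induction hypothesis must be generalized to contexts containing only $\bullet$-typed names for the appeal to it to be legitimate.
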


\paragraph{Fairness}

Processes typable under empty contexts are not only deadlock free, they are \emph{fair}: for each endpoint in the process, we can eventually observe a reduction involving that endpoint.
To formalize this property, we define \emph{labeled reductions}, which expose details about a communication:

\begin{definition}[Labeled Reductions]\label{d:procLred}
    Consider the labels
    \begin{align*}
        \alpha ::= x \fwd y ~\sepr~ x \rangle y{:}a ~\sepr~ x \rangle y{:}\ell
        \qquad\qquad
        \text{(forwarding, output/input, selection/branching)}
    \end{align*}
    where each label has subjects $x$ and $y$.
    The \emph{labeled reduction} `$\mkern1mu P \xredd{\alpha} Q \mkern-3mu$' is defined by the following rules:
    \begin{align*}
        \nu{y z}( x \fwd y \| P ) \xredd{x \fwd y} P \subst{x/z}
        \qquad
        \nu{x y}( x[a,b] \| y(v,z) \sdot P ) \xredd{x \rangle y{:}a} P \subst{a/v, b/z}
        \\
        \nu{x y}( x[b] \puts j \| y(z) \gets \{i{:}~ P_i\}_{i \in I} ) \xredd{x \rangle y{:}j} P_j \subst{b/z} \quad \text{(if $j \in I$)}
    \end{align*}

    \medskip
    \mbox{}\hfill%
    \begin{wfit}
        \begin{prooftree}
            \infAss{
                $(P \equiv P') \wedge (P' \xredd{\alpha} Q') \wedge (Q' \equiv Q)$
            }
            \infUn{
                $P \xredd{\alpha} Q$
            }{}
        \end{prooftree}
    \end{wfit}%
    \hfill%
    \begin{wfit}
        \begin{prooftree}
            \infAss{
                $P \xredd{\alpha} Q$
            }
            \infUn{
                $\nu{x y} P \xredd{\alpha} \nu{x y} Q$
            }{}
        \end{prooftree}
    \end{wfit}%
    \hfill%
    \begin{wfit}
        \begin{prooftree}
            \infAss{
                $P \xredd{\alpha} Q$
            }
            \infUn{
                $P \| R \xredd{\alpha} Q \| R$
            }{}
        \end{prooftree}
    \end{wfit}%
    \hfill\mbox{}
\end{definition}

\begin{proposition}
    For any $P$ and $P'$, $P \redd_\beta P'$ if and only if there exists a label $\alpha$ such that $P \xredd{\alpha} P'$.
\end{proposition}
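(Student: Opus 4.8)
The plan is to prove both implications by rule induction, exploiting the fact that the labeled reduction relation is generated by exactly the same inference rules as $\redd_\beta$, save for the bookkeeping of labels. Concretely, $\redd_\beta$ is the least relation that contains the three synchronization axioms $\brred{\scc{Id}}$, $\brred{\tensor\parr}$, and $\brred{\oplus\&}$ (but \emph{none} of the commuting-conversion axioms $\krred{\parr}$, $\krred{\&}$) and is closed under the three rules $\rred{\equiv}$, $\rred{\onu}$, and $\rred{\|}$; the labeled reduction $\xredd{\alpha}$ is built from the very same shapes, now carrying a label $\alpha$. The two directions of the biconditional then amount to erasing, respectively reconstructing, these labels.

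For the forward direction, I would induct on the derivation of $P \redd_\beta P'$. In each of the three base cases the applied axiom fixes a unique redex and hence a unique label: $\brred{\scc{Id}}$ on $\nu{y z}(x \fwd y \| P)$ matches the labeled axiom with $\alpha = x \fwd y$; $\brred{\tensor\parr}$ on $\nu{x y}(x[a,b] \| y(v,z) \sdot P)$ matches $\alpha = x \rangle y{:}a$; and $\brred{\oplus\&}$ on $\nu{x y}(x[b] \puts j \| y(z) \gets \{i{:}~ P_i\}_{i \in I})$ matches $\alpha = x \rangle y{:}j$, using the side condition $j \in I$. Since the right-hand sides of the two axiom families coincide, the resulting target agrees with $P'$ in each case. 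For the inductive step, each closure rule of $\redd_\beta$ has a syntactically identical counterpart for $\xredd{\alpha}$: applying the induction hypothesis to the premise yields some $\alpha$ witnessing the premise as a labeled reduction, and the matching closure rule ($\rred{\equiv}$, $\rred{\onu}$, or $\rred{\|}$) then derives $P \xredd{\alpha} P'$ with the same label. Note that the proposition only asks for \emph{existence} of $\alpha$, so the fact that a given reduction may be derivable via different redexes (and hence different labels under $\rred{\equiv}$-closure) causes no difficulty.

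For the backward direction, I would symmetrically induct on the derivation of $P \xredd{\alpha} P'$, simply forgetting the label. Each labeled axiom becomes the corresponding $\beta$-axiom with the identical target—the label records no information that alters the reduct—and each labeled closure rule maps to its $\redd_\beta$ counterpart through the induction hypothesis. This yields $P \redd_\beta P'$ independently of the particular $\alpha$.

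The statement is essentially a reformulation rather than a deep result, so no step is genuinely hard; the only point requiring care is to fix once and for all that $\redd_\beta$ denotes the relation generated by the $\beta$-axioms together with the three closure rules and \emph{excludes} the $\kappa$-axioms. Otherwise a commuting-conversion step, which has no labeled analogue, could appear on the left with nothing to match it on the right, breaking the forward direction. Given that the two rule systems stand in one-to-one correspondence once labels are added, and that the label of each axiom is determined by its redex, both inductions go through routinely.
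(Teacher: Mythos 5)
Your proof is correct and follows essentially the same route as the paper, which simply observes that each $\beta$-axiom corresponds one-to-one with a labeled reduction axiom (and the closure rules match); you have merely spelled out the routine rule induction that makes this "immediate by definition" precise. Your remark that $\redd_\beta$ must exclude the $\kappa$-axioms is a correct and worthwhile clarification, consistent with the paper's Notation for $\redd_\beta$.
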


\begin{proof}
    Immediate by definition, for each $\beta$-reduction in \Cref{f:procdef} (bottom) corresponds to a labeled reduction, and vice versa.
\end{proof}

Our fairness result states that processes typable under empty contexts have at least one finite reduction sequence (`$\redd^\star$') that enables a labeled reduction involving a \emph{pending} endpoint---an endpoint that occurs as the subject of an action, and is not bound by input or branching (see below).
Clearly, the typed process may have other reduction sequences, not necessarily finite.

\begin{definition}[Pending Names]\label{d:pending}
    Given a process $P$, we define the set of \emph{pending names} of $P$, denoted `$\mkern1mu\pn(P)\mkern-3mu$', as follows:
    \begin{align*}
        \pn(x[y,z])
        &:= \{x\}
        &
        \pn(x(y,z).P)
        &:= \{x\} \cup (\pn(P) \setminus \{y,z\})
        &
        \pn(\0)
        &:= \emptyset
        \\
        \pn(x[z] \puts j)
        &:= \{x\}
        &
        \pn(x(z) \gets \{i: P_i\}_{i \in I})
        &:= \{x\} \cup (\bigcup_{i \in I} \pn(P_i) \setminus \{z\})
        &
        \pn(x \fwd y)
        &:= \{x,y\}
        \\
        \pn(P \| Q)
        &:= \pn(P) \cup \pn(Q)
        &
        \pn(\mu X(\tilde{x}) \sdot P)
        &:= \pn(P)
        \\
        \pn(\nu{x y}P)
        &:= \pn(P)
        &
        \pn(X \call{\tilde{x}})
        &:= \emptyset
    \end{align*}
\end{definition}

\begin{theorem}[Fairness]\label{t:fairness}
    Suppose given a process $P \vdash \emptyset; \emptyset$.
    Then, for every $x \in \pn(P)$ there exists a process $P'$ such that $P \redd^\star P'$ and $P' \xredd{\alpha}\, Q$, for some process $Q$ and label $\alpha$ with subject $x$.
\end{theorem}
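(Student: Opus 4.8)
The plan is to exhibit the required reduction sequence by iterating deadlock freedom while steering the choice of redex toward the name $x$, using the priority discipline as the termination argument. First I would apply the Unfolding Lemma (\Cref{l:unfold}) to assume, without loss of generality, that $P$ is not itself a top-level recursion, so that its outermost structure is a composition of restrictions, parallel components, and prefixes. Since $P \vdash \emptyset;\emptyset$ is closed, every name is bound; in particular, because $x \in \pn(P)$ occurs as the subject of an action and is not itself bound by an input or branching (\Cref{d:pending}), it must be bound by a restriction $\nu{x w}R$ with $x : A$ and $w : \ol{A}$ for some $A$. Writing $p = \pr(A)$, the goal becomes to drive $P$ to a state where the action on $x$ and the dual action on $w$ are both unguarded at top level, so that they synchronize in a single $\beta$-step; by the correspondence between $\beta$-reductions and labeled reductions established in the preceding proposition, this synchronization is exactly a labeled reduction with subject $x$.

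The engine of the proof is a well-founded induction on a priority-sensitive measure of $P$, for which I would take (essentially) the number of action occurrences of priority at most $p$. At each stage I invoke deadlock freedom---more precisely the refinement \Cref{t:dlFree}, whose proof locates a pair of unguarded, parallel, dual actions carrying the \emph{least} priority present---to obtain a guaranteed step $P \redd_\beta P_1$. Type preservation (\Cref{t:subjectRed}) keeps $P_1 \vdash \emptyset;\emptyset$, so the induction hypothesis applies. The measure strictly decreases because the two synchronizing actions (which share the least priority $q \le p$) are consumed, while their continuations expose only actions of priority strictly greater than $q$: this is forced by the side conditions $\pri < \pr(\Gamma)$ on rules $\parr$ and $\&$. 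Crucially, the only way to create fresh actions is by unfolding a recursion, and by \Cref{d:lift,d:unf} unfolding lifts priorities, introducing only actions whose priorities lie strictly above those currently present; hence only finitely many unfoldings can ever contribute actions of priority $\le p$. Consequently the least priority present is non-decreasing along the sequence and climbs to $p$ after finitely many $\beta$-steps (interspersed, as needed, with $\kappa$-steps and $\equiv$-rearrangements, which preserve typing and pending names). At that point every input/branching prefix guarding $x$ and $w$, each of priority strictly below $p$, has been fired, so $x$ and $w$ are unguarded and I fire their synchronization.

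The hard part will be the interaction with branching. Firing a branching commits to a single label and discards the other branches, so the delicate obligation is to show that the finite reduction sequence \emph{can be chosen} so that a copy of the action on $x$ survives every commitment made on the syntactic path toward $x$, until $x$ fires. I would address this by firing guards in strictly increasing priority order---precisely the order the climbing argument produces---and verifying, prefix by prefix along $x$'s path, that the step synchronizing with each guard retains the subprocess containing $x$: for a guard that is an input this is automatic, since there is a unique continuation, and for a branching subject itself the firing already realizes a labeled reduction on that subject, whereas the genuinely delicate case is a branching guard whose selected branch must be shown to be the one leading to $x$. This is where I expect the real work to lie. The remaining obligations are routine but must be checked carefully: that $x$ is never erased by substitution (it is bound by $\nu{x w}$, not by any guard), that commuting conversions and structural congruence do not disturb the measure, and that the finitely many priority-lifting unfoldings indeed leave the least priority climbing monotonically to $p$. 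Once these are in place, the climb terminates and the final synchronization on $x$ delivers the labeled reduction required.
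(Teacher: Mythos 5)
Your proposal follows the same essential mechanism as the paper's proof: exploit the priority discipline to fire the guards of $x$ (and of its restriction-partner) in increasing priority order, using type preservation (\Cref{t:subjectRed}) to keep the induction going and the fact that $P \vdash \emptyset;\emptyset$ to guarantee every guard is itself a pending name connected to a dual partner. The bookkeeping differs --- you iterate \Cref{t:dlFree} on the globally least-priority redex and track a global count of actions of priority at most $p$, whereas the paper runs a targeted induction on the number of inputs, branches, and recursive definitions blocking the actions on $x$ and its partner, recursively unblocking the foremost guard (chosen by least priority) together with the guards of \emph{its} partner --- but both arguments terminate for the same reason and handle recursion the same way (unfolding only injects actions at strictly lifted priorities). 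The one place where you diverge is in your assessment of difficulty: the branching case you flag as ``where the real work will lie'' is in fact immediate from linearity. In rule $\&$, every branch $P_i$ is typed under the \emph{same} context $\Gamma, z{:}\,A_i$, and $\Gamma$ admits neither weakening nor contraction; so if the (unique, by typing) action on $x$ sits under a branching guard, then \emph{every} branch contains that action, and whichever label is selected, $x$ survives. There is no obligation to steer the selection toward ``the branch leading to $x$'' because all branches lead to $x$. The paper absorbs this silently into the observation that typing gives exactly one action on $x$; once you add that one-line appeal to linearity, your argument closes and is a correct, if more globally phrased, rendition of the paper's proof.
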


\begin{proof}
    Take any $x \in \pn(P)$.
    Because $P$ is typable under empty contexts, $x$ is bound to some $y \in \pn(P)$ by restriction.
    By typing, in $P$ there is exactly one action on $x$ and one action on $y$ (they may also appear in forwarder processes).
    Following the restrictions on priorities in the typing of $x$ and $y$ in $P$, the actions on $x$ and $y$ cannot appear sequentially in $P$ (cf.\ the proof by Van den Heuvel and Pérez~\cite{report/vdHeuvelP21B} for details on this reasoning).
    By typability, the action on $y$ is dual to the action on $x$.

    We apply induction on the number of inputs, branches, and recursive definitions in $P$ blocking the actions on $x$ and $y$, denoted $n$ and $m$, respectively.
    Because $P$ is typable under empty contexts, the blocking inputs and branches that are on names in $\pn(P)$ also have to be bound to pending names by restriction.
    The actions on these connected names may also be prefixed by inputs, branches, and recursive definitions, so we may need to unblock those actions as well.
    Since there can only be a finite number of names in any given process, we also apply induction on the number of prefixes blocking these connected actions.
    \begin{itemize}
        \item
            If $n = 0$ and $m = 0$, then the actions on $x$ and $y$ occur at the top-level; because they do not appear sequentially, the communication between $x$ and $y$ can take place immediately.
            Hence, $P \xredd{\alpha} Q$ where $x$ and $y$ are the subjects of $\alpha$.
            This proves the thesis, with $P' = P$.

        \item
            If $n > 0$ or $m > 0$, the analysis depends on the foremost prefix of the actions on $x$ and $y$.

            If the foremost prefix of either action is a recursive definition (`$\mu X(\tilde{y})$'), we unfold the recursion.
            Because a corresponding recursive call (`$X\call{\tilde{z}}$') cannot occur as a prefix, the effect of unfolding either (i)~triggers actions that occur in parallel to those on $x$ and $y$, or (ii)~the actions on $x$ or $y$ prefix the unfolded recursive call.
            In either case, the number of prefixes decreases, and the thesis follows from the IH.

            Otherwise, if neither foremost prefix is a recursive definition, then the foremost prefixes must be actions on names in $\pn(P)$.
            Consider the action that is typable with the least priority.
            W.l.o.g.\ assume that this is the foremost prefix of $x$.
            Suppose this action is on some endpoint $w$ connected to another endpoint $z \in \pn(P)$ by restriction.
            By typability, the priority of $w$ is less than that of $x$ and all of the prefixes in between.
            This means that the number of prefixes blocking the action on $z$ strictly decreases.
            Hence, by the IH, $P \redd^\star P'' \xredd{\alpha'}\, Q'$ in a finite number of steps, where $w$ and $z$ are the subjects of $\alpha'$.
            The communication between $w$ and $z$ can be performed, and $n$ decreases.
            By Type Preservation (\Cref{t:subjectRed}), $Q' \vdash \emptyset; \emptyset$.
            The thesis then follows from the IH: $P \redd^\star P'' \xredd{\alpha'}\, Q' \redd^\star P' \xredd{\alpha}\, Q$ in finite steps, where $x$ and $y$ are the subjects of $\alpha$.
            \qedhere
    \end{itemize}
\end{proof}

\subsection*{Examples}

To illustrate APCP processes and their session types,  we give implementations of the three participants in $G_\sff{auth}$ in \Cref{s:introduction}.

\begin{example}\label{ex:impl}
Processes $P$, $Q$, and $R$ are  typed implementations for  participants $c$, $s$, and $a$, respectively, where each process uses a single channel to perform the actions described by $G_\sff{auth}$.
\begin{align*}
    P
    &:= \mu X(\ci{c}{\mu}) \sdot \ci{c}{\mu} \gets \left\{ \hspace{-.5em} \begin{array}{ll}
            \sff{login}{:}
            & \ci{c}{\mu}(u) \sdot \ol{\ci{c}{\mu}} \puts \sff{passwd} \cdot \ol{\ci{c}{\mu}}[\bm{logmein345}] \cdot X \call{\ci{c}{\mu}} ,
            \\
            \sff{quit}{:}
            & \ci{c}{\mu}(w) \sdot \ol{\ci{c}{\mu}} \puts \sff{quit} \cdot \ol{\ci{c}{\mu}}[z] \cdot \0
    \end{array} \hspace{-.5em} \right\}
    \\
    &\vdash \ci{c}{\mu} {:}~ \mu X \sdot \&^{2} \left\{ \hspace{-.5em} \begin{array}{ll}
            \sff{login}{:}
            & \bullet \parr^{3} \oplus^{4} \{ \sff{passwd}{:}~ \bullet \tensor^{5} X \} ,
            \\
            \sff{quit}{:}
            & \bullet \parr^{3} \oplus^{4} \{ \sff{quit}{:}~ \bullet \tensor^{5} \bullet \}
    \end{array} \hspace{-.5em} \right\}
    \\[5pt]
    Q
    &:= \mu X(\ci{s}{\mu}) \sdot \ol{\ci{s}{\mu}} \puts \sff{login} \cdot \ol{\ci{s}{\mu}}[u] \cdot \ci{s}{\mu} \gets \{ \sff{auth}{:}~ \ci{s}{\mu}(v) \sdot X \call{\ci{s}{\mu}} \}
    \\
    &\vdash \ci{s}{\mu} {:}~ \mu X \sdot \oplus^{0} \{ \sff{login}{:}~ \bullet \tensor^{1} \&^{10} \{ \sff{auth}{:}~ \bullet \parr^{11} X \} , \sff{quit}{:}~ \bullet \tensor^{1} \bullet \}
    \\[5pt]
    R
    &:= \mu X(\ci{a}{\mu}) \sdot \ci{a}{\mu} \gets \left\{ \hspace{-.5em} \begin{array}{ll}
            \sff{login}{:}
            & \ci{a}{\mu} \gets \{ \sff{passwd}{:}~ \ci{a}{\mu}(u) \sdot \ol{\ci{a}{\mu}} \puts \sff{auth} \cdot \ol{\ci{a}{\mu}}[v] \cdot X \call{\ci{a}{\mu}} \} ,
            \\
            \sff{quit}{:}
            & \ci{a}{\mu} \gets \{ \sff{quit}{:}~ \ci{a}{\mu}(w) \sdot \0 \}
    \end{array} \hspace{-.5em} \right\}
    \\
    &\vdash \ci{a}{\mu} {:}~ \mu X \sdot \&^{2} \left\{ \hspace{-.5em} \begin{array}{ll}
            \sff{login}{:}
            & \&^{6} \{ \sff{passwd}{:}~ \bullet \parr^{7} \oplus^{8} \{ \sff{auth}{:}~ \bullet \tensor^{9} X \} \} ,
            \\
            \sff{quit}{:}
            & \&^{6} \{ \sff{quit}{:}~ \bullet \parr^{7} \bullet \}
    \end{array} \hspace{-.5em} \right\}
\end{align*}
Process $P$ is a specific implementation for $c$, where we use `$\bm{logmein345}$' to denote a closed channel endpoint representing the password string ``logmein345''.
Similarly, $Q$ is a specific implementation for $s$ that continuously chooses the \sff{login} branch.
\end{example}

Note that the processes above cannot be directly connected to each other to implement $G_\sff{auth}$. Our goal is to enable the composition of (typed) implementations such as  $P$, $Q$, and $R$ in a correct and deadlock free manner. We shall proceed as follows.
After setting up the routers that enable the composition of these processes according to $G_\sff{auth}$ (\Cref{s:routers}), we will return to this example in \Cref{s:routersInAction}. At that point, it will become clear that the priorities in the types of $P$, $Q$, and $R$ were chosen to ensure the correct composition  with their respective routers.

\section{Global Types and Relative Projection}
\label{s:mpst}

We analyze multiparty protocols specified as \emph{global types}.
We consider a standard syntax, with session delegation and recursion, subsuming the one given in the seminal paper by Honda \etal~\cite{journal/acm/HondaYC16}.
In the following, we write $p, q, r, s, \ldots$ to denote \emph{(protocol) participants}.

\begin{definition}[Types]\label{d:globtypes}
    \emph{Global types} $G$ and \emph{message types} $S,T$ are defined as:
    \begin{align*}
        G &::= p \mto q \{i\<S\> \sdot G\}_{i \in I} \sepr \mu X \sdot G \sepr X \sepr \bullet \sepr \gskip \sdot G
        \\
        S,T &::= {!}T \sdot S \sepr {?}T \sdot S \sepr {\oplus}\{i{:}~ S\}_{i \in I} \sepr \&\{i{:}~ S\}_{i \in I} \sepr \bullet
    \end{align*}
    We include basic types (e.g., \sff{unit}, \sff{bool}, \sff{int}), which are all syntactic sugar for $\bullet$.
    \lipicsEnd
\end{definition}

\noindent
The type `$p \mto q \{i\<S_i\> \sdot G_i\}_{i \in I}$' specifies a direct exchange from participant $p$ to participant~$q$, which precedes protocol $G_i$:
$p$ chooses a label $i \in I$ and sends it to $q$ along with a message of type $S_i$.
Message exchange is \emph{asynchronous}: the protocol can continue as $G_i$ before the message has been received by $q$.
The type `$\mu X \sdot G$' defines a recursive protocol: whenever a path of exchanges in $G$ reaches the recursion variable $X$, the protocol continues as `$\mu X \sdot G$'.
The type `$\bullet$' denotes the completed protocol.
For technical convenience, we introduce the construct `$\gskip \sdot G$', which denotes an unobservable step that precedes $G$.

Recursive definitions bind recursion variables, so recursion variables not bound by a recursive definition are free.
We write `$\frv(G)$' to denote the set of free recursion variables of $G$, and say $G$ is \emph{closed} if $\frv(G) = \emptyset$.
Recursion in global types is tail-recursive and \emph{contractive} (i.e.\ they contain no subexpressions of the form `$\mu X_1 \ldots \mu X_n \sdot X_1$').
As for the session types in \Cref{s:apcp}, we define the unfolding of a recursive global type by substituting copies of the recursive definition for recursive calls, i.e.\ `$\mu X \sdot G$' unfolds to `$G \subst{\mu X \sdot G / X}$'.

In approaches based on MPST, the grammar of global types specifies multiparty protocols but does not ensure their correct implementability; such guarantees are given in terms of \emph{well-formedness}, defined as projectability onto all participants (cf.\ \secref{ss:relproj}).

Message types $S,T$ define binary protocols, not to be confused with the types in \secref{s:apcp}.
Type `${!}T \sdot S$' (resp.\ `${?}T \sdot S$') denotes the output (resp.\ input) of a message of type $T$ followed by the continuation $S$.
Type `${\oplus}\{i{:}~ S_i\}_{i \in I}$' denotes \emph{selection}: the output of choice for a label $i \in I$ followed by the continuation $S_i$.
Type `$\&\{i{:}~ S_i\}_{i \in I}$' denotes \emph{branching}: the input of a label $i \in I$ followed by the continuation $S_i$.
Type `$\bullet$' denotes the end of the protocol.
 Note that, due to the tail-recursiveness of session and global types, there are no recursive message types.

It is useful to obtain the set of participants of a global type:

\begin{definition}[Participants]\label{d:globpart}
    We define the \emph{set of participants} of global type $G$, denoted `$\mkern1mu\part(G)\mkern-2mu$':
    \begin{align*}
        \part(p \mto q \{i\<S_i\> \sdot G_i\}_{i \in I})
        &:= \{p,q\} \cup (\bigcup_{i \in I}~ \part(G_i))
        &
        \part(\gskip \sdot G)
        &:= \part(G)
        &
        \part(\bullet)
        &:= \emptyset
        \\
        &
        &
        \part(\mu X \sdot G)
        &:= \part(G)
        &
        \part(X)
        &:= \emptyset
    \end{align*}
    \lipicsEnd
\end{definition}

\subsection{Relative Types}
\label{ss:reltypes}

While a global type such as $G_{\sff{auth}}$~\eqref{eq:Gauth} describes a protocol from a vantage point, we introduce \emph{relative types} that describe the interactions between \emph{pairs} of participants.
This way, relative types capture the peer-to-peer nature of multiparty protocols.
We develop projection from global types onto relative types (cf.\ \secref{ss:relproj}) and use it to establish a new class of \emph{well-formed} global types.

A choice between participants in a global type is \emph{non-local} if it influences future exchanges between other participants.
Our approach uses \emph{dependencies} to expose these non-local choices
in the relative types of these other participants.

Relative types express interactions between two participants.
Because we obtain a relative type through projection of a global type, we know which participants are involved.
Therefore, a relative type only mentions the sender of each exchange; we implicitly know that the recipient is the other participant.

\begin{definition}[Relative Types]\label{d:reltypes}
    \emph{Relative types} $R$ are defined as follows, where the $S_i$ are message types (cf.\ \defref{d:globtypes}):
    \[
        R ::= p \{i\<S_i\> \sdot R\}_{i \in I} \sepr p{?}r \{i \sdot R\}_{i \in I} \sepr p{!}r \{i \sdot R\}_{i \in I} \sepr \mu X \sdot R \sepr X \sepr \bullet \sepr \gskip \sdot R
    \]
\end{definition}

\noindent
We detail the syntax above, given participants $p$ and $q$.
\begin{itemize}
    \item
        Type `$p \{i\<S_i\> \sdot R_i\}_{i \in I}$' specifies that $p$ must choose a label $i \in I$ and send it to $q$ along with a message of type $S_i$ after which the protocol continues with $R_i$.
    \item
        Given an $r$ which is \emph{not} involved in the relative type (i.e., $p \neq r, q \neq r$), type `$p{?}r \{i \sdot R_i\}_{i \in I}$' expresses a {dependency}: a non-local choice between $p$ and $r$ which influences the protocol between $p$ and $q$.
        Here, the dependency indicates that after $p$ receives from $r$ the chosen label, $p$ must forward it to $q$, determining the protocol between $p$ and $q$.
    \item
        Similarly, type `$p{!}r \{i \sdot R_i\}_{i \in I}$' expresses a dependency, which indicates that after $p$ sends to $r$ the chosen label, $p$ must forward it to $q$.
    \item
        Types `$\mu X \sdot R$' and `$X$' define recursion, just as their global counterparts.
    \item
        The type `$\bullet$' specifies the end of the protocol between $p$ and $q$.
    \item
        The type `$\gskip \sdot R$' denotes an unobservable step that precedes $R$.
\end{itemize}

\begin{definition}[Participants of Relative Types]
    We define the \emph{set of participants} of relative type $R$, denoted `$\mkern1mu\part(R)\mkern-2mu$':
    \begin{align*}
        \part(p\{i\<S_i\> \sdot R_i\}_{i \in I})
        &:= \{p\} \cup (\bigcup_{i \in I}~ \part(R_i))
        &
        \part(\gskip \sdot R)
        &:= \part(R)
        &
        \part(\bullet)
        &:= \emptyset
        \\
        \part(p{?}r\{i \sdot R_i\}_{i \in I})
        &:= \{p\} \cup (\bigcup_{i \in I}~ \part(R_i))
        &
        \part(\mu X \sdot R)
        &:= \part(R)
        &
        \part(X)
        &:= \emptyset
        \\
        \part(p{!}r\{i \sdot R_i\}_{i \in I})
        &:= \{p\} \cup (\bigcup_{i \in I}~ \part(R_i))
    \end{align*}
    \lipicsEnd
\end{definition}

We introduce some useful notation:

\begin{notation}\leavevmode
    \begin{itemize}
        \item
            We write `$\mkern1mu p \mto q {:} i\<S\> \sdot G\mkern-3mu$' for a global type with a single branch `$\mkern1mu p \mto q \{i\<S\> \sdot G\}\mkern-3mu$ (and similarly for exchanges and dependencies in relative types).
        \item
            We omit `\sff{unit}' message types from global and relative types, writing `$\mkern1mu i \sdot G\mkern-2mu$' for `$\mkern1mu i\<\sff{unit}\> \sdot G\mkern-2mu$'.
        \item
            Given $k > 1$, we write `$\mkern1mu\gskip^k\mkern-3mu$' for a sequence of $k$ $\gskip$s.
            \lipicsEnd
    \end{itemize}
\end{notation}

\subsection{Relative Projection and Well-Formedness}\label{ss:relproj}

\begin{figure}[t]
    \begin{mdframed}
        \begin{align*}
            \span \detdep((p,q), s \mto r \{i\<S_i\> \sdot G_i\}_{i \in I}) := \begin{cases}
                \gskip \sdot (G_{i'} \wrt (p,q)) ~\text{[any $i' \in I$]}
                & \text{if $\forall i,j.$}
                \\
                & \text{$G_{i} \wrt (p,q) = G_{j} \wrt (p,q)$}
                \\
                p{!}r \{i \sdot (G_i \wrt (p,q))\}_{i \in I}
                & \text{if $p=s$}
                \\
                q{!}r \{i \sdot (G_i \wrt (p,q))\}_{i \in I}
                & \text{if $q=s$}
                \\
                p{?}s \{i \sdot (G_i \wrt (p,q))\}_{i \in I}
                & \text{if $p=r$}
                \\
                q{?}s \{i \sdot (G_i \wrt (p,q))\}_{i \in I}
                & \text{if $q=r$}
            \end{cases}
        \end{align*}

        \hbox to \textwidth{\leaders\hbox to 3pt{\hss . \hss}\hfil}

        \begin{align*}
            (s \mto r \{i\<S_i\> \sdot G_i\}_{i \in I}) \wrt (p,q)
            &:= \begin{cases}
                p \{i\<S_i\> \sdot (G_i \wrt (p,q))\}_{i \in I}
                & \text{if $p=s$ and $q=r$}
                \\
                q \{i\<S_i\> \sdot (G_i \wrt (p,q))\}_{i \in I}
                & \text{if $q=s$ and $p=r$}
                \\
                \detdep((p,q), s \mto r \{i\<S_i\> \sdot G_i\}_{i \in I})
                & \text{otherwise}
            \end{cases}
            \displaybreak[1] \\
            (\mu X \sdot G) \wrt (p,q)
            &:= \begin{cases}
                \mu X \sdot (G \wrt (p,q))
                & \text{if $G \wrt (p,q)$ defined and contractive on $X$}
                \\
                \bullet
                & \text{otherwise}
            \end{cases}
            \displaybreak[1] \\
            \span
            X \wrt (p,q)
            := X
            \qquad
            \bullet \wrt (p,q)
            := \bullet
            \qquad
            (\gskip \sdot G) \wrt (p,q)
            := \gskip \sdot (G \wrt (p,q))
        \end{align*}
        Above, `$\gskip^\ast$' denotes a sequence of zero or more $\gskip$.
    \end{mdframed}

    \caption{
        Dependency Detection (top), and Relative Projection (bottom, cf.\ \Cref{d:relproj}).
        \\
        When a side-condition does not hold, either is undefined.
    }
    \label{f:relproj}
\end{figure}

We define \emph{relative projection} for global types.
 We want relative projection to fail when it would return a non-contractive recursive type.
To this end, we define a notion of contractiveness on relative types:

\begin{definition}[Contractive Relative Types]\label{d:contrRel}
    Given a relative type $R$ and a recursion variable $X$, we say \emph{$R$ is contractive on $X$} if either of the following holds:
    \begin{itemize}
        \item
            $R$ contains an exchange, or

        \item
            $R$ ends in a recursive call on a variable other than $X$.
    \end{itemize}
\end{definition}

Relative projection then relies on the contractiveness of relative types.
It also relies on an auxiliary function to determine if a dependency message is needed and possible.

\begin{definition}[Relative Projection]\label{d:relproj}
    Given a global type $G$, we define its relative projection onto a pair of participants $p$ and $q$, denoted `\,$G \wrt (p,q)$\!', by induction on the structure of $G$ as given in \Cref{f:relproj} (bottom), using the auxiliary function $\detdep$ (cf.\ \Cref{f:relproj}, top).
    \lipicsEnd
\end{definition}

We discuss how \Cref{d:relproj} projects global types onto a pair of participants $(p,q)$, as per \Cref{f:relproj} (bottom).
The most interesting case is the projection of a direct exchange `$s \mto r \{i\<S_i\> \sdot G_i\}_{i \in I}$'.
When the exchange involves both $p$ and $q$, the projection yields an exchange between $p$ and $q$ with the appropriate sender.
Otherwise, the projection relies on the function `$\detdep$' in \Cref{f:relproj} (top), which determines whether the exchange is a non-local choice for $p$ and $q$ and yields an appropriate projection accordingly:
\begin{itemize}
    \item
        If the projections of all branches are equal, the exchange is not a non-local choice and $\detdep$ yields the unobservable step `$\gskip$' followed by the projection of any branch.
    \item
        If there are branches with different projections, the exchange is a non-local choice, so $\detdep$ yields a dependency if possible.
        If $p$ or $q$ is involved in the exchange, $\detdep$ yields an appropriate dependency (e.g., `$p{!}r$' if $p$ is the sender, or `$q{?}s$' if $q$ is the recipient).
        If neither $p$ nor $q$ are involved, then $\detdep$ cannot yield a dependency and projection is thus undefined.
\end{itemize}

The projection of `$\mu X \sdot G'$' considers the projection of the body `$G' \wrt (p,q)$' to see whether $p$ and $q$ interact in $G'$.
If $G' \wrt (p,q)$ is a (possibly empty) sequence of $\gskip$s followed by $\bullet$ or $X$, then $p$ and $q$ do not interact and the projection yields $\bullet$.
Otherwise, $p$ and $q$ do interact and projection preserves the recursive definition.
 Note that \Cref{d:contrRel} (contractiveness) is key here: e.g., $G' \wrt (p,q) = \gskip \sdot \mu Y \sdot \gskip \sdot X$ is not contractive on $X$, so $(\mu X \sdot G') \wrt (p,q) = \bullet$.
The projection of a recursive call `$X$' is simply `$X$'.

The projection of `$G_1 \| G_2$' is standard~\cite{conf/popl/HondaYC08}: it ensures that $G_1$ and $G_2$ do not share participants and only continues with either global type if both $p$ and $q$ are participants.
The projections of `$\bullet$' and `$\gskip$' are homomorphic.

\begin{example}[Projections of $G_\sff{auth}$]\label{ex:Gauthproj}
    To demonstrate relative projection, let us consider again $G_\sff{auth}$:
    \begin{align*}
        G_\sff{auth} = \mu X \sdot s \mto c \left\{ \begin{array}{@{}l@{}}
                \sff{login} \sdot c \mto a {:} \sff{passwd} \<\sff{str}\> \sdot a \mto s {:} \sff{auth} \<\sff{bool}\> \sdot X,
                \\
                \sff{quit} \sdot c \mto a {:} \sff{quit} \sdot \bullet
        \end{array} \right\}
    \end{align*}
    The relative projection onto $(s,c)$ is straightforward, as there are no non-local choices to consider:
    \begin{align*}
        G_\sff{auth} \wrt (s,c) = \mu X \sdot s \left\{ \begin{array}{@{}l@{}}
                \sff{login} \sdot \gskip^2 \sdot X,
                \\
                \sff{quit} \sdot \gskip \sdot \bullet
        \end{array} \right\}
    \end{align*}
    However, compare the projection of the initial \sff{login} branch onto $(s,a)$ and $(c,a)$ with the projection of the \sff{quit} branch: they are different.
    Therefore, the initial exchange between $s$ and $c$ is a non-local choice in the protocols relative to $(s,a)$ and $(c,a)$.
    Since $s$ is involved in this exchange, the non-local choice is detected by `$\mkern1mu\detdep\mkern-3mu$':
    \begin{align*}
        \detdep((s,a), s \mto c \{ \sff{login} \ldots,\quad \sff{quit} \ldots \}) = s{!}c \{ \sff{login} \ldots,\quad \sff{quit} \ldots \}
    \end{align*}
    Hence, this non-local choice can be included in the relative projection onto $(s,a)$ as a dependency:
    \begin{align*}
        G_\sff{auth} \wrt (s,a) = \mu X \sdot s{!}c \left\{ \begin{array}{@{}l@{}}
                \sff{login} \sdot \gskip \sdot a {:} \sff{auth} \<\sff{bool}\> \sdot X,
                \\
                \sff{quit} \sdot \gskip \sdot \bullet
        \end{array} \right\}
    \end{align*}
    Similarly, $c$ is involved in the initial exchange, so the non-local choice can also be included in the relative projection onto $(c,a)$ as a dependency:
    \begin{align*}
        G_\sff{auth} \wrt (c,a) = \mu X \sdot c{?}s \left\{ \begin{array}{@{}l@{}}
                \sff{login} \sdot c {:} \sff{passwd} \<\sff{str}\> \sdot \gskip \sdot X,
                \\
                \sff{quit} \sdot c {:} \sff{quit} \<\sff{unit}\> \sdot \bullet
        \end{array} \right\}
    \end{align*}
\end{example}

Since relative types are relative to pairs of participants, the input order of participants for projection does not matter:

\begin{proposition}\label{prop:relprojSensible}
    Suppose a global type $G$ and distinct participants $p,q \in \part(G)$.
    \begin{itemize}
        \item
            If \,$G \wrt (p,q)$ is defined, then $G \wrt (p,q) = G \wrt (q,p)$ and $\part(G \wrt (p,q)) \subseteq \{p,q\}$;
        \item
            $G \wrt (p,q)$ is undefined if and only if $G \wrt (q,p)$ is undefined.
            \lipicsEnd
    \end{itemize}
\end{proposition}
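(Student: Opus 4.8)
The plan is to prove all three conjuncts at once by structural induction on $G$, following the clauses of relative projection in \Cref{f:relproj} (as per \defref{d:relproj}). The induction hypothesis, applied to each proper subterm $G'$ of $G$, will assert exactly the statement of the proposition for $G'$: that $G' \wrt (p,q)$ is defined iff $G' \wrt (q,p)$ is, that the two agree when defined, and that their participants lie in $\{p,q\}$. The guiding observation is that the defining clauses of both $\wrt$ and $\detdep$ are \emph{symmetric} under swapping their two participant arguments: each guard that selects a clause for $(p,q)$ has a mirror guard selecting the correspondingly renamed clause for $(q,p)$, and each right-hand side is invariant under the swap once the hypothesis is applied to its subterms. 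So the whole argument is a bookkeeping check that this mirroring is exact in every case.

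The crux is the direct-exchange case $s \mto r \{i\<S_i\> \sdot G_i\}_{i \in I}$ together with $\detdep$. First I would note that, since $p \neq q$, the guards ``$p=s \wedge q=r$'' and ``$q=s \wedge p=r$'' cannot both hold; hence the dispatch among the two exchange clauses and the fall-through to $\detdep$ is deterministic, and computing $G \wrt (q,p)$ simply interchanges these two clauses (so that $p$- and $q$-headed outputs map to each other correctly) while sending the $\detdep$ branch to the $\detdep$ branch. Inside $\detdep$ I would check the guards in order: the first guard ``all $G_i \wrt (p,q)$ equal'' is preserved under the swap because the hypothesis gives $G_i \wrt (p,q) = G_i \wrt (q,p)$ for every $i$, so the branch projections are equal for $(p,q)$ exactly when they are for $(q,p)$; the four dependency subcases ``$p=s$'', ``$q=s$'', ``$p=r$'', ``$q=r$'' are mutually exclusive in the fall-through region and are permuted into one another by the swap (e.g.\ the $p{!}r$ case for $(p,q)$ is produced by the ``$p=s$'' subcase of the $(q,p)$ computation). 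When none of these holds, $\detdep$ is undefined for both orders, giving the definedness equivalence here. For the participant bound I would use \defref{d:reltypes}'s $\part$ clauses: the head of every produced relative type is $p$ or $q$, and crucially $\part(p{!}r\{\cdots\})$ and $\part(p{?}r\{\cdots\})$ do \emph{not} include the annotation $r$, so only $p$ (or $q$) plus the recursively bounded continuations contribute, yielding $\subseteq \{p,q\}$ by the hypothesis.

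The remaining cases are routine. For $X$, $\bullet$, and $\gskip \sdot G'$ symmetry and the participant bound are immediate (the first two are constant, and $\gskip$ is homomorphic, inheriting everything from $G' \wrt (p,q)$ via the hypothesis). For $\mu X \sdot G'$ the outcome---whether the clause returns $\mu X \sdot (G' \wrt (p,q))$ or $\bullet$, and whether it is defined---depends \emph{only} on the definedness, the value, and the contractiveness on $X$ (\Cref{d:contrRel}) of the body projection $G' \wrt (p,q)$; since the hypothesis makes the body projection and its definedness invariant under the swap, and contractiveness is a property of that (common) resulting relative type, both orders take the same clause and return the same type, with $\part \subseteq \{p,q\}$ in either subcase. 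I expect the main obstacle to be precisely the exchange/$\detdep$ dispatch: verifying that the ordered guards line up under the swap so that matching clauses are selected (and undefinedness occurs simultaneously), and, in the recursion case, that the contractiveness judgment---not just the value---transfers across the swap. Everything else is direct application of the induction hypothesis.
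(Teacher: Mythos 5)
The paper states \Cref{prop:relprojSensible} without proof, treating it as a routine consequence of \Cref{d:relproj}; your structural induction is exactly the argument that fills this in, and it is correct — in particular you identify the only two points needing care (the ordered dispatch between the exchange clauses and the $\detdep$ guards under the participant swap, and the transfer of contractiveness in the $\mu X$ case via the equality of the body projections). The only cosmetic remark is that the induction hypothesis should be read as quantifying over the fixed distinct pair $p,q$ for every subterm regardless of whether both still occur in it, which your phrasing already supports since none of the three conjuncts actually depends on $p,q \in \part(G')$.
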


\paragraph*{Well-formed Global Types}

We may now define \emph{well-formedness} for global types.
Unlike usual MPST approaches, our definition relies exclusively on (relative) projection (\defref{d:relproj}), and does not appeal to external notions such as merge and subtyping~\cite{journal/lmcs/HuBYD12,conf/icdcit/YoshidaG20}.

\begin{definition}[Relative Well-Formedness]\label{d:rwf}
    A global type $G$ is \emph{relative well-formed} if, for every distinct $p,q \in \part(G)$, the  projection $G \wrt (p,q)$ is defined.
    \lipicsEnd
\end{definition}

The following contrasts our new notion of relative well-for\-med\-ness with notions of well-formedness based on the usual notion of local types~\cite{conf/popl/HondaYC08,conf/icalp/DenielouY13}. 

\begin{example}\label{ex:messages}
    Consider the following global type involving participants $p, q, r, s$:
    \[
        G_\thex := p \mto q \left\{
            \begin{array}{l}
                1\<S_a\> \sdot p \mto r {:} 1\<S_b\> \sdot p \mto s {:} 1\<S_c\> \sdot q \mto r {:} 1\<S_d\> \sdot q \mto s {:} 1\<S_e\> \sdot \bullet ,
                \\
                2\<S_f\> \sdot r \mto p {:} 2\<S_g\> \sdot s \mto p {:} 2\<S_h\> \sdot r \mto q {:} 2\<S_i\> \sdot s \mto q {:} 2\<S_j\> \sdot \bullet
            \end{array}
        \right\}
    \]
    The initial exchange between $p$ and $q$ is a non-local choice influencing the protocols between other pairs of participants.
    Well-formedness as in~\cite{conf/popl/HondaYC08,conf/icalp/DenielouY13} forbids non-local choices.
    In contrast, $G_\thex$ is relative well-formed: $p$ and $q$ must both forward the selected label to both $r$ and $s$.
    The dependencies in the following relative projections express precisely this:
    \begin{align*}
        G_\thex \wrt (p,r)
        &= p{!}q \{1 \sdot p {:} 1\<S_b\> \sdot \gskip^3 \sdot \bullet,\quad 2 \sdot r {:} 2\<S_g\> \sdot \gskip^3 \sdot \bullet \}
        \\
        G_\thex \wrt (p,s)
        &= p{!}q \{1 \sdot \gskip \sdot p {:} 1\<S_c\> \sdot \gskip^2 \sdot \bullet,\quad 2 \sdot \gskip \sdot s {:} 2\<S_h\> \sdot \gskip^2 \sdot \bullet \}
        \\
        G_\thex \wrt (q,r)
        &= q{?}p \{1 \sdot \gskip^2 \sdot q {:} 1\<S_d\> \sdot \gskip \sdot \bullet,\quad 2 \sdot \gskip^2 \sdot r {:} 2\<S_i\> \sdot \gskip \sdot \bullet \}
        \\
        G_\thex \wrt (q,s)
        &= q{?}p \{1 \sdot \gskip^3 \sdot q {:} 1\<S_e\> \sdot \bullet,\quad 2 \sdot \gskip^3 \sdot s {:} 2\<S_j\> \sdot \bullet \}
        \tag*{\lipicsEnd}
    \end{align*}
\end{example}

Dependencies in relative types follow the non-local choices in the given global type: by implementing such choices, dependencies ensure correct projectability.
They induce additional messages, but in our view this is an acceptable price to pay for an expressive notion of well-formedness based only on projection.
It is easy to see that in a global type with $n$ participants, the number of messages per communication is $\mcl{O}(n)$---an upper-bound following from the worst-case scenario in which both sender and recipient have to forward a label to $n-2$ participants due to dependencies, as in the example above.
However, in prac\-tice, sender and recipient will rarely both have to forward labels, let alone both to all participants.

\section{Analyzing Global Types using Routers}
\label{s:routers}

\begin{figure}[t]
    \begin{mdframed}
        \mbox{}\hfill%
        \begin{minipage}[t]{.45\textwidth}
            \vspace{1em}
            \hfill%
            \begin{tikzpicture}
    \begin{scope}[local bounding box=bndP]
        \node (P) [] at (0, 0) {$P$};
        \node (rtrP) [right=4mm of P] {$\lmed{c}{\{s,a\}}{G_\sff{auth}}$};
    \end{scope}
    \node [fit=(bndP), shape=rectangle, draw=black, inner xsep=0mm, inner ysep=1mm] (boxP) {};
    \node [below=0mm of boxP] {${} \in \lsys(G_\sff{auth}, \{c\})$};
    \draw[-, cblPink] (P) -- (rtrP);

    \begin{scope}[local bounding box=bndQ]
        \node (rtrQ) [right=1cm of rtrP, yshift=12mm] {$\lmed{s}{\{a,s\}}{G_\sff{auth}}$};
        \node (Q) [right=4mm of rtrQ] {$Q$};
    \end{scope}
    \node [fit=(bndQ), shape=rectangle, draw=black, inner xsep=1mm, inner ysep=0mm] (boxQ) {};
    \node [below=0mm of boxQ] {${} \in \lsys(G_\sff{auth}, \{s\})$};
    \draw[-, cblPink] (Q) -- (rtrQ);

    \begin{scope}[local bounding box=bndR]
        \node (rtrR) [right=1cm of rtrP, yshift=-12mm] {$\lmed{a}{\{s,c\}}{G_\sff{auth}}$};
        \node (R) [right=4mm of rtrR] {$R$};
    \end{scope}
    \node [fit=(bndR), shape=rectangle, draw=black, inner xsep=1mm, inner ysep=0mm] (boxR) {};
    \node [below=0mm of boxR] {${} \in \lsys(G_\sff{auth}, \{a\})$};
    \draw[-, cblPink] (R) -- (rtrR);

    \draw[-, cblPurple] (rtrP) -- (rtrQ.west);
    \draw[-, cblPurple] (rtrQ.west) to [out=225, in=135] (rtrR.west);
    \draw[-, cblPurple] (rtrR.west) -- (rtrP);
\end{tikzpicture}%
            \hfill
            \vspace{-1em}
        \end{minipage}%
        \hfill\vline\hfill%
        \begin{minipage}[t]{.45\textwidth}
            \vspace{1em}
            \hfill%
            \begin{tikzpicture}
    \begin{scope}[local bounding box=bndP]
        \node (P) [] at (0, 0) {$P$};
        \node (rtrP) [right=4mm of P] {$\lmed{c}{\{s,a\}}{G_\sff{auth}}$};
    \end{scope}
    \node [fit=(bndP), shape=rectangle, draw=black, inner xsep=0mm, inner ysep=1mm] (boxP) {};
    \node [below=0mm of boxP] {${} \in \lsys(G_\sff{auth}, \{c\})$};
    \draw[-, cblPink] (P) -- (rtrP);

    \begin{scope}[local bounding box=bndQ]
        \node (rtrQ) [right=10mm of rtrP, yshift=12mm] {$\lmed{s}{\{a,s\}}{G_\sff{auth}}$};
        \node (rtrR) [right=10mm of rtrP, yshift=-12mm] {$\lmed{a}{\{s,c\}}{G_\sff{auth}}$};
        \node (S) [right=17.5mm of rtrP] {$S$};
    \end{scope}
    \node [fit=(bndQ), shape=rectangle, draw=black, inner xsep=1mm, inner ysep=0mm] (boxQ) {};
    \node [below=0mm of boxQ] {${} \in \lsys(G_\sff{auth}, \{s,a\})$};
    \draw[-, cblPink] (S) -- (rtrQ);
    \draw[-, cblPink] (S) -- (rtrR);

    \draw[-, cblPurple] (rtrP) -- (rtrQ.west);
    \draw[-, cblPurple] (rtrP) -- (rtrR.west);

    \draw[-, cblPurple] (rtrQ.west) to [out=225, in=135] (rtrR.west);
\end{tikzpicture}%
            \hfill
            \vspace{-1em}
        \end{minipage}%
        \hfill\mbox{}
    \end{mdframed}

    \caption{
        Two different networks of routed implementations for $G_\sff{auth}$~\eqref{eq:Gauth}, without interleaving (left) and with interleaving (right).
        For participants $p$ and $\tilde{q}$, \Cref{d:router} gives the router process $\mkern1mu\lmed{p}{\tilde{q}}{G}\mkern-3mu$ and \Cref{d:routedImplementations} gives the set $\lsys(G, \tilde{q})$.
        Lines indicate channels and boxes are local compositions of processes.
    }
    \label{f:GauthNetwork}
\end{figure}

In this section, we develop our decentralized analysis of multiparty protocols (\secref{s:mpst}) using relative types (\secref{ss:reltypes}) and APCP (\secref{s:apcp}).
The intended setup is as follows.
Each participant's role in a global type $G$ is implemented by a process, which is connected to a \emph{router}: a process that orchestrates the participant's interactions in $G$.
The resulting \emph{routed implementations} (\defref{d:routedImplementations}) can then directly connect to each other to form a decentralized \emph{network of routed implementations} that implements $G$.
This way we realize the scenario sketched in \Cref{f:chorOrch} (left), which is featured in more detail in \Cref{f:GauthNetwork} (left).

Key in our analysis is the \emph{synthesis} of a participant's router from a global type (\secref{ss:routers}).
To assert well-typedness---and thus deadlock freedom---of networks of routed implementations (\Cref{t:routerTypes}), we extract binary session types from the global type and its associated relative types (\secref{ss:typing}):
\begin{itemize}
    \item
        from the global type we extract types for channels between implementations and routers;
    \item
        from the relative types we extract types for channels between pairs of routers.
\end{itemize}

After defining routers and showing their typability, we set up networks of routed implementations of global types (\secref{ss:networks}).
To enable the transference of deadlock freedom APCP to multiparty protocols, we then establish an operational correspondence between global types and networks of routed implementations (\Cref{t:completeness,t:soundness}).
Finally, to show that our routed approach strictly generalizes the prior centralized analyses~\cite{conf/forte/CairesP16,conf/concur/CarboneLMSW16}, we define an orchestrated analysis of global types and show that it is behaviorally equivalent to a centralized composition of routers (\secref{ss:mediums}).

In the following section (\secref{s:routersInAction}), we will show routers in action.

\subsection{Synthesis of Routers}
\label{ss:routers}

We synthesize routers by decomposing each exchange in the global type into four sub-steps, which we motivate by considering the initial exchange from $s$ to $c$ in $G_{\sff{auth}}$~\labelcref{eq:Gauth}: $s \mto c \{\sff{login} \ldots,\quad \sff{quit} \ldots\}$.
As explained in \Cref{ex:Gauthproj}, this exchange induces a dependency in the relative projections of $G_{\sff{auth}}$ onto $(s,a)$ and $(c,a)$.
We decompose this initial exchange as follows, where $P$, $Q$, and $R$ are the implementations of $c$, $s$, and $a$, respectively (given in \Cref{ex:impl}) and  $\rtr_x$ stands for the router of each $x \in \{s,c,a\}$. Below, multiple actions in one step happen concurrently:
\begin{enumerate}
    \item
        $Q$ sends $\ell \in \{\sff{login},\sff{quit}\}$ to $\rtr_s$.
    \item
        $\rtr_s$ sends $\ell$ to $\rtr_c$ (recipient) and $\rtr_a$ (output dependency). $Q$ sends \sff{unit} value $v$ to $\rtr_s$.
    \item
        $\rtr_c$ sends $\ell$ to $P$ and $\rtr_a$ (input dependency). $\rtr_s$ forwards $v$ to $\rtr_c$.
    \item
        $\rtr_c$ forwards $v$ to $P$. $\rtr_a$ sends $\ell$ to $R$.
\end{enumerate}
In \Cref{ss:typing}, we follow this decomposition to assign to each consecutive step a consecutive priority: this ensures the consistency of priority checks required to establish the deadlock freedom of networks of routed implementations.

We define router synthesis by means of an algorithm that returns a \emph{router process} for a given global type and participant.
More precisely: given $G$, a participant $p$, and $\tilde{q} = \part(G) \setminus \{p\}$, the algorithm generates a process, denoted `$\lmed{p}{\tilde{q}}{G}$', which connects with a process implementing $p$'s role in $G$ on channel $\ci{\mu}{p}$; we shall write such channels in \ich{pink}.
This router for $p$ connects with the routers of the other participants in $G$ ($q_i \in \tilde{q}$) on channels $\crt{p}{q_1}, \ldots, \crt{p}{q_n}$; we shall write such channels in \rch{purple}.
(This convention explains the colors of the lines in \Cref{f:GauthNetwork}.)

The router synthesis algorithm relies on relative projection to detect non-local choices; this way, the router can synchronize with the participant's implementation and with other routers appropriately.
To this end, we define the predicate `$\hdep$', which is true for an exchange and a pair of participants if the exchange induces a dependency for either participant.
Recall that relative projection produces a `$\gskip$' when an exchange is not non-local (cf.\ \Cref{f:relproj}).
Thus, `$\hdep$' only holds true if relative projection does not produce a `$\gskip$'.

\begin{definition}\label{d:hdep}
        The predicate `$\mkern2mu\hdep(q,p,G)\mkern-3mu$' is true if and only if
        \begin{itemize}
            \item
                $G = s \mto r \{i\<S_i\> \sdot G_i\}_{i \in I}$ and $q \notin \{s,r\}$ and $p \in \{s,r\}$, and
            \item
                $\detdep((p,q),G) \neq \gskip \sdot R$ for  all relative types $R$, where $\detdep$ is as in \figref{f:relproj} (top).
        \end{itemize}
\end{definition}

\begin{example}
    Consider the global type $G_\sff{h} := p \mto q \{ \sff{a} \sdot p \mto r {:} \sff{a} \sdot \bullet,\quad \sff{b} \sdot r \mto p {:} \sff{b} \sdot \bullet \}$.
    We have that $\hdep(q, p, G_\sff{h})$ is false because the initial exchange in $G_\sff{h}$ is not a dependency for $p$ and $q$, but $\hdep(r, p, G_\sff{h})$ is true because the initial exchange in $G_\sff{h}$ is indeed a dependency for $p$ and $r$.
\end{example}

\begin{algorithm}[t]
    \DontPrintSemicolon
    \SetAlgoNoEnd
    \SetInd{.2em}{.7em}
    \SetSideCommentRight
    \Def{$\lmed{p}{\tilde{q}}{G}$}{
        \Switch{$G$}{
            \uCase{$s \mto r \{i\<S_i\> \sdot G_i\}_{i \in I}$}{ \label{line:rtrComm}
                $\deps := \{q \in \tilde{q} \mid \hdep(q,p,G)\}$ \; \label{line:rtrDeps}

                \lIf{$p = s$}{ \label{line:rtrSend}
                    \KwRet
                    $\ci{\mu}{p} \gets \big\{i{:}~ \ol{\crt{p}{r}} \puts i \cdot {(\ol{\crt{p}{q}} \puts i)}_{q \in \deps}
                    \cdot \ci{\mu}{p}(v) \sdot \ol{\crt{p}{r}}[w] \cdot (v \fwd w \| \lmed{p}{\tilde{q}}{G_i}) \big\}_{i \in I}$
                }

                \lElseIf{$p = r$}{ \label{line:rtrRecv}
                    \KwRet
                    $\crt{p}{s} \gets \big\{i{:}~ \ol{\ci{\mu}{p}} \puts i \cdot {(\ol{\crt{p}{q}} \puts i)}_{q \in \deps}
                    \cdot \crt{p}{s}(v) \sdot \ol{\ci{\mu}{p}}[w] \cdot (v \fwd w \| \lmed{p}{\tilde{q}}{G_i}) \big\}_{i \in I}$
                }

                \uElseIf{$p \notin \{s,r\}$}{ \label{line:rtrElse}
                    $\dep_s := (s \in \tilde{q} \wedge \hdep(p,s,G))$ \; \label{line:rtrDepS}
                    $\dep_r := (r \in \tilde{q} \wedge \hdep(p,r,G))$ \; \label{line:rtrDepR}

                    \lIf{$\dep_s$ and $\neg \dep_r$}{ \label{line:rtrOnlyS}
                        \KwRet
                        $\crt{p}{s} \gets \big\{i{:}~ \ol{\ci{\mu}{p}} \puts i \cdot \lmed{p}{\tilde{q}}{G_i} \big\}_{i \in I}$
                    }

                    \lElseIf{$\dep_r$ and $\neg \dep_s$}{ \label{line:rtrOnlyR}
                        \KwRet
                        $\crt{p}{r} \gets \big\{i{:}~ \ol{\ci{\mu}{p}} \puts i \cdot \lmed{p}{\tilde{q}}{G_i} \big\}_{i \in I}$
                    }

                    \lElseIf{$\dep_s$ and $\dep_r$}{ \label{line:rtrBoth}
                        \KwRet
                        $ \crt{p}{s} \gets \big\{i{:}~ \ol{\ci{\mu}{p}} \puts i \cdot \crt{p}{r} \puts \{i{:}~ \lmed{p}{\tilde{q}}{G_i}\} \big\}_{i \in I} $
                    }

                    \lElse{ \label{line:rtrElseElse}
                        \KwRet $\lmed{p}{\tilde{q}}{G_{j}}$ for any $j \in I$
                    }
                }
            }

            \uCase{$\mu X \sdot G'$}{ \label{line:rtrRecDef}
                $\tilde{q}' := \{q \in \tilde{q} \mid G \wrt (p,q) \neq \bullet\}$ \; \label{line:rtrRecComp}
                \lIf{$\tilde{q}' \neq \emptyset$}{ \label{line:rtrRecRet}
                    \KwRet $\mu X(\ci{\mu}{p}, {(\crt{p}{q})}_{q \in \tilde{q}'}) \sdot \lmed{p}{\tilde{q}'}{G'}$
                }
                \lElse{ \label{line:rtrRecInact}
                    \KwRet $\0$
                }
            }

            \lCase{$X$}{ \label{line:rtrRecCall}
                \KwRet $X\call{\ci{\mu}{p}, {(\crt{p}{q})}_{q \in \tilde{q}}}$
            }

            \lCase{$\gskip \sdot G'$}{ \label{line:rtrSkip}
                \KwRet $\lmed{p}{\tilde{q}}{G'}$
            }

            \lCase{$\bullet$}{\label{line:rtrEnd}
                \KwRet $\0$
            }
        }
    }

    \caption{Synthesis of Router Processes (\defref{d:router}).}
    \label{alg:router}
\end{algorithm}

\begin{definition}[Router Synthesis]\label{d:router}
    Given a global type $G$, a participant $p$, and participants~$\tilde{q}$, \Cref{alg:router} defines the synthesis of a \emph{router process}, denoted `$\mkern1mu\lmed{p}{\tilde{q}}{G}\mkern-3mu$', that interfaces the interactions of $p\mkern-2mu$ with the other protocol participants according to $G$.
    \lipicsEnd
\end{definition}

\noindent
We often write `$\rtr_p$' for `$\lmed{p}{\part(G) \setminus \{p\}}{G}$' when $G$ is clear from the context.

\Cref{alg:router} distinguishes six cases depending on the syntax of $G$ (\defref{d:globtypes}).
The key case is `${s \mto r \{i\<U_i\> \sdot G_i\}_{i \in I}}$' (\cref{line:rtrComm}).
First, the algorithm computes a set $\deps$ of participants that depend on the exchange using $\hdep$ (cf.\ \defref{d:hdep}).
Then, the algorithm considers the three possibilities for $p$:
\begin{enumerate}
    \item
        If $p = s$ then $p$ is the sender (\cref{line:rtrSend}): the algorithm returns a process that receives a label $i \in I$ over $\ci{\mu}{p}$; sends $i$ over $\crt{p}{r}$ and over $\crt{p}{q}$ for every $q \in \deps$; receives a channel $v$ over $\ci{\mu}{p}$; forwards $v$ as $w$ over $\crt{p}{r}$; and continues as `$\lmed{p}{\tilde{q}}{G_i}$'.

    \item
        If $p = r$ then $p$ is the recipient (\cref{line:rtrRecv}): the algorithm returns a process that receives a label $i \in I$ over $\crt{p}{s}$; sends $i$ over $\ci{\mu}{p}$ and over $\crt{p}{q}$ for every $q \in \deps$; receives a channel $v$ over $\crt{p}{s}$; forwards $v$ as $w$ over $\ci{\mu}{p}$; and continues as `$\lmed{p}{\tilde{q}}{G_i}$'.

    \item\label{itm:rtrDeps}
        Otherwise, if $p$ is not involved (\cref{line:rtrElse}), we use `$\hdep$' to determine whether $p$ depends on an output from $s$, an input from $r$, or on both (\cref{line:rtrDepS,line:rtrDepR}).
        If $p$ only depends on the output from $s$, the algorithm returns a process that receives a label $i \in I$ over $\crt{p}{s}$; sends $i$ over $\ci{\mu}{p}$; and continues as `$\lmed{p}{\tilde{q}}{G_i}$' (\cref{line:rtrOnlyS}).
        If $p$ only depends on an input from $r$, the returned process is similar; the only difference is that $i$ is received over $\crt{p}{r}$ (\cref{line:rtrOnlyR}).

        When $p$ depends on \emph{both} the output from $s$ and on the input from $r$ (\cref{line:rtrBoth}), the algorithm returns a process that receives a label $i \in I$ over $\crt{p}{s}$; sends $i$ over $\ci{\mu}{p}$; receives the label $i$ over $\crt{p}{r}$; and continues as `$\lmed{p}{\tilde{q}}{G_i}$'.

        If there are no dependencies, the returned process is `$\lmed{p}{\tilde{q}}{G_j}$', for arbitrary $j \in I$ (\cref{line:rtrElseElse}).
\end{enumerate}

\noindent
In case `$\mu X \sdot G'$' (\cref{line:rtrRecDef}), the algorithm stores in `$\tilde{q}'$' those $q \in \tilde{q}$ that interact with $p$ in $G'$ (i.e.\ $\mu X \sdot G' \wrt (p,q) \neq \bullet$).
Then, if $\tilde{q}'$ is non-empty (\cref{line:rtrRecRet}), the algorithm returns a recursive definition with as context the channels $\crt{p}{q}$ for $q \in \tilde{q}'$ and $\ci{\mu}{p}$.
Otherwise, the algorithm returns `$\0$' (\cref{line:rtrRecInact}).
In case `$X$' (\cref{line:rtrRecCall}), the algorithm returns a recursive call with as context the channels $\crt{p}{q}$ for $q \in \tilde{q}$ and $\ci{\mu}{p}$.
In case `$\gskip \sdot G'$' (\cref{line:rtrSkip}), it continues with `$G'$' immediately.
Finally, in case `$\bullet$' (\cref{line:rtrEnd}), the algorithm returns `$\0$'.

Considering the number of steps required to return a process,  the {complexity} of \Cref{alg:router} is linear in the size of the given global type (defined as the sum of the number  of communications over all branches).

\subsection{Types for the Router's Channels}
\label{ss:typing}

Here, we obtain session types (cf.\ \defref{d:props}) for (i)~the channels between routers and implementations (\secref{ss:locproj}) and for (ii)~the channels between pairs of routers (\secref{ss:relprop}).
While the former are extracted from global types, the latter are extracted from relative types.

\subsubsection{The Channels between Routers and Implementations}
\label{ss:locproj}

We begin with the session types for the channels between routers and implementations (given in \ich{pink}), which we extract directly from the global type.
A participant's implementation performs on this channel precisely those actions that the participant must perform as per the global type.
Hence, we define this extraction as a form of \emph{local projection} of the global type onto a \emph{single participant}.
The resulting session type may used as a guidance for specifying a participant implementation, which can then connect to the router's dually typed channel endpoint.

\begin{figure}[t]
    \begin{mdframed}
        Below, $\pri \in \mbb{N}$ is arbitrary:
        \begin{align*}
              \msgprop{\bullet}
            & := \bullet
            & \msgprop{{!}T \sdot S}
            & := \msgprop{T} \tensor^\pri \msgprop{S}
            & \msgprop{{\oplus}\{i{:}~ S_i\}_{i \in I}}
            & := \oplus^\pri \{i{:}~ \msgprop{S_i} \}_{i \in I}
            \\[2pt]
            &
            & \msgprop{{?}T \sdot S}
            & := \msgprop{T} \parr^\pri \msgprop{S}
            & \msgprop{\&\{i{:}~ S_i\}_{i \in I}}
            & := \&^\pri\{i{:}~ \msgprop{S_i} \}_{i \in I}
        \end{align*}

        \hbox to \textwidth{\leaders\hbox to 3pt{\hss . \hss}\hfil}

        If $G=s \mto r \{i\<S_i\> \sdot G_i\}_{i \in I}$,
        \[
            G \onto^\pri p := \begin{cases}
                {\oplus}^{\pri} \{i{:}~ \msgprop{S_i} \tensor^{\pri+1} (G_i \onto^{\pri+4} p)\}_{i \in I}
                & \text{if $p=s$}
                \\
                \&^{\pri+2} \{i{:}~ \dual{\msgprop{S_i}} \parr^{\pri+3} (G_i \onto^{\pri+4} p)\}_{i \in I}
                & \text{if $p=r$}
                \\
                \&^{\pri+2} \{i{:}~ (G_i \onto^{\pri+4} p)\}_{i \in I}
                & \text{if $p \notin \{s,r\}$ and $\hdep(p,s,G)$}
                \\
                \&^{\pri+3} \{i{:}~ (G_i \onto^{\pri+4} p)\}_{i \in I}
                & \text{if $p \notin \{s,r\}$ and $\neg \hdep(p,s,G)$ and $\hdep(p,r,G)$}
                \\
                G_{i'} \onto^{\pri+4} p ~\text{[any $i' \in I$]}
                & \text{otherwise}
            \end{cases}
        \]

        Otherwise,
        \begin{align*}
            \span
            \bullet \onto^\pri p
            := \bullet
            \qquad
            (\gskip \sdot G') \onto^\pri p
            := G' \onto^{\pri+4} p
            \qquad
            X \onto^\pri p
            := X
            \\
            (\mu X \sdot G') \onto^\pri p
            &:= \begin{cases}
                \mu X \sdot (G' \onto^\pri p)
                & \text{if $G' \onto^\pri p$ defined and contractive on $X$}
                \\
                \bullet
                & \text{otherwise}
            \end{cases}
        \end{align*}
    \end{mdframed}

    \caption{Extracting Session Types from Message Types (top), and Local Projection: Extracting Session Types from a Global Type (bottom, cf.\ \Cref{d:locproj}).}
    \label{f:locproj}
\end{figure}

Global types contain message types (\defref{d:globtypes}), so we must first define how we extract session types from message types.
This is a straightforward definition, which leaves priorities unspecified: they do not matter for the typability of routers, which forward mes\-sages between implementations and other routers.
Note that one must still specify these priorities when type-checking implementations, making sure they concur between sender and recipient.

\begin{definition}[From Message Types to Session Types]\label{d:msgprop}
    We define the extraction of a session type from message type $S$, denoted `$\mkern1mu\msgprop{S\mkern1mu}\mkern-3mu$', by induction on the structure of $S$ as in \Cref{f:locproj} (top).
\end{definition}

We now define local projection.
To deal with non-local choices, local projection incorporates dependencies  by relying on the dependency detection of  relative projection (cf.\ \defref{d:relproj}).
 Also similar to relative projection, local projection relies on a notion of contractiveness for session types.

\begin{definition}[Contractive Session Types]\label{d:contrSes}
    Given a session type $A$ and a recursion variable $X$, we say \emph{$A$ is contractive on $X$} if either of the following holds:
    \begin{itemize}
        \item
            $A$ contains a connective in $\{\tensor,\parr,\oplus,\&\}$, or

        \item
            $A$ is a recursive call on a variable other than $X$.
    \end{itemize}
\end{definition}

\begin{definition}[Local Projection: From Global Types to Session Types]\label{d:locproj}
    We define the local projection of global type $G$ onto participant $p$ with priority $\pri$, denoted `$\mkern1mu G \onto^\pri p\mkern-2mu$', by induction on the structure of $G$ as in \Cref{f:locproj} (bottom), relying on message type extraction (\defref{d:msgprop}) and the predicate `$\mkern1mu\hdep\mkern-3mu$' (\defref{d:hdep}).
\end{definition}

\noindent
We consider the local projection of an exchange in a global type onto a participant $p$ with priority $\pri$.
The priorities in local projection reflect the four sub-steps into which we decompose exchanges in global types (cf.\ \Cref{ss:routers}).
There are three possibilities, depending on the involvement of $p$ in the exchange:
\begin{enumerate}
    \item
        If $p$ is the sender, local projection specifies a choice ($\oplus$) between the exchange's labels at priority $\pri$ and an output ($\tensor$) of the associated message type at priority $\pri+1$, followed by the projection of the chosen branch at priority $\pri+4$.
    \item
        If $p$ is the recipient, local projection specifies a branch ($\&$) on the exchange's labels at priority $\pri+2$ and an input ($\parr$) of the associated message type at priority $\pri+3$, followed by the projection of the chosen branch at priority $\pri+4$.
    \item
        If $p$ is neither sender nor recipient, local projection uses the predicate `$\hdep$' (\defref{d:hdep}) to detect a dependency on the sender's output or the recipient's input.
        If there is a dependency on the output, local projection specifies a branch on the exchange's labels at priority $\pri+2$.
        If there is a dependency on the input, local projection specifies a branch at priority $\pri+3$.
        Otherwise, when there is no dependency at all, local projection simply continues with the projection of any branch at priority $\pri+4$.
\end{enumerate}
Projection only preserves recursive definitions if they contain actual behavior (i.e.\ the projection of the recursive loop is  contractive, cf.\ \Cref{d:contrSes}).
The projections of `$\bullet$' and recursion variables are homomorphic.
The projection of `$\gskip$' simply projects the \gskip's continuation, at priority $\pri+4$ to keep the priority aligned with the priorities of the other types of the router.

\subsubsection{The Channels between Pairs of Routers}
\label{ss:relprop}

\begin{figure}[t]
    \begin{mdframed}
        \begin{align*}
            \relprop{p}{q}{\pri}{s\{i\<S_i\> \sdot R_i\}_{i \in I}}
            &:= \begin{cases}
                {\oplus}^{\pri+1} \left\{i{:}~ \msgprop{S_i} \tensor^{\pri+2} \relprop{p}{q}{\pri+4}{R_i} \right\}_{i \in I}
                & \text{if $p = s$}
                \\[6pt]
                \&^{\pri+1} \left\{i{:}~ \dual{\msgprop{S_i}} \parr^{\pri+2} \relprop{p}{q}{\pri+4}{R_i} \right\}_{i \in I}
                & \text{if $q = s$}
            \end{cases}
            \\[2pt]
            \relprop{p}{q}{\pri}{r{?}s\{i \sdot R_i\}_{i \in I}}
            &:= \begin{cases}
                {\oplus}^{\pri+2} \left\{i{:}~ \relprop{p}{q}{\pri+4}{R_i} \right\}_{i \in I}
                & \text{if $p = r$}
                \\[6pt]
                \&^{\pri+2} \left\{i{:}~ \relprop{p}{q}{\pri+4}{R_i} \right\}_{i \in I}
                & \text{if $q = r$}
            \end{cases}
            \\[2pt]
            \relprop{p}{q}{\pri}{s{!}r\{i \sdot R_i\}_{i \in I}}
            &:= \begin{cases}
                {\oplus}^{\pri+1} \left\{i{:}~ \relprop{p}{q}{\pri+4}{R_i} \right\}_{i \in I}
                & \text{if $p = s$}
                \\[6pt]
                \&^{\pri+1} \left\{i{:}~ \relprop{p}{q}{\pri+4}{R_i} \right\}_{i \in I}
                & \text{if $q = s$}
            \end{cases}
            \\[2pt]
            \span
            \relprop{p}{q}{\pri}{\bullet} := \bullet
            \qquad
            \relprop{p}{q}{\pri}{\gskip \sdot R} := \relprop{p}{q}{\pri+4}{R}
            \qquad
            \relprop{p}{q}{\pri}{\mu X \sdot R} := \mu X \sdot \relprop{p}{q}{\pri}{R}
            \qquad
            \relprop{p}{q}{\pri}{X} := X
        \end{align*}
    \end{mdframed}

    \caption{Extracting Session Types from Relative Types (cf.\ \Cref{d:relprop}).}
    \label{f:relprop}
\end{figure}

For the channels between pairs of routers (given in \rch{purple}), we extract session types from relative types (\defref{d:reltypes}).
Considering a relative type that describes the protocol between $p$ and $q$, this entails decomposing it into a type for $p$ and a dual type for $q$.

\begin{definition}[From Relative Types to Session Types]\label{d:relprop}
    We define the extraction of a session type from relative type $R$ between $p$ and $q$ at $p$'s perspective with priority $\pri$, denoted `$\mkern1mu\relprop{p}{q}{\pri}{R}\mkern-3mu$', by induction on the structure of $R$ as in \Cref{f:relprop}.
    \lipicsEnd
\end{definition}

\noindent
Here, extraction is \emph{directional}: in `$\relprop{p}{q}{\pri}{R}$', the annotation `$p \rangle q$' says that the session type describes the perspective of $p$'s router with respect to $q$'s.
Messages with sender $p$ are decomposed into selection ($\oplus$) at priority $\pri+1$ followed by output~($\tensor$) at priority $\pri+2$.
Dependencies on messages recieved by $p$ become selection types ($\oplus$) at priority $\pri+1$, and dependencies on messages sent by $p$ become selection types ($\oplus$) at priority $\pri+2$.
Messages from $q$ and dependencies on $q$ yield dual types.
Extraction from `$\bullet$' and recursion is homomorphic, and extraction from `$\gskip$' simply extracts from the \gskip's continuation at priority $\pri+4$.

This way, the channel endpoint of $p$'s router that connects to $q$'s router will be typed `$\relprop{p}{q}{\pri}{G \wrt (p,q)}$', i.e.\ the session type extracted from the relative projection of $G$ onto $p,q$ at $p$'s perspective.
Similarly, the endpoint of this channel at $q$'s router will have the type
`$\relprop{q}{p}{\pri}{G \wrt (p,q)}$', i.e.\ the same relative projection but at $q$'s perspective.
Clearly, these session types must be dual.

\begin{theorem}\label{t:relpropDual}
    Given a relative well-formed global type $G$ and $p,q \in \part(G)$,
    \[
        \relprop{p}{q}{\pri}{G \wrt (p,q)} = \dual{\relprop{q}{p}{\pri}{G \wrt (p,q)}}.
    \]
\end{theorem}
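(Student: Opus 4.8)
The plan is to reduce the statement to a claim purely about relative types and prove it by structural induction. Observe that both sides of the equation are session types extracted, via \Cref{d:relprop}, from the \emph{same} relative type $R := G \wrt (p,q)$, only at the two complementary perspectives. Since $G$ is relative well-formed, $R$ is defined (\Cref{d:rwf}), and $G$ itself plays no further role once we record two facts about $R$. First, by \Cref{prop:relprojSensible} we have $\part(R) \subseteq \{p,q\}$; unfolding the definition of $\part$ on relative types, this says exactly that the sender of every exchange and the owner of every dependency occurring in $R$ lies in $\{p,q\}$. Second, as $p \neq q$, at each such node \emph{exactly one} of the guards ``$p = s$''/``$q = s$'' (respectively ``$p = r$''/``$q = r$'') of \Cref{f:relprop} holds. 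Consequently both $\relprop{p}{q}{\pri}{R}$ and $\relprop{q}{p}{\pri}{R}$ are defined, and at every node the two perspectives select \emph{complementary} cases: whenever one yields a selection ($\oplus$), the other yields a branching ($\&$), and conversely. Isolating this complementarity up front is the conceptual crux of the whole argument.

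Accordingly, I would prove by induction on the structure of $R$ the generalized claim: for every relative type $R$ with $\part(R) \subseteq \{p,q\}$, every distinct $p,q$, and every priority $\pri$, one has $\relprop{p}{q}{\pri}{R} = \dual{\relprop{q}{p}{\pri}{R}}$. Quantifying over $\pri$ is necessary because the priority is shifted in the recursive calls of \Cref{f:relprop}, but this costs nothing since $\pri$ is arbitrary. The subterms $R_i$ inherit $\part(R_i) \subseteq \part(R) \subseteq \{p,q\}$ directly from the definition of $\part$, so the induction hypothesis applies to them, and the theorem then follows by instantiating $R = G \wrt (p,q)$.

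For the inductive step I would follow the case analysis of \Cref{f:relprop}. The cases $\bullet$, $X$, $\gskip \sdot R'$, and $\mu X \sdot R'$ are immediate: by \Cref{d:duality}, duality fixes $\bullet$ and $X$ and commutes with $\mu$ and with the $\gskip$-driven priority shift, so these close directly under the induction hypothesis. The informative cases are the exchange $s\{i\<S_i\> \sdot R_i\}_{i \in I}$ and the two dependencies. Taking $s = p$ in the exchange (the case $s = q$ is symmetric), $\relprop{p}{q}{\pri}{R}$ is the selection ${\oplus}^{\pri+1}\{i{:}~ \msgprop{S_i} \tensor^{\pri+2} \relprop{p}{q}{\pri+4}{R_i}\}$, while complementarity makes $\relprop{q}{p}{\pri}{R}$ the branching $\&^{\pri+1}\{i{:}~ \dual{\msgprop{S_i}} \parr^{\pri+2} \relprop{q}{p}{\pri+4}{R_i}\}$. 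Computing the dual of the latter via \Cref{d:duality} turns $\&$ into $\oplus$ and $\parr$ into $\tensor$, preserving priorities, yielding $\oplus^{\pri+1}\{i{:}~ \dual{\dual{\msgprop{S_i}}} \tensor^{\pri+2} \dual{\relprop{q}{p}{\pri+4}{R_i}}\}$. This matches $\relprop{p}{q}{\pri}{R}$ once we use that duality is involutive, so $\dual{\dual{\msgprop{S_i}}} = \msgprop{S_i}$, and apply the induction hypothesis $\dual{\relprop{q}{p}{\pri+4}{R_i}} = \relprop{p}{q}{\pri+4}{R_i}$. The dependency cases $r{?}s\{i \sdot R_i\}$ and $s{!}r\{i \sdot R_i\}$ are analogous but simpler, as no message type is carried: complementarity pairs an $\oplus^{\pri+k}$ with a $\&^{\pri+k}$ (with $k = 2$ for received dependencies and $k = 1$ for sent ones), and duality together with the induction hypothesis closes each case.

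The argument is essentially bookkeeping, so the real prerequisites are just the two facts above: the complementarity of perspectives, which I extract up front from \Cref{prop:relprojSensible}, and the involutivity of duality on session types (in particular on the $\msgprop{S_i}$), which I would state as an auxiliary fact proved by a routine induction on \Cref{d:duality}. I expect the only genuine difficulty to be presentational: keeping the two directional perspectives straight through the case analysis so that the ``complementary case'' claim is applied correctly—in particular, consistently reading the guards of \Cref{f:relprop} with the perspective owner as the first subscript when evaluating $\relprop{q}{p}{\pri}{R}$.
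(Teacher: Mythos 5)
Your proposal is correct and takes essentially the same route as the paper, whose entire proof is the one-liner ``By construction from \Cref{d:relproj} and \Cref{d:relprop}''; your structural induction on $R = G \wrt (p,q)$, using $\part(R) \subseteq \{p,q\}$ from \Cref{prop:relprojSensible} to force the two perspectives into complementary cases of \Cref{f:relprop}, is exactly the fleshing-out that one-liner presupposes. The case analysis, the priority offsets ($k=2$ for $?$-dependencies, $k=1$ for $!$-dependencies), and the appeal to involutivity of duality all check out against the definitions.
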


\begin{proof}
    By construction from \Cref{d:relproj} and \Cref{d:relprop}.
\end{proof}

\subsection{Networks of Routed Implementations}
\label{ss:networks}

Having defined routers and types for their channels, we now turn to defining \emph{networks of routed implementations}, i.e., process networks of routers and implementations that correctly represent a given multiparty protocol.
Then, we appeal to the types obtained in \secref{ss:typing} to establish the typability of routers (\Cref{t:routerTypes}).
Finally, we show that all networks of routed implementations of well-formed global types are deadlock free (\Cref{t:globalDlFree}), and that networks of routed implementations behave as depicted by the global types from which they are generated (\Cref{t:completeness,t:soundness}).

We begin by defining routed implementations, which connect implementations of subsets of protocol participants with routers:

\begin{definition}[Routed Implementations]\label{d:routedImplementations}
    Given a  closed,  relative well-formed global type $G$,
    for participants $\tilde{p} \subseteq \part(G)$, the \emph{set of routed implementations} of $\tilde{p}$ in $G$ is defined as follows (cf.\ \defref{d:locproj} for local projection `$\mkern1mu\onto\mkern-3mu$' and \defref{d:router} for router synthesis `$\mkern1mu\lmed{}{}{\ldots}\mkern-3mu$'):
    \[
        \lsys(G, \tilde{p}) := \left\{
            \nu{\ci{\mu}{p} \ci{p}{\mu}}_{p \in \tilde{p}}\, (Q \| \prod_{p \in \tilde{p}} \rtr_p) \mathrel{}\middle\vert\mathrel{} \begin{array}{l}
                Q \vdash  \emptyset; \Gamma,  {(\ci{p}{\mu}{:}~ G \onto^0 p)}_{p \in \tilde{p}}
                \\
                {} \wedge \forall p \in \tilde{p}.~ \rtr_p = \lmed{p}{\part(G) \setminus \{p\}}{G}
            \end{array}
        \right\}
    \]
    We write $\mcl{N}_{\tilde{p}}, \mcl{N}'_{\tilde{p}}, \ldots$ to denote elements of $\lsys(G, \tilde{p})$.
\end{definition}

\noindent
Thus, the composition of a collection of routers and an implementation $Q$ is a routed implementation as long as $Q$ can be typed in a context that includes the corresponding projected types.
Note that the parameter $\tilde{p}$ indicates the presence of \emph{interleaving}: when $\tilde{p}$ is a singleton, the set $\lsys(G, \tilde{p})$ contains processes in which there is a single router and the implementation $Q$ is single-threaded (non-interleaved); more interestingly, when $\tilde{p}$ includes two or more participants, the set $\lsys(G, \tilde{p})$ consists of processes in which the implementation $Q$ {interleaves} the roles of the multiple participants in $\tilde{p}$.

A network of routed implementations of a global type, or simply a \emph{network}, is then the composition of any combination of routed implementations that together account for all the protocol's participants.
Hence, we define sets of networks, quantified over all possible combinations of sets of participants and their respective routed implementations.
The definition relies on \emph{complete partitions} of the participants of a global type, i.e., a split of $\part(G)$ into non-empty, disjoint subsets whose union yields $\part(G)$.

\begin{definition}[Networks]\label{d:networks}
    Suppose given a  closed,  relative well-formed global type $G$.
    Let $\mbb{P}_G$ be the set of all complete partitions of $\part(G)$ with elements $\pi, \pi', \ldots$.
    The \emph{set of networks} of $G$ is defined as
    \[
        \sys(G) := \big\{ \nu{\crt{p}{q} \crt{q}{p}}_{p,q \in \part(G)} (\prod_{\tilde{p} \in \pi} \mcl{N}_{\tilde{p}}) ~\big\vert~  \pi \in \mbb{P}_G \wedge \forall \tilde{p} \in \pi.~ \mcl{N}_{\tilde{p}} \in \lsys(G,\tilde{p}) \big\}.
        \tag*{\lipicsEnd}
    \]
    We write $\mcl{N}, \mcl{N}', \ldots$ to denote elements of $\sys(G)$.
\end{definition}

\begin{example}
    \Cref{f:GauthNetwork} depicts two networks in $\sys(G_{\sff{auth}})$ related to different partitions of $\part(G_{\sff{auth}})$, namely $\big\{\{a\}, \{s\}, \{c\}\big\}$ (non-interleaved) on the left and $\big\{\{a,s\}, \{c\}\big\}$ (interleaved) on the right.
\end{example}

Because a network $\mcl{N}$ may not be typable under the empty typing context, we have the following definition to ``complete'' networks.

\begin{definition}[Completable Networks]\label{d:complNet}
    Suppose given a network $\mcl{N}$ such that $\mcl{N} \vdash \emptyset; \Gamma$.
    We say that $\mcl{N}$ is \emph{completable} if (i) $\Gamma$ is empty or (ii)~there exist $\tilde{v},\tilde{w}$ such that $\nu{\tilde{v} \tilde{w}} \mcl{N} \vdash \emptyset; \emptyset$.
    When $\mcl{N}$ is completable, we write `$\mkern1mu \mcl{N}^\complet\mkern-3mu$' to stand for $\mcl{N}$ (if $\mcl{N} \vdash \emptyset; \emptyset$) or $\nu{\tilde{v} \tilde{w}} \mcl{N}$ (otherwise).
\end{definition}

\begin{proposition}\label{p:complNetsExist}
    For any closed, relative well-formed global type $G$, there exists at least one completable network $\mcl{N} \in \sys(G)$.
\end{proposition}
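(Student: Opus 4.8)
The plan is to exhibit an explicit completable network built over the simplest complete partition of $\part(G)$: the partition $\pi$ into singletons, $\pi = \{\{p\} \mid p \in \part(G)\}$ (which degenerates to the empty partition, and hence to the network $\0$, when $\part(G) = \emptyset$). For each participant $p \in \part(G)$ I would construct a canonical single-role implementation and wrap it with its synthesized router $\rtr_p = \lmed{p}{\part(G) \setminus \{p\}}{G}$, thereby showing both that each set $\lsys(G, \{p\})$ is non-empty and that its chosen elements compose into a network typable under the empty context. Concretely, I take $Q_p := \chr^{\ci{p}{\mu}}(G \onto^0 p)$, the characteristic process of \Cref{p:charProc}. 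Since $G$ is closed, $G \onto^0 p$ is a closed (possibly recursive) session type, so \Cref{p:charProc} yields $Q_p \vdash \emptyset; \ci{p}{\mu}{:}~ G \onto^0 p$. Taking $\Gamma = \emptyset$ in \Cref{d:routedImplementations}, this makes $\mcl{N}_{\{p\}} := \nu{\ci{\mu}{p} \ci{p}{\mu}}(Q_p \| \rtr_p)$ a routed implementation in $\lsys(G, \{p\})$.

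Second, I would assemble the network $\mcl{N} := \nu{\crt{p}{q} \crt{q}{p}}_{p,q \in \part(G)}(\prod_{p \in \part(G)} \mcl{N}_{\{p\}}) \in \sys(G)$ and show $\mcl{N} \vdash \emptyset; \emptyset$, which places it in case (i) of \Cref{d:complNet} (so $\mcl{N}^\complet = \mcl{N}$). For this I rely on three ingredients. First, by router typability (\Cref{t:routerTypes}), $\rtr_p$ is typed with the channel $\ci{\mu}{p}$ carrying $\dual{G \onto^0 p}$ and each inter-router channel $\crt{p}{q}$ carrying $\relprop{p}{q}{0}{G \wrt (p,q)}$. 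Second, composing $Q_p$ with $\rtr_p$ over $\ci{\mu}{p}, \ci{p}{\mu}$ is then a well-typed application of \scc{Mix} and \scc{Cycle}, since $G \onto^0 p$ and $\dual{G \onto^0 p}$ are dual (with matching priority annotations), leaving $\mcl{N}_{\{p\}}$ typed precisely by the collection of its $\crt{p}{q}$ endpoints. Third, by \Cref{t:relpropDual} the endpoints $\crt{p}{q}$ and $\crt{q}{p}$ carry dual types, so each restriction $\nu{\crt{p}{q} \crt{q}{p}}$ is a valid \scc{Cycle}; iterating over all pairs binds every remaining free channel and yields $\mcl{N} \vdash \emptyset; \emptyset$.

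The main obstacle is ensuring the interfaces align exactly: that the router's type on its implementation channel is precisely $\dual{G \onto^0 p}$, and that its types on the inter-router channels are exactly the relative projections, with agreeing priority annotations so that \scc{Cycle} actually applies. These facts are not re-derived here but imported wholesale from \Cref{t:routerTypes} (for the router–implementation and router–router interfaces) and \Cref{t:relpropDual} (for duality between paired routers); the delicate priority bookkeeping that makes the restrictions typable is exactly what those results encapsulate. Given them, the remainder is a routine composition argument, and the degenerate case $\part(G) = \emptyset$ is handled trivially, since the empty partition yields the network $\0 \vdash \emptyset; \emptyset$.
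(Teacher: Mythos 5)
Your proposal is correct and follows essentially the same route as the paper: build a characteristic implementation $Q_p \vdash \emptyset; \ci{p}{\mu}{:}~ G \onto^0 p$ via \Cref{p:charProc} for each participant, wrap each with its router, compose over the singleton partition, and conclude typability under empty contexts. You simply make explicit the appeals to \Cref{t:routerTypes} and \Cref{t:relpropDual} that the paper's proof leaves implicit.
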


\begin{proof}
    To construct a completable network in $\sys(G)$, we construct a routed implementation (\defref{d:routedImplementations}) for every $p \in \part(G)$.
    Given a $p \in \part(G)$, by \Cref{p:charProc}, there exists $Q \vdash \emptyset; \ci{p}{\mu}{:}~ G \onto^0 p$.
    Composing each such characteristic implementation process with routers, and then composing the routed implementations, we obtain a network $\mcl{N} \in \sys(G)$, where $\mcl{N} \vdash \emptyset; \emptyset$.
    Hence, $\mcl{N}$ is completable.
\end{proof}

\begin{figure}[t]
    \begin{mdframed}
        \centering
        \begin{tikzpicture}
    \begin{scope}[local bounding box=bndP]
        \node (P) [draw=black, shape=rectangle, inner sep=1mm] at (0, 0) {$P$};
        \node (struP) [right=2cm of P, yshift=6mm] {};
        \node (strdP) [right=2cm of P, yshift=-2mm] {};
    \end{scope}
    \node [fit=(bndP), shape=rectangle, inner sep=0mm] (boxP) {};
    \node [below=6mm of P] (Ptxt) {implementation};
    \draw[-to, cblRed] (Ptxt) -- (P);

    \begin{scope}[local bounding box=bndRtr]
        \node (rtr) [right=5cm of P, draw=black, shape=rectangle, inner sep=1mm] {$\lmed{c}{\{s,a\}}{G_\sff{auth}}$};
        \node (struRtr) [left=2cm of rtr, yshift=5mm] {};
        \node (strdRtr) [left=2cm of rtr, yshift=-3mm] {};
        \node (strS) [right=3.1cm of rtr, yshift=9mm] {};
        \node (strA) [right=3.1cm of rtr, yshift=-10mm] {};
    \end{scope}
    \node [fit=(bndRtr), shape=rectangle, inner sep=0mm] (boxRtr) {};
    \node [below=5.5mm of rtr] (rtrTxt) {router};
    \draw[-to, cblRed] (rtrTxt) -- (rtr);

    \draw [-, cblPink] (P) -- (rtr) node (cmu) [at start, above=-.5mm, xshift=1.1cm] {$\ci{c}{\mu} \color{black} {:}~ G_\sff{auth} \onto^\pri c$} node (muc) [at end, above=-.5mm, xshift=-1.1cm] {$\ci{\mu}{c} \color{black} {:}~ \ol{G_\sff{auth} \onto^\pri c}$};

    \node (s) [right=3.5cm of rtr, yshift=5mm, inner sep=.5mm] {};
    \node (s') [right=1cm of s, yshift=1.4mm, inner sep=.5mm] {};

    \draw[-, cblPurple] (rtr) -- (s) node (cs) [pos=.45, sloped, above=-1mm] {$\crt{c}{s} \color{black} {:}~ \relprop{c}{s}{\pri}{G_\sff{auth} \wrt (c,s)}$};
    \draw[loosely dotted, cblPurple] (s) -- (s');

    \node (a) [right=3.5cm of rtr, yshift=-5mm, inner sep=.5mm] {};
    \node (a') [right=1cm of a, yshift=-1.4mm, inner sep=.5mm] {};

    \draw[-, cblPurple] (rtr) -- (a) node [pos=.45, sloped, below=0mm] {$\crt{c}{a} \color{black} {:}~ \relprop{c}{a}{\pri}{G_\sff{auth} \wrt (c,a)}$};
    \draw[loosely dotted, cblPurple] (a) -- (a');

    \node (locproj) [above=1cm of cmu, xshift=1.5cm] {\Cref{d:locproj}};
    \draw[-to, cblRed] (locproj) -- (cmu);
    \draw[-to, cblRed] (locproj) -- (muc);

    \node (lmed) [above=1.1cm of rtr] {\Cref{d:router}};
    \draw[-to, cblRed] (lmed) -- (rtr);

    \node (relproj) [above=0.6cm of cs] {\Cref{d:relproj}};
    \draw[-to, cblRed] (relproj) -- (cs);

    \node (relprop) [above=0.6cm of cs, xshift=2.5cm] {\Cref{d:relprop}};
    \coordinate[right=0mm of cs, yshift=1mm, xshift=-1mm] (csc);
    \draw[-to, cblRed] (relprop) -- (csc);
\end{tikzpicture}
    \end{mdframed}

    \caption{
        Overview of \Cref{t:routerTypes}, with the definitions and notations for synthesizing and typing routers, using participant $c$ of $G_\sff{auth}$ implemented as $P$ (cf. \Cref{ex:impl}).
        Lines indicate channels and boxes indicate processes.
    }
    \label{f:typedRouter}
\end{figure}

\subsubsection{The Typability of Routers}
\label{ss:routerTypes}

We wish to establish that the networks of a global type are deadlock free. This result, formalized by \Cref{t:globalDlFree} (\Cpageref{t:globalDlFree}), hinges on the typability of routers, which we address next.
\Cref{f:typedRouter} gives an overview of the definitions and notations involved in this theorem's statement.

\begin{restatable}{theorem}{routerTypes}\label{t:routerTypes}
    Suppose given a  closed,  relative well-formed global type $G$, and a $p \in \part(G)$.
    Then,
    \[
        \lmed{p}{\part(G) \setminus \{p\}}{G} \vdash \emptyset;~ \ci{\mu}{p}{:}~ \dual{G \onto^0 p},~ {\big(\crt{p}{q}{:}~ \relprop{p}{q}{0}{G \wrt (p,q)}\big)}_{q \in \part(G) \setminus \{p\}}.
    \]
\end{restatable}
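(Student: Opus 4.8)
The plan is to prove a strengthened statement by induction on the structure of $G$, following the six cases of \Cref{alg:router}. Two generalizations are needed for the induction to go through. First, I would replace the base priority $0$ by an arbitrary $\pri \in \mbb{N}$, typing $\lmed{p}{\tilde{q}}{G}$ with $\ci{\mu}{p}{:}~\dual{G \onto^\pri p}$ and each $\crt{p}{q}{:}~\relprop{p}{q}{\pri}{G \wrt (p,q)}$; this matches the priority shifting performed by local and relative projection. Second, since $G$ may be open when I recurse under a $\mu$, I would carry a recursion context $\Omega$ assigning to each free recursion variable of $G$ the tuple of projected types expected at the corresponding recursive call, and I would let $\tilde{q}$ range over subsets of $\part(G) \setminus \{p\}$ containing every participant that still interacts with $p$. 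The stated result then follows by taking $\pri = 0$, $\Omega = \emptyset$ (as $G$ is closed), and $\tilde{q} = \part(G) \setminus \{p\}$, reintroducing each non-interacting channel (typed $\bullet$, since there $G \wrt (p,q) = \bullet$) via Rule $\bullet$. Relative well-formedness (\Cref{d:rwf}) guarantees that all the projections appearing in the statement are defined.

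The central case is the exchange $G = s \mto r\{i\<S_i\>\sdot G_i\}_{i \in I}$, which I would split exactly as \Cref{alg:router} does. When $p = s$, local projection gives $G \onto^\pri p = \oplus^\pri\{i{:}~\msgprop{S_i}\tensor^{\pri+1}(G_i \onto^{\pri+4}p)\}_{i \in I}$, whose dual is a branch; this matches the outer $\ci{\mu}{p}\gets\{\ldots\}$ typed by Rule $\&$ at priority $\pri$. The subsequent selections $\ol{\crt{p}{r}}\puts i$ and $\ol{\crt{p}{q}}\puts i$ (for $q\in\deps$) are non-blocking, typed by the admissible rule $\oplus^\star$ against the selection types produced at priority $\pri+1$ by \Cref{d:relprop}; the input $\ci{\mu}{p}(v)$ is typed by Rule $\parr$ at priority $\pri+1$; the output $\ol{\crt{p}{r}}[w]$ by $\tensor^\star$ at priority $\pri+2$; and the forwarder $v \fwd w$ by \scc{Id}, since inside branch $i$ we have $v{:}~\dual{\msgprop{S_i}}$ and $w{:}~\msgprop{S_i}$, which are dual. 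The residual behavior $\lmed{p}{\tilde{q}}{G_i}$ is typed by the induction hypothesis at priority $\pri+4$. The key arithmetic observation is that the increments $+1,+2,+3,+4$ built into \Cref{d:locproj,d:relprop} coincide with the four sub-steps of the decomposition motivated in \secref{ss:routers}, so each blocking action (input or branch) carries a priority strictly below that of every other channel in its continuation context, discharging the side conditions of Rules $\parr$ and $\&$; the arbitrary priorities inside message types $\msgprop{S_i}$ never enter these checks, since the message endpoints are bound and immediately consumed by the forwarder. The cases $p = r$ and $p \notin \{s,r\}$ (with its dependency sub-cases driven by $\hdep$, \Cref{d:hdep}) are analogous: each time I read off the matching selection/branching type from \Cref{d:locproj,d:relprop}, using $\hdep$ to guarantee that a dependency type is produced precisely when the router forwards an extra label.

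The structural cases $\bullet$, $\gskip \sdot G'$, and $X$ are direct: the first gives $\0$ (Rule \scc{Empty}) against an all-$\bullet$ context; the second continues at priority $\pri+4$, matching $(\gskip\sdot G')\onto^\pri p = G'\onto^{\pri+4}p$ and the analogous clause of \Cref{d:relprop}; the third is Rule \scc{Var}, whose required context is exactly the tuple recorded in $\Omega$. The recursion case $G = \mu X \sdot G'$ is where I expect the main difficulty. Here the router is $\mu X(\ci{\mu}{p},(\crt{p}{q})_{q\in\tilde{q}'})\sdot\lmed{p}{\tilde{q}'}{G'}$ with $\tilde{q}'$ the participants still interacting with $p$, and I must apply Rule \scc{Rec}. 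This forces me to (i)~justify the restriction to $\tilde{q}'$, discarding the channels whose projection collapses to $\bullet$ by the contractiveness conditions of \Cref{d:contrRel,d:contrSes}; (ii)~choose the lifter $t$ larger than $\max_\pr$ of the projected types, as demanded by \scc{Rec}; and (iii)~show that unfolding the projected recursive types with this $t$ reproduces the projection of the body.

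Step (iii) is the crux, and the hard part will be reconciling it with the induction hypothesis. It relies on two commutation identities, $\lift{t}(G'\onto^\pri p) = G'\onto^{\pri+t}p$ and $\lift{t}(\relprop{p}{q}{\pri}{R}) = \relprop{p}{q}{\pri+t}{R}$, which I would prove as auxiliary lemmas; they hold because both projections increment priorities by fixed offsets and leave recursion variables untouched, exactly as $\lift{t}$ does. The delicate point is that the induction hypothesis types the body $\lmed{p}{\tilde{q}'}{G'}$ with the bare recursion variable recorded in $\Omega$, whereas \scc{Rec} requires the body typed against the \emph{unfolded} context $\unfold^t(\cdot)$. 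I would bridge this gap using the convention (stated after the typing rules) that equates recursive types with their unfoldings inside a derivation, together with the admissible rule \scc{Lift} to absorb the offset $t$ at each recursive call, where Rule \scc{Var} refolds the types supplied by $\Omega$.
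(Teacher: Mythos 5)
Your overall strategy---strengthen to an arbitrary base priority, carry a recursion context for the bound variables, and induct on the structure of $G$ following the six cases of \Cref{alg:router}---is the same as the paper's, which proves a generalized statement (\Cref{t:routerTypesGen}, phrased via global contexts $G_s \gsub{C} G$) and derives \Cref{t:routerTypes} as a corollary; your handling of the exchange case and the $+1,\ldots,+4$ priority arithmetic matches the paper's derivations. However, there are two genuine gaps. The first is in the recursion cases. The types you must exhibit for $\ci{\mu}{p}$ and $\crt{p}{q}$ at a subterm $G_s$ cannot be the bare projections $\dual{G_s \onto^\pri p}$ and $\relprop{p}{q}{\pri}{G_s \wrt (p,q)}$: at a recursive call $G_s = X$ the projection is just $X$, whereas rule \scc{Var} necessarily produces the folded types $\mu X \sdot (\lift{t_X} A_X)$; and the premise of \scc{Rec} requires the body typed against $\unfold^t(\mu X \sdot A) = A\subst{\mu X \sdot (\lift{t}A)/X}$, i.e., the projection with a lifted, folded recursive type \emph{substituted} for the free variable. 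The paper makes this precise with the deep-unfolding operator $\deepunfold{\cdot}{\cdot}$, which substitutes $\mu Y \sdot (\lift{t_Y}\cdot)$ for each enclosing binder in order, with each lifter $t_Y$ computed over the \emph{already-unfolded} types of the outer binders (cf.\ \Cref{ex:unfolding}); your strengthened statement must carry exactly this substitution. Your proposed bridge---the convention equating recursive types with their unfoldings, plus rule \scc{Lift}---cannot close this gap: the convention relates a closed $\mu X \sdot A$ to its unfolding, not an open type with free $X$ to one with a folded type substituted in, and \scc{Lift} shifts \emph{every} priority in the context uniformly, whereas what is needed is a lift applied only underneath the substituted $\mu X$.

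The second gap is the sub-case $p \notin \{s,r\}$ where the exchange is a dependency on \emph{both} the sender and the recipient. This is not ``analogous'' to the others: there the router branches on $\crt{p}{s}$ and then, inside branch $i$, branches on $\crt{p}{r}$ with a \emph{single} label $i$, while the type of $\crt{p}{r}$ is $\&^{\pri+2}\{i{:}~\ldots\}_{i \in I}$ and rule $\&$ demands a typed continuation for every label in $I$. The router as returned on \cref{line:rtrBoth} is therefore not typable as written; the paper has to refine the algorithm with alarm-process branches for the labels $i' \neq i$ (the process in~\eqref{eq:patched}) and add the axiom \scc{Alarm} to type them (cf.\ \Cref{f:tderii}). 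Without that extension your derivation for this sub-case cannot be completed.
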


\noindent
This result is a corollary of \Cref{t:routerTypesGen} (\Cpageref{t:routerTypesGen}), which we show next.
We give a full proof on \Cpageref{proof:routerTypes}, after the proof of \Cref{t:routerTypesGen}.

\paragraph{Alarm Processes}
We focus on networks of routed implementations---compositions of synthesized routers and well-typed processes.
However, in order to establish the typability of routers we must account for an edge case that goes beyond these assumptions, namely when a routed implementation is connected to some undesirable implementation, not synthesized by \Cref{alg:router}.
Consider the following example:

\begin{example}\label{ex:alarm}
    Consider again the global type $G_\sff{auth}$, which, for the purpose of this example, we write as follows:
    \begin{align*}
        G_\sff{auth} = s \mto c \left\{ \begin{array}{@{}l@{}}
                \sff{login}{:}~ G_\sff{login},
                \\
                \sff{quit}{:}~ G_\sff{quit}
        \end{array} \right\}
    \end{align*}
    As established in \Cref{ex:Gauthproj}, the initial exchange between $s$ and $c$ determines a dependency for the interactions of $a$ with both $s$ and $c$.
    Therefore, the implementation of $a$ needs to receive the choice between \sff{login} and \sff{quit} from the implementations of both $s$ and $c$.
    An undesirable implementation for $c$,  without  a router, could be for instance as follows:
    \begin{align*}
        R' := \crt{c}{s} \gets \left\{ \begin{array}{@{}l@{}}
                \sff{login}{:}~ \crt{c}{a} \puts \sff{quit} \cdot \ldots,
                \\
                \sff{quit}{:}~ \crt{c}{a} \puts \sff{quit} \cdot \ldots
        \end{array} \right\}
    \end{align*}
    Notice how $ R'$ always sends to $a$ the label \sff{quit}, even if the choice made by $s$ (and sent to $c$) is \sff{login}.
    Now, if $s$ chooses \sff{login}, the router of $a$ is in limbo: on the one hand, it expects $s$ to behave as specified in $G_\sff{login}$; on the other hand, it expects $c$ to behave as specified in $G_\sff{quit}$.
    Clearly, the router of $a$ is in an inconsistent state due to $c$'s implementation.
\end{example}

\noindent
Because routers always forward the chosen label correctly, this kind of undesirable behavior never occurs in the networks of \Cref{d:networks}---we state this formally in \secref{ss:transfer} (\Cref{t:noError}).
Still, in order to prove that our routers are well-typed, we must accommodate the possibility that a router ends up in an undesirable state due to inconsistent forwarding.
For this, we extend APCP with an \emph{alarm process} that signals an inconsistency on a given set of channel endpoints.

\begin{definition}[Alarm Process]\label{d:error}
    Given channel endpoints $\tilde{x} = x_1, \ldots, x_n$, we write `$\mkern2mu\error{\tilde{x}}\mkern-3mu$' to denote an inconsistent state on those endpoints.
\end{definition}

\noindent

In a way, $\error{\tilde{x}}$ is closer to an observable action (a ``barb'') than to an actual process term: $\error{\tilde{x}}$ does not have reductions, and no process from \Cref{f:procdef} (top) can reduce to $\error{\tilde{x}}$.
We assume that $\error{\tilde{x}}$  does not occur in participant implementations (cf.\ $Q$ in \Cref{d:routedImplementations});
we treat it as a process solely for the purpose of  refining the router synthesis algorithm (\Cref{alg:router}) with the possibility of  inconsistent forwarding. The refinement concerns the process on \cref{line:rtrBoth}:
\begin{align*}
    \crt{p}{s} \gets \big\{i{:}~ \ol{\ci{\mu}{p}} \puts i \cdot \crt{p}{r} \puts \{ i{:}~ \lmed{p}{\tilde{q}}{G_i} \} \big\}_{i \in I}
\end{align*}
We extend it with additional branches, as follows:
\begin{align}
    \crt{p}{s} \gets \big\{i{:}~ \ol{\ci{\mu}{p}} \puts i \cdot \crt{p}{r} \puts \left( \begin{array}{@{}l@{}}
                \{ i{:}~ \lmed{p}{\tilde{q}}{G_i} \}
                \\[5pt]
                ~\underline{\cup~ \{ i'{:}~ \error{\ci{\mu}{p}, {(\crt{p}{q})}_{q \in \tilde{q}}} \}_{i' \in I \setminus \{i\}}}~~
    \end{array} \right) \big\}_{i \in I}
    \label{eq:patched}
\end{align}
This new process for  \cref{line:rtrBoth}    captures the kind of inconsistency illustrated by \Cref{ex:alarm}, which occurs when a label $i \in I$ is received over $\crt{p}{s}$ after which a label $i' \in I \setminus \{i\}$ is received over $\crt{p}{r}$.
We account for this case by using the underlined alarm processes.

Routers are then made of processes as in \Cref{f:procdef} (top), selectively extended with alarms as just described.
Because $\error{\tilde{x}}$ merely acts as an observable that signals undesirable behavior, we find it convenient to type it using the following axiom:
\begin{prooftree}
    \infAx{
        $\error{x_1, \ldots, x_n} \vdash \Omega; x_1{:}~ A_1, \ldots, x_n{:}~ A_n$
    }{\scc{Alarm}}
\end{prooftree}
where the recursive context $\Omega$ and types $A_1, \ldots, A_n$ are arbitrary.

\paragraph{Context-based Typability}
Considering the refinement of \Cref{alg:router} with alarm processes, we prove \Cref{t:routerTypesGen} on \Cpageref{t:routerTypesGen}, from which \Cref{t:routerTypes} follows as a corollary.
It relies on some additional auxiliary definitions and results.

To type the router for a participant at any point in the protocol, we need the definition of the entire protocol.
It is not enough to only consider the current (partial) protocol at such points: we need information about bound recursion variables in order to perform unfolding in types.
To this end, we define \emph{global contexts}, that allow us to look at part of a protocol while retaining definitions that concern the entire protocol.

\begin{definition}[Global Contexts]\label{d:globctx}
    \emph{Global contexts} $\mcl{C}$ are given by the following grammar:
    \begin{align*}
        C ::= p \mto q \left( \begin{array}{@{}l@{}}
                \{i\<S\> \sdot G\}_{i \in I}
                \\
                {} \cup \{i'\<S\> \sdot C\}_{i' \notin I}
        \end{array} \right) \sepr \gskip \sdot C \sepr \mu X \sdot C \sepr []
    \end{align*}
    We often simply write `context' when it is clear that we are referring to a global context.
    Given a context $C$ and a global type $G$, we write `$\mkern1mu C[G]\mkern-3mu$' to denote the global type obtained by replacing the hole `$\mkern1mu[]\mkern-3mu$' in $C$ with $G$.
    If $G = C[G_s]$ for some context $C$ and global type $G_s$, then we write `$\mkern1mu G_s \gsub{C} G \mkern-3mu$'.
\end{definition}

As mentioned before, a context captures information about the recursion variables that are bound at any given point in a global type.
Our goal is to obtain a \emph{context-based} typability result for routers.

The order in which recursive variables are bound is important to correctly unfold types:
\begin{example}\label{ex:unfolding}
    Consider the following global type with three nested recursive definitions:
    \begin{align*}
        G_\sff{rec} = \mu X \sdot a \mto b : 1 \sdot \mu Y \sdot a \mto b : 2 \sdot \mu Z \sdot a \mto b \{\sff{x}: X,\quad \sff{y}: Y,\quad \sff{z}: Z\}
    \end{align*}
    To type the router for, e.g., $a$ at the final exchange between $a$ and $b$, we need to be aware of the unfolding of recursion.
    The recursion on $X$, $Y$, and $Z$ have all to be unfolded, and the recursion on $Z$ must include first the unfolding of $X$ and then the unfolding of $Y$, which must in turn include the prior unfolding of~$X$.
\end{example}

\noindent
To account for nested recursions, the following definition gives the bound variables of a context exactly in the order in which they appear:

\begin{definition}[Recursion Binders of Contexts]\label{d:binders}
    Given a global context $C$, the \emph{sequence of recursion binders to the hole} of $C$, denoted `$\mkern1mu \ctxbind{C}\mkern-3mu$', is defined as follows:
    \begin{align*}
        \span
        \ctxbind{\mu X \sdot C}
        := (X, \ctxbind{C})
        \qquad
        \ctxbind{\gskip \sdot C}
        := \ctxbind{C}
        \qquad
        \ctxbind{[]}
        := ()
        \\
        \ctxbind{p \mto q \left( \begin{array}{@{}l@{}}
                    \{i\<S_i\> \sdot G_i\}_{i \in I}
                    \\
                    {} \cup \{i'\<S_{i'}\> \sdot C\}_{i' \notin I}
        \end{array} \right)}
        &:= \ctxbind{C}
    \end{align*}
    Given $G_s \gsub{C} G$, the sequence of recursion binders of $G_s$, denoted `$\mkern1mu \subbind{G_s}{G}\mkern-3mu$', is defined as $\ctxbind{C}$.
\end{definition}

\noindent
The following retrieves the body of a recursive definition from a global context, informing us on how to unfold types:

\begin{definition}[Recursion Extraction]\label{d:recdef}
    The function `$\mkern1mu \recdef{X}{G}\mkern-3mu$' extracts the recursive definition on $X$ from $G$, i.e.\ $\recdef{X}{G} = G'$ if $\mu X \sdot G' \gsub{C} G$ for some context $C$.
    Also, `$\mkern1mu\recctx{X}{G}\mkern-3mu$' extracts the context of the recursive definition on $X$ in $G$, i.e.\ $\recctx{X}{G} = C$ if $\mu X \sdot \recdef{X}{G} \gsub{C} G$.
\end{definition}

\noindent
When unfolding bound recursion variables, we need the priorities of the unfolded types.
The following definition gives a priority that is expected at the hole in a context, as well as the priority expected at any recursive definition in a global type:

\begin{definition}[Absolute Priorities of Contexts]\label{d:ctxpri}
    Given a context $C$ and $\pri \in \mbb{N}$, we define $\ctxpri{\pri}{C}$ as follows:
    \begin{align*}
        \span
        \ctxpri{\pri}{[]}
        := \pri
        \qquad
        \ctxpri{\pri}{\gskip \sdot C}
        := \ctxpri{\pri+4}{C}
        \qquad
        \ctxpri{\pri}{\mu X \sdot C}
        := \ctxpri{\pri}{C}
        \\
        \ctxpri{\pri}{p \mto q \left( \begin{array}{@{}l@{}}
                    \{i\<S_i\> \sdot G_i\}_{i \in I}
                    \\
                    {} \cup \{i'\<S_{i'} \sdot C\}_{i' \notin I}
        \end{array} \right) }
        &:= \ctxpri{\pri+4}{C}
    \end{align*}
    Then, the \emph{absolute priority} of $C$, denoted `$\mkern1mu \ctxpri{}{C}\mkern-3mu$', is defined as $\ctxpri{0}{C}$.
    The absolute priority of $X$ in $G$, denoted `$\mkern1mu \recpri{X}{G}\mkern-3mu$', is defined as $\ctxpri{}{C}$ for some context $C$ such that $\mu X \sdot \recdef{X}{G} \gsub{C} G$.
\end{definition}

To avoid non-contractive recursive types, relative projection (cf.\ \Cref{f:relproj}) closes a type when the participants do not interact inside a recursive definition.
Hence, when typing a router for a recursive definition, we must determine which pairs of participants are ``active'' at any given point in a protocol, and close the connections with the ``inactive'' participants.

\begin{example}
    Consider the following global type, where a client (`$c$') requests two independent, infinite Fibonacci sequences (`$f_1$' and `$f_2$'):
    \begin{align*}
        G_\sff{fib} = c \mto f_1 : \sff{init}\<\sff{int} \times \sff{int}\> \sdot c \mto f_2 : \sff{init}\<\sff{int} \times \sff{int}\> \sdot \underbrace{\mu X \sdot f_1 \mto c : \sff{next}\<\sff{int}\> \sdot f_2 \mto c : \sff{next}\<\sff{int}\> \sdot X}_{G'_\sff{fib}}
    \end{align*}
    Participants $f_1$ and $f_2$ do not interact with each other in the body of the recursion, as formalized by their relative projection:
    \begin{align*}
        \recdef{X}{G_\sff{fib}} \wrt (f_1,f_2) = \gskip \sdot \gskip \sdot X
    \end{align*}
    Hence, $G'_\sff{fib} \wrt (f_1,f_2) = \bullet$, and $f_1$ and $f_2$ do not form an active pair of participants for the recursion in $G_\sff{fib}$.
    Therefore, $f_1$'s router closes its connection with $f_2$'s router at the start of the recursion on $X$, and vice versa.
\end{example}

\noindent
The following definition uses relative projection to determine the pairs of active participants at the hole of a context, as well as at any recursive definition in a global type.
We consider pairs of participants $(p,q)$ and $(q,p)$ to be equivalent.

\begin{definition}[Active Participants]\label{d:activeParts}
    Suppose given a relative well-formed global type $G$.
    The following mutually defined functions compute sets of \emph{pairs of active participants} for recursive definitions and contexts, denoted `$\mkern1mu\recactiv{X}{G}\mkern-3mu$' and `$\mkern1mu\activ{C}{G}\mkern-3mu$', respectively.
    \begin{align*}
        \recactiv{X}{G} &:= \{(p,q) \in \activ{\recctx{X}{G}}{G} \mid (\mu X \sdot \recdef{X}{G}) \wrt (p,q) \neq \bullet\}
        \\
        \activ{C}{G} &:= \begin{cases}
            \recactiv{Y}{G}
            & \text{if $\ctxbind{C} = (\tilde{X},Y)$}
            \\
            \part(G)^2
            & \text{otherwise}
        \end{cases}
    \end{align*}
    The interdependency between `$\mkern1mu\recactiv{X}{G}\mkern-3mu$' and `$\mkern1mu\activ{C}{G}\mkern-3mu$' is well-defined: the former function considers the active participants of a context, which contains less recursive definitions.
\end{definition}

\noindent
When typing a router for a given protocol, we have to keep track of assignments in the recursive context at any point in the protocol.
The following two lemmas ensure that the active participants of recursive definitions are consistent with the active participants of their bodies.

\begin{lemma}\label{l:binderCtxSubRec}
    Suppose given a closed, relative well-formed global type $G$, and a global type $G_s$ and context $C$ such that $G_s \gsub{C} G$.
    For any $Z \in \ctxbind{C}$, $\activ{C}{G} \subseteq \recactiv{Z}{G}$.
\end{lemma}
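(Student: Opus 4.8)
The plan is to prove the inclusion by exhibiting a descending chain of active‑participant sets indexed by the recursion binders on the path to the hole, and observing that $\activ{C}{G}$ coincides with the smallest (innermost) element of that chain. Write $\ctxbind{C} = (X_1, \ldots, X_n)$, listed from the outermost binder $X_1$ to the innermost $X_n$, as dictated by the clause $\ctxbind{\mu X \sdot C'} := (X, \ctxbind{C'})$ of \Cref{d:binders}. If $n = 0$ there is no $Z \in \ctxbind{C}$ and the statement holds vacuously, so I assume $n \geq 1$. By the standard convention that all recursion binders in $G$ are distinct (up to $\alpha$-renaming), each $\recctx{X_j}{G}$ and $\recdef{X_j}{G}$ of \Cref{d:recdef} is well-defined.

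The structural heart of the argument is to locate, for each $j$, the binders on the path to the hole of $\recctx{X_j}{G}$. Since $X_j \in \ctxbind{C}$, the context factors (by context composition) as $C = C_1[\mu X_j \sdot C_2]$, where $C_1$ is the part of $C$ above the binder $\mu X_j$ and $C_2$ the part below it, with the hole at $G_s$'s position inside $C_2$; reading off \Cref{d:binders} gives $\ctxbind{C_1} = (X_1, \ldots, X_{j-1})$ and $\ctxbind{\mu X_j \sdot C_2} = (X_j, X_{j+1}, \ldots, X_n)$. Plugging $G_s$ back in yields $G = C_1[\mu X_j \sdot (C_2[G_s])]$, so by \Cref{d:recdef} we get $\recdef{X_j}{G} = C_2[G_s]$ and $\recctx{X_j}{G} = C_1$, whence $\ctxbind{\recctx{X_j}{G}} = (X_1, \ldots, X_{j-1})$. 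I expect this factorization—and the claim that it pins down $\recctx{X_j}{G}$ exactly—to be the main obstacle; I would discharge it by a routine induction on the structure of $C$ (the $\gskip$, exchange, and $\mu$ cases), tracking the position of the hole relative to the binder $\mu X_j$.

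From this structural fact, \Cref{d:activeParts} gives immediately $\activ{\recctx{X_j}{G}}{G} = \recactiv{X_{j-1}}{G}$ when $j \geq 2$ (the innermost binder to the hole of $\recctx{X_j}{G}$ is then $X_{j-1}$), and $\activ{\recctx{X_1}{G}}{G} = \part(G)^2$ when $j = 1$ (empty binder sequence). Since $\recactiv{X_j}{G}$ is defined in \Cref{d:activeParts} as the subset of $\activ{\recctx{X_j}{G}}{G}$ obtained by filtering on the condition $(\mu X_j \sdot \recdef{X_j}{G}) \wrt (p,q) \neq \bullet$, we obtain $\recactiv{X_j}{G} \subseteq \recactiv{X_{j-1}}{G}$ for every $2 \leq j \leq n$. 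This monotonicity is precisely what underlies the well-definedness note of \Cref{d:activeParts}, so the mutual recursion between $\recactiv{}{}$ and $\activ{}{}$ is respected throughout.

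Finally I would assemble the chain. Because $\ctxbind{C} = (X_1, \ldots, X_n)$ has innermost element $X_n$, \Cref{d:activeParts} gives $\activ{C}{G} = \recactiv{X_n}{G}$. Combining this with the descending chain
\[
    \recactiv{X_n}{G} \subseteq \recactiv{X_{n-1}}{G} \subseteq \cdots \subseteq \recactiv{X_1}{G},
\]
transitivity yields $\activ{C}{G} = \recactiv{X_n}{G} \subseteq \recactiv{X_j}{G}$ for every $j$, that is, $\activ{C}{G} \subseteq \recactiv{Z}{G}$ for each $Z = X_j \in \ctxbind{C}$, as required.
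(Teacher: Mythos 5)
Your proof is correct and follows essentially the same route as the paper's: both arguments rest on the facts that $\activ{C}{G}$ equals $\recactiv{}{}$ of the innermost binder and that $\recactiv{X_j}{G} \subseteq \activ{\recctx{X_j}{G}}{G}$, iterated along the binder sequence (the paper phrases this as an induction on the length of the outer binder prefix, you unroll it into an explicit descending chain). Your explicit factorization $C = C_1[\mu X_j \sdot C_2]$ just makes precise the step the paper summarizes as "the recursive definition on $Z$ appears in $\recctx{Y}{G}$".
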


\begin{proof}
    \sloppy
    Take any $Z \in \ctxbind{C}$.
    Then $\ctxbind{C} = (\tilde{X},Y)$.
    By definition, ${\activ{C}{G} = \recactiv{Y}{G}}$.
    If $Y = Z$, the thesis is proven.
    Otherwise, by definition, ${\recactiv{Y}{G} \subseteq \activ{\recctx{Y}{G}}{G}}$.
    Since the recursive definition on $Z$ appears in $\recctx{Y}{G}$, it follows by induction on the size of $\tilde{X}$ that $\activ{\recctx{Y}{G}}{G} \subseteq \recactiv{Z}{G}$.
    This proves the thesis.
\end{proof}

\noindent
The following lemma ensures that when typing a recursive call, the endpoints given as context for the recursive call concur with the endpoints in the recursive context:

\begin{lemma}\label{l:binderParts}
    Suppose given a closed, relative well-formed global type $G$, a recursion variable $Z$, and a context $C$ such that $Z \gsub{C} G$.
    Then, $\activ{C}{G} = \recactiv{Z}{G}$.
\end{lemma}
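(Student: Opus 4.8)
The plan is to prove the two set inclusions of $\activ{C}{G} = \recactiv{Z}{G}$ separately. Throughout, I fix the context $C$ with $Z \gsub{C} G$, so that the hole of $C$ is the recursive call $Z$. Since $G$ is closed, this call must be bound by a recursion binder $\mu Z$ lying on the path from the root of $G$ to the hole; hence $Z$ occurs in $\ctxbind{C}$, and I may write $\ctxbind{C} = (\tilde{X}, Z, \tilde{W})$, where $\tilde{W}$ collects the recursion binders opened strictly between $\mu Z$ and the call (those introduced inside $\recdef{Z}{G}$ on the way to the hole). The inclusion $\activ{C}{G} \subseteq \recactiv{Z}{G}$ then requires no new work: instantiating \Cref{l:binderCtxSubRec} with $G_s := Z$ and the present $C$, and using $Z \in \ctxbind{C}$, gives exactly this containment.

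The substance is the reverse inclusion $\recactiv{Z}{G} \subseteq \activ{C}{G}$, which I would establish by induction on the length of $\tilde{W}$. If $\tilde{W}$ is empty, then $Z$ is the innermost binder of $C$, so $\ctxbind{C} = (\tilde{X}, Z)$ and \Cref{d:activeParts} yields $\activ{C}{G} = \recactiv{Z}{G}$ outright, discharging both inclusions simultaneously. If $\tilde{W}$ is nonempty, let $Y$ be its last element, so that $\activ{C}{G} = \recactiv{Y}{G}$; it then suffices to prove $\recactiv{Z}{G} \subseteq \recactiv{Y}{G}$ and invoke the induction hypothesis for the shorter chain of binders between $\mu Y$ and the call. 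Unfolding the definition of $\recactiv{\cdot}{G}$, a pair $(p,q) \in \recactiv{Z}{G}$ lands in $\recactiv{Y}{G}$ precisely when it both survives the active-set filter of the context enclosing $\mu Y$ and keeps $p$ and $q$ interacting across the inner recursion, i.e.\ $(\mu Y \sdot \recdef{Y}{G}) \wrt (p,q) \neq \bullet$.

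The latter condition is the main obstacle. It amounts to showing that the activity of a pair $(p,q)$ is never lost when passing from the outer recursion $\mu Z$ to an inner recursion $\mu Y$ whose body tail-calls $Z$. The delicate point is that relative projection (\Cref{d:relproj}) closes a recursion to $\bullet$ as soon as its body fails to be contractive on the bound variable, so a priori an inner recursion could collapse to $\bullet$ for $(p,q)$ even though that pair interacts around the outer loop; the argument must rule this out. I expect to do so by leaning on relative well-formedness (\Cref{d:rwf}): from $(\mu Z \sdot \recdef{Z}{G}) \wrt (p,q) \neq \bullet$ one knows $p$ and $q$ genuinely interact along the branches that return to $Z$, and since the call $Z$ sits in tail position inside $\recdef{Y}{G}$, well-formedness forces the corresponding relative projection of $\mu Y$ to be defined and contractive on $Y$, hence not collapsed. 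Pinning down exactly why well-formedness forbids the degenerate collapse—tracking how the relative projection of the inner recursion inherits the exchanges and dependencies of the outer one through the shared tail-call—is where the bulk of the difficulty lies; the remaining bookkeeping (the filter membership and the induction on $\tilde{W}$) is routine by comparison. This equality is also exactly what is needed so that, when typing a router's recursive call with rule \scc{Var}, the context endpoints supplied at the call match those recorded at the router's recursion definition.
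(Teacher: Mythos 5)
Your overall decomposition matches the paper's proof of this lemma: the inclusion $\activ{C}{G} \subseteq \recactiv{Z}{G}$ is indeed just \Cref{l:binderCtxSubRec} instantiated at $Z \in \ctxbind{C}$, and the reverse inclusion reduces to showing that no binder $Y$ opened strictly between $\mu Z$ and the call can satisfy $(\mu Y \sdot \recdef{Y}{G}) \wrt (p,q) = \bullet$ for a pair with $(\mu Z \sdot \recdef{Z}{G}) \wrt (p,q) \neq \bullet$. (The paper phrases this as a contradiction argument using the innermost binder rather than an induction along $\tilde{W}$; that is only a presentational difference, and your version is if anything more explicit about the filter-membership side condition.)

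The gap is exactly the step you flag as ``the bulk of the difficulty,'' and the route you sketch for closing it points in the wrong direction. You propose to derive contractivity of the inner projection from relative well-formedness via the claim that $p$ and $q$ ``genuinely interact along the branches that return to $Z$.'' No such semantic argument is needed, and it would be awkward to make precise. The paper's resolution is purely syntactic and essentially one line: the recursive call $Z$ occurs inside $\recdef{Y}{G}$ (it is on the path to the hole of $C$), relative projection maps recursion variables to themselves, and a defined projection cannot erase that occurrence --- branch-collapsing via $\detdep$ keeps a branch whose projection equals the one containing $Z$, and a nested $\mu W$ whose body projects to something containing the foreign variable $Z$ cannot itself collapse to $\bullet$. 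Definedness of the projection is what relative well-formedness (via \Cref{p:subWF}) supplies. Consequently $\recdef{Y}{G} \wrt (p,q)$ ends in a recursive call on $Z \neq Y$, which is precisely the second clause of \Cref{d:contrRel}; so it is contractive on $Y$, and $(\mu Y \sdot \recdef{Y}{G}) \wrt (p,q) = \mu Y \sdot (\recdef{Y}{G} \wrt (p,q)) \neq \bullet$ holds for \emph{every} pair $(p,q)$, with no appeal to whether they interact. With that observation in hand, your condition (b) is unconditional and the remaining bookkeeping is, as you say, routine.
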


\begin{proof}
    Because $G = C[Z]$ and $G$ is closed (i.e.\ $\frv(G) = \emptyset$), there is a recursive definition on $Z$ in $G$.
    Hence, $\ctxbind{C} \neq \emptyset$, i.e.\ $\ctxbind{C} = (\tilde{X}, Y)$ and $\activ{C}{G} = \recactiv{Y}{G}$.
    If $Y = Z$, the thesis is proven.
    Otherwise, the recursive definition on $Y$ in $G$ appears somewhere inside the recursive definition on $Z$.
    Suppose, for contradiction, that $\activ{C}{G} \neq \recactiv{Z}{G}$.
    There are two cases: there exists $(p,q) \in {\part(G)}^2$ s.t.\ (i) $(p,q) \in \activ{C}{G}$ and $(p,q) \notin \recactiv{Z}{G}$, or (ii) $(p,q) \in \recactiv{Z}{G}$ and $(p,q) \notin \activ{C}{G}$.
    Case (i) contradicts \Cref{l:binderCtxSubRec}.

    In case (ii), $(\mu Z \sdot \recdef{Z}{G}) \wrt (p,q) \neq \bullet$ and $(\mu Y \sdot \recdef{Y}{G}) \wrt (p,q) = \bullet$.
    The recursive call on $Z$ in $G$ appears somewhere inside the recursive definition on $Y$, and hence $\recdef{Y}{G} \wrt (p,q)$ contains the recursive call on $Z$.
    This means that $\recdef{Y}{G} \wrt (p,q)$ is contractive on $Y$ (\defref{d:contrRel}), and hence $(\mu Y \sdot \recdef{Y}{G}) \wrt (p,q) \neq \bullet$, contradicting the assumption.
\end{proof}

Our typability result for routers relies on relative and local projection.
Hence, we need to guarantee that all the projections we need at any given point of a protocol are defined.
The following result shows a form of compositionality for relative and local projection, guaranteeing the definedness of projections for all active participants of a given context:

\begin{proposition}\label{p:subWF}
    Suppose given a closed, relative well-formed global type $G$, and a global type $G_s$ such that $G_s \gsub{C} G$.
    Then, for every $(p,q) \in \activ{C}{G}$, the relative projection $G_s \wrt (p,q)$ is defined.
    Also, for every $p \in \{p \in \part(G) \mid \exists q \in \part(G).~ (p,q) \in \activ{C}{G}\}$, the local projection $G_s \onto^\pri p$ is defined for any priority $\pri$.
\end{proposition}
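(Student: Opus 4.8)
The plan is to prove both statements by following the context $C$ from the root of $G$ down to the position of $G_s$, exploiting that $G$ is relative well-formed (\Cref{d:rwf})---so $G \wrt (p,q)$ is defined for all distinct $p,q \in \part(G)$---and that relative and local projection (\Cref{d:relproj,d:locproj}) are computed by structural recursion on the global type. The guiding intuition is that the recursive computation of $G \wrt (p,q)$ descends into the subterm $G_s$ as long as the path traced by $C$ never hits a projection step that is undefined or that collapses to $\bullet$; the active-participant hypothesis $(p,q) \in \activ{C}{G}$ is precisely what forbids such a collapse. I would therefore set up an induction on the structure of $C$ (with ambient $G$ fixed throughout), maintaining the invariant that the projection of the current subterm---starting as $G = C[G_s]$ and shrinking towards $G_s$---onto $(p,q)$ is defined.

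For the base case $C = []$ we have $G_s = G$ and $(p,q) \in \activ{[]}{G} = \part(G)^2$, so definedness is immediate from relative well-formedness. For the $\gskip \sdot C'$ case and the single-hole exchange case, the projection of the enclosing term recurses into the branch carrying $G_s$: since $G \wrt (p,q)$ is defined, every branch is projected somewhere in the computation (in particular in the equality test inside $\detdep$, cf.\ \Cref{f:relproj}), so the projection of the $G_s$-carrying branch is itself defined, and the only genuinely undefined $\detdep$ case---a non-local choice in which neither $p$ nor $q$ is involved---cannot occur along this path. Since $\ctxbind{C}$ is left unchanged by $\gskip$ and by the exchange constructor, $\activ{\cdot}{G}$ is preserved and the induction hypothesis applies directly.

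The crux, and the step I expect to be the main obstacle, is the recursion case $C = \mu X \sdot C'$. Projection of $\mu X \sdot (C'[G_s])$ onto $(p,q)$ equals $\mu X \sdot ((C'[G_s]) \wrt (p,q))$ only when the body projects to a defined, contractive type; otherwise it collapses to $\bullet$ and $G_s$ is never reached. To rule this out I would use that $X \in \ctxbind{C}$ together with \Cref{l:binderCtxSubRec}, giving $\activ{C}{G} \subseteq \recactiv{X}{G}$, hence $(p,q) \in \recactiv{X}{G}$, i.e.\ $(\mu X \sdot \recdef{X}{G}) \wrt (p,q) \neq \bullet$; and $\mu X \sdot \recdef{X}{G}$ is exactly the recursive subterm reached along the path. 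This forbids the collapse and lets the descent continue into the body, while the fact that $\activ{C}{G} \subseteq \activ{C'}{G}$ keeps the active hypothesis available for the induction hypothesis. Applying the same reasoning to every binder in $\ctxbind{C}$ handles the nested recursions of \Cref{ex:unfolding}, and \Cref{l:binderParts} covers the degenerate case in which $G_s$ is itself a recursion variable at the hole. When the path reaches the hole, the current subterm is $G_s$ and its projection onto $(p,q)$ has been shown defined, proving the first statement.

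For the second statement I would run the same contextual descent for $G_s \onto^\pri p$, observing that local projection differs from relative projection only for a participant $p$ that is neither sender nor recipient, where it consults $\hdep$ (\Cref{d:hdep})---itself defined through the $\detdep$ of relative projection, and always well defined in these calls because one of the exchange's endpoints is the sender. The only source of undefinedness for local projection is thus its final ``otherwise'' clause, which picks an arbitrary branch and is well defined only when all branches project equally for $p$; this agreement is guaranteed precisely when the exchange is not a dependency for $p$ (as recorded by $\neg\hdep$) and $G$ is relative well-formed, using the relative-projection definedness established in the first part together with \Cref{prop:relprojSensible}. The recursion case is handled as before: activeness of some pair $(p,q)$ (\Cref{d:activeParts}) keeps the relevant relative projection non-trivial, so the local recursion stays contractive rather than collapsing. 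Finally, the priority parameter $\pri$ only annotates connectives and plays no role in definedness, so the result holds for arbitrary $\pri$, as required.
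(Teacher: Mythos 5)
Your proposal is correct and is essentially the paper's own argument read contrapositively: the paper inducts on the structure of $C$ showing that undefinedness of $G_s \wrt (p,q)$ would propagate up to $G \wrt (p,q)$ (contradicting relative well-formedness), with the only delicate case being $C = \mu X \sdot C'$, where it too invokes \Cref{l:binderCtxSubRec} to get $(p,q) \in \recactiv{X}{G}$ and thereby rule out the collapse to $\bullet$, and it likewise dispatches local projection as analogous. Your direct ``descent'' phrasing, the observation that exchange and skip contexts force projection into every branch, and the role you assign to the active-participant hypothesis all match the paper's proof; the appeal to \Cref{l:binderParts} is unnecessary but harmless.
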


\begin{proof}
    Suppose that, for contradiction, $G_s \wrt (p,q)$ is undefined.
    We show by induction on the structure of $C$ that this means that $G \wrt (p,q)$ is undefined, contradicting the relative well-formedness of $G$.
    \begin{itemize}
        \item
            Hole: $C = []$.
            We have $G_s = G$, and the thesis follows immediately.
        \item
            Exchange: $C = r \mto s \left( \begin{array}{@{}l@{}}
                    \{i\<S_i\> \sdot G_i\}_{i \in I}
                    \\
                    {} \cup \{i'\<S_{i'}\> \sdot C'\}_{i' \notin I}
            \end{array} \right)$.
            By the IH, $C'[G_s] \wrt (p,q)$ is undefined.
            Since the relative projection of an exchange relies on the relative projection of each of the exchange's branches, $G \wrt (p,q)$ is undefined.
        \item
            Skip: $C = \gskip \sdot C'$.
            By the IH, $C'[G_s] \wrt (p,q)$ is undefined.
            Since the relative projection of a skip relies on the relative projection of the skip's continuation, $G \wrt (p,q)$ is undefined.
        \item
            Recursive definition: $C = \mu X \sdot C'$.
            It follows from \Cref{l:binderCtxSubRec} that $\activ{C}{G} \subseteq \recactiv{X}{G}$.
            Hence, $(p,q) \in \recactiv{X}{G}$, and thus $(\mu X \sdot \recdef{X}{G}) \wrt (p,q) = (\mu X \sdot C'[G_s]) \wrt (p,q) \neq \bullet$, which means that $C'[G_s] \wrt (p,q)$ is defined.
            This contradicts the IH.
    \end{itemize}
    The proof for the definedness of local projection is analogous.
\end{proof}

Recall \Cref{ex:unfolding}, where nested recursive definitions in a protocol require nested unfolding of recursive types.
The following definition gives us a concise way of writing such nested (or \emph{deep}) unfoldings:

\begin{definition}[Deep Unfolding]\label{d:deepUnfold}
    Suppose given a sequence of tuples $\tilde{U}$, with each tuple consisting of a recursion variable $X_i$, a lift $t_i \in \mbb{N}$, and a type $B_i$.
    The \emph{deep unfolding} of the type $A$ with $\tilde{U}$, denoted `$\mkern1mu \deepunfold{A}{\tilde{U}}\mkern-3mu$', is the type defined as follows:
    \begin{align*}
        \deepunfold{A}{()} &:= A
        \\
        \deepunfold{A}{(\tilde{U}, (X,t,B))} &:= \deepunfold{A}{\tilde{U}} \subst{\big(\mu X \sdot (\lift{t} \deepunfold{B}{\tilde{U}})\big) / X}
    \end{align*}
\end{definition}

\noindent
When typing a router's recursive call, the types of the router's endpoints are unfoldings of the types in the recursive context.
However, because of the deep unfolding in types, this is far from obvious.
The following result connects a particular form of deep unfolding with regular unfolding (cf.\ \Cref{d:unf}).

\begin{proposition}\label{p:deepUnfold}
    Suppose given a type $A$ and a sequence of tuples $\tilde{U}$ consisting of a recursion variable, a lift, and a substitution type.
    Then,
    \begin{align*}
        \deepunfold{A}{(\tilde{U}, (X, t, A))} &= \unfold^t(\mu X \sdot \deepunfold{A}{\tilde{U}}).
    \end{align*}
\end{proposition}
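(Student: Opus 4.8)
The plan is to establish the identity by a direct computation that unfolds the two definitions it relates; no induction on $\tilde{U}$ or on the structure of $A$ is needed, because both sides collapse to the \emph{same} substitution instance.

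First I would expand the left-hand side using the recursive clause of \Cref{d:deepUnfold} on the trailing tuple $(X,t,A)$. Writing $B := \deepunfold{A}{\tilde{U}}$ for brevity, the clause gives
\[
    \deepunfold{A}{(\tilde{U}, (X, t, A))} = \deepunfold{A}{\tilde{U}} \subst{\big(\mu X \sdot (\lift{t}\, \deepunfold{A}{\tilde{U}})\big)/X} = B \subst{(\mu X \sdot \lift{t} B)/X},
\]
where the crucial point is that the substitution type recorded in the final tuple coincides with the top-level type $A$, so the inner deep unfolding $\deepunfold{B'}{\tilde{U}}$ appearing in the definition (with $B' = A$) is literally $\deepunfold{A}{\tilde{U}} = B$.

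Next I would expand the right-hand side with \Cref{d:unf}, reading the body of the recursion there as $B = \deepunfold{A}{\tilde{U}}$ rather than as the raw type $A$. By definition of $\unfold^t$,
\[
    \unfold^t(\mu X \sdot \deepunfold{A}{\tilde{U}}) = \deepunfold{A}{\tilde{U}} \subst{(\mu X \sdot \lift{t}\, \deepunfold{A}{\tilde{U}})/X} = B \subst{(\mu X \sdot \lift{t} B)/X}.
\]
The two resulting expressions are syntactically identical, which proves the thesis.

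The statement is essentially definitional, so there is no genuine obstacle; the only care needed is bookkeeping. Specifically, one must observe that the body fed to $\unfold^t$ on the right is the already deeply-unfolded $\deepunfold{A}{\tilde{U}}$, matching exactly the type on which the outer substitution of the left-hand side acts, and that the lifter $t$ and the bound variable $X$ are shared between the two sides. This alignment is precisely what makes the ``deep'' unfolding of the last layer coincide with an ordinary one-step unfolding of the partially deep-unfolded body.
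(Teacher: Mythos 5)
Your proposal is correct and matches the paper's own proof: both expand the left-hand side via the recursive clause of \Cref{d:deepUnfold} (noting that the substitution type in the final tuple is $A$ itself, so the inner deep unfolding equals $\deepunfold{A}{\tilde{U}}$) and recognize the result as exactly the definition of $\unfold^t$ applied to $\mu X \sdot \deepunfold{A}{\tilde{U}}$. The extra bookkeeping you spell out is implicit in the paper's two-line computation but adds nothing different in substance.
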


\begin{proof}
    By \Cref{d:deepUnfold}:
    \begin{align*}
        \deepunfold{A}{(\tilde{U}, (X, t, A))}
        &= \deepunfold{A}{\tilde{U}} \subst{\big(\mu X \sdot (\lift{t} \deepunfold{A}{\tilde{U}})\big) / X}
        \\
        &= \unfold^t(\mu X \sdot \deepunfold{A}{\tilde{U}})
        \qedhere
    \end{align*}
\end{proof}

Armed with these definitions and results, we can finally state our context-based typability result for routers:

\begin{theorem}\label{t:routerTypesGen}
    Suppose given a closed, relative well-formed global type $G$.
    Also, suppose given a global type $G_s$ such that $G_s \gsub{C} G$, and a $p \in \part(G)$ for which there is a $q \in \part(G)$ such that $(p,q) \in \activ{C}{G}$.
    Consider:
    \begin{itemize}
        \item
            the participants with whom $p$ interacts in $G_s$:
            $\tilde{q} = \{q \in \part(G) \mid (p,q) \in \activ{C}{G}\}$,

        \item
            the absolute priority of $G_s$:
            $\pri_C = \ctxpri{}{C}$,

        \item
            the sequence of bound recursion variables of $G_s$:
            $\widetilde{X_C} = \ctxbind{C}$,

        \item
            for every $X \in \widetilde{X_C}$:
            \begin{itemize}
                \item
                    the body of the recursive definition on $X$ in $G$:
                    $G_X = \recdef{X}{G}$,

                \item
                    the participants with whom $p$ interacts in $G_X$:
                    $\tilde{q}_X = \{q \in \part(G) \mid (p,q) \in \recactiv{X}{G}\}$,

                \item
                    the absolute priority of $G_X$:
                    $\pri_X = \recpri{X}{G}$,

                \item
                    the sequence of bound recursion variables of $G_X$ excluding $X$:
                    $\widetilde{Y_X} = \subbind{\mu X \sdot G_X}{G}$,

                \item
                    the type required for $\ci{\mu}{p}$ for a recursive call on $X$:
                    \begin{align*}
                        A_{X,p} = \deepunfold{\dual{G_X \onto^{\pri_X} p}}{{(Y,t_Y,\dual{G_Y \onto^{\pri_Y} p})}_{Y \in \widetilde{Y_X}}},
                    \end{align*}

                \item
                    the type required for $\crt{p}{q}$ for a recursive call on $X$:
                    \begin{align*}
                        B_{X,q} = \deepunfold{\relprop{p}{q}{\pri_X}{G_X \wrt (p,q)}}{{(Y,t_Y,\relprop{p}{q}{\pri_Y}{G_Y \wrt (p,q)})}_{Y \in \widetilde{Y_X}}},
                    \end{align*}

                \item
                    the minimum lift for typing a recursive definition on $X$:
                    $t_X = \max_\pr\left( A_X, {(B_{X,q})}_{q \in \tilde{q}_X} \right) + 1$,
            \end{itemize}

        \item
            the type expected for $\ci{\mu}{p}$ for $p$'s router for $G_s$:
            \begin{align*}
                D_p = \deepunfold{\dual{G_s \onto^{\pri_C} p}}{{(X,t_X,\dual{G_X \onto^{\pri_X} p})}_{X \in \widetilde{X_C}}},
            \end{align*}

        \item
            the type expected for $\crt{p}{q}$ for $p$'s router for $G_s$:
            \begin{align*}
                E_q = \deepunfold{\relprop{p}{q}{\pri_C}{G_s \wrt (p,q)}}{{(X,t_X,\relprop{p}{q}{\pri_X}{G_X \wrt (p,q)})}_{X \in \widetilde{X_C}}}.
            \end{align*}
    \end{itemize}

    \noindent
    Then, we have:
    \begin{align*}
        \lmed{p}{\tilde{q}}{G_s} \vdash
        {\Big(X{:}~ \big( A_X, {(B_{X,q})}_{q \in \tilde{q}_X} \big) \Big)}_{X \in \widetilde{X_C}};~
        \ci{\mu}{p}{:}~ D_p,~
        {(\crt{p}{q}{:}~ E_q)}_{q \in \tilde{q}}
    \end{align*}
\end{theorem}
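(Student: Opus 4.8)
The plan is to prove \Cref{t:routerTypesGen} by induction on the structure of $G_s$, with a case analysis mirroring the six cases of \Cref{alg:router}. The guiding observation is that in each case the top-level prefix(es) of the synthesized router $\lmed{p}{\tilde{q}}{G_s}$ correspond exactly to the outermost connectives of the target types $D_p$ and $(E_q)_{q \in \tilde{q}}$, so the derivation can be built by peeling these connectives with the rules of \Cref{f:apcpInf} and appealing to the induction hypothesis on the continuation. The key enabler throughout is the four-substep decomposition of exchanges (\secref{ss:routers}): local projection (\defref{d:locproj}), relative-type extraction (\defref{d:relprop}), and the absolute priority $\ctxpri{}{C}$ (\defref{d:ctxpri}) all advance priorities in blocks of four, so every continuation lives at priority $\geq \pri_C + 4$, strictly above the priorities $\pri_C, \pri_C+1, \pri_C+2, \pri_C+3$ consumed by the current step. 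This is exactly what discharges the side conditions $\pri < \pr(\Gamma)$ of rules $\parr$ and $\&$.

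The workhorse is the exchange case $G_s = s \mto r \{i\<S_i\> \sdot G_i\}_{i \in I}$ (\cref{line:rtrComm}), where I extend $C$ to $C'$ by placing the hole in branch $i$; since no binder is crossed, $\widetilde{X_{C'}} = \widetilde{X_C}$, $\activ{C'}{G} = \activ{C}{G}$, and $\pri_{C'} = \pri_C + 4$, matching the ``$+4$'' in every projection clause. I would handle the subcases of \cref{line:rtrSend,line:rtrRecv,line:rtrElse} in turn. For $p = s$, the router branches on $\ci{\mu}{p}$, selects on $\crt{p}{r}$ and on each $\crt{p}{q}$ with $q \in \deps$, inputs the message, and outputs it on $\crt{p}{r}$; reading off \defref{d:locproj} and \defref{d:relprop} shows these actions consume precisely $\dual{G_s \onto^{\pri_C} p} = \&^{\pri_C}\{i{:}~ \dual{\msgprop{S_i}} \parr^{\pri_C+1} \cdots\}$ on $\ci{\mu}{p}$ and $\oplus^{\pri_C+1}\{i{:}~ \msgprop{S_i} \tensor^{\pri_C+2} \cdots\}$ on $\crt{p}{r}$ (and the dependency types $\oplus^{\pri_C+1}\{\cdots\}$ on the $\crt{p}{q}$), typed with $\&$, $\oplus^\star$, $\parr$, $\tensor^\star$, and \scc{Id} for the forwarder $v \fwd w$ (whose two endpoints carry the dual message types $\dual{\msgprop{S_i}}$ and $\msgprop{S_i}$). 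The cases $p = r$ and $p \notin \{s,r\}$ are analogous, with $\hdep$ (\defref{d:hdep}) deciding which dependency branches are present. The genuinely new ingredient is the patched process \eqref{eq:patched} for \cref{line:rtrBoth}: its off-diagonal branches $i' \neq i$ continue as $\error{\ci{\mu}{p}, (\crt{p}{q})_{q \in \tilde{q}}}$, which I type directly with the \scc{Alarm} axiom at whatever types the branching types of $\crt{p}{s}$ and $\crt{p}{r}$ demand. In each subcase the induction hypothesis applies to $G_i \gsub{C'} G$, its premises holding because the active pairs are unchanged and \Cref{p:subWF} guarantees that the needed projections are defined.

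The technical heart is the recursion, i.e.\ cases $\mu X \sdot G'$ (\cref{line:rtrRecDef}) and $X$ (\cref{line:rtrRecCall}). For $\mu X \sdot G'$ with $\tilde{q}' \neq \emptyset$ the router is a recursive definition, typed with \scc{Rec}. Extending $C$ to $C' = C[\mu X \sdot []]$ leaves priorities unchanged ($\pri_{C'} = \pri_C = \pri_X$, since $\recctx{X}{G} = C$) and appends $X$ to the binder sequence, so the induction hypothesis types $\lmed{p}{\tilde{q}'}{G'}$ with the deep unfoldings over $(\widetilde{X_C}, (X, t_X, \cdot))$. The crucial rewriting is \Cref{p:deepUnfold}: with $A = \dual{G_X \onto^{\pri_X} p}$ it turns $\deepunfold{A}{(\tilde{U}, (X, t_X, A))}$ into $\unfold^{t_X}(\mu X \sdot \deepunfold{A}{\tilde{U}}) = \unfold^{t_X}(\mu X \sdot A_X)$, exactly the premise type of \scc{Rec}, while the conclusion type $\mu X \sdot A_X$ equals $D_p$ (and likewise each $E_q$ via $B_{X,q}$). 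The lift condition $t_X > \max_\pr(\cdots)$ holds by the definition of $t_X$, the contractivity condition $A_i \neq X$ holds because projection keeps a binder only on a contractive body (\defref{d:contrSes}, \defref{d:contrRel}), and \Cref{l:binderCtxSubRec,l:binderParts} ensure $\tilde{q}' = \tilde{q}_X$ and that the active-participant sets recorded in the recursive context stay consistent across the binder. When $\tilde{q}' = \emptyset$ all $E_q$ and $D_p$ reduce to $\bullet$ and $\0$ is typed by \scc{Empty} and rule $\bullet$. The call case $X$ is dual: it is typed with \scc{Var} followed by \scc{Lift}. Using \Cref{l:binderParts} (which gives $\activ{C}{G} = \recactiv{X}{G}$, hence $\tilde{q} = \tilde{q}_X$), \scc{Var} supplies the recursive-context types $\mu X \sdot A_X$ and $\mu X \sdot B_{X,q}$; lifting them uniformly by $t_X$, which is legal since $\lift{t}$ commutes with $\mu$ (\defref{d:lift}), yields precisely $D_p = \mu X \sdot \lift{t_X} A_X$ and $E_q = \mu X \sdot \lift{t_X} B_{X,q}$.

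The remaining cases are immediate. The case $\gskip \sdot G'$ (\cref{line:rtrSkip}) continues with $G'$ and only advances the priority by $4$, matching $(\gskip \sdot G') \onto^{\pri} p = G' \onto^{\pri+4} p$ and its relative counterpart, so the induction hypothesis applies verbatim; and $\bullet$ (\cref{line:rtrEnd}) gives $\0$, typed by \scc{Empty} and rule $\bullet$ against $D_p = \bullet$ and all $E_q = \bullet$. I expect the main obstacle to be precisely the recursion bookkeeping above: keeping the deep-unfolding indices, the absolute priorities $\pri_X$, and the active-participant sets $\tilde{q}_X$ perfectly synchronized between the recursive context, the \scc{Rec}/\scc{Var} premises, and the projection clauses, so that \Cref{p:deepUnfold} can be applied on the nose and the lift reconciled by \scc{Lift}. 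Finally, \Cref{t:routerTypes} will follow as the special case $C = []$, where $\widetilde{X_C}$ is empty, $\pri_C = 0$, the deep unfoldings collapse to the bare projections, $\tilde{q} = \part(G) \setminus \{p\}$, and $G_s = G$.
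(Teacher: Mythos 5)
Your proposal is correct and follows essentially the same route as the paper's proof: structural induction on $G_s$ mirroring the cases of \Cref{alg:router}, peeling connectives with $\&$, $\oplus^\star$, $\parr$, $\tensor^\star$, and \scc{Id} in the exchange case (with \scc{Alarm} for the off-diagonal branches of \eqref{eq:patched}), and handling recursion via \scc{Rec}/\scc{Var} together with \Cref{p:deepUnfold} and \Cref{l:binderCtxSubRec,l:binderParts}. The only cosmetic difference is that you make the final common lift at a recursive call explicit as an application of \scc{Lift} after \scc{Var}, where the paper folds that step into its reading of \scc{Var}; this is, if anything, slightly more careful.
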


\begin{proof}
    We apply induction on the structure of $G_s$, with six cases as in \Cref{alg:router}.
    We only detail the cases of exchange and recursion.
    Axiom $\scc{Alarm}$ is used in only one sub-case (case 3(c), cf.\ \Cref{f:tderii} below).

    \begin{itemize}
        \item
            \emph{Exchange}: $G_s = s \mto r \{i \<S_i\> \sdot G_i \}_{i \in I}$ (\cref{line:rtrComm}).

            In this case, we add connectives to the types obtained from the IH.
            Since we do not introduce any recursion variables to these types, the substitutions in the types from the IH are not affected.
            Hence, we can omit these substitutions from the types.
            Also, for each $i \in I$, we have $\frv(G_i) \subseteq \frv(G_s)$, i.e.\ the recursive context remains untouched in this derivation, so we also omit the recursive context.

            Let $\deps := \{q \in \tilde{q} \mid \hdep(q, p, G_s)\}$ (as on \cref{line:rtrDeps}).
            There are three cases depending on the involvement of $p$.
            \begin{enumerate}
                \item
                    If $p = s$, then $p$ is the sender (\cref{line:rtrSend}).

                    Let us consider the relative projections onto $p$ and the participants in $\tilde{q}$.
                    For the recipient $r$,
                    \begin{equation}
                        G_s \wrt (p,r) = p \{ i \sdot (G_i \wrt (p,r)) \}_{i \in I}.
                        \label{eq:outRecvRel}
                    \end{equation}
                    For each $q \in \deps$, by \Cref{d:hdep}, $\detdep((q,p), G) \neq \gskip \sdot R$ for some $R$.
                    That is, since $p$ is the sender of the exchange, for each $q \in \deps$, by the definitions in \Cref{f:relproj},
                    \begin{equation}
                        G_s \wrt (p,q) = p {!} r \{ i \sdot (G_i \wrt (p,q)) \}_{i \in I}.
                        \label{eq:outDepsRel}
                    \end{equation}
                    On the other hand, for each $q \in \tilde{q} \setminus \deps \setminus \{r\}$,
                    \begin{equation}
                        G_s \wrt (p,q) = \gskip \sdot (G_{i'} \wrt (p,q))
                        \label{eq:outSkipRel}
                    \end{equation}
                    for any $i' \in I$, because for each $i,j \in I$,
                    \begin{equation}
                        G_i \wrt (p,q) = G_j \wrt (p,q).
                        \label{eq:outSkipSame}
                    \end{equation}

                    Let us  take stock of  the types we expect for each of the router's channels.
                    \begin{flalign}
                        & \text{For $\ci{\mu}{p}$ we expect}
                        &
                        \ol{G_s \onto^{\pri_C} p}
                        &= \ol{{\oplus}^{\pri_C} \{ i{:}~ \msgprop{S_i} \tensor^{\pri_C+1} (G_i \onto^{\pri_C+4} p) \}_{i \in I}}
                        & \hfill
                        \nonumber \\
                        &
                        &
                        &= \&^{\pri_C} \{ i{:}~ \ol{\msgprop{S_i}} \parr^{\pri_C+1} \ol{(G_i \onto^{\pri_C+4} p)} \}_{i \in I}.
                        & \hfill
                        \label{eq:outCiType}
                        \\
                        & \text{For $\crt{p}{r}$ we expect}
                        &
                        \relprop{p}{r}{\pri_C}{G_s \wrt (p,q)}
                        &= \relprop{p}{r}{\pri_C}{p \{ i \sdot (G_i \wrt (p,r)) \}_{i \in I}}
                        \tag{cf.\ \eqref{eq:outRecvRel}}
                        & \hfill
                        \nonumber \\
                        &
                        &
                        &= {\oplus}^{\pri_C+1} \{i{:}~ \msgprop{S_i} \tensor^{\pri_C+2} \relprop{p}{r}{\pri_C+4}{G_i \wrt (p,r)} \}_{i \in I}.
                        & \hfill
                        \label{eq:outCrtRecvType}
                        \\
                        & \text{For each $q \in \deps$,}
                        \nonumber \\
                        & \text{for $\crt{p}{q}$ we expect}
                        &
                        \relprop{p}{q}{\pri_C}{G_s \wrt (p,q)}
                        &= \relprop{p}{q}{\pri_C}{p {!} r \{ i.~ (G_i \wrt (p,q)) \}_{i \in I}}
                        & \hfill
                        \tag{cf.\ \eqref{eq:outDepsRel}}
                        \nonumber \\
                        &
                        &
                        &= {\oplus}^{\pri_C+1} \{ i{:}~ \relprop{p}{q}{\pri_C+4}{G_i \wrt (p,q)} \}_{i \in I}.
                        & \hfill
                        \label{eq:outCrtDepsType}
                        \\
                        & \text{For each $q \in \tilde{q} \setminus \deps \setminus \{r\}$,}
                        \nonumber \\
                        & \text{for $\crt{p}{q}$ we expect}
                        &
                        \relprop{p}{q}{\pri_C}{G_s \wrt (p,q)}
                        &= \relprop{p}{q}{\pri_C}{\gskip \sdot (G_{i'} \wrt (p,q))}
                        & \hfill
                        \tag{cf.\ \eqref{eq:outSkipRel}}
                        \nonumber \\
                        &
                        &
                        &= \relprop{p}{q}{\pri_C+4}{G_{i'} \wrt (p,q)}
                        ~\text{for any $i' \in I$.}
                        & \hfill
                        \label{eq:outCrtSkipType}
                    \end{flalign}

                    Let us now consider the process returned by \Cref{alg:router}, with each prefix marked with a number:
                    \[
                        \lmed{p}{\tilde{q}}{G_s} =
                        \underbrace{\ci{\mu}{p} \gets \big\{i{:} \big.}_{1}~
                            \underbrace{\ol{\crt{p}{r}} \puts i}_{2_i} \cdot
                            \underbrace{{(\ol{\crt{p}{q}} \puts i)}_{q \in \deps}}_{3_i} \cdot
                            \underbrace{\ci{\mu}{p}(v)}_{4_i} \sdot
                            \underbrace{\ol{\crt{p}{r}}[w]}_{5_i} \cdot
                            (v \fwd w \| \lmed{p}{\tilde{q}}{G_i})
                        \big. \big\}_{i \in I}
                    \]
                    For each $i' \in I$, let $C_{i'} := C[s \mto r (\{i\<S_i\> \sdot G_i\}_{i \in I \setminus \{i'\}} \cup \{i'\<S_{i'}\> \sdot []\})]$.
                    Clearly, $G_{i'} \gsub{C_{i'}} G$.
                    Also, because we are not adding recursion binders, the current value of $\tilde{q}$ is appropriate for the IH.
                    With this context $C_{i'}$ and $\tilde{q}$, we apply the IH to obtain the typing of $\lmed{p}{\tilde{q}}{G_{i'}}$, where priorities start at $\ctxpri{}{C_{i'}} = \ctxpri{}{C}+4 = \pri_C+4$ (cf.\ \defref{d:ctxpri}).
                    Following these typings, \Cref{f:tder} gives the typing of $\lmed{p}{\tilde{q}}{G_s}$, referring to parts of the process by the number marking its foremost prefix above.

                    Clearly, the priorities in the derivation of \Cref{f:tder} meet all requirements.
                    The order of the applications of $\oplus^\star$ for each $q \in \deps$ does not matter, since the selection actions are asynchronous.

                    \begin{figure}[t]
                        \begin{mdframed}
                            {\small
                                \begin{prooftree}
                                    \infAx{
                                        $\forall i \in I.~ v \fwd w \vdash v{:}~ \ol{\msgprop{S_i}}, w{:}~ \msgprop{S_i}$
                                    }{\scc{Id}}
                                    \infAss{
                                        $\forall i \in I.~ \lmed{p}{\tilde{q}}{G_i} \vdash \ci{\mu}{p}{:}~ \ol{(G_i \onto^{\pri_C+4} p)}, {\big(\crt{p}{q}{:}~ \relprop{p}{q}{\pri_C+4}{G_i \wrt (p,q)}\big)}_{q \in \tilde{q}}$
                                    }
                                    \infBin{
                                        $\forall i \in I.~ v \fwd w \| \lmed{p}{\tilde{q}}{G_i} \vdash \begin{array}[t]{@{}lr@{}}
                                            \ci{\mu}{p}{:}~ \ol{(G_i \onto^{\pri_C+4} p)},
                                            v{:}~ \ol{\msgprop{S_i}},
                                            w{:}~ \msgprop{S_i},
                                            \\
                                            {\big(\crt{p}{q}{:}~ \relprop{p}{q}{\pri_C+4}{G_i \wrt (p,q)}\big)}_{q \in \tilde{q}}
                                        \end{array}$
                                    }{\scc{Mix}}
                                    \infUn{
                                        $\forall i \in I.~ 5_i \vdash \begin{array}[t]{@{}lr@{}}
                                            \ci{\mu}{p}{:}~ \ol{(G_i \onto^{\pri_C+4} p)},
                                            v{:}~ \ol{\msgprop{S_i}},
                                            \\
                                            \crt{p}{r}{:}~ \msgprop{S_i} \tensor^{\pri_C+2} \relprop{p}{r}{\pri_C+4}{G_i \wrt (p,r)},
                                            \\
                                            {\big(\crt{p}{q}{:}~ \relprop{p}{q}{\pri_C+4}{G_i \wrt (p,q)}\big)}_{q \in \tilde{q} \setminus \{r\}}
                                        \end{array}$
                                    }{$\tensor^\star$}
                                    \infUn{
                                        $\forall i \in I.~ 4_i \vdash \begin{array}[t]{@{}lr@{}}
                                            \ci{\mu}{p}{:}~ \ol{\msgprop{S_i}} \parr^{\pri_C+1} \ol{(G_i \onto^{\pri_C+4} p)},
                                            \\
                                            \crt{p}{r}{:}~ \msgprop{S_i} \tensor^{\pri_C+2} \relprop{p}{r}{\pri_C+4}{G_i \wrt (p,r)},
                                            \\
                                            {\big(\crt{p}{q}{:}~ \relprop{p}{q}{\pri_C+4}{G_i \wrt (p,q)}\big)}_{q \in \tilde{q} \setminus \{r\}}
                                        \end{array}$
                                    }{$\parr$}
                                    \infUn{
                                        $\forall i \in I.~ 3_i \vdash \begin{array}[t]{@{}lr@{}}
                                            \ci{\mu}{p}{:}~ \ol{\msgprop{S_i}} \parr^{\pri_C+1} \ol{(G_i \onto^{\pri_C+4} p)},
                                            \\
                                            \crt{p}{r}{:}~ \msgprop{S_i} \tensor^{\pri_C+2} \relprop{p}{r}{\pri_C+4}{G_i \wrt (p,r)},
                                            \\
                                            {\big(\crt{p}{q}{:}~ {\oplus}^{\pri_C+1} \{ i{:}~ \relprop{p}{q}{\pri_C+4}{G_i \wrt (p,q)} \}_{i \in I}\big)}_{q \in \deps},
                                            \\
                                            {\big(\crt{p}{q}{:}~ \relprop{p}{q}{\pri_C+4}{G_i \wrt (p,q)}\big)}_{q \in \tilde{q} \setminus \deps}
                                        \end{array}$
                                    }{$\forall q \in \deps.~ \oplus^\star$}
                                    \infUn{
                                        $\forall i \in I.~ 2_i \vdash \begin{array}[t]{@{}lr@{}}
                                            \ci{\mu}{p}{:}~ \ol{\msgprop{S_i}} \parr^{\pri_C+1} \ol{(G_i \onto^{\pri_C+4} p)},
                                            \\
                                            \crt{p}{r}{:}~ {\oplus}^{\pri_C+1} \{i{:}~ \msgprop{S_i} \tensor^{\pri_C+2} \relprop{p}{r}{\pri_C+4}{G_i \wrt (p,r)} \}_{i \in I},
                                            \\
                                            {\big(\crt{p}{q}{:}~ {\oplus}^{\pri_C+1} \{ i{:}~ \relprop{p}{q}{\pri_C+4}{G_i \wrt (p,q)} \}_{i \in I}\big)}_{q \in \deps},
                                            \\
                                            {\big(\crt{p}{q}{:}~ \relprop{p}{q}{\pri_C+4}{G_i \wrt (p,q)}\big)}_{q \in \tilde{q} \setminus \deps}
                                    & \text{(cf.\ \eqref{eq:outSkipSame})}
                                        \end{array}$
                                    }{$\oplus^\star$}
                                    \infUn{
                                        $\lmed{p}{\tilde{q}}{G_s} = 1 \vdash \begin{array}[t]{@{}lr@{}}
                                            \ci{\mu}{p}{:}~ \&^{\pri_C} \{ i{:}~ \ol{\msgprop{S_i}} \parr^{\pri_C+1} \ol{(G_i \onto^{\pri_C+4} p)} \}_{i \in I},
                                    & \text{(cf.\ \eqref{eq:outCiType})}
                                    \\
                                    \crt{p}{r}{:}~ {\oplus}^{\pri_C+1} \{i{:}~ \msgprop{S_i} \tensor^{\pri_C+2} \relprop{p}{r}{\pri_C+4}{G_i \wrt (p,r)} \}_{i \in I},
                                    & \text{(cf.\ \eqref{eq:outCrtRecvType})}
                                    \\
                                    {\big(\crt{p}{q}{:}~ {\oplus}^{\pri_C+1} \{ i{:}~ \relprop{p}{q}{\pri_C+4}{G_i \wrt (p,q)} \}_{i \in I}\big)}_{q \in \deps},
                                    & \text{(cf.\ \eqref{eq:outCrtDepsType})}
                                    \\
                                    {\big(\crt{p}{q}{:}~ \relprop{p}{q}{\pri_C+4}{G_{i'} \wrt (p,q)}\big)}_{q \in \tilde{q} \setminus \deps}
                                    & \text{(cf.\ \eqref{eq:outCrtSkipType})}
                                        \end{array}$
                                    }{$\&$}
                                \end{prooftree}
                            }
                        \end{mdframed}

                        \caption{Typing derivation used in the proof of \Cref{t:routerTypes}.}
                        \label{f:tder}
                    \end{figure}

                \item
                    If $p = r$, then $p$ is the recipient (\cref{line:rtrRecv}).
                    This case is analogous to the previous one.

                \item
                    If $p \notin \{r,s\}$ (\cref{line:rtrElse}), then further analysis depends on whether the exchange is a dependency for $p$.
                    Let
                    \begin{align*}
                        \dep_s
                        &:= (s \in \tilde{q} \wedge \hdep(p, s, G))
                        &
                        & \text{(as on \cref{line:rtrDepS}), and}
                        \\
                        \dep_r
                        &:= (r \in \tilde{q} \wedge \hdep(p, r, G))
                        &
                        & \text{(as on \cref{line:rtrDepR}).}
                    \end{align*}
                    To see what the truths of $\dep_s$ and $\dep_r$ mean, we follow \Cref{d:hdep} and the definitions in \Cref{f:relproj}.
                    \begin{align}
                        G_s \wrt (p,s)
                        &= \begin{cases}
                            s {!} r \{ i \sdot (G_i \wrt (p,s)) \}_{i \in I}
                            & \text{if $\dep_s$ is true}
                            \\
                            \gskip \sdot (G_{i'} \wrt (p,s)) ~\text{for any $i' \in I$}
                            & \text{otherwise}
                        \end{cases}
                        \label{eq:rDepSRel}
                        \\
                        G_s \wrt (p,r)
                        &= \begin{cases}
                            r {?} s \{ i \sdot (G_i \wrt (p,r)) \}_{i \in I}
                            & \text{if $\dep_r$ is true}
                            \\
                            \gskip \sdot (G_{i'} \wrt (p,r)) ~\text{for any $i' \in I$}
                            & \text{otherwise}
                        \end{cases}
                        \label{eq:rDepRRel}
                    \end{align}
                    Let us also consider the relative projections onto $p$ and the participants in $\tilde{q}$ besides $r$ and $s$, which follow by the relative well-formedness of $G_s$.
                    For each $q \in \tilde{q} \setminus \{r,s\}$,
                    \begin{equation}
                        G_s \wrt (p,q) = \gskip \sdot (G_{i'} \wrt (p,q))
                        \label{eq:rDepOtherRel}
                    \end{equation}
                    for any $i' \in I$.

                    The rest of the analysis depends on the truth of $\dep_s$ and $\dep_r$.
                    There are four cases.
                    \begin{enumerate}
                        \item
                            If $\dep_s$ is true and $\dep_r$ is false (\cref{line:rtrOnlyS}), let us  take stock of  the types we expect for each of the router's channels.
                            \begin{flalign}
                                & \text{For $\ci{\mu}{p}$ we expect}
                                &
                                \ol{G_s \onto^{\pri_C} p}
                                &= \ol{\&^{\pri_C+2} \{i{:}~ (G_i \onto^{\pri_C+4} p) \}_{i \in I}}
                                \nonumber
                                \\
                                &
                                &
                                &= {\oplus}^{\pri_C+2} \{i{:}~ \ol{(G_i \onto^{\pri_C+4} p)} \}_{i \in I}.
                                \label{eq:rDepSCiType}
                                \\
                                & \text{For $\crt{p}{s}$ we expect}
                                &
                                \relprop{p}{s}{\pri_C}{G_s \wrt (p,s)}
                                &= \relprop{p}{s}{\pri_C}{s {!} r \{ i \sdot (G_i \wrt (p,s)) \}_{i \in I}}
                                &
                                \text{(cf.\ \eqref{eq:rDepSRel})}
                                \nonumber \\
                                &
                                &
                                &= \&^{\pri_C+1} \{i{:}~ \relprop{p}{s}{\pri_C+4}{G_i \wrt (p,s)} \}_{i \in I}.
                                \label{eq:rDepSCrtSType}
                                \\
                                & \text{For each $q \in \tilde{q} \setminus \{s\}$,}
                                \nonumber
                                \\
                                & \text{for $\crt{p}{q}$ we expect}
                                &
                                \relprop{p}{q}{\pri_C}{G_s \wrt (p,q)}
                                &= \relprop{p}{q}{\pri_C}{\gskip \sdot (G_{i'} \wrt (p,q))}
                                &
                                \text{(cf.\ \eqref{eq:rDepRRel} and~\eqref{eq:rDepOtherRel})}
                                \nonumber \\
                                &
                                &
                                &= \relprop{p}{q}{\pri_C+4}{G_{i'} \wrt (p,q)} ~\text{for any $i' \in I$}.
                                \label{eq:rDepSCrtOtherType}
                            \end{flalign}

                            Similar to case (1), we apply the IH to obtain the typing of $\lmed{p}{\tilde{q}}{G_i}$ for each $i \in I$, starting at priority $\pri_C+4$.
                            We derive the typing of $\lmed{p}{\tilde{q}}{G_s}$:
                            \begin{prooftree}
                                \infAss{
                                    $\forall i \in I.~ \lmed{p}{\tilde{q}}{G_i} \vdash \ci{\mu}{p}{:}~ \ol{G_i \onto^{\pri_C+4} p}, {\big(\crt{p}{q}{:}~ \relprop{p}{q}{\pri_C+4}{G_i \wrt (p,q)}\big)}_{q \in \tilde{q}}$
                                }
                                \infUn{
                                    $\forall i \in I.~ \ol{\ci{\mu}{p}} \puts i \cdot \lmed{p}{\tilde{q}}{G_i} \vdash \ci{\mu}{p}{:}~ {\oplus}^{\pri_C+2} \{i{:}~ \ol{G_i \onto^{\pri_C+4} p} \}_{i \in I}, {\big(\crt{p}{q}{:}~ \relprop{p}{q}{\pri_C+4}{G_i \wrt (p,q)}\big)}_{q \in \tilde{q}}$
                                }{$\oplus^\star$}
                                \infUn{
                                    $\lmed{p}{\tilde{q}}{G_s} = \crt{p}{s} \gets \{i{:}~ \ol{\ci{\mu}{p}} \puts i \cdot \lmed{p}{\tilde{q}}{G_i} \}_{i \in I} \vdash \begin{array}[t]{@{}lr@{}}
                                        \ci{\mu}{p}{:}~ {\oplus}^{\pri_C+2} \{i{:}~ \ol{G_i \onto^{\pri_C+4} p} \}_{i \in I},
                                & \text{(cf.\ \eqref{eq:rDepSCiType})}
                                \\
                                \crt{p}{s}{:}~ \&^{\pri_C+1} \{i{:}~ \relprop{p}{q}{\pri_C+4}{G_i \wrt (p,q)} \}_{i \in I},
                                & \text{(cf.\ \eqref{eq:rDepSCrtSType})}
                                \\
                                {\big(\crt{p}{q}{:}~ \relprop{p}{q}{\pri_C+4}{G_i' \wrt (p,q)}\big)}_{q \in \tilde{q}}
                                & \text{(cf.\ \eqref{eq:rDepSCrtOtherType})}
                                    \end{array}$
                                }{$\&$}
                            \end{prooftree}

                        \item
                            The case where $\dep_s$ is false and $\dep_r$ is true (\cref{line:rtrOnlyR}) is analogous to the previous one.

                        \item
                            If both $\dep_s$ and $\dep_r$ are true (\cref{line:rtrBoth} and \eqref{eq:patched}), let us once again  take stock of  the types we expect for each of the router's channels.
                            \begin{flalign}
                                & \text{For $\ci{\mu}{p}$ we expect}
                                &
                                \ol{G_s \onto^{\pri_C} p}
                                &= \ol{\&^{\pri_C+2} \{i{:}~ (G_i \onto^{\pri_C+4} p) \}_{i \in I}}
                                \nonumber
                                \\
                                &
                                &
                                &= {\oplus}^{\pri_C+2} \{i{:}~ \ol{(G_i \onto^{\pri_C+4} p)} \}_{i \in I}
                                \label{eq:rDepCiType}
                                \\
                                & \text{For $\crt{p}{s}$ we expect}
                                &
                                \relprop{p}{s}{\pri_C}{G_s \wrt (p,s)}
                                &= \relprop{p}{s}{\pri_C}{s {!} r \{ i \sdot (G_i \wrt (p,s)) \}_{i \in I}}
                                &
                                \text{(cf.\ \eqref{eq:rDepSRel})}
                                \nonumber \\
                                &
                                &
                                &= \&^{\pri_C+1} \{i{:}~ \relprop{p}{s}{\pri_C+4}{G_i \wrt (p,s)} \}_{i \in I}
                                \label{eq:rDepCrtSType}
                                \\
                                & \text{For $\crt{p}{r}$ we expect}
                                &
                                \relprop{p}{r}{\pri_C}{G_s \wrt (p,r)}
                                &= \relprop{p}{r}{\pri_C}{r {?} s \{ i \sdot (G_i \wrt (p,r)) \}_{i \in I}}
                                &
                                \text{(cf.\ \eqref{eq:rDepRRel})}
                                \nonumber \\
                                &
                                &
                                &= \&^{\pri_C+2} \{i{:}~ \relprop{p}{r}{\pri_C+4}{G_i \wrt (p,r)} \}_{i \in I}
                                \label{eq:rDepCrtRType}
                                \\
                                & \text{For each $q \in \tilde{q} \setminus \{s,r\}$,}
                                \nonumber
                                \\
                                & \text{for $\crt{p}{q}$ we expect}
                                &
                                \relprop{p}{q}{\pri_C}{G_s \wrt (p,q)}
                                &= \relprop{p}{q}{\pri_C}{\gskip \sdot (G_{i'} \wrt (p,q))}
                                &
                                \text{(cf.\ \eqref{eq:rDepOtherRel})}
                                \nonumber \\
                                &
                                &
                                &= \relprop{p}{q}{\pri_C+4}{G_{i'} \wrt (p,q)} ~\text{for any $i' \in I$}
                                \label{eq:rDepCrtOtherType}
                            \end{flalign}

                            It is clear from~\eqref{eq:rDepCrtSType} and~\eqref{eq:rDepCrtRType} that the router will receive label $i \in I$ first on $\crt{p}{s}$ and then $i' \in I$ on $\crt{p}{r}$.
                            We rely on alarm processes (\Cref{d:error}) to handle the case $i' \neq i$.

                            Similar to case (1), we apply the IH to obtain the typing of $\lmed{p}{\tilde{q}}{G_i}$ for each $i \in I$, starting at priority $\pri_C+4$.
                            \Cref{f:tderii} gives the typing of $\lmed{p}{\tilde{q}}{G_s}$.

                            \begin{figure}[t!]
                                \begin{mdframed}
                                    {\small
                                        \begin{prooftree}
                                            \infAss{
                                                $\begin{array}[b]{@{}l@{}}
                                                    \forall i \in I.
                                                    \\
                                                    \lmed{p}{\tilde{q}}{G_i} \vdash \begin{array}[t]{@{}l@{}}
                                                        \ci{\mu}{p}{:} \ol{G_i \onto^{\pri_C+4} p},
                                                        \\
                                                        {\big(\crt{p}{q}{:} \relprop{p}{q}{\pri_C+4}{G_i \wrt (p,q)}\big)}_{q \in \tilde{q}}
                                                    \end{array}
                                                \end{array}$
                                            }
                                            \infAx{
                                                $\begin{array}[b]{@{}l@{}}
                                                    \forall i \in I. \\
                                                    \forall i' \in I \setminus \{i\}.~ \error{\sff{chs}}
                                                    \vdash \begin{array}[t]{@{}l@{}}
                                                        \ci{\mu}{p}{:}~ \ol{G_i \onto^{\pri_C+4} p},
                                                        \\
                                                        \crt{p}{s}{:}~ \relprop{p}{s}{\pri_C+4}{G_i \wrt (p,s)},
                                                        \\
                                                        \crt{p}{r}{:}~ \relprop{p}{r}{\pri_C+4}{G_{i'} \wrt (p,r)},
                                                        \\
                                                        {\big(\crt{p}{q}{:}~ \relprop{p}{q}{\pri_C+4}{G_i \wrt (p,q)}\big)}_{q \in \tilde{q} \setminus \{s,r\}}
                                                    \end{array}
                                                \end{array}$
                                            }{\scc{Alarm}}
                                            \infBin{
                                                $\forall i \in I.~ \crt{p}{r} \gets \{i{:}~ \lmed{p}{\tilde{q}}{G_i} \} \cup \{ i'{:}~ \error{\sff{chs}} \}_{i' \in I \setminus \{i\}} \vdash \begin{array}[t]{@{}l@{}}
                                                    \ci{\mu}{p}{:}~ \ol{G_i \onto^{\pri_C+4} p},
                                                    \\
                                                    \crt{p}{s}{:}~ \relprop{p}{s}{\pri_C+4}{G_i \wrt (p,s)},
                                                    \\
                                                    \crt{p}{r}{:}~ \&^{\pri_C+2} \{i{:}~ \relprop{p}{r}{\pri_C+4}{G_{i'} \wrt (p,r)} \}_{i \in I},
                                                    \\
                                                    {\big(\crt{p}{q}{:}~ \relprop{p}{q}{\pri_C+4}{G_i \wrt (p,q)}\big)}_{q \in \tilde{q} \setminus \{s,r\}}
                                                \end{array}$
                                            }{$\&$}
                                            \infUn{
                                                $\forall i \in I.~ \ol{\ci{\mu}{p}} \puts i \cdot \crt{p}{r} \gets \{i{:}~ \lmed{p}{\tilde{q}}{G_i} \} \cup \{ i'{:}~ \error{\sff{chs}} \}_{i' \in I \setminus \{i\}} \vdash \begin{array}[t]{@{}l@{}}
                                                    \ci{\mu}{p}{:}~ {\oplus}^{\pri_C+2} \{i{:}~ \ol{G_i \onto^{\pri_C+4} p} \}_{i \in I},
                                                    \\
                                                    \crt{p}{s}{:}~ \relprop{p}{s}{\pri_C+4}{G_i \wrt (p,s)},
                                                    \\
                                                    \crt{p}{r}{:}~ \&^{\pri_C+2} \{i{:}~ \relprop{p}{r}{\pri_C+4}{G_{i'} \wrt (p,r)} \}_{i \in I},
                                                    \\
                                                    {\big(\crt{p}{q}{:}~ \relprop{p}{q}{\pri_C+4}{G_i \wrt (p,q)}\big)}_{q \in \tilde{q} \setminus \{s,r\}}
                                                \end{array}$
                                            }{$\oplus^\star$}
                                            \infUn{
                                                $\underbrace{\crt{p}{s} \gets \{i{:} \ol{\ci{\mu}{p}} \puts i \cdot \crt{p}{r} \gets \{ i{:} \lmed{p}{\tilde{q}}{G_i} \} {\cup} \{ i'{:} \error{\sff{chs}} \}_{i' \in I \setminus \{i\}} \}_{i \in I}}_{\lmed{p}{\tilde{q}}{G_s}} \vdash \begin{array}[t]{@{}lr@{}}
                                                    \ci{\mu}{p}{:}~ {\oplus}^{\pri_C+2} \{i{:}~ \ol{G_i \onto^{\pri_C+4} p} \}_{i \in I},
                                            & \text{(cf.\ \eqref{eq:rDepCiType})}
                                            \\
                                            \crt{p}{s}{:}~ \&^{\pri_C+1} \{i{:}~ \relprop{p}{s}{\pri_C+4}{G_i \wrt (p,s)} \}_{i \in I},
                                            & \text{(cf.\ \eqref{eq:rDepCrtSType})}
                                            \\
                                            \crt{p}{r}{:}~ \&^{\pri_C+2} \{i{:}~ \relprop{p}{r}{\pri_C+4}{G_{i'} \wrt (p,r)} \}_{i \in I},
                                            & \text{(cf.\ \eqref{eq:rDepCrtRType})}
                                            \\
                                            {\big(\crt{p}{q}{:}~ \relprop{p}{q}{\pri_C+4}{G_{i'} \wrt (p,q)}\big)}_{q \in \tilde{q} \setminus \{s,r\}}
                                            & \text{(cf.\ \eqref{eq:rDepCrtOtherType})}
                                                \end{array}$
                                            }{$\&$}
                                        \end{prooftree}
                                    }
                                \end{mdframed}

                                \caption{Typing derivation used in the proof of \Cref{t:routerTypes}, where $\sff{chs} = \{\ci{\mu}{p}\} \cup \{\crt{p}{q} \mid q \in \tilde{q}\}$.}
                                \label{f:tderii}
                            \end{figure}

                        \item
                            If both $\dep_s$ and $\dep_r$ are false, let us again  take stock of  the types we expect for each of the router's channels.
                            \begin{flalign*}
                                & \text{For $\ci{\mu}{p}$ we expect}
                                &
                                \ol{G_s \onto^{\pri_C} p}
                                &= \ol{G_{i'} \onto^{\pri_C+4} p} ~\text{for any $i' \in I$}.
                                \\
                                & \text{For each $q \in \tilde{q}$,}
                                \nonumber
                                \\
                                & \text{for $\crt{p}{q}$ we expect}
                                &
                                \relprop{p}{q}{\pri_C}{G_s \wrt (p,q)}
                                &= \relprop{p}{q}{\pri_C}{\gskip \sdot (G_{i'} \wrt (p,q))}
                                &
                                \text{(cf.\ \eqref{eq:rDepSRel},~\eqref{eq:rDepRRel} and~\eqref{eq:rDepOtherRel})}
                                \nonumber \\
                                &
                                &
                                &= \relprop{p}{q}{\pri_C+4}{G_{i'} \wrt (p,q)} ~\text{for any $i' \in I$.}
                            \end{flalign*}
                            Similar to case (1), we apply the IH to obtain the typing of $\lmed{p}{\tilde{q}}{G_{i'}}$, starting at priority $\pri_C+4$.
                            This directly proves the thesis.
                    \end{enumerate}
            \end{enumerate}

        \item
            \emph{Recursive definition}: $G_s = \mu Z \sdot G'$ (\cref{line:rtrRecDef}).

            Let
            \begin{align}
                \tilde{q}' := \{q \in \tilde{q} \mid G_s \wrt (p,q) \neq \bullet\}
                \label{eq:rRecDefqtildeprime}
            \end{align}
            (as on \cref{line:rtrRecComp}).
            We consider the relative projections onto $p$ and the participants in $\tilde{q}$.
            For each $q \in \tilde{q}'$, we know $G_s \wrt (p,q) \neq \bullet$, while for each $q \in \tilde{q} \setminus \tilde{q}'$, we know $G_s \wrt (p,q) = \bullet$.
            More precisely, by \Cref{d:relproj}, for each $q \in \tilde{q}'$,
            \begin{align}
                G_s \wrt (p,q) &= (\mu Z \sdot G') \wrt (p,q) = \mu Z \sdot (G' \wrt (p,q)).
                \label{eq:rDefRelPrime}
            \end{align}
            and thus
            \begin{align*}
                G' \wrt (p,q) &\neq \gskip^\ast \sdot \bullet ~\text{and}~ G' \wrt (p,q) \neq \gskip^\ast \sdot Z.
            \end{align*}
            For each $q \in \tilde{q} \setminus \tilde{q}'$,
            \begin{align}
                G_s \wrt (p,q) &= (\mu Z \sdot G') \wrt (p,q) = \bullet,
                \label{eq:rDefRelOth}
            \end{align}
            and thus
            \begin{align*}
                G' \wrt (p,q) &= \gskip^\ast \sdot \bullet ~\text{or}~ G' \wrt (p,q) = \gskip^\ast \sdot Z.
            \end{align*}

            Further analysis depends on whether $\tilde{q}' = \emptyset$ or not. We thus examine two cases:
            \begin{itemize}
                \item
                    If $\tilde{q}' = \emptyset$ (\cref{line:rtrRecRet}), let us consider the local projection $G_s \onto^{\pri_C} p$.
                    We prove that $G_s \onto^{\pri_C} p = \bullet$.

                    Suppose, for contradiction, that $G_s \onto^{\pri_C} p \neq \bullet$.
                    Then, by the definitions in \Cref{f:locproj}, $G' \onto^{\pri_C} p \neq X$ and $G' \onto^{\pri_C} p \neq \bullet$.
                    That is, $G' \onto^{\pri_C} p$ contains communication actions or some recursion variable other than $Z$.
                    However, communication actions in $G' \onto^{\pri_C} p$ originate from exchanges in $G'$, either involving $p$ and some $q \in \tilde{q}$, or as a dependency on an exchange involving some $q \in \tilde{q}$.
                    Moreover, recursion variables in $G' \onto^{\pri_C} p$ originate from recursion variables in $G'$.
                    But this would mean that for this $q$, $G' \wrt (p,q)$ contains interactions or recursion variables, contradicting \eqref{eq:rDefRelOth}.
                    Therefore, it cannot be the case that $G_s \onto^{\pri_C} p \neq \bullet$.

                    Let us  take stock of  the types we expect for each of the router's channels.
                    For now, we omit the substitutions in the types.
                    \begin{flalign*}
                        & \text{For $\ci{\mu}{p}$ we expect}
                        &
                        \ol{G_s \onto^{\pri_C} p}
                        &= \ol{\bullet} = \bullet.
                        & \hfill
                        \\
                        & \text{For each $q \in \tilde{q}$, for $\crt{p}{q}$ we expect}
                        &
                        \relprop{p}{q}{\pri_C}{G_s \wrt (p,q)}
                        &= \relprop{p}{q}{\pri_C}{\bullet} = \bullet.
                        & \hfill
                        \tag{cf.\ \eqref{eq:rDefRelOth}}
                    \end{flalign*}
                    Because all expected types are $\bullet$, the substitutions do not affect the types, so we can omit them altogether.

                    First we apply \scc{Empty}, giving us an arbitrary recursive context, and thus the recursive context we need.
                    Then, we apply $\bullet$ for $\ci{\mu}{p}$ and for $\crt{p}{q}$ for each $q \in \tilde{q}$, and obtain the typing of $\lmed{p}{\tilde{q}}{G_s}$ (omitting the recursive context):
                    \[
                        \lmed{p}{\tilde{q}}{G_s} = \0 \vdash \ci{\mu}{p}{:}~ \bullet, {(\crt{p}{q}{:}~ \bullet)}_{q \in \tilde{q}}
                    \]

                \item
                    \begin{sloppypar}
                        If $\tilde{q}' \neq \emptyset$ (\cref{line:rtrRecInact}), then, following similar reasoning as in the previous case, ${G_s \onto^{\pri_C} p = \mu Z \sdot (G' \onto^{\pri_C} p)}$.
                        We  take stock of  the types we expect for each of the router's channels.
                        Note that, because of the recursive definition on $Z$ in $G_s$, there cannot be another recursive definition in the context $C$ capturing the recursion variable $Z$.
                        Therefore, by \Cref{d:binders}, $Z \notin \widetilde{X_C}$.
                    \end{sloppypar}
                    \begin{flalign}
                        & \text{For $\ci{\mu}{p}$ we expect}
                        &
                        &\deepunfold{\ol{G_s \onto^{\pri_C} p}}{\ldots}
                        \nonumber \\
                        &
                        &
                        &{} = \deepunfold{\ol{\mu Z \sdot (G' \onto^{\pri_C} p)}}{\ldots}
                        \nonumber \\
                        &
                        &
                        &{} = \deepunfold{\mu Z \sdot \ol{G' \onto^{\pri_C} p}}{\ldots}
                        \nonumber \\
                        &
                        &
                        &{} = \mu Z \sdot \deepunfold{\ol{G' \onto^{\pri_C} p}}{{(X,t_X,\ol{G_X \onto^{\pri_X} p})}_{X \in \widetilde{X_C}}}.
                        \span
                        \label{eq:rRecDefMu}
                        \allowdisplaybreaks \\
                        & \text{For each $q \in \tilde{q}'$,}
                        \nonumber \\
                        & \text{for $\crt{p}{q}$ we expect}
                        &
                        &\deepunfold{\relprop{p}{q}{\pri_C}{G_s \wrt (p,q)}}{\ldots}
                        \nonumber \\
                        &
                        &
                        &{} = \deepunfold{\relprop{p}{q}{\pri_C}{\mu Z \sdot (G' \wrt (p,q))}}{\ldots}
                        \nonumber \\
                        &
                        &
                        &{} = \deepunfold{\mu Z \sdot \relprop{p}{q}{\pri_C}{G' \wrt (p,q)}}{\ldots}
                        &
                        \text{(cf.\ \eqref{eq:rDefRelPrime})}
                        \nonumber \\
                        &
                        &
                        &= \mu Z \sdot \deepunfold{\relprop{p}{q}{\pri_C}{G' \wrt (p,q)}}{{(X,t_X,\relprop{p}{q}{\pri_X}{G_X \wrt (p,q)})}_{X \in \widetilde{X_C}}}.
                        \span
                        \label{eq:rRecDefqprime}
                        \allowdisplaybreaks \\
                        & \text{For each $q \in \tilde{q} \setminus \tilde{q}'$,}
                        \nonumber \\
                        & \text{for $\crt{p}{q}$ we expect}
                        &
                        &\deepunfold{\relprop{p}{q}{\pri_C}{G_s \wrt (p,q)}}{\ldots}
                        \nonumber \\
                        &
                        &
                        &{} = \deepunfold{\relprop{p}{q}{\pri_C}{\bullet}}{\ldots}
                        \nonumber \\
                        &
                        &
                        &{} = \deepunfold{\bullet}{\ldots}
                        &
                        \text{(cf.\ \eqref{eq:rDefRelOth})}
                        \nonumber \\
                        &
                        &
                        &{} = \bullet.
                        \label{eq:rRecDefOthers}
                    \end{flalign}
                    We also need an assignment in the recursive context for every $X \in \widetilde{X_C}$, but not for $Z$.

                    Let $C' = C[\mu Z \sdot []]$.
                    Clearly, $G' \gsub{C'} G$.
                    Let us first establish some facts about the recursion binders, priorities, and active participants related to $C'$, $G'$, and $Z$:
                    \begin{itemize}
                        \item
                            $\widetilde{X_{C'}} = \ctxbind{C'} = (\ctxbind{C}, Z) = (\widetilde{X_C}, Z)$ (cf.\ \defref{d:binders}).
                        \item
                            $G_Z = \recdef{Z}{G} = G'$, as proven by the context $C'$ (cf.\ \defref{d:recdef}).
                        \item
                            $\widetilde{Y_Z} = \subbind{\mu Z \sdot G_Z}{G} = \ctxbind{C} = \widetilde{X_C}$.
                        \item
                            $\pri_{C'} = \ctxpri{}{C'} = \ctxpri{}{C} = \pri_{C}$, and $\pri_Z = \recpri{Z}{G} = \ctxpri{}{C} = \pri_{C}$, and hence $\pri_{C'} = \pri_Z$ (cf.\ \defref{d:ctxpri}).
                        \item
                            $\tilde{q}_Z = \tilde{q}'$ (cf.\ \defref{d:activeParts} and \eqref{eq:rRecDefqtildeprime}).
                    \end{itemize}

                    Because $\widetilde{X_{C'}} = (\widetilde{X_C}, Z)$ and $\tilde{q}' = \tilde{q}_Z$, $\tilde{q}'$ is appropriate for the IH.
                    We apply the IH on $C'$, $G'$, and $\tilde{q}'$ to obtain a typing for $\lmed{p}{\tilde{q}'}{G'}$, where we immediately make use of the facts established above.
                    We give the assignment to $Z$ in the recursive context separate from those for the recursion variables in $\widetilde{X_C}$.
                    Also, by \Cref{p:deepUnfold}, we can write the final unfolding on $Z$ in the types separately.
                    For example, the type for $\ci{\mu}{p}$ is
                    \begin{align*}
                        & \deepunfold{\dual{G' \onto^{\pri_{C'}} p}}{{(X,t_X,\dual{G_X \onto^{\pri_X} p})}_{X \in \widetilde{X_{C'}}}}
                        \\
                        &= \deepunfold{\dual{G' \onto^{\pri_{C}} p}}{{(X,t_X,\dual{G_X \onto^{\pri_X} p})}_{X \in (\widetilde{X_C}, Z)}}
                        \\
                        &= \deepunfold{\dual{G' \onto^{\pri_{C}} p}}{\big({(X,t_X,\dual{G_X \onto^{\pri_X} p})}_{X \in \widetilde{X_C}}, (Z,t_Z,\dual{G_Z \onto^{\pri_Z} p})\big)}
                        \\
                        &= \deepunfold{\dual{G' \onto^{\pri_{C}} p}}{\big({(X,t_X,\dual{G_X \onto^{\pri_X} p})}_{X \in \widetilde{X_C}}, (Z,t_Z,\dual{G' \onto^{\pri_C} p})\big)}
                        \\
                        &= \unfold^{t_Z}\big( \mu Z \sdot \deepunfold{\dual{G' \onto^{\pri_{C}} p}}{{(X,t_X,\dual{G_X \onto^{\pri_X} p})}_{X \in \widetilde{X_C}}} \big).
                    \end{align*}

                    The resulting typing is as follows:
                    \begin{align*}
                        \lmed{p}{\tilde{q}'}{G'} \vdash \begin{array}[t]{@{}l@{}}
                            {\left(X{:}~ \left( \begin{array}{@{}l@{}}
                                            \deepunfold{\dual{G_X \onto^{\pri_X} p}}{{(Y,t_Y,\dual{G_Y \onto^{\pri_Y} p})}_{Y \in \widetilde{Y_X}}},
                                            \\[4pt]
                                            {\Big(\deepunfold{\relprop{p}{q}{\pri_X}{G_X \wrt (p,q)}}{{(Y,t_Y,\relprop{p}{q}{\pri_Y}{G_Y \wrt (p,q)})}_{Y \in \widetilde{Y_X}}}\Big)}_{q \in \tilde{q}_X}
                            \end{array} \right) \right)}_{X \in \widetilde{X_C}},
                            \\[16pt]
                            Z{:}~ \left( \begin{array}{@{}l@{}}
                                            \deepunfold{\dual{G' \onto^{\pri_C} p}}{{(X,t_X,\dual{G_X \onto^{\pri_X} p})}_{X \in \widetilde{X_C}}},
                                            \\[4pt]
                                            {\Big(\deepunfold{\relprop{p}{q}{\pri_C}{G' \wrt (p,q)}}{{(X,t_X,\relprop{p}{q}{\pri_X}{G_X \wrt (p,q)})}_{X \in \widetilde{X_C}}}\Big)}_{q \in \tilde{q}'}
                            \end{array} \right);
                            \\[16pt]
                            \ci{\mu}{p}{:}~ \unfold^{t_Z}\big(\mu Z \sdot \deepunfold{\dual{G' \onto^{\pri_C} p}}{{(X,t_X,\dual{G_X \onto^{\pri_X} p})}_{X \in \widetilde{X_C}}}\big),
                            \\[6pt]
                            {\Big(\crt{p}{q}{:}~ \unfold^{t_Z}\big(\mu Z \sdot \deepunfold{\relprop{p}{q}{\pri_C}{G' \wrt (p,q)}}{{(X,t_X,\relprop{p}{q}{\pri_X}{G_X \wrt (p,q)})}_{X \in \widetilde{X_C}}}\big)\Big)}_{q \in \tilde{q}'}
                        \end{array}
                    \end{align*}

                    By assumption, we have
                    \begin{align*}
                        t_Z &= \max_\pr\left( \begin{array}{@{}l@{}}
                                \deepunfold{\dual{G' \onto^{\pri_C} p}}{{(X,t_X,\dual{G_X \onto^{\pri_X} p})}_{X \in \widetilde{X_C}}}
                                \\[4pt]
                                {\Big(\deepunfold{\relprop{p}{q}{\pri_C}{G' \wrt (p,q)}}{{(X,t_X,\relprop{p}{q}{\pri_X}{G_X \wrt (p,q)})}_{X \in \widetilde{X_C}}}\Big)}_{q \in \tilde{q}'}
                        \end{array} \right) + 1,
                    \end{align*}
                    so $t_Z$ is clearly greater than the maximum priority appearing in the types before unfolding.
                    Hence, we can apply \scc{Rec} to eliminate $Z$ from the recursive context, and to fold the types, giving the typing of $\lmed{p}{\tilde{q}}{G_s} = \mu Z(\ci{\mu}{p}, {(\crt{p}{q})}_{q \in \tilde{q}'}) \sdot \lmed{p}{\tilde{q}'}{G'}$:
                    \begin{align*}
                        \lmed{p}{\tilde{q}}{G_s} \vdash \begin{array}[t]{@{}l@{}}
                            {\left(X{:}~ \left( \begin{array}{@{}l@{}}
                                            \deepunfold{\dual{G_X \onto^{\pri_X} p}}{{(Y,t_Y,\dual{G_Y \onto^{\pri_Y} p})}_{Y \in \widetilde{Y_X}}},
                                            \\[4pt]
                                            {\Big(\deepunfold{\relprop{p}{q}{\pri_X}{G_X \wrt (p,q)}}{{(Y,t_Y,\relprop{p}{q}{\pri_Y}{G_Y \wrt (p,q)})}_{Y \in \widetilde{Y_X}}}\Big)}_{q \in \tilde{q}_X}
                            \end{array} \right) \right)}_{X \in \widetilde{X_C}};
                            \\[16pt]
                            \ci{\mu}{p}{:}~ \mu Z \sdot \deepunfold{\dual{G' \onto^{\pri_C} p}}{{(X,t_X,\dual{G_X \onto^{\pri_X} p})}_{X \in \widetilde{X_C}}},
                            \\[6pt]
                            {\Big(\crt{p}{q}{:}~ \mu Z \sdot \deepunfold{\relprop{p}{q}{\pri_C}{G' \wrt (p,q)}}{{(X,t_X,\relprop{p}{q}{\pri_X}{G_X \wrt (p,q)})}_{X \in \widetilde{X_C}}}\Big)}_{q \in \tilde{q}'}
                        \end{array}
                    \end{align*}

                    In this typing, the type for $\ci{\mu}{p}$ concurs with~\eqref{eq:rRecDefMu}, and, for every $q \in \tilde{q}'$, the type for $\crt{p}{q}$ concurs with~\eqref{eq:rRecDefqprime}.
                    For every $q \in \tilde{q} \setminus \tilde{q}'$, we can add the type for $\crt{p}{q}$ in~\eqref{eq:rRecDefOthers} by applying $\bullet$.
                    This proves the thesis.
            \end{itemize}

        \item
            \emph{Recursive call}: $G_s = Z$ (\cref{line:rtrRecCall}).

            Clearly, because $G$ is closed (i.e.\ $\frv(G) = \emptyset$), $Z \in \widetilde{X_C}$.
            More precisely, $\widetilde{X_C} = (\tilde{X}_1, Z, \tilde{X}_2)$.

            Note that the recursive definitions on the variables in $\tilde{X}_1$ appear in $G$ after the recursive definitions on the variables in $(Z, \tilde{X}_2)$.
            Because the unfoldings of $(Z, \tilde{X}_2)$ occur before the unfoldings of $\tilde{X}_1$, the recursive definitions on the variables in $\tilde{X}_1$ are renamed in order to avoid capturing these variables when performing the unfoldings of $(Z, \tilde{X}_2)$.
            So, after the unfoldings of $(Z, \tilde{X}_2)$, there are no recursive calls on the variables in $\tilde{X}_1$ anymore, so the unfoldings on $\tilde{X}_1$ do not have any effect on the types.

            Also, note that $\tilde{X}_2 = \widetilde{Y_Z}$ (cf.\ \defref{d:binders}).

            Let us take stock of the types we expect for our router's channels.
            \begin{flalign}
                & \text{For $\ci{\mu}{p}$ we expect}
                &
                &\deepunfold{\ol{G_s \onto^{\pri_C} p}}{\ldots}
                \nonumber \\
                &
                &
                &{} = \deepunfold{\ol{Z \onto^{\pri_C} p}}{\ldots}
                \nonumber \\
                &
                &
                &{} = \deepunfold{\ol{Z}}{\ldots}
                \nonumber \\
                &
                &
                &{} = \deepunfold{Z}{{(X,t_X,\ol{G_X \onto^{\pri_X} p})}_{X \in (\tilde{X}_1, Z, \widetilde{Y_Z})}}
                \nonumber \\
                &
                &
                &{} = \deepunfold{Z}{{(X,t_X,\ol{G_X \onto^{\pri_X} p})}_{X \in (Z, \widetilde{Y_Z})}}
                \nonumber \\
                &
                &
                &{} = \mu Z \sdot (\lift{t_Z} \deepunfold{\ol{G_Z \onto^{\pri_Z} p}}{{(X,t_X,\ol{G_X \onto^{\pri_X} p})}_{X \in \widetilde{Y_Z}}})
                \label{eq:rCallMu}
                \\
                & \text{For each $q \in \tilde{q}$,}
                \nonumber \\
                & \text{for $\crt{p}{q}$ we expect}
                &
                &\deepunfold{\relprop{p}{q}{\pri_C}{G_s \wrt (p,q)}}{\ldots}
                \nonumber \\
                &
                &
                &{} = \deepunfold{\relprop{p}{q}{\pri_C}{Z}}{\ldots}
                \nonumber \\
                &
                &
                &{} = \deepunfold{Z}{{(X,t_X,\relprop{p}{q}{\pri_X}{G_X \wrt (p,q)})}_{X \in (\tilde{X}_1, Z, \widetilde{Y_Z})}}
                \nonumber \\
                &
                &
                &{} = \deepunfold{Z}{{(X,t_X,\relprop{p}{q}{\pri_X}{G_X \wrt (p,q)})}_{X \in (Z, \widetilde{Y_Z})}}
                \nonumber \\
                &
                &
                &{} = \mu Z \sdot (\lift{t_Z} \deepunfold{\relprop{p}{q}{\pri_Z}{G_Z \wrt (p,q)}}{{(X,t_X,\relprop{p}{q}{\pri_X}{G_X \wrt (p,q)})}_{X \in \widetilde{Y_Z}}})
                \label{eq:rCallTildeq}
            \end{flalign}

            Also, we need an assignment in the recursive context for every $X \in \widetilde{X_C}$.
            By \Cref{l:binderParts}, $\tilde{q} = \tilde{q}_Z$.
            Hence, for $Z$, the assignment should be as follows:
            \begin{align}
                Z{:}~ \left( \begin{array}{@{}l@{}}
                    \deepunfold{\ol{G_Z \onto^{\pri_Z} p}}{{(X,t_X,\ol{G_X \onto^{\pri_X} p})}_{X \in \widetilde{Y_Z}}},
                    \\
                    {\Big(\deepunfold{\relprop{p}{q}{\pri_Z}{G_Z \wrt (p,q)}}{{(X,t_X,\relprop{p}{q}{\pri_X}{G_X \wrt (p,q)})}_{X \in \widetilde{Y_Z}}}\Big)}_{q \in \tilde{q}}
                \end{array} \right)
                \label{eq:rCallRec}
            \end{align}

            We apply \scc{Var} to obtain the typing of $\lmed{\tilde{q}}{p}{G_s}$, where we make us the rule's allowance for an arbitrary recursive context up to the assignment to $Z$.
            \scc{Var} is applicable, because the types are recursive definitions on $Z$, concurring with the types assigned to $Z$, and lifted by a common lifter $t_Z$.
            \begin{prooftree}
                \infAx{
                    $\begin{array}{@{}l@{}}
                        \lmed{\tilde{q}}{p}{G_s} = X\call{\ci{\mu}{p}, {(\crt{p}{q})}_{q \in \tilde{q}}}
                        \\[6pt]
                        {} \vdash \begin{array}[t]{@{}l@{}}
                            {\left(X{:}~ \left( \begin{array}{@{}l@{}}
                                            \deepunfold{\dual{G_X \onto^{\pri_X} p}}{{(Y,t_Y,\dual{G_Y \onto^{\pri_Y} p})}_{Y \in \widetilde{Y_X}}},
                                            \\[4pt]
                                            {\Big(\deepunfold{\relprop{p}{q}{\pri_X}{G_X \wrt (p,q)}}{{(Y,t_Y,\relprop{p}{q}{\pri_Y}{G_Y \wrt (p,q)})}_{Y \in \widetilde{Y_X}}}\Big)}_{q \in \tilde{q}_X}
                            \end{array} \right) \right)}_{X \in \widetilde{X_C} \setminus (Z)},
                            \\[8pt]
                            Z{:}~ \left( \begin{array}{@{}l@{}}
                                \deepunfold{\ol{G_Z \onto^{\pri_Z} p}}{{(X,t_X,\ol{G_X \onto^{\pri_X} p})}_{X \in \widetilde{Y_Z}}},
                                \\
                                {\Big(\deepunfold{\relprop{p}{q}{\pri_Z}{G_Z \wrt (p,q)}}{{(X,t_X,\relprop{p}{q}{\pri_X}{G_X \wrt (p,q)})}_{X \in \widetilde{Y_Z}}}\Big)}_{q \in \tilde{q}}
                            \end{array} \right);
                            \\[8pt]
                            \ci{\mu}{p}{:}~ \mu Z \sdot (\lift{t_Z} \deepunfold{\ol{G_Z \onto^{\pri_Z} p}}{{(X,t_X,\ol{G_X \onto^{\pri_X} p})}_{X \in \widetilde{Y_Z}}}),
                            \\[6pt]
                            {\left( \crt{p}{q}{:}~ \mu Z \sdot (\lift{t_Z} \deepunfold{\relprop{p}{q}{\pri_Z}{G_Z \wrt (p,q)}}{{(X,t_X,\relprop{p}{q}{\pri_X}{G_X \wrt (p,q)})}_{X \in \widetilde{Y_Z}}}) \right)}_{q \in \tilde{q}}
                        \end{array}
                    \end{array}$
                }{\scc{Var}}
            \end{prooftree}
            In this typing, the type of $\ci{\mu}{p}$ concurs with the expected type in~\eqref{eq:rCallMu}, the types of $\crt{p}{q}$ for each $q \in \tilde{q}$ concur with the expected types in~\eqref{eq:rCallTildeq}, and the assignment to $Z$ in the recursive context concurs with~\eqref{eq:rCallRec}.
            This proves the thesis.
            \qedhere
    \end{itemize}
\end{proof}

Now, we can prove \Cref{t:routerTypes} as a corollary of \Cref{t:routerTypesGen}:
\begin{proof}[Proof of \Cref{t:routerTypes} on \Cpageref{t:routerTypes}]\label{proof:routerTypes}
    We have been given a closed, relative well-formed global type $G$, and a participant $p \in \part(G)$.
    Let $C := []$ and $G_s := G$.
    Clearly, $G_s \gsub{C} G$.
    By \Cref{d:activeParts}, ${\activ{C}{G} = {\part(G)}^2}$.
    For $p$ to be a participant of $G$, there must be an exchange involving $p$ and some other participant $q$, i.e.\ there exists a $q \in \part(G)$ such that $(p,q) \in \activ{C}{G}$.
    Moreover, $\tilde{q}$ as defined in \Cref{t:routerTypesGen} is $\{q \in \part(G) \mid (p,q) \in \activ{C}{G}\} = q \in \part(G) \setminus \{p\}$.
    Hence, \Cref{t:routerTypesGen} allows us to find a typing for $\lmed{p}{\part(G) \setminus \{p\}}{G}$.

    Let us consider the precise values of the ingredients of \Cref{t:routerTypesGen} in our application:
    \begin{enumerate}
        \item
            $\pri_C = \ctxpri{}{C} = 0$,

        \item
            $\widetilde{X_C} = \ctxbind{C} = ()$,

        \item
            $\begin{array}[t]{@{}rll@{}}
                D_p
                &= \deepunfold{\dual{G_s \onto^{\pri_C} p}}{{(X,t_X,\dual{G_x \onto^{\pri_X} p})}_{X \in \widetilde{X_C}}}
                \\
                &= \dual{G \onto^0 p}
                &\text{(cf.\ \Cref{d:deepUnfold}),}
            \end{array}$

        \item
            $\begin{array}[t]{@{}rll@{}}
                E_q
                &= \deepunfold{\relprop{p}{q}{\pri_C}{G_s \wrt (p,q)}}{{(X,t_X,\relprop{p}{q}{\pri_X}{G_X \wrt (p,q)})}_{X \in \widetilde{X_C}}}
                \\
                &= \relprop{p}{q}{0}{G \wrt (p,q)}
                &\text{(cf.\ \Cref{d:deepUnfold}).}
            \end{array}$
    \end{enumerate}

    Finally, the result of \Cref{t:routerTypesGen} is as follows:
    \begin{align*}
        \lmed{p}{\tilde{q}}{G_s} \vdash
        {\Big(X{:}~ \big( A_X, {(B_{X,q})}_{q \in \tilde{q}_X} \big) \Big)}_{X \in \widetilde{X_C}};~
        \ci{\mu}{p}{:}~ D_p,~
        {(\crt{p}{q}{:}~ E_q)}_{q \in \tilde{q}}
    \end{align*}
    Applying (1)--(4) above, we get the following:
    \begin{align*}
        \lmed{p}{\part(G) \setminus \{p\}}{G} \vdash
        \emptyset;~
        \ci{\mu}{p}{:}~ \dual{G \onto^0 p},~
        {\big(\crt{p}{q}{:}~ \relprop{p}{q}{0}{G \wrt (p,q)}\big)}_{q \in \part(G) \setminus \{p\}}
    \end{align*}
    This coincides exactly with the result of \Cref{t:routerTypes}.
\end{proof}

\subsubsection{Transference of Results (Operational Correspondence)}
\label{ss:transfer}

Given a global type $G$, we now formalize the transference of correctness properties such as deadlock freedom from `$\sys(G)$' (cf.\ \Cref{d:networks}) to `$G$'.
Here, we define an \emph{operational correspondence} between networks and global types, in both directions.
That is, we show that a network performs interactions between implementations and routers and between pairs of routers if and only if that communication step is stipulated in the corresponding global type (\Cref{t:completeness,t:soundness}).

Before formalizing the operational correspondence, we show that networks of routed implementations never reduce to alarm processes.
To be precise, because alarm processes only can occur in routers (not in implementations), we show that
none of the routers of a network reduces to an alarm process, formalized using evaluation contexts:

\begin{definition}[Evaluation Context]\label{d:evalcontext}
    We define an \emph{evaluation context} as a process with a single hole `$\mkern1mu\hole\mkern-3mu$', not prefixed by input or branching:
    \begin{align*}
        E ::= \nu{x y}\, E \sepr P \| E \sepr \mu X (\tilde{z}) \sdot E \sepr \hole
    \end{align*}
    Given an evaluation context $E$, we write `$\mkern1mu E[P]\mkern-3mu$' to denote the process obtained by replacing the hole in $E$ with $P$.
\end{definition}

\begin{theorem}\label{t:noError}
    Given a relative well-formed global type $G$ and a network of routed implementations $\mcl{N} \in \sys(G)$, then
    \[
        \mcl{N} \nredd^\ast E[\error{\tilde{x}}],
    \]
    for any evaluation context $E$ and set of endpoints $\tilde{x}$.
\end{theorem}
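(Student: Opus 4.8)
The plan is to show that the alarm branches introduced by the refinement \eqref{eq:patched} are never selected along any reduction of a network, so that no $\error{\tilde{x}}$ can be exposed in an evaluation context. First I would pin down where alarms can come from. By \Cref{d:router} with the refinement \eqref{eq:patched}, the only occurrence of $\error{\tilde{x}}$ in a synthesized router is inside a branching on some $\crt{p}{r}$, in the case $p \notin \{s,r\}$ where $p$ depends on \emph{both} the output of $s$ and the input of $r$. Moreover, by \Cref{d:error}, $\error{\tilde{x}}$ has no reductions and no process in \Cref{f:procdef} reduces \emph{to} it; hence the only way to obtain $E[\error{\tilde{x}}]$ for some evaluation context $E$ is to unguard an already-present alarm, which by \Cref{d:evalcontext} requires resolving the enclosing branching through a $\brred{\oplus\&}$ synchronization that selects one of the alarm labels $i' \in I \setminus \{i\}$. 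Thus it suffices to prove that, once such a branching on $\crt{p}{r}$ becomes ready to receive (i.e.\ after $\rtr_p$ has received a label $i$ on $\crt{p}{s}$), the co-endpoint $\crt{r}{p}$ can only carry the \emph{same} label $i$.

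The crux is a label-coherence invariant tracing the provenance of labels on the inter-router (\rch{purple}) channels. A key structural fact, immediate from \Cref{d:routedImplementations,d:networks}, is that participant implementations are connected to their routers only through the \ich{pink} channels $\ci{p}{\mu}$, so every label travelling on a \rch{purple} channel is emitted by a synthesized router. Inspecting \Cref{alg:router}, each router emits on a \rch{purple} channel exactly the label it has just received for the exchange being processed: the sender's router (\cref{line:rtrSend}) forwards the label read from $\ci{\mu}{p}$ to the recipient and to every output-dependant, and the recipient's router (\cref{line:rtrRecv}) forwards the label read from $\crt{p}{s}$ to $\ci{\mu}{p}$ and to every input-dependant. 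Consequently, for a fixed exchange $s \mto r\{\dots\}$, the label chosen once by $s$'s implementation propagates unchanged to $p$ both directly from $\rtr_s$ on $\crt{p}{s}$ and indirectly, via $\rtr_r$, on $\crt{p}{r}$. I would formalize this as an invariant on the processes reachable from $\mcl{N}$: for every exchange instance, all pending label-selections on the \rch{purple} channels associated with that instance agree with the unique label committed by its sender's router. The proof then proceeds by induction on the length of $\mcl{N} \redd^\ast \mcl{N}'$: the invariant holds vacuously for $\mcl{N}$ (nothing is in flight); each $\beta$- or $\kappa$-reduction either commits a fresh sender choice (fixing the label of a new instance) or forwards an already-committed label (preserving coherence); and when the both-dependency branching on $\crt{p}{r}$ fires, its partner $\crt{r}{p}$ carries exactly the label $i$ previously received on $\crt{p}{s}$, so only the non-alarm branch is taken, contradicting $i' \neq i$.

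The main obstacle is stating and maintaining this invariant robustly under the full generality of networks. Communication is asynchronous, so several messages may be in flight on the same pair of \rch{purple} endpoints at once; interleaving (when $\tilde{p}$ is not a singleton in \Cref{d:routedImplementations}) lets one implementation drive several routers; and recursion, together with its unfolding (\cref{line:rtrRecDef,line:rtrRecCall}), spawns many instances of the same exchange over time. The invariant must therefore be indexed by exchange \emph{instances}, pinning down for each pending \rch{purple} selection which instance it belongs to and which sender choice it descends from---essentially a bookkeeping of the dependency-forwarding discipline across unfoldings. Once this indexing is in place, preservation under each reduction rule is routine, since routers never alter a forwarded label; the real effort lies in making the correspondence between pending messages and exchange instances precise enough to survive commuting conversions and recursion.
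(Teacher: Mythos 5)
Your proposal is correct and follows essentially the same route as the paper's proof: alarms can only arise in the both-dependency branch of \eqref{eq:patched}, and since the sender's router forwards its single chosen label both to the recipient's router and to $p$'s router, while the recipient's router re-forwards exactly the label it received, $p$'s router can never observe a mismatched label on $\crt{p}{r}$. The paper states this as a short direct contradiction and leaves the bookkeeping implicit, whereas you propose to package the same idea as a label-coherence invariant maintained by induction on reduction length --- a more rigorous rendering of the identical argument.
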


\begin{proof}
    By definition (\Cref{d:networks}), $\mcl{N}$ consists only of routers (\Cref{d:router}) and well-typed processes not containing the alarm process (cf.\ the assumption below \Cref{d:error}).

    Suppose, for contradiction, that there are $E \in \mcl{E}$ and $\tilde{x}$ such that $\mcl{N} \redd^\ast E[\error{\tilde{x}}]$.
    Since only routers can contain the alarm process, there is a router $\rtr_p$ in $\mcl{N}$ for participant $p \in \part(G)$ that reduces to the alarm process.
    Since it is the only possibility for a router synthesized by \Cref{alg:router} to contain the alarm process, it must contain the process in \eqref{eq:patched}.
    This process is synthesized on \cref{line:rtrBoth} of \Cref{alg:router}, so there is an exchange in $G$ with sender $s \in \part(G) \setminus \{p\}$ and recipient $r \in \part(G) \setminus \{p\}$ that is a dependency for the interactions of $p$ with both $s$ and $r$.

    For this exchange, the router $\rtr_s$ for $s$ contains the process returned on \cref{line:rtrSend} of \Cref{alg:router}, and the router $\rtr_r$ for $r$ contains the process returned on \cref{line:rtrRecv}.
    Suppose $s$ has a choice between the labels in $I$, and the implementation of $s$ chooses $i \in I$.
    Then, $\rtr_s$ sends $i$ to $\rtr_r$ and $\rtr_p$.

    Now, for $\rtr_p$ to reduce to the alarm process, it has to receive from $\rtr_r$ a label $i' \in I \setminus \{i\}$.
    However, this contradicts \cref{line:rtrRecv} of \Cref{alg:router}, which clearly defines $\rtr_r$ to send $i$ to $\rtr_p$.
    Hence, $\mcl{N} \nredd^\ast E[\error{\tilde{x}}]$.
\end{proof}

It follows from this and the typability of routers (\Cref{t:routerTypes}) that networks of routed implementations are deadlock free:

\begin{theorem}\label{t:globalDlFree}
    For relative well-formed global type $G$, every $\mcl{N} \in \sys(G)$ is deadlock free.
    \lipicsEnd
\end{theorem}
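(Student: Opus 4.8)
The plan is to obtain \Cref{t:globalDlFree} essentially as a corollary of the deadlock-freedom result for APCP (\Cref{t:closedDLFree}), once I have shown that every network is well-typed and can be brought under the empty context. Fix a relative well-formed global type $G$ and an arbitrary $\mcl{N} \in \sys(G)$. By \Cref{d:networks}, $\mcl{N}$ restricts the router-to-router channels $\crt{p}{q}, \crt{q}{p}$ around a parallel composition of routed implementations, each of which (\Cref{d:routedImplementations}) restricts the implementation-to-router channels $\ci{\mu}{p}, \ci{p}{\mu}$ around a well-typed implementation $Q$ composed with routers $\rtr_p$. First I would assemble a typing derivation for $\mcl{N}$: each router $\rtr_p = \lmed{p}{\part(G) \setminus \{p\}}{G}$ is typable by \Cref{t:routerTypes}, assigning $\ci{\mu}{p}{:}~\dual{G \onto^0 p}$ and $\crt{p}{q}{:}~\relprop{p}{q}{0}{G \wrt (p,q)}$, while $Q$ is typable by the premise of \Cref{d:routedImplementations}, assigning $\ci{p}{\mu}{:}~G \onto^0 p$. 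I would combine these with \scc{Mix} and then apply \scc{Cycle} to each restricted pair: the implementation–router pairs close because $G \onto^0 p$ and $\dual{G \onto^0 p}$ are dual by construction, and the router–router pairs close because $\relprop{p}{q}{0}{G \wrt (p,q)}$ and $\relprop{q}{p}{0}{G \wrt (p,q)}$ are dual by \Cref{t:relpropDual}. This yields $\mcl{N} \vdash \emptyset; \Gamma$, where $\Gamma$ collects only the residual context of the implementations.

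Next I would pass to a closed typing. If $\Gamma = \emptyset$ then $\mcl{N} \vdash \emptyset; \emptyset$ already; otherwise, by completability (\Cref{d:complNet}) I close the remaining endpoints to obtain $\mcl{N}^\complet \vdash \emptyset; \emptyset$. Applying \Cref{t:closedDLFree}, the completed network is either structurally equivalent to $\0$ (the protocol has run to completion) or it can perform a synchronization $\mcl{N}^\complet \redd_\beta Q'$. Closing under Type Preservation (\Cref{t:subjectRed}), every reduct is again typable under the empty context, so \Cref{t:closedDLFree} reapplies at each reachable state; this is exactly the progress property that constitutes deadlock freedom.

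The subtlety I expect to be the main obstacle concerns the alarm processes introduced for the typability proof. The synthesized routers contain guarded alarm subterms (the patched branch \eqref{eq:patched}), and $\error{\tilde{x}}$ is typable by axiom \scc{Alarm} yet has no reductions and is not $\equiv \0$; so a naive appeal to \Cref{t:closedDLFree}—a result imported for alarm-free APCP—is not immediately justified, since a network literally contains such subterms. The resolution is \Cref{t:noError}: no network of routed implementations ever reduces to an evaluation context exposing $\error{\tilde{x}}$. Hence along every reduction sequence the alarm branches are never selected, and each reduct behaves exactly as an alarm-free APCP process, so \Cref{t:closedDLFree} does transfer. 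The remaining work is bookkeeping that I do not expect to need fresh arguments: verifying that the assembled derivation really closes all bound channels (a complete partition guarantees both $\crt{p}{q}$ and $\crt{q}{p}$ are bound for every pair, and every routed participant has $\ci{\mu}{p}, \ci{p}{\mu}$ bound), and that the priority annotations cohere across routers—this coherence is already secured by \Cref{t:routerTypes} together with the four-step priority discipline built into local and relative projection.
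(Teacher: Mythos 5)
Your proposal is correct and follows essentially the same route as the paper's own (very terse) proof: typability of routers (\Cref{t:routerTypes}) plus duality of router-channel types (\Cref{t:relpropDual}) to close the network under the empty context, then \Cref{t:closedDLFree} for deadlock freedom, with \Cref{t:noError} ruling out the alarm-process edge case. Your additional care about the residual context $\Gamma$ and completability (\Cref{d:complNet}), and about why \Cref{t:closedDLFree} still applies in the presence of guarded alarm subterms, fills in details the paper elides but does not change the argument.
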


\begin{proof}
    By the typability of routers (\Cref{t:routerTypes}) and the duality of the types of router channels (\Cref{t:relpropDual}), $\mcl{N} \vdash \emptyset; \emptyset$.
    Hence, by \Cref{t:closedDLFree}, $\mcl{N}$ is deadlock free, and by \Cref{t:noError}, $\mcl{N}$ never reduces to the alarm process.
\end{proof}

To formalize our operational correspondence result, we apply the labeled reductions for processes `$\mkern1mu P \xredd{\alpha} Q$' (cf.\ \Cref{d:procLred}) and define a labeled transition system (LTS) for global types.

\begin{definition}[LTS for Global Types]\label{d:globalLts}
    We define the relation `$\mkern1mu G \xrightarrow{\alpha} G'\mkern-2mu$', with labels `$\mkern1mu\beta\mkern-3mu$' of the form `$\mkern1mu p \rangle q{:}\ell\<S\>\mkern-3mu$' (sender, recipient, label, and message type), by the following rules:

    \medskip
    \noindent
    \mbox{}\hfill%
    \begin{wfit}
        \begin{prooftree}
            \infAss{
                $\raisebox{8pt}{} j \in I$
            }
            \infUn{
                $\mkern-6mu p \mkern1mu{\mto}\mkern1mu q \{i\<S_i\> \sdot G_i\}_{i \in I} \mkern1mu{\xrightarrow{p \rangle q{:}j\<S_j\>}}\mkern1mu G_j \mkern-6mu$
            }{}
        \end{prooftree}
    \end{wfit}%
    \hfill%
      \begin{wfit}
        \begin{prooftree}
            \infAss{
                $G \xrightarrow{\alpha} G'$
            }
            \infUn{
                $\mkern-6mu \raisebox{12pt}{} \gskip \sdot G \mkern1mu{\xrightarrow{\alpha}}\mkern1mu G' \mkern-6mu$
            }{}
        \end{prooftree}
    \end{wfit}%
    \hfill%
    \begin{wfit}
        \begin{prooftree}
            \infAss{
                $G \subst{\mu X \sdot G / X} \xrightarrow{\alpha} G'$
            }
            \infUn{
                $\mu X \sdot G \xrightarrow{\alpha} G'$
            }{}
        \end{prooftree}
    \end{wfit}%
    \hfill\mbox{}
\end{definition}

Intuitively, operational correspondence states:
\begin{enumerate}
    \item
        every transition of a global type is mimicked by a precise sequence of labeled reductions originating from an associated completable network (\emph{completeness}; \Cref{t:completeness}), and
    \item
        for every labeled reduction originated in a completable network there is a corresponding global type transition (\emph{soundness}; \Cref{t:soundness}).
\end{enumerate}

\noindent
We write `$\rho_1 \rho_2$' for the composition of relations `$\rho_1$' and `$\rho_2$'.
Recall that the notation `$\redd^\star$' stands for finite sequences of reductions, as defined in \Cref{n:redd}.

\begin{theorem}[Operational Correspondence: Completeness]\label{t:completeness}
    Suppose given a relative well-formed global type $G$.
    Also, suppose given $p,q \in \part(G)$ and a set of labels $J$ such that $j \in J$ if and only if $G \xrightarrow{p \rangle q: j\<S_j\>} G_j$ for some $S_j$.
    Then,
    \begin{enumerate}
        \item
            for any completable $\mcl{N} \in \sys(G)$, there exists a $j' \in J$ such that $\mcl{N}^\complet \redd^\star \xredd{\ci{p}{\mu} \rangle \ci{\mu}{p}:j'}\, \mcl{N}_0$;
        \item
            for any $j' \in J$, there exists a completable $\mcl{N} \in \sys(G)$ such that $\mcl{N}^\complet \redd^\star \xredd{\ci{p}{\mu} \rangle \ci{\mu}{p}:j'}\, \mcl{N}_0$;
        \item
            for any completable $\mcl{N} \in \sys(G)$ and any $j' \in J$, if $\mcl{N}^\complet \redd^\star \xredd{\ci{p}{\mu} \rangle \ci{\mu}{p}:j'}\, \mcl{N}_0$, then there exists a comple\-table $\mcl{N}_{j'} \in \sys(G_{j'})$ such that,
            \begin{align*}
                \mcl{N}_0 \,\xredd{\crt{p}{q} \rangle \crt{q}{p}{:}j'} \redd^\star \xredd{\ci{\mu}{q} \rangle \ci{q}{\mu}{:}j'} \redd^\star \xredd{\ci{p}{\mu} \rangle \ci{\mu}{p}{:}v}\, \xredd{\crt{p}{q} \rangle \crt{q}{p}{:}w}\, \xredd{v \fwd w} \redd^\star \xredd{\ci{\mu}{q} \rangle \ci{q}{\mu}{:}w}\, \xredd{v \fwd w}\, \mcl{N}_{j'}^\complet.
            \end{align*}
    \end{enumerate}
\end{theorem}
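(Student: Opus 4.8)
The plan is to prove the three parts by first reducing $G$ to \emph{head-exchange form} and then tracing the four-substep decomposition of a single exchange from \Cref{ss:routers}. Since $J$ is non-empty, the hypothesis forces $G \xrightarrow{p \rangle q{:}j\<S_j\>} G_j$ for each $j \in J$; inspection of the LTS (\Cref{d:globalLts}) then shows that $G$ is a finite stack of $\gskip$-prefixes and recursion binders followed (after unfolding) by a head exchange $p \mto q \{i\<S_i\> \sdot G_i\}_{i \in I}$ with $J = I$. I would induct on this stack. In the cases $G = \gskip \sdot G'$ and $G = \mu X \sdot G'$, \Cref{alg:router} synthesizes every router either by continuing immediately (\cref{line:rtrSkip}) or by a recursive definition that unfolds through structural congruence (\cref{line:rtrRecDef}); neither produces a $\beta$-reduction, and both the LTS and the network move past these prefixes in lockstep. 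The induction measure (number of leading $\gskip$/$\mu$ prefixes before the first exchange) strictly decreases under unfolding, since the copies substituted for $X$ sit behind the head exchange. This leaves the genuine case $G = p \mto q \{i\<S_i\> \sdot G_i\}_{i \in I}$.

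In the head-exchange case I would read the shapes of the relevant routers off \Cref{alg:router}: $\rtr_p$ is the sender process of \cref{line:rtrSend} (with $r = q$), $\rtr_q$ is the recipient process of \cref{line:rtrRecv} (with $s = p$), and each $\rtr_{q'}$ for $q' \notin \{p,q\}$ is a dependency or skip process (\crefrange{line:rtrElse}{line:rtrElseElse}, using the refinement \eqref{eq:patched}). For Part~1, $G \onto^0 p = \oplus^0\{\ldots\}$, so the implementation of $p$ offers a selection on $\ci{p}{\mu}$ at priority $0$; by the priority discipline this action is unguarded, and its only possible dual partner is the branch of $\rtr_p$ on $\ci{\mu}{p}$, also at priority $0$. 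Completability gives $\mcl{N}^\complet \vdash \emptyset; \emptyset$, so by \Cref{t:fairness} applied to the pending name $\ci{p}{\mu}$ there is a sequence $\mcl{N}^\complet \redd^\star \xredd{\ci{p}{\mu} \rangle \ci{\mu}{p}{:}j'}\, \mcl{N}_0$, and typing forces $j' \in I = J$. For Part~2, given any $j' \in J$ I would build a completable network as in \Cref{p:complNetsExist}, replacing the characteristic implementation of $p$ (\Cref{p:charProc}) by one whose first selection is $j'$; its sender router then necessarily emits $j'$, yielding the claimed reduction.

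Part~3 is the technical core and the main obstacle. Starting from $\mcl{N}_0$---in which, by \Cref{t:subjectRed} and the fact that $0$ is the globally least priority, only the head selection can have fired, so $\rtr_p$ has entered its $j'$-branch while $\rtr_q$, the $q$-implementation, and the dependency routers are untouched---I would fire the router actions in exactly the order dictated by \Cref{ss:routers}: $\rtr_p$ forwards $j'$ to $\rtr_q$ over $\crt{p}{q}$ (the named step $\xredd{\crt{p}{q} \rangle \crt{q}{p}{:}j'}$), then discharges its output-dependency selections $(\ol{\crt{p}{q'}} \puts j')_{q' \in \deps}$ as internal $\redd^\star$ steps; $\rtr_q$ forwards $j'$ to the $q$-implementation over $\ci{\mu}{q}$, discharging its own dependencies; then the message travels ($\ci{p}{\mu} \rangle \ci{\mu}{p}{:}v$, then $\crt{p}{q} \rangle \crt{q}{p}{:}w$, then the forwarder $v \fwd w$ via $\brred{\scc{Id}}$, and finally $\ci{\mu}{q} \rangle \ci{q}{\mu}{:}w$ with its forwarder). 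The delicate points are: (i)~matching each $\redd^\star$ to precisely the dependency-forwarding reductions for the third participants $q'$, which are governed by $\hdep$ and \crefrange{line:rtrOnlyS}{line:rtrBoth}; (ii)~tracking the substitutions induced by $\brred{\tensor\parr}$ and $\brred{\scc{Id}}$ so that the forwarded message endpoints line up; and (iii)~showing the residual network lies in $\sys(G_{j'})$. For (iii) I would observe that every router has continued as $\lmed{\cdot}{\cdot}{G_{j'}}$ by construction of \Cref{alg:router}, that the residual implementations are typable against $G_{j'} \onto^0 (\cdot)$ by \Cref{t:subjectRed}, and that participants dropped in $G_{j'}$ leave only closed $\0$-threads and closed restrictions, absorbed up to $\equiv$; completing the remaining free channels then yields $\mcl{N}_{j'}^\complet$. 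The hardest bookkeeping is keeping the priority shifts (each substep raising priorities by the pattern $+1,+2,+3,+4$ of \Cref{d:locproj} and \Cref{d:relprop}) consistent, so that the continuation is itself a well-typed member of $\sys(G_{j'})$.
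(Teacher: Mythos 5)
Your proposal is correct and follows essentially the same route as the paper's proof: reduce $G$ to the head exchange $p \mto q \{i\<S_i\> \sdot G_i\}_{i \in I}$ (treating $\gskip$ and recursion prefixes as transparent), derive Parts~1 and~2 from the shape of $G \onto^0 p$, completability, and Fairness (\Cref{t:fairness}), and prove Part~3 by tracing the four-substep decomposition of the exchange through the synthesized routers, concluding membership of the residual in $\sys(G_{j'})$ via Type Preservation (\Cref{t:subjectRed}). The only cosmetic difference is that the paper explicitly chases chains of forwarders on $\ci{p}{\mu}$ before the selection in Part~1, which your appeal to Fairness absorbs.
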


\begin{proof}
    By the labelled transitions of global types (\defref{d:globalLts}) and relative well-formedness, $G$ is a sequence of $\gskip$s followed by an exchange from $p$ to $q$ over the labels in $J$.
    Since the $\gskip$s do not influence the behavior of routers, let us assume simply that
    \begin{align*}
        G = p \mto q \{j \<S_j\> \sdot G_j\}_{j \in J}.
    \end{align*}
    We prove each Subitem separately.

    \begin{enumerate}
        \item[(a)]
            Take any completable $\mcl{N} \in \sys(G)$.
            By definition (\defref{d:complNet}), $\mcl{N}^\complet \vdash \emptyset; \emptyset$.
            By the construction of networks of routed implementations (\defref{d:networks}), $\ci{p}{\mu} \in \bn(\mcl{N}^\complet)$, and $\ci{p}{\mu}$ is connected to $\ci{\mu}{p}$.

            Also by construction, the type of $\ci{p}{\mu}$ in the typing derivation of $\mcl{N}^\complet$ is
            \begin{align*}
                G \onto^0 p = \oplus^0 \{j: \msgprop{S_j} \tensor^1 (G_j \onto^4 p)\})_{j \in J}.
            \end{align*}
            By the well-typedness of $\mcl{N}^\complet$, we can infer the kind of action that is defined on $\ci{p}{\mu}$: a selection, or a forwarder.
            By induction on the number of connected forwarders (which is finite by the finiteness of process terms), eventually a forwarder has to be connected to a selection.
            So, after reducing the forwarders, we have a selection on $\ci{p}{\mu}$, of some $j' \in J$.

            Hence, by Fairness (\Cref{t:fairness}), after a finite number of steps, we can observe a communication of the label $j'$ from $\ci{p}{\mu}$ to $\ci{\mu}{p}$.
            This proves the thesis: $\mcl{N}^\complet \redd^\star \xredd{\ci{p}{\mu} \rangle \ci{\mu}{p}: j'}\, \mcl{N}_0$.

        \item[(b)]
            Following the proof of the existence of completable networks (\Cref{p:complNetsExist}), we can generate an implementation process for all of $G$'s participants from local projections (cf.\ \Cref{p:charProc}).
            Take any $j' \in J$.
            For the implementation process of $p$, we specifically generate an implementation process that sends the label $j'$.
            These implementation processes allow us to construct $\mcl{N}$, which by construction is in $\sys(G)$ and is completable.
            Following the reasoning as in Subitem (a), ${\mcl{N}^\complet \redd^\star \xredd{\ci{p}{\mu} \rangle \ci{\mu}{p} : j'}\, \mcl{N}_0}$.

        \item[(c)]
            By definition (\defref{d:complNet}), $\mcl{N}^\complet \vdash \emptyset; \emptyset$.
            Hence, by Fairness (\Cref{t:fairness}), for any of the pending names of $\mcl{N}^\complet$, we can observe a communication after a finite number of steps.
            By construction (\defref{d:networks}), the endpoints that we are required to observe by thesis are bound in $\mcl{N}^\complet$.
            From the shape of $G$, the definition of routed implementations (\defref{d:routedImplementations}), and the typability of routers (\Cref{t:routerTypes}), we know the types of all the required endpoints in $\mcl{N}^\complet$.
            We can deduce the required labeled reductions following the reasoning as in Subitem (a).
            Let us summarize the origin of each of the network's steps:
            \begin{enumerate}[label=\arabic*.]
                \item
                    $\mcl{N}^\complet \redd^\star \xredd{\ci{p}{\mu} \rangle \ci{\mu}{p} : j'}\, \mcl{N}_0$:
                    The implementation of $p$ selects label $j'$ with $p$'s router.

                \item
                    $\mcl{N}_0 \xredd{\crt{p}{q} \rangle \crt{q}{p} : j'} \mcl{N}_1$:
                    The router of $p$ forwards $j'$ to $q$'s router.

                \item
                    $\mcl{N}_1 \redd^\star \mcl{N}_2$:
                    The router of $p$ forwards $j'$ to the routers of the participant that depend on the output by $p$, and these routers forward $j'$ to their respective implementations.

                \item
                    $\mcl{N}_2 \,\xredd{\ci{\mu}{q} \rangle \ci{q}{\mu} : j'} \redd^\star \mcl{N}_3$:
                    The router of $q$ forwards $j'$ to $q$'s implementation, and to the routers of the participants that depend on the input by $q$, and these routers forward $j'$ to their respective implementation (if they have not done so already for the output dependency on $p$).

                \item
                    $\mcl{N}_3 \xredd{\ci{p}{\mu} \rangle \ci{\mu}{p} : v} \xredd{\crt{p}{q} \rangle \crt{q}{p} : w} \xredd{v \fwd w} \mcl{N}_{4,v}$:
                    The implementation of $p$ sends an endpoint $v$ to $p$'s router, which sends a fresh endpoint $w$ to $q$'s router, and $v$ is forwarded to $w$.

                \item
                    $\mcl{N}_{4,v} \redd^\star \xredd{\ci{\mu}{q} \rangle \ci{q}{\mu} : w}\, \xredd{v \fwd w}\, \mcl{N}_{j'}^\complet$:
                    The router of $q$ sends a fresh endpoint $w$ to $q$'s implementations, and $v$ is forwarded to $w$.

                    In $\mcl{N}_{j'}^\complet$, all routers have transitioned to routers for $G_{j'}$.
                    Moreover, by Type Preservation (\Cref{t:subjectRed}), $\mcl{N}_{j'}^\complet \vdash \emptyset; \emptyset$.
                    By isolating restrictions on endpoints that belong only to implementation processes, we can find $\mcl{N}_{j'} \in \sys(G_{j'})$ such that $\mcl{N}_{j'}^\complet$ is its completion.
                    This proves the thesis.
            \end{enumerate}
    \end{enumerate}

    Note that $G$ can also contain recursive definitions before the initial exchange; this case can be dealt with by unfolding.
\end{proof}

\revstart[1]

Our soundness result, given below as~\Cref{t:soundness}, will  capture the notion that after any sequence of reductions from the network of a global type $G$, a network of another global type $G'$ can be reached. Crucially,  $G'$ can be reached from $G$ through a series of transitions.
Networks are inherently concurrent, whereas global types are built out of sequential compositions; as a result, the network could have enabled (asynchronous) actions that correspond to exchanges that are not immediately enabled in the global type.

For example, consider the global types $G = a \mto b \big\{1\<S_1\>.c \mto d \{1\<S'\>.\gend\}, 2\<S_2\>.c \mto d \{1\<S'\>.\gend\}\big\}$
and
$G' = a \mto b \big\{1\<S\>.b \mto c \{1\<S'\>.\gend\}\big\}$.
Clearly, the initial exchange in $G$ between $a$ and $b$ is not a dependency for the following exchange between $c$ and $d$.
The routers of $c$ and $d$ synthesized from $G$ thus  start with their exchange, without awaiting the initial exchange between $a$ and $b$ to complete.
Hence, in a network of $G$, both exchanges in $G$ may be enabled simultaneously.
We further refer to exchanges that may be simultaneously enabled in networks as \emph{independent (global) exchanges}.
While all exchanges appearing in $G$ are independent, the two exchanges in $G'$ are not.

In the proof of soundness, we may encounter in a network reductions related to independent exchanges, so we have to be able to identify the independent exchanges in the global type to which the network belongs.
\Cref{l:delayedPrefix} states that independent exchanges related to observed reductions in a network of a global type $G$ can be reached from $G$ after any sequence of transitions in a finite number of steps.
The proof of this lemma relies on \Cref{l:routerNodepEqual}, which ensures that if a participant does not depend on a certain exchange, then the routers synthesized at each of the branches of the exchange are equal.

\begin{lemma}\label{l:routerNodepEqual}
    Suppose given a relative well-formed global type $G = s \mto r \{i\<S_i\>.G_i\}_{i \in I}$, and take any ${p \in \part(G) \setminus \{s,r\}}$ and $\tilde{q} \subseteq \part(G) \setminus \{p\}$.
    If neither $\hdep(p,s,G)$ nor $\hdep(p,r,G)$ holds, then ${\lmed{p}{\tilde{q}}{G_i} = \lmed{p}{\tilde{q}}{G_j}}$ for every $i,j \in I$.
\end{lemma}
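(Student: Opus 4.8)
The plan is to show that, under the stated hypotheses, the relative projections of all branches $G_i$ onto $p$ coincide, and then to argue that $p$'s router is determined by those projections alone. \emph{Step 1 (translating the dependency hypotheses).} I would first unfold \Cref{d:hdep}. Since $p \in \part(G)\setminus\{s,r\}$, where $s,r$ are the sender and recipient of the leading exchange (and $s\neq r$, as is standard), the predicate $\hdep(p,s,G)$ reduces to the condition $\detdep((s,p),G) \neq \gskip\sdot R$, and likewise $\hdep(p,r,G)$ to $\detdep((r,p),G) \neq \gskip\sdot R'$. Inspecting $\detdep$ in \Cref{f:relproj} (top): for the pair $(s,p)$ the first (``$\gskip$'') clause is the only one that can produce a $\gskip$-prefix, and it fires exactly when $G_i \wrt (s,p) = G_j \wrt (s,p)$ for all $i,j \in I$; otherwise the clause for $p=s$ fires and yields the dependency $s{!}r\{\ldots\}$. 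Hence $\hdep(p,s,G)$ being false is equivalent to $G_i \wrt (s,p) = G_j \wrt (s,p)$ for all $i,j$, and symmetrically for $(r,p)$. By \Cref{prop:relprojSensible} these equal $G_i \wrt (p,s)$ and $G_i \wrt (p,r)$.

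\emph{Step 2 (the remaining participants, via well-formedness).} For any $q \in \part(G)\setminus\{p,s,r\}$ the leading exchange involves neither $p$ nor $q$, so $G \wrt (p,q) = \detdep((p,q),G)$. As $p,q\notin\{s,r\}$, none of the dependency clauses of $\detdep$ can apply, so definedness of $G \wrt (p,q)$ — guaranteed by relative well-formedness (\Cref{d:rwf}) — forces the first clause, i.e.\ $G_i \wrt (p,q) = G_j \wrt (p,q)$ for all $i,j$. Combining with Step 1, I obtain $G_i \wrt (p,q) = G_j \wrt (p,q)$ for \emph{every} $q \in \part(G)\setminus\{p\}$ and all $i,j\in I$; in particular this holds for every $q \in \tilde{q}$.

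\emph{Step 3 (from equal projections to equal routers).} It remains to show that $p$'s router depends on its global-type argument only through the family of relative projections onto $p$. I would prove the auxiliary statement: if $G^1,G^2$ satisfy $\part(G^1)\cup\part(G^2)\subseteq\{p\}\cup\tilde{q}$ and $G^1 \wrt (p,q) = G^2 \wrt (p,q)$ (both defined) for all $q\in\tilde{q}$, then $\lmed{p}{\tilde{q}}{G^1} = \lmed{p}{\tilde{q}}{G^2}$; the lemma is the instance $G^1=G_i$, $G^2=G_j$. The argument is by induction on the combined size of $G^1,G^2$, exploiting that \Cref{alg:router} reads its argument only through the role of $p$ in each exchange and through $\hdep$/$\detdep$, all recoverable from the projections: $p$'s role and the exchanged labels and message types appear in $G^k \wrt (p,q)$ whenever $q$ is the partner, and a bystander exchange places $q$ in $\deps$ (resp.\ triggers $\dep_s$/$\dep_r$) precisely when the associated projection is a dependency rather than a $\gskip$. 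Leading $p$-irrelevant constructs — an explicit $\gskip$, a bystander exchange on which $p$ does not depend (\cref{line:rtrElseElse}), or a recursion with empty active set — are skipped identically by the router and project to a $\gskip$-prefix (or $\bullet$) on every pair, so both sides reduce to strictly smaller global types with still-equal projections, and the induction hypothesis applies.

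\emph{Main obstacle.} The crux is Step 3, and in particular the recursion case $G^k=\mu X\sdot G^{k\prime}$: one must match the router's bookkeeping — the restricted parameter set $\tilde{q}'$ from \cref{line:rtrRecComp}, computed via $G^k \wrt (p,q)\neq\bullet$ — against the way relative projection preserves or discards the binder (\Cref{d:relproj}, through contractiveness as in \Cref{d:contrRel}), and to confirm that $G^{k\prime}\wrt(p,q)$ is recoverable as the body of $G^k\wrt(p,q)$, so the induction hypothesis can be invoked on the bodies with the smaller set $\tilde{q}'$. The arbitrary branch choice in \cref{line:rtrElseElse} must also be shown immaterial — which is exactly the phenomenon this lemma isolates — and is subsumed by the size induction, since all branches there project equally onto $p$ by the reasoning of Steps 1–2 applied one level deeper.
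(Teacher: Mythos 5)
Your proposal is correct in substance but organized quite differently from the paper's proof. The paper argues by contradiction at the level of the synthesized processes: assuming $\lmed{p}{\tilde{q}}{G_1} \neq \lmed{p}{\tilde{q}}{G_2}$, it inspects how \Cref{alg:router} could produce different top-level actions, traces any such difference back to a mismatch $G_1 \wrt (p,q) \neq G_2 \wrt (p,q)$ for some $q$, and then uses relative well-formedness of $G$ to force $q \in \{s,r\}$ and hence $\hdep(p,s,G)$ or $\hdep(p,r,G)$ --- contradicting the hypothesis. Your Steps 1--2 are exactly the contrapositive of that key inference, carried out forward: $\neg\hdep$ plus well-formedness yield $G_i \wrt (p,q) = G_j \wrt (p,q)$ for \emph{every} $q \in \part(G)\setminus\{p\}$. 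Where you genuinely diverge is Step 3, which isolates as a standalone induction the principle that $p$'s router is a function of the relative projections onto $p$. The paper never states this principle; its case analysis (of which only one case is shown) implicitly relies on it, and in particular it does not explicitly address differences arising deeper than the first action, nor the nondeterministic branch choice on \cref{line:rtrElseElse}. Your approach buys a reusable lemma and a cleaner resolution of that nondeterminism, at the cost of the bookkeeping you correctly identify as the main obstacle. Two wrinkles to iron out there: (i) the side condition $\part(G^1)\cup\part(G^2)\subseteq\{p\}\cup\tilde{q}$ need not hold for the lemma's arbitrary $\tilde{q}$; you should instead hypothesize equal projections for all $q\in(\part(G^1)\cup\part(G^2))\setminus\{p\}$ (which Steps 1--2 deliver) and observe that the algorithm only consults $q\in\tilde{q}$ together with $p$'s direct partners; and (ii) the combined-size measure does not obviously license applying the induction hypothesis to two branches $G^1_j, G^1_{j'}$ of the \emph{same} argument when resolving \cref{line:rtrElseElse} --- inducting on $\max$ of the sizes, or proving the auxiliary claim mutually with the lemma itself, closes that gap. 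Neither issue is fundamental.
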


\begin{proof}
    The analysis proceeds by cases on the structure of $G$.
    As a representative case we consider ${G = s \mto r \{1\<S_1\>.G_1, 2\<S_2\>.G_2\}}$.
    Towards a contradiction, we assume $\lmed{p}{\tilde{q}}{G_1} \neq \lmed{p}{\tilde{q}}{G_2}$.
    There are many cases where \Cref{alg:router} generates differents routers for $p$ at $G_1$ and at $G_2$.
    We discuss the interesting case where $\lmed{p}{\tilde{q}}{G_1} = \ci{\mu}{p} \gets \ldots$ (\cref{line:rtrSend}) and $\lmed{p}{\tilde{q}}{G_2} = \crt{p}{q_2} \gets \ldots$ (\cref{line:rtrRecv}).
    Then $G_1 = p \mto q_1 \{\ldots\}$ and $G_2 = q_2 \mto p \{\ldots\}$.
    We have $G_1 \wrt (p,q_1) = p \{\ldots\}$ and $G_2 \wrt (p,q_1) = \gskip \ldots$ or $G_2 \wrt (p,q_1) = p{?}q_2 \{\ldots\}$ (w.l.o.g., assume the former).
    Since $G$ is relative well-formed, the projection $G \wrt (p,q_1)$ must exist.
    Hence, since $p \notin \{s,r\}$ and $G_1 \wrt (p,q_1) \neq G_2 \wrt (p,q_1)$, it must be the case that $q_1 \in \{s,r\}$---w.l.o.g., assume $q_1 = s$.
    Then $G \wrt (p,q_1) = q_1{!}r \{1.p \{\ldots\}, 2.\gskip \ldots\}$, and thus $\hdep(p,q_1,G) = \hdep(p,s,G)$ is true.
    This contradicts the assumption that $\hdep(p,s,G)$ is false.
\end{proof}

\begin{lemma}\label{l:delayedPrefix}
    Suppose given a relative well-formed global type $G$ and a completable $\mcl{N} \in \sys(G)$ such that $\mcl{N}^\complet \xredd{\ci{c}{\mu} \> \ci{\mu}{c}:\ell}$, for some $c \in \part(G)$.
    For every $G'$ and $\beta_1,\ldots,\beta_n$ ($n \geq 0$) such that
    $G \xrightarrow{\beta_1} \ldots \xrightarrow{\beta_n} G'$ where $c$ is not involved in any $\beta_k$ (with $G = G'$ if $n=0$),
    there exist $G''$, $d \in \part(G)$, and $\beta'_1,\ldots,\beta'_m$ ($m \geq 0$)
    such that ${G' \xrightarrow{\beta'_1} \ldots \xrightarrow{\beta'_m} G'' = c \mto d \{i\<S_i\>.G_i\}_{i \in I}}$ where $c$ is not involved in any $\beta'_k$  (with $G'' = c \mto d \{i\<S_i\>.G_i\}_{i \in I}$ if $m=0$).
\end{lemma}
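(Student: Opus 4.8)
The plan is to reduce the statement to a structural property of $G$ that follows from the enabledness of $c$'s selection in $\mcl{N}^\complet$ and is preserved by transitions not involving $c$. Call an exchange $s \mto r$ occurring in $G$ \emph{$c$-transparent} if $c \notin \{s,r\}$ and $c$ does not depend on it in the sense of $\hdep$ (\defref{d:hdep}). First I would extract the following key claim: after stripping leading $\gskip$s, unfolding leading recursions, and passing a finite sequence of $c$-transparent head exchanges, $G$ reaches an exchange $c \mto d \{i\<S_i\> \sdot G_i\}_{i \in I}$ in which $c$ is the \emph{sender}.

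To prove this claim I would read off the consequences of $\mcl{N}^\complet \xredd{\ci{c}{\mu} \rangle \ci{\mu}{c}:\ell}$. Because a labeled reduction is enabled only up to structural congruence (\Cref{d:procLred}), and structural congruence permits recursion unfolding and rearrangement of parallel components and restrictions but no synchronization, the router $\rtr_c = \lmed{c}{\part(G) \setminus \{c\}}{G}$ must begin---after unfolding---with a branching on $\ci{\mu}{c}$. Inspecting \Cref{alg:router}, a leading $\ci{\mu}{c} \gets \ldots$ is produced \emph{only} by \cref{line:rtrSend}, i.e.\ when $c$ is the sender of the head exchange: the recipient case (\cref{line:rtrRecv}) and all three dependency cases (\cref{line:rtrOnlyS}, \cref{line:rtrOnlyR}, \cref{line:rtrBoth}) begin instead with a branching on some purple channel $\crt{c}{\cdot}$, whereas \cref{line:rtrSkip} and \cref{line:rtrRecDef} merely strip a $\gskip$ or a recursion and \cref{line:rtrElseElse} skips a $c$-transparent exchange. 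Hence the first $c$-relevant head exchange must be a $c$-sender exchange, and all head exchanges before it are $c$-transparent; this prefix is finite because \Cref{alg:router} terminates, producing the leading branching after finitely many recursive calls. Finally, \Cref{l:routerNodepEqual} makes the claim well defined: at each $c$-transparent exchange the routers synthesized for $c$ at all branches coincide, so the recipient $d$ and label set of the forthcoming $c$-sender exchange do not depend on the branch chosen.

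Next I would show the property is preserved by a transition $G \xrightarrow{\beta} \bar G$ with $c$ not involved in $\beta$. By \Cref{d:globalLts}, $\beta$ fires the head exchange $s \mto r$ of $G$ (modulo $\gskip$ and unfolding) with $c \notin \{s,r\}$. Since nothing precedes the head, were it $c$-relevant it would be the first $c$-relevant exchange, hence $c$-sender with $c \in \{s,r\}$, contradicting $c \notin \{s,r\}$; so the head is $c$-transparent and $\bar G$ is one of its branches. By \Cref{l:routerNodepEqual} every branch yields the same $\rtr_c$, so $\bar G$ satisfies the property with the \emph{same} $c \mto d \{\ldots\}$. A routine induction on $n$ along $G \xrightarrow{\beta_1} \cdots \xrightarrow{\beta_n} G'$ then shows $G'$ satisfies the property. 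To conclude, I would take $\beta'_1, \ldots, \beta'_m$ to fire each of the remaining (finitely many) $c$-transparent head exchanges of $G'$---any branch choice works, by the same convergence---reaching $G'' = c \mto d \{i\<S_i\> \sdot G_i\}_{i \in I}$ with no $\beta'_k$ involving $c$; if $G'$ already has this head, then $m = 0$. Leading $\gskip$s and recursions are absorbed into the transitions exactly as in the proof of \Cref{t:completeness}, where they are noted not to affect routers.

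The main obstacle is the claim of the second paragraph: bridging the \emph{concurrent} semantics of networks---where $c$'s selection is enabled precisely because $c$'s router need not await the independent exchanges---with the \emph{sequential} structure of global types. The two delicate points are (i) arguing that enabledness up to $\equiv$ pins the router's first action to $\ci{\mu}{c}$ rather than to some $\crt{c}{\cdot}$, thereby excluding a dependency-carrying exchange before $c$'s own; and (ii) the repeated use of \Cref{l:routerNodepEqual} to ensure that branch choices taken along the $c$-transparent prefix never divert the global type away from the forthcoming $c$-sender exchange.
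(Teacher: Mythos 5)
Your proposal is correct and follows essentially the same route as the paper's proof: both hinge on the observation that an enabled branching on $\ci{\mu}{c}$ can only come from the sender case of \Cref{alg:router} (\cref{line:rtrSend}) reached after finitely many passes through the no-dependency and skip cases, and both use \Cref{l:routerNodepEqual} to show that branch choices along these $c$-irrelevant exchanges do not change $c$'s router. The only cosmetic difference is that you phrase this as a structural invariant preserved by transitions, whereas the paper runs a double induction (on $n$ and on the number of passes through \cref{line:rtrElseElse}) and reconstructs a completable network at each step.
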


\begin{proof}
By induction on $n$ (\ih1).
We first observe that the behavior on $\ci{\mu}{c}$ in $\mcl{N}^\complet$ can only arise from the router generated for $c$ at $G$, following \Cref{alg:router} (\cref{line:rtrSend}) after finitely many passes through lines~\labelcref{line:rtrElseElse} (no dependency) and \labelcref{line:rtrSkip} (skip); for simplicity, assume only \cref{line:rtrElseElse} applies.

    \begin{itemize}
        \item Case $n = 0$.
            Let $x \geq 0$ denote the number of passes through \cref{line:rtrElseElse} to generate the router for $c$ at~$G$.
            We apply induction on $x$ (\ih2):
            \begin{itemize}
                \item Case $x = 0$.
                    The router for $c$ at $G$ is generated through \cref{line:rtrSend}, so $G = c \mto d \{i\<S_i\>.G_i\}_{i \in I}$, proving the thesis.

                \item Case $x = x'+1$.
                    Then $G = a \mto b \{i\<S_i\>.G_i\}_{i \in I}$ and \cref{line:rtrElseElse} returns the router for $c$ at $G_j$ for any $j \in I$.
                    We have $G \xrightarrow{a \> b: j\<S_j\>} G_j$.
                    Given the same implementation process for $c$ as in $\mcl{N}$, we can construct a completable $\mcl{M} \in \sys(G_j)$ such that $\mcl{M}^\complet \xredd{\ci{c}{\mu} \> \ci{\mu}{c}:\ell}$.
                    Hence, the thesis follows from \ih2.
            \end{itemize}

        \item Case $n = n'+1$.
            By assumption, $G \xrightarrow{\beta_1} G'_1$ where $c$ is not the sender or recipient in $\beta_1$.
            Hence, $G = a \mto b \{i.\<S_i\>.G_i\}_{i \in I}$ where $G'_1 = G_j$ for some $j \in I$.
            The router for $c$ at $G$ is thus generated through \cref{line:rtrElseElse} of \Cref{alg:router}.
            It follows from \Cref{l:routerNodepEqual} that this router is equal to the router for $c$ at $G'_1$, but with one less pass through \cref{line:rtrElseElse}.
            Given the same implementation process for $c$ as in $\mcl{N}$, we can construct a completable $\mcl{M} \in \sys(G'_1)$ such that $\mcl{M}^\complet \xredd{\ci{c}{\mu} \> \ci{\mu}{c}:\ell}$.
            Hence, the thesis follows from \ih1.
            \qedhere
    \end{itemize}
\end{proof}

The proof of soundness relies on \Cref{p:indep}: if different reductions are enabled for a given process, then they do not exclude each other.
That is, the same process is reached no matter the order in which those reductions are executed.
We refer to simultaneously enabled reductions as \emph{independent reductions}.

\begin{proposition}[Independent Reductions]\label{p:indep}
    Suppose given a process $P \vdash \Omega; \Gamma$ and reduction labels $\alpha$ and $\alpha'_1,\ldots,\alpha'_n$ ($n \geq 1$) where $\alpha \notin \{\alpha'_1,\ldots,\alpha'_n\}$ (cf.\ \Cref{d:procLred}).
    If $P \xredd{\alpha}$ and $P \xredd{\alpha'_1} \ldots \xredd{\alpha'_n}$, then there exists a process $Q$ such that $P \xredd{\alpha} \xredd{\alpha'_1} \ldots \xredd{\alpha'_n} Q$ and $P \xredd{\alpha'_1} \ldots \xredd{\alpha'_n} \xredd{\alpha} ~Q$.
\end{proposition}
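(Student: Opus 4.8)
The plan is to reduce the statement to the single-step case $n=1$ (a \emph{diamond property}) and then lift it to arbitrary $n$ by induction. Concretely, I would first prove the following base lemma: if $P \vdash \Omega; \Gamma$ with $P \xredd{\alpha} R$ and $P \xredd{\alpha'} S$ for $\alpha \neq \alpha'$, then there is a process $Q$ with $R \xredd{\alpha'} Q$ and $S \xredd{\alpha} Q$ (all equalities up to $\equiv$, under which labeled reduction is closed by \Cref{d:procLred}). Alongside this I would record a \emph{determinism} observation: for a fixed label $\alpha$, the relation $\xredd{\alpha}$ is a partial function up to $\equiv$, since the subjects $x,y$ named in $\alpha$ pin down a unique redex. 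Determinism is what lets me glue the two induction branches together.

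The heart of the argument---and the step I expect to be the main obstacle---is showing that two distinct, simultaneously enabled $\beta$-redexes act on \emph{disjoint} sets of endpoints. Here linearity of $\Gamma$ is essential. Each label in \Cref{d:procLred} fires a synchronization on a single restricted channel $\nu{x y}$, and by linearity an endpoint carries exactly one action; consequently a given channel offers at most one enabled synchronization, which fixes the label entirely. Hence two \emph{different} labels necessarily come from two different channels $\nu{x y}$ and $\nu{x' y'}$ whose four subjects are pairwise distinct. The delicate sub-case is the forwarder reduction $\brred{\scc{Id}}$, which performs a renaming $\subst{x/z}$: I would argue that if $\alpha$ is already enabled at $P$ then its subjects cannot lie among the names renamed by a simultaneously enabled forwarder, because the endpoint $z$ bound opposite the forwarded $y$ has no complementary action available (the forwarder would have to fire first to expose one), so $z$ cannot be a subject of an \emph{already enabled} $\alpha$. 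This guarantees that firing one redex neither consumes, renames, nor otherwise disturbs the other.

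With disjointness in hand, the commutation itself is routine structural manipulation. Using scope extrusion and the commutativity/associativity of $\|$ from the structural congruence of \Cref{f:procdef}, I would bring $P$ into a form where both redexes appear as parallel components beneath top-level restrictions, fire each by the corresponding axiom of \Cref{d:procLred}, and observe that the substitution induced by one reduction only affects that redex's own message/continuation endpoints---disjoint from the other redex's subjects---so the two orders yield $\equiv$-equal processes $Q$. Type preservation (\Cref{t:subjectRed}) keeps every intermediate process well typed, so the hypotheses of the base lemma remain available throughout.

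Finally, I would carry out the induction on $n$. For $n=1$ the base lemma is exactly the claim. For the step, given $P \xredd{\alpha}$ and $P \xredd{\alpha'_1} P_1 \xredd{\alpha'_2} \cdots \xredd{\alpha'_n} P_n$ with $\alpha \notin \{\alpha'_1,\dots,\alpha'_n\}$, I apply the base lemma to $\alpha$ and $\alpha'_1$ at $P$, obtaining $P \xredd{\alpha} R \xredd{\alpha'_1} S$ together with $P_1 \xredd{\alpha} S$, where the \emph{same} label $\alpha$ remains enabled at $P_1$. Since $\alpha \notin \{\alpha'_2,\dots,\alpha'_n\}$, the induction hypothesis applied at $P_1$ yields a $Q$ reached both by $\alpha$-first and by $\alpha$-last along $\alpha'_2 \cdots \alpha'_n$; determinism of $\xredd{\alpha}$ identifies the $\alpha$-step at $P_1$ with $S$, so prepending $P \xredd{\alpha} R \xredd{\alpha'_1} S$ to the $\alpha$-first path and $P \xredd{\alpha'_1} P_1$ to the $\alpha$-last path gives the two required derivations to the common $Q$.
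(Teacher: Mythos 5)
Your proposal is correct and follows essentially the same route as the paper: a single-step commutation (diamond) lemma whose core is that typability forces two simultaneously enabled, distinctly labeled redexes into disjoint parallel positions (handling the branch-discarding and forwarder-renaming subtleties), followed by induction on $n$ using Type Preservation to keep intermediate processes typed. The only cosmetic differences are that you peel the sequence from the front and make determinism of $\xredd{\alpha}$ explicit, whereas the paper peels from the back and leaves that uniqueness implicit.
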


\begin{proof}
    By induction on $n$:
    \begin{itemize}
        \item $n = 1$.
            By assumption, $P \xredd{\alpha}$ and $P \xredd{\alpha'_1}$.
            The proof proceeds by considering all possible combinations of shapes for $\alpha$ and $\alpha'_1$ (forwarder, output/input, and selection/branching).

            Consider the case where ${\alpha = x \> y: a}$ and $\alpha'_1 = w \> z: b$.
            Because  $P$ is well-typed, we infer that there are evaluation contexts $E_1$ and $E_2$ such that $P \equiv E_1[\nu{xy}(x[a,c] \| y(a,c).P_1)] \equiv E_2[\nu{wz}(w[b,d] \| z(b,d).P_2])$ (\Cref{d:evalcontext}).
            Since the reductions labeled $\alpha$ and $\alpha'_1$ are both enabled in $P$, it cannot be the case that $x,y \in \fn(P_2)$ and $w,z \in \fn(P_1)$.
            Hence, there exists an evaluation context $E_3$ such that $P \equiv E_3[\nu{xy}(x[a,c] \| y(a,c).P_1) \| \nu{wz}(w[b,d] \| z(b,d).P_2])$.
            Then $P \xredd{\alpha} Q_1 \equiv E_3[P_1 \| \nu{wz}(w[b,d] \| z(b,d).P_2])$ and $P \xredd{\alpha'_1} Q_2 \equiv E_3[\nu{xy}(x[a,c] \| y(a,c).P_1) \| P_2]$.
            Let $Q = E_3[P_1 \| P_2]$; then $Q_1 \xredd{\alpha'_1} Q$ and $Q_2 \xredd{\alpha} Q$.
            Hence, $P \xredd{\alpha} \xredd{\alpha'_1} Q$ and $P \xredd{\alpha'_1} \xredd{\alpha} Q$.

			All other cases proceed similarly.
            Note that when one of the reductions (say, $\alpha$) has a selection/branching label, such a reduction  would discard some branches and thus possible behaviors.
            This is not an issue for establishing the thesis, because typability guarantees that the sub-process that enables the $\alpha'$-labeled reduction does not appear under the to-be-discarded branches. Hence, the execution of $\alpha$ will not jeopardize the $\alpha'$-labeled reduction.

        \item $n = n'+1$ for $n' \geq 1$.
            By the IH, ${P \xredd{\alpha} \xredd{\alpha'_1} \ldots \xredd{\alpha'_{n'}} Q'_1}$ and $P \xredd{\alpha'_1} \ldots \xredd{\alpha'_{n'}} P' \xredd{\alpha} Q'_1$.
            By assumption, $P' \xredd{\alpha'_n} Q'_2$.
            Since $P$ is well-typed, by \Cref{t:subjectRed} (Subject Reduction), $P'$ is well-typed.
            Since $P' \xredd{\alpha} Q'_1$ and $P' \xredd{\alpha'_n} Q'_2$, we can follow the same argumentation as in the base case to show that $Q'_1 \xredd{\alpha'_n} Q$ and $Q'_2 \xredd{\alpha} Q$.
            Hence, $P \xredd{\alpha} \xredd{\alpha'_1} \ldots \xredd{\alpha'_{n'}} Q'_1 \xredd{\alpha'_n} Q$ and $P \xredd{\alpha'_1} \ldots \xredd{\alpha'_{n'}} P' \xredd{\alpha'_n} Q'_2 \xredd{\alpha} Q$.
            \qedhere
    \end{itemize}
\end{proof}

To understand the proof of soundness and the r\^{o}le of independent reductions therein, consider the following example.
We first introduce some notation which we also use in the proof of soundness: given an ordered sequence of reduction labels $A = (\alpha_1,\ldots,\alpha_k)$, we write $P \xredd{A} Q$ to denote $P \xredd{\alpha_1} \ldots \xredd{\alpha_k} Q$.

\begin{example}\label{ex:presound}
    The recursive global type $G = \mu X.a \mto b:1\<S\>.c \mto d:1\<S\>.X$
    features two independent exchanges.
    Consider a network $\mcl{N} \in \sys(G)$.
    Let $A$ denote the sequence of labeled reductions necessary to complete the exchange in $G$ between $a$ and $b$, and $C$ similarly for the exchange between $c$ and $d$.
    Assuming that communication with routers is not blocked by implementation processes, we have $\mcl{N}^\complet \xredd{A}$ and $\mcl{N}^\complet \xredd{C}$, because the exchanges are independent.

    Now, suppose that from $\mcl{N}^\complet$ we observe $m$ times the sequence of $C$ reductions: $\mcl{N}^\complet \underbrace{\xredd{C} \ldots \xredd{C}}_{\text{$m$ times}} N'$. We see that $N'$ is not a network of a global type reachable from $G$: there are still $m$ exchanges between $a$~and~$b$ pending.
    Still, we can exhibit a series of transitions from $G$ that includes $m$ times the exchange between $c$ and $d$: $$G \underbrace{\xrightarrow{a \> b:1\<S\>} \xrightarrow{c \> d:1\<S\>} \ldots \xrightarrow{a \> b:1\<S\>} \xrightarrow{c \> d:1\<S\>}}_{\text{$m$ times}} G$$
    Following these transitions, we can exhibit a corresponding sequence of reductions from $\mcl{N}^\complet$ that includes $m$ times the sequence $C$ and ends up in another network $\mcl{M} \in \sys(G)$: $$\mcl{N}^\complet \underbrace{\xredd{A} \xredd{C} \ldots \xredd{A} \xredd{C}}_{\text{$m$ times}} \mcl{M}^\complet$$
    At this point it is crucial that from $\mcl{N}^\complet$ the sequences of reductions $A$ and $C$ can be performed independently.
    Hence, by \Cref{p:indep}, $\mcl{N}^\complet \underbrace{\xredd{C} \ldots \xredd{C}}_{\text{$m$ times}} N' \underbrace{\xredd{A} \ldots \xredd{A}}_{\text{$m$ times}} \mcl{M}^\complet$.
\end{example}

In the proof of soundness, whenever we assure that certain reductions are independent, we refer to those assurances as \emph{independence facts} (\textbf{IFacts}).
Also, in the proof we consider labeled reductions, and distinguish between \emph{protocol} and \emph{implementation} reductions: the former are reductions with labels that indicate any interaction with a router, and the latter are any other reductions (which, by the definition of networks, can only occur within participant implementation processes).
By a slight abuse of notation, given ordered sequences of reduction labels $A$ and $A'$, we write $A' \subseteq A$ to denote that $A'$ is a subsequence of $A$, where the labels in $A'$ appear in the same order in $A$ but not necessarily in sequence (and similarly for $A' \subset A$).
With $A \setminus A'$ we denote the sequence obtained from $A$ by removing all the labels in $A'$, and $A \cup A'$ denotes the sequence obtained by adding the labels from $A'$ to the end of $A$.

\begin{theorem}[Operational Correspondence: Soundness]\label{t:soundness}
    Suppose given
    a relative well-formed global type $G$ and
    a completable $\mcl{N} \in \sys(G)$.
    For every ordered sequence of $k \geq 0$ reduction labels $A = (\alpha_1, \ldots, \alpha_k)$ and $N'$ such that $\mcl{N}^\complet \xredd{A} N'$, there exist $G'$ and
    $\beta_1,\ldots,\beta_n$ (with $n \geq 0$) such that (i)~$G \xrightarrow{\beta_1} \ldots \xrightarrow{\beta_n} G'$
    and (ii)~$N' \redd^\ast \mcl{M}^\complet$, with $\mcl{M} \in \sys(G')$.
\end{theorem}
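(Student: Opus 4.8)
The plan is to induct on the length $k$ of the observed sequence $A$, at each step classifying the reductions as \emph{protocol} reductions (those whose label in \Cref{d:procLred} mentions a router endpoint $\ci{\mu}{p}$ or $\crt{p}{q}$) or \emph{implementation} reductions (all others, which by \Cref{d:networks} occur strictly inside the participant implementations). The base case $k=0$ is immediate: take $n=0$, $G'=G$ and $\mcl{M}=\mcl{N}$, so that $N'=\mcl{N}^\complet=\mcl{M}^\complet$. For the inductive step I would write $A = A_0 \cdot (\alpha)$ with $\mcl{N}^\complet \xredd{A_0} N_k \xredd{\alpha} N'$ and invoke the IH on $A_0$ to obtain transitions $G \xrightarrow{\beta_1} \cdots \xrightarrow{\beta_n} \hat{G}$, a completable $\hat{\mcl{M}} \in \sys(\hat{G})$, and a reduction sequence $N_k \redd^\ast \hat{\mcl{M}}^\complet$.

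If $\alpha$ is an implementation reduction it touches only endpoints internal to a single implementation, hence (\Cref{p:indep}) is independent of every protocol reduction along $N_k \redd^\ast \hat{\mcl{M}}^\complet$; commuting $\alpha$ to the front and then through that sequence gives $N' \redd^\ast \mcl{M}^\complet$ for $\mcl{M}$ obtained from $\hat{\mcl{M}}$ by performing $\alpha$ inside its implementation, which remains in $\sys(\hat{G})$ by Subject Reduction (\Cref{t:subjectRed}). Thus the substantive case is when $\alpha$ is a protocol reduction.

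For the protocol case the key idea is to reorganise the observed reductions into complete \emph{exchange blocks}: each global exchange is realised by the fixed sequence of labelled reductions of \Cref{t:completeness}, and the first reduction of such a block is always the sender-side selection $\ci{s}{\mu} \rangle \ci{\mu}{s}{:}\ell$ dictated by \cref{line:rtrSend} of \Cref{alg:router}. First I would locate, for each initiating reduction appearing in $A$, the corresponding global exchange: \Cref{l:delayedPrefix} shows that an enabled selection by a participant $c$ pinpoints a reachable exchange $c \mto d\{\dots\}$ modulo a prefix of transitions on exchanges \emph{independent} of $c$, while \Cref{l:routerNodepEqual} certifies that the routers of participants not depending on those intervening exchanges are unchanged, so the intervening exchanges genuinely commute with $c$'s (recursion is handled by unfolding, exactly as noted at the end of \Cref{t:completeness}). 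Completing each partially observed block by Fairness (\Cref{t:fairness}) and the router typing of \Cref{t:routerTypes}, I obtain full blocks $B_1,\dots,B_m$, a valid transition sequence $G \to^\ast G'$ extending the one from the IH, and a completable $\mcl{M} \in \sys(G')$ with $\mcl{N}^\complet \xredd{B_1}\cdots\xredd{B_m} \mcl{M}^\complet$, where the blocks jointly contain every protocol reduction of $A$ as a subsequence.

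It then remains to discharge the thesis by independence bookkeeping. Since every label of $A$ occurs, as a subsequence, among $B_1,\dots,B_m$ together with the deferred implementation reductions, repeated use of \Cref{p:indep} --- tracking the \textbf{IFacts} that distinct blocks act on disjoint router channels, exactly in the spirit of \Cref{ex:presound} --- permits permuting $A$ to the front of this canonical sequence; the residual reductions $(B_1 \cup \cdots \cup B_m) \setminus A$ then witness $N' \redd^\ast \mcl{M}^\complet$, establishing (i) and (ii). I expect the main obstacle to be precisely this reorganisation: because the network may have begun several independent exchanges in an arbitrary interleaving, and, under recursion, arbitrarily many unfoldings of the same exchange (as in \Cref{ex:presound}), the delicate part is to prove that greedy block-completion always yields a \emph{valid} global transition sequence and that the accumulated independence facts never force two blocks to contend for the same endpoint, so that every commutation demanded by \Cref{p:indep} is actually licensed.
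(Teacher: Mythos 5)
Your proposal is correct and follows essentially the same route as the paper's own proof: both decompose the observed reductions into per-exchange blocks initiated by a sender-side selection, locate the corresponding global transitions via \Cref{l:delayedPrefix} and \Cref{l:routerNodepEqual}, complete the partially observed blocks by replaying the completeness argument (\Cref{t:completeness}), and then discharge $N' \redd^\ast \mcl{M}^\complet$ by commuting the observed reductions through the canonical sequence using \Cref{p:indep} together with accumulated independence facts. The only difference is organisational---you peel the last label of $A$ in the outer induction, whereas the paper first extracts the block corresponding to the earliest enabled global exchange and reserves last-label peeling for the final transfer step---and this does not change the substance of the argument.
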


\begin{proof}
    By induction on the structure  of $G$;  we detail the interesting cases of labeled exchanges with implicitly unfolded recursive definitions.
    We exhibit transitions $G \xrightarrow{\beta_1} \ldots \xrightarrow{\beta_n} G'$ and establish a corresponding sequence of reductions $\mcl{N}^\complet \redd^\ast \mcl{M}^\complet$ that includes all the labels in $A$, with $\mcl{M} \in \sys(G')$.
    During this step, we assure the independence between the observed reductions $A$ and the reductions we establish (\textbf{IFacts}).
    Using these independence assurances, we show that also $N' \redd^\ast \mcl{M}^\complet$.

    We apply induction on the size of $A$ (\ih1) to show the existence of (i) $G'$ and $\beta_1,\ldots,\beta_n$ such that (i) $G \xrightarrow{\beta_1} \ldots \xrightarrow{\beta_n} G'$ and (ii) $\mcl{N}^\complet \redd^\ast \mcl{M}^\complet$ including all reductions in $A$, with $\mcl{M} \in \sys(G')$:
    \begin{itemize}
        \item
            Base case: then $A$ is empty, and the thesis holds trivially, with $G' = G$ and $\mcl{M} = \mcl{N}$.

        \item
            Inductive case: then $A$ is non-empty.

            By the definition of networks (\Cref{d:networks}), we know that reductions starting at $\mcl{N}^\complet$ are protocol reductions related to an independent exchange in $G$, or implementation reductions.
            Every protocol reduction in $A$ is related to some exchange in $G$, and so we can group sequences of protocol reductions related to the same exchange.
            By construction, every such sequence of protocol reductions $A_\ast \subseteq A$ starts with an implementation sending a label to a router, i.e., with a label of the form $\alpha_\ast = \ci{c}{\mu} \> \ci{\mu}{c}: \ell$.
            For each such $\alpha_\ast$, the router in $\mcl{N}$ of the sender $c$ has been synthesized from $G$ in a finite number of inductive steps.
            We take the $\alpha_\ast$ that originates from the router synthesized in the least number of steps.
            This gives us the $A_\ast$ starting with $\alpha_\ast$ that relates to an exchange in $G$ which is not prefixed by exchanges relating to any of the other $A'_\ast \subseteq A \setminus A_\ast$.

            Networks are well-typed by definition.
            None of the reductions in $A_\ast$ are blocked by protocol reductions appearing earlier in $A$ (\textbf{IFact~1}): they originate from exchanges in $G$ appearing after the exchange related to $A_\ast$, and the priorities in their related types are thus higher than those in the types related to $A_\ast$, i.e., blocking by input or branching would contradict the well-typedness of $\mcl{N}^\complet$.
            However, it may be that some implementation reductions $A_+ \subseteq A \setminus A_\ast$ do block the reductions in $A_\ast$; they are also not blocked by any prior protocol reductions due to priorities (\textbf{IFact~2}).
            Hence, from $\mcl{N}^\complet$ we can perform the implementation reductions in $A_+$.
            By Subject Reduction (\Cref{t:subjectRed}), this results in another completed network $\mcl{N}_0^\complet$ of $G$.
            This establishes the reduction sequence $\mcl{N}^\complet \xredd{A_+} \mcl{N}_0^\complet \xredd{A_\ast}$.

            By \Cref{l:delayedPrefix}, there are $m \geq 0$ transitions $G \xrightarrow{\beta_1} G_1 \ldots \xrightarrow{\beta_m} G_m$ where the initial prefix of $G_m$ corresponds to the labeled choice by the implementation of $c$: $G_m = c \mto d \{i\<S_i\>.G'_i\}_{i \in I}$, with $\ell \in I$.
            Additionally, $G_m$ contains exchanges related to every sequence of protocol reductions in $A \setminus A_+ \setminus A_\ast$: all these sequences start with a selection from implementation to router, and thus the involved participants do not depend on any of the exchanges between $G$ and $G_m$, such that \Cref{l:routerNodepEqual} applies.
            To establish a sequence of reductions from $\mcl{N}_0^\complet$ to the completion of a network $\mcl{N}_m \in \sys(G_m)$, we
            apply induction on $m$ (\ih2):
            \begin{itemize}
                \item
                    The base case where $m = 0$ is trivial, with $G_m = G$ and thus $\mcl{N}_m^\complet = \mcl{N}_0^\complet$.

                \item
                    In the inductive case, following the same approach as in the proof of completeness (\Cref{t:completeness}), we reduce $\mcl{N}_0^\complet \redd^\ast \mcl{N}_1^\complet$ such that $\mcl{N}_1 \in \sys(G_1)$.
                    Then, by \ih2, $\mcl{N}_1^\complet \redd^\ast \mcl{N}_m^\complet$ where $\mcl{N}_m \in \sys(G_m)$.
                    Note that these reductions may require implementation reductions to unblock protocol reductions, and these implementation reductions may appear in $A$.
                    None of the reductions from $\mcl{N}_0^\complet$ to $\mcl{N}_m^\complet$ can be blocked by any of the other protocol reductions in $A$, following again from priorities in types; hence, the leftover reductions in $A$ are independent from these reductions (\textbf{IFact~3}).
                    Additionally, the sequence of protocol reductions $A_\ast$ was already enabled from $\mcl{N}_0^\complet$, so those reductions are also independent (\textbf{IFact~4}).
            \end{itemize}

            We know that $\mcl{N}_m^\complet \xredd{\ci{c}{\mu} \> \ci{\mu}{c}: \ell}$ and $G_m \xrightarrow{c \> d:\ell\<S_\ell\>} G'_\ell$.
            From $\mcl{N}_m^\complet$, we again follow the proof of completeness to show that $\mcl{N}_m^\complet  \redd^\ast \mcl{M}_\ell^\complet$, where $\mcl{M}_\ell \in \sys(G'_\ell)$.
            Given the definition of routers, it must be that all the reductions in $A_\ast$ appear in this sequence of reductions.
            Let $A' \subset A$ denote the leftover reductions from $A$ (i.e., $A$ except all reductions that occurred between $\mcl{N}^\complet$ and $\mcl{M}_\ell^\complet$, including $A_\ast$ and $A_+$).
            By \textbf{IFacts~1--4}, $\mcl{M}_\ell^\complet \xredd{A'} M'$.
            Then by \ih1, there exist $G'$ and $\beta_{m+2},\ldots,\beta_n$ (with $n \geq m+1$) such that (i)~$G'_\ell \xrightarrow{\beta_{m+2}} \ldots \xrightarrow{\beta_n} G'$ and (ii) $\mcl{M}_\ell^\complet \redd^\ast \mcl{M}^\complet$ including all reductions in~$A'$, with $\mcl{M} \in \sys(G')$.
            Let $\beta_{m+1} = c \> d:\ell\<S_\ell\>$.
            We have shown the existence of $G'$ and $\beta_1,\ldots,\beta_n$ such that (i) $G \xrightarrow{\beta_1} \ldots \xrightarrow{\beta_m} G_m \xrightarrow{\beta_{m+1}} G'_\ell \xrightarrow{\beta_{m+2}} \ldots \xrightarrow{\beta_n} G'$ and (ii) $\mcl{N}^\complet \redd^\ast \mcl{N}_m^\complet \redd^\ast \mcl{M}_\ell^\complet \redd^\ast \mcl{M}^\complet$ including all reductions in $A$, with $\mcl{M} \in \sys(G')$.
    \end{itemize}

    We are left to show that from $\mcl{N}^\complet \redd^\ast \mcl{M}^\complet$ and $\mcl{N}^\complet \xredd{A} N'$, we can conclude that $N' \redd^\ast \mcl{M}^\complet$.
    We apply induction on the size of $A$ (\ih3), using  \textbf{IFacts~1--4} and \Cref{p:indep}:
    \begin{itemize}
        \item
            Base case: Then $A$ is empty, there is nothing to do, and the thesis is proven.

        \item
            Inductive case: Then $A = A' \cup (\alpha')$.
            By \ih3, $\mcl{N}^\complet \xredd{A'} N'' \redd^\ast N''' \xredd{\alpha'} \redd^\ast \mcl{M}^\complet$.
            Moreover, by assumption, $N'' \xredd{\alpha'} N'$.
            \textbf{IFacts~1--4} show that the $\alpha'$-labeled reduction is independent from the reductions between $N''$ and $N'''$.
            Hence, by \Cref{p:indep}, we have $\mcl{N}^\complet \xredd{A'} N'' \xredd{\alpha'} N' \redd^\ast \mcl{M}^\complet$.
            That is, $\mcl{N}^\complet \xredd{A} N' \redd^\ast \mcl{M}^\complet$, proving the thesis.
            \qedhere
    \end{itemize}
\end{proof}

In the light of \Cref{t:soundness}, let us revisit \Cref{ex:presound}:
\begin{example}[Revisiting \Cref{ex:presound}]
    Recall the global type $G = \mu X.a \mto b:1\<S\>.c \mto d:1\<S\>.X$ from \Cref{ex:presound}, with two independent exchanges.
    We take some $\mcl{N} \in \sys(G)$ such that $\mcl{N}^\complet \underbrace{\xredd{C} \ldots \xredd{C}}_{\text{$m$ times}} N'$, where $C$ denotes the sequence of reduction labels corresponding to the  exchange between $c$ and $d$.
    By \Cref{t:soundness}, there indeed are $G'$ and $\beta_1,\ldots,\beta_n$ such that $G \xrightarrow{\beta_1} \ldots \xrightarrow{\beta_n} G'$ and $N' \redd^\ast \mcl{M}^\complet$, with $\mcl{M} \in \sys(G')$.
    To be precise, following \Cref{t:soundness}, indeed
    $$G \underbrace{\xrightarrow{a \> b:1\<S\>} \xrightarrow{c \> d:1\<S\>} \ldots \xrightarrow{a \> b:1\<S\>} \xrightarrow{c \> d:1\<S\>}}_{\text{$m$ times}} G \quad \text{ and } \quad  \mcl{N}^\complet \underbrace{\xredd{C} \ldots \xredd{C}}_{\text{$m$ times}} N' \underbrace{\xredd{A} \ldots \xredd{A}}_{\text{$m$ times}} \mcl{M}^\complet$$
    where $A$ is the sequence of reduction labels corresponding to the  exchange between $a$ and $b$ and ${\mcl{M} \in \sys(G)}$.
\end{example}
\revend

\subsection{Routers Strictly Generalize Centralized Orchestrators}
\label{ss:mediums}

Unlike our decentralized analysis, previous analyses of global types using binary session types rely on centralized orchestrators (called mediums~\cite{conf/forte/CairesP16} or arbiters~\cite{conf/concur/CarboneLMSW16}).
Here, we show that our approach strictly generalizes these centralized approaches.
Readers interested in our decentralized approach in action may safely skip this section and go directly to \Cref{s:routersInAction}.

We introduce an algorithm that synthesizes an orchestrator---a single process that orchestrates the interactions between a protocol's participants  (\secref{ss:mediumSynth}).
We show that the composition of this orchestrator with a context of participant implementations is behaviorally equivalent to the specific case in which routed implementations are organized in a \emph{centralized composition} (\Cref{t:mediumBisim} in \secref{ss:bisim}).

\subsubsection{Synthesis of Orchestrators}
\label{ss:mediumSynth}

\begin{algorithm}[t]
    \DontPrintSemicolon
    \SetAlgoNoEnd
    \SetInd{.2em}{.7em}
    \SetSideCommentRight
    \Def{$\mdm{\tilde{q}}{G}$}{
        \Switch{$G$}{
            \uCase{$s \mto r \{i\<S_i\> \sdot G_i\}_{i \in I}$}{ \label{line:mdmComm}
                $\deps := \{q \in \tilde{q} \mid \hdep(q,s,G) \vee \hdep(q,r,G)\}$ \; \label{line:mdmDep}
                \KwRet \label{line:mdmCommRet}
                $\ci{\mu}{s} \triangleright \{i{:}~ \ol{\ci{\mu}{r}} \triangleleft i \cdot \underline{{(\ol{\ci{\mu}{q}} \triangleleft i)}_{q \in \deps}} \cdot \ci{\mu}{s}(v) \sdot \ol{\ci{\mu}{r}}[w] \cdot (v \fwd w \| \mdm{\tilde{q}}{G_i}) \}_{i \in I}$
            }

            \uCase{$\mu X \sdot G'$}{ \label{line:mdmRecDef}
                $\tilde{q}' := \{q \in \tilde{q} \mid G \onto^0 q \neq \bullet\}$ \; \label{line:mdmRecComp}
                \lIf{$\tilde{q}' \neq \emptyset$}{ \label{line:mdmRecRet}
                    \KwRet $\mu X({(\ci{\mu}{q})}_{q \in \tilde{q}'}) \sdot \mdm{\tilde{q}'}{G'}$
                }
                \lElse{ \label{line:mdmRecInact}
                    \KwRet $\0$
                }
            }

            \lCase{$X$}{ \label{line:mdmRecCall}
                \KwRet $X\call{{(\ci{\mu}{q})}_{q \in \tilde{q}}}$
            }

            \lCase{$\gskip \sdot G'$}{ \label{line:mdmSkip}
                \KwRet $\mdm{\tilde{q}}{G'}$
            }

            \lCase{$\gend$}{\label{line:mdmEnd}
                \KwRet $\0$
            }
        }
    }
    \caption{Synthesis of Orchestrator Processes (\defref{d:medium}).}
    \label{alg:medium}
\end{algorithm}

We define the synthesis of an orchestrator from a global type.
The orchestrator of $G$ will have a channel endpoint $\ci{\mu}{p_i}$ for connecting to the process implementation of every $p_i \in \part(G)$.

\begin{definition}[Orchestrator]\label{d:medium}
    Given a global type $G$ and participants $\tilde{q}$, \Cref{alg:medium} defines the synthesis of an \emph{orchestrator process}, denoted `$\mkern1mu\mdm{\tilde{q}}{G}\mkern-3mu$', that orchestrates interactions according to $G$.
    \lipicsEnd
\end{definition}

\Cref{alg:medium} follows a similar structure as the router synthesis algorithm (\Cref{alg:router}).
The input parameter `$\tilde{q}$' keeps track of active participants, making sure recursions are well-defined; it should be initialized as `$\part(G)$'.

We briefly discuss how the orchestrator process is generated.
The interesting case is an exchange `$p \mto q \{i\<U_i\> \sdot G_i\}_{i \in I}$' (\cref{line:mdmComm}), where the algorithm combines the several cases of the router's algorithm (that depend on the involvement of the router's participant).
First, the sets of participants `$\deps$' that depend on the sender and on the recipient are computed (\cref{line:mdmDep}) using the auxiliary predicate `$\hdep$' (cf.\ \defref{d:hdep}).
Then, the algorithm returns a process (\cref{line:mdmCommRet}) that receives a label $i \in I$ over $\ci{\mu}{s}$; forwards it over $\ci{\mu}{r}$ and over $\ci{\mu}{q}$ for all $q \in \deps$; receives a channel over $\ci{\mu}{s}$; forwards it over $\ci{\mu}{r}$; and continues as `$\mdm{\tilde{q}}{G_i}$'.

The synthesis of a recursive definition `$\mu X \sdot G'$' (\cref{line:mdmRecDef}) requires care, as the set of active participants $\tilde{q}$ may change.
In order to decide which $q \in \tilde{q}$ are active in $G'$, the algorithm computes the local projection of $G$ onto each $q \in \tilde{q}$ to determine the orchestrator's future behavior on $\ci{\mu}{q}$, creating a new set $\tilde{q}'$ with those $q \in \tilde{q}$ for which the projection is different from `$\bullet$' (\cref{line:mdmRecComp}).
Then, the algorithm returns a recursive process with as context the channel endpoints $\ci{\mu}{q}$ for $q \in \tilde{q}'$, with `$\mdm{\tilde{q}'}{G'}$' as the body.

The synthesis of a recursive call `$X$' (\cref{line:mdmRecCall}) yields a recursive call with as context the channels $\ci{\mu}{q}$ for $q \in \tilde{q}$.
Finally, for `$\gskip \sdot G'$' (\cref{line:mdmSkip}) the algorithm returns the orchestrator for $G'$, and for `$\bullet$' (\cref{line:mdmEnd}) the algorithm returns `$\0$'.

There is a minor difference between the orchestrators synthesized by \Cref{alg:medium} and the mediums defined by Caires and P\'{e}rez~\cite{conf/forte/CairesP16}.
The difference is in the underlined portion in \cref{line:mdmCommRet}, which denotes explicit messages (obtained via dependency detection) needed to deal with non-local choices.
The mediums by Caires and P\'{e}rez do not include such communications, as their typability is based on local types, which rely on a merge operation at projection time.
The explicit actions in \cref{line:mdmCommRet} make the orchestrator compatible with participant implementations that connect with routers.
Aside from these actions, our concept of orchestrator is essentially the same as that of the mediums by Caires and P\'{e}rez.

Crucially, orchestrators can be typed using local projection (cf.\ \defref{d:locproj}) similar to the typing of routers using relative projection (cf.\ \Cref{t:routerTypes}).
This result follows by construction:

\begin{theorem}\label{t:mediumTypes}
    Given a  closed,  relative well-formed global type $G$,
    \begin{align*}
        \mdm{\part(G)}{G} \vdash \emptyset; {(\ci{\mu}{p}{:}~ \dual{(G \onto^0 p)})}_{p \in \part(G)}.
    \end{align*}
\end{theorem}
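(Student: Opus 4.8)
The plan is to prove a context-based generalization of the statement, in the exact spirit of \Cref{t:routerTypesGen}, and then recover \Cref{t:mediumTypes} as the instance where the global context is the empty hole $C = []$. Concretely, for any $G_s \gsub{C} G$ and the active participants $\tilde{q}$ at the hole of $C$, I would show that $\mdm{\tilde{q}}{G_s}$ is typable with a recursive context collecting one assignment per bound variable in $\ctxbind{C}$, and with each endpoint $\ci{\mu}{q}$ (for $q \in \tilde{q}$) assigned the deep unfolding (\Cref{d:deepUnfold}) of $\dual{G_s \onto^{\pri_C} q}$, where $\pri_C = \ctxpri{}{C}$. The proof goes by induction on the structure of $G_s$, following the five cases of \Cref{alg:medium}. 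The crucial simplification over the router proof is that the orchestrator merges all participant perspectives into a single process: it forwards one chosen label to every dependent endpoint, so it can never reach an inconsistent forwarding state. Consequently no \emph{relative} projections and, in particular, no alarm processes (\Cref{d:error}) are needed; every channel is typed directly by a dual \emph{local} projection.

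The key case is the exchange $G_s = s \mto r \{i\<S_i\> \sdot G_i\}_{i \in I}$ (\cref{line:mdmComm}). I would build the typing of the process on \cref{line:mdmCommRet} bottom-up from the typing of $\mdm{\tilde{q}}{G_i}$ (supplied by the induction hypothesis at starting priority $\pri_C+4$) together with the forwarder $v \fwd w$. Reading the derivation from the innermost prefix outward, I apply $\tensor^\star$ for the output on $\ci{\mu}{r}$ (priority $\pri_C+3$), then $\parr$ for the input on $\ci{\mu}{s}$ (priority $\pri_C+1$), then $\oplus^\star$ for each dependency selection on $\ci{\mu}{q}$ (for $q \in \deps$), then $\oplus^\star$ for the selection on $\ci{\mu}{r}$ (priority $\pri_C+2$), and finally $\&$ for the branching on $\ci{\mu}{s}$ (priority $\pri_C$). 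These priorities are exactly the four sub-steps recorded by local projection in \Cref{f:locproj}, so the side conditions of $\parr$ and $\&$ hold: when the input on $\ci{\mu}{s}$ is typed, the dependency endpoints still carry continuation priority $\geq \pri_C+4$ and $\ci{\mu}{r}$ carries priority $\pri_C+3$, all above $\pri_C+1$; when the branching is typed, $\ci{\mu}{r}$ and the dependency endpoints carry priorities $\pri_C+2$ or $\pri_C+3$, all above $\pri_C$. The one genuinely new bookkeeping point is to observe that the set $\deps$ computed on \cref{line:mdmDep} via $\hdep(q,s,G) \vee \hdep(q,r,G)$ coincides exactly with the set of non-involved participants $q$ for which local projection introduces a branching connective (at $\pri_C+2$ when $\hdep(q,s,G)$, else at $\pri_C+3$ when $\hdep(q,r,G)$) rather than a $\gskip$; both are governed by the same predicate $\hdep$, so the selections the orchestrator emits match the dual types $\dual{G_s \onto^{\pri_C} q}$ precisely. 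For the remaining $q \in \tilde{q} \setminus \deps$, local projection yields a $\gskip$ and hence $\dual{G_s \onto^{\pri_C} q} = \dual{G_i \onto^{\pri_C+4} q}$ for any branch, which the induction hypothesis already accounts for.

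For recursion I would proceed exactly as in \Cref{t:routerTypesGen}: track the bound variables in order via $\ctxbind{C}$ (\Cref{d:binders}), use \Cref{d:deepUnfold} and \Cref{p:deepUnfold} to rewrite the nested unfoldings, apply \scc{Rec} at a recursive definition with a common lifter $t$ exceeding the highest priority in all channel types, and \scc{Var} at a recursive call. The only adaptation is that the orchestrator determines its active participants through \emph{local} rather than relative projection (\cref{line:mdmRecComp} tests $G \onto^0 q \neq \bullet$), so I need the analogues of \Cref{l:binderCtxSubRec,l:binderParts} phrased for local projection, guaranteeing that the active set is consistent between a recursive definition, its body, and its calls. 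When $q$ is inactive its local projection is $\bullet$, so the endpoint $\ci{\mu}{q}$ is dropped with rule $\bullet$; when $\tilde{q}' = \emptyset$ the orchestrator is $\0$ (\cref{line:mdmRecInact}) and all endpoints are closed. Finally, the corollary $C = []$ gives $\pri_C = 0$, $\ctxbind{C} = ()$, an empty recursive context, and $\tilde{q} = \part(G)$, so the deep unfoldings collapse to the stated types $\dual{(G \onto^0 p)}$.

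The main obstacle I expect is not the exchange case---which is essentially a merged, alarm-free version of \Cref{f:tder}---but the recursion bookkeeping: establishing the local-projection versions of the active-participant lemmas and verifying that at every recursive definition a \emph{single} common lifter simultaneously dominates the highest priority of all the deeply unfolded endpoint types, so that \scc{Rec} applies uniformly. A secondary subtlety is that the orchestrator collapses the separate $s$- and $r$-dependency branches that routers emit one at a time; I must confirm that folding them into the single batch of selections on \cref{line:mdmCommRet} still matches the priorities $\pri_C+2$ and $\pri_C+3$ that local projection assigns independently to output- and input-dependencies, which holds because selection is non-blocking and thus imposes no ordering side condition among the dependency emissions.
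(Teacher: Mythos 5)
Your proposal is correct and follows essentially the same route as the paper: a context-based generalization in the style of \Cref{t:routerTypesGen} (recovered at $C=[]$), induction on the structure of $G_s$, the same bottom-up derivation for the exchange case with $\deps$ split by output- versus input-dependency at priorities $\pri_C+2$ and $\pri_C+3$, no alarm processes, and the same deep-unfolding/\scc{Rec}/\scc{Var} bookkeeping for recursion. The only cosmetic difference is that the paper keeps the relative-projection-based active-participant functions ($\activ{C}{G}$, $\recactiv{X}{G}$) and identifies them with the local-projection test of \cref{line:mdmRecComp}, rather than restating the lemmas for local projection as you propose.
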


\begin{proof}
    We prove a more general statement.
    Suppose given a closed, relative well-formed global type $G$.
    Also, suppose given a global type $G_s \gsub{C} G$.
    Consider:
    \begin{itemize}
        \item
            the participants that are active in $G_s$:
            $\tilde{q} = \{q \in \part(G) \mid \exists p \in \part(G).~ (p,q) \in \activ{C}{G}\}$,

        \item
            the absolute priority of $G_s$:
            $\pri_C = \ctxpri{}{C}$,

        \item
            the sequence of bound recursion variables of $G_s$:
            $\widetilde{X_C} = \ctxbind{C}$,

        \item
            for every $X \in \widetilde{X_C}$:
            \begin{itemize}
                \item
                    the body of the recursive definition on $X$ in $G$:
                    $G_X = \recdef{X}{G}$,

                \item
                    the participants that are active in $G_X$:
                    $\tilde{q}_X = \{q \in \part(G) \mid \exists p \in \part(G).~ (p,q) \in \recactiv{X}{G}\}$,

                \item
                    the absolute priority of $G_X$:
                    $\pri_X = \recpri{X}{G}$,

                \item
                    the sequence of bound recursion variables of $G_X$ excluding $X$:
                    $\widetilde{Y_X} = \subbind{\mu X \sdot G_X}{G}$,

                \item
                    the type required for $\ci{\mu}{q}$ for a recursive call on $X$:
                    \begin{align*}
                        A_{X,q} = \deepunfold{\ol{G_X \onto^{\pri_X} q}}{{(Y,t_Y,\ol{G_Y \onto^{\pri_Y} q})}_{Y \in \widetilde{Y_X}}}
                    \end{align*}

                \item
                    the minimum lift for typing a recursive definition on $X$:
                    $t_X = \max_\pr \left( {(A_{X,q})}_{q \in \tilde{q}_X} \right) + 1$,
            \end{itemize}

        \item
            the type expected for $\ci{\mu}{q}$ for the orchestrator for $G_s$:
            \begin{align*}
                D_q = \deepunfold{\ol{G_s \onto^{\pri_C} q}}{{(X,t_X,\ol{G_X \onto^{\pri_X} q})}_{X \in \widetilde{X_C}}}.
            \end{align*}
    \end{itemize}
    Then, we have:
    \begin{align*}
        \mdm{\tilde{q}}{G_s} \vdash
        {\left( X{:}~ {\big(A_{X,q}\big)}_{q \in \tilde{q}_X} \right)}_{X \in \widetilde{X_C}};~
        {\left( \ci{\mu}{q}{:}~ D_q \right)}_{q \in \tilde{q}}
    \end{align*}
    Similar to how \Cref{t:routerTypes} follows from \Cref{t:routerTypesGen}, the thesis follows as a corollary from this more general statement (cf.\ the proof of \Cref{t:routerTypes} on \Cpageref{proof:routerTypes}).

    We apply induction on the structure of $G_s$, with six cases as in \Cref{alg:medium}.
    We only detail the cases of exchange and recursion.
    \begin{itemize}
        \item
            \emph{Exchange}: $G_s = s \mto r \{i \<S_i\> \sdot G_i \}_{i \in I}$ (\cref{line:mdmComm}).

            Following similar reasoning as in the case for exchange in the proof of \Cref{t:routerTypesGen}, we can omit the unfoldings on types, as well as the recursive context.

            Let $\deps_s := \{q \in \tilde{q} \mid \hdep(q,s,G_s)\}$ and $\deps_r := \{q \in \tilde{q} \setminus \deps_s \mid \hdep(q,r,G_s)\}$.
            Note that $\deps_s \cup \deps_r$ coincides with $\deps$ as defined on \cref{line:mdmDep} and that $s,r \notin \deps_s \cup \deps_r$.

            Let us  take stock of  the types we expect for each of the orchestrator's channels.
            \begin{flalign}
                & \text{For $\ci{\mu}{s}$ we expect}
                &
                \ol{G_s \onto^\pri s}
                &= \ol{{\oplus}^\pri \{i{:}~ \msgprop{S_i} \tensor^{\pri+1} (G_i \onto^{\pri+4} s) \}_{i \in I}}
                \nonumber
                \\
                &
                &
                &= \&^\pri \{i{:}~ \ol{\msgprop{S_i}} \parr^{\pri+1} \ol{(G_i \onto^{\pri+4} s)} \}_{i \in I}.
                \label{eq:mSType}
                \\
                & \text{For $\ci{\mu}{r}$ we expect}
                &
                \ol{G_s \onto^\pri r}
                &= \ol{{\&}^{\pri+2} \{i{:}~ \ol{\msgprop{S_i}} \parr^{\pri+3} (G_i \onto^{\pri+4} r) \}_{i \in I}}
                \nonumber
                \\
                &
                &
                &= {\oplus}^{\pri+2} \{i{:}~ \msgprop{S_i} \tensor^{\pri+3} \ol{(G_i \onto^{\pri+4} r)} \}_{i \in I}.
                \label{eq:mRType}
                \\
                & \text{For each $q \in \deps_s$,}
                \nonumber \\
                & \text{for $\ci{\mu}{q}$ we expect}
                &
                \ol{G_s \onto^\pri q}
                &= \ol{\&^{\pri+2} \{i{:}~ (G_i \onto^{\pri+4} q) \}_{i \in I}}
                & \hfill
                \nonumber \\
                &
                &
                &= {\oplus}^{\pri+2} \{i{:}~ \ol{(G_i \onto^{\pri+4} q)} \}_{i \in I}.
                \label{eq:mDepSType}
                \\
                & \text{For each $q \in \deps_r$,}
                \nonumber \\
                & \text{for $\ci{\mu}{q}$ we expect}
                &
                \ol{G_s \onto^\pri q}
                &= \ol{\&^{\pri+3} \{i{:}~ (G_i \onto^{\pri+4} q) \}_{i \in I}}
                \nonumber \\
                &
                &
                &= {\oplus}^{\pri+3} \{i{:}~ \ol{(G_i \onto^{\pri+4} q)} \}_{i \in I}.
                \label{eq:mDepRType}
                \\
                & \text{For each $q \in \tilde{q} \setminus \deps_s \setminus \deps_r \setminus \{s,r\}$,}
                \span
                \nonumber \\
                & \text{for $\ci{\mu}{q}$ we expect}
                &
                \ol{G_s \onto^\pri q}
                &= \ol{G_{i'} \onto^{\pri+4} q} ~\text{for any $i' \in I$}.
                \label{eq:mDepOtherType}
            \end{flalign}

            Let us now consider the process returned by \Cref{alg:router}, with each prefix marked with a number.
            \[
                \mdm{\tilde{q}}{G} =
                \underbrace{\ci{\mu}{s} \triangleright \{i{:}}_{1}~
                    \underbrace{\ol{\ci{\mu}{r}} \triangleleft i}_{2_i} \cdot
                    \underbrace{{(\ol{\ci{\mu}{q}} \triangleleft i)}_{q \in \deps}}_{3_i} \cdot
                    \underbrace{\ci{\mu}{s}(v)}_{4_i} \sdot
                    \underbrace{\ol{\ci{\mu}{r}}[w]}_{5_i} \cdot
                    (v \fwd w \| \mdm{\tilde{q}}{G_i})
                \}_{i \in I}
            \]

            For each $i' \in I$, let $C_{i'} := C[s \mto r (\{i\<S_i\> \sdot G_i\}_{i \in I \setminus \{i'\}} \cup \{i'\<S_{i'}\> \sdot []\})]$.
            Clearly, $G_{i'} \gsub{C_{i'}} G$.
            Also, because we are not adding recursion binders, the current value of $\tilde{q}$ is appropriate for the IH.
            With $C_{i'}$ and $\tilde{q}$, we apply the IH to obtain the typing of $\mdm{\tilde{q}}{G_{i'}}$, where priorities start at $\ctxpri{}{C_{i'}} = \ctxpri{}{C} + 4$ (cf.\ \defref{d:ctxpri}).
            Following these typings, \Cref{f:tderiii} gives the typing of $\mdm{\tilde{q}}{G_s}$, referring to parts of the process by the number marking its foremost prefix above.

            Clearly, the priorities in the derivation in \Cref{f:tderiii} meet all requirements.
            The order of the applications of $\oplus^\star$ for each $q \in \deps_s \cup \deps_r$ does not matter, since the selection actions are asynchronous.

            \begin{figure}[t!]
                \begin{mdframed}
                    {\small
                        \begin{prooftree}
                            \infAx{
                                $\forall i \in I.~ v \fwd w \vdash \begin{array}[t]{@{}l@{}}
                                    v{:}~ \ol{S_i},
                                    w{:}~ S_i
                                \end{array}$
                            }{\scc{Id}}
                            \infAss{
                                $\forall i \in I.~ \mdm{\tilde{q}}{G_i} \vdash \begin{array}[t]{@{}l@{}}
                                    {(\ci{\mu}{q}{:}~ \ol{G_i \onto^{\pri+4} q})}_{q \in \tilde{q}}
                                \end{array}$
                            }
                            \infBin{
                                $\forall i \in I.~ v \fwd w \| \mdm{\tilde{q}}{G_i} \vdash \begin{array}[t]{@{}l@{}}
                                    v{:}~ \ol{S_i},
                                    w{:}~ S_i,
                                    \\
                                    {(\ci{\mu}{q}{:}~ \ol{G_i \onto^{\pri+4}} q)}_{q \in \tilde{q}}
                                \end{array}$
                            }{\scc{Mix}}
                            \infUn{
                                $\forall i \in I.~ 5_i \vdash \begin{array}[t]{@{}l@{}}
                                    v{:}~ \ol{S_i},
                                    \\
                                    \ci{\mu}{r}{:}~ \msgprop{S_i} \tensor^{\pri+3} \ol{(G_i \onto^{\pri+4} r)},
                                    \\
                                    {(\ci{\mu}{q}{:}~ \ol{G_i \onto^{\pri+4} q})}_{q \in \tilde{q} \setminus \{r\}}
                                \end{array}$
                            }{$\tensor^\star$}
                            \infUn{
                                $\forall i \in I.~ 4_i \vdash \begin{array}[t]{@{}l@{}}
                                    \ci{\mu}{s}{:}~ \ol{\msgprop{S_i}} \parr^{\pri+1} \ol{(G_i \onto^{\pri+4} s)},
                                    \\
                                    \ci{\mu}{r}{:}~ \msgprop{S_i} \tensor^{\pri+3} \ol{(G_i \onto^{\pri+4} r)},
                                    \\
                                    {(\ci{\mu}{q}{:}~ \ol{G_i \onto^{\pri+4} q})}_{q \in \tilde{q} \setminus \{s,r\}}
                                \end{array}$
                            }{$\parr$}
                            \infUn{
                                $\forall i \in I.~ 3_i \vdash \begin{array}[t]{@{}l@{}}
                                    \ci{\mu}{s}{:}~ \ol{\msgprop{S_i}} \parr^{\pri+1} \ol{(G_i \onto^{\pri+4} s)},
                                    \\
                                    \ci{\mu}{r}{:}~ \msgprop{S_i} \tensor^{\pri+3} \ol{(G_i \onto^{\pri+4} r)},
                                    \\
                                    {(\ci{\mu}{q}{:}~ {\oplus}^{\pri+2} \{i{:}~ \ol{(G_i \onto^{\pri+4} q)} \}_{i \in I})}_{q \in \deps_s}
                                    \\
                                    {(\ci{\mu}{q}{:}~ {\oplus}^{\pri+3} \{i{:}~ \ol{(G_i \onto^{\pri+4} q)} \}_{i \in I})}_{q \in \deps_r}
                                    \\
                                    {(\ci{\mu}{q}{:}~ \ol{G_i \onto^{\pri+4} q})}_{q \in \tilde{q} \setminus \deps_s \setminus \deps_r \setminus \{s,r\}}
                                \end{array}$
                            }{$\forall q \in \deps_s \cup \deps_r.~ \oplus^\star$}
                            \infUn{
                                $\forall i \in I.~ 2_i \vdash \begin{array}[t]{@{}l@{}}
                                    \ci{\mu}{s}{:}~ \ol{\msgprop{S_i}} \parr^{\pri+1} \ol{(G_i \onto^{\pri+4} s)},
                                    \\
                                    \ci{\mu}{r}{:}~ {\oplus}^{\pri+2} \{i{:}~ \msgprop{S_i} \tensor^{\pri+3} \ol{(G_i \onto^{\pri+4} r)} \}_{i \in I},
                                    \\
                                    {(\ci{\mu}{q}{:}~ {\oplus}^{\pri+2} \{i{:}~ \ol{(G_i \onto^{\pri+4} q)} \}_{i \in I})}_{q \in \deps_s}
                                    \\
                                    {(\ci{\mu}{q}{:}~ {\oplus}^{\pri+3} \{i{:}~ \ol{(G_i \onto^{\pri+4} q)} \}_{i \in I})}_{q \in \deps_r}
                                    \\
                                    {(\ci{\mu}{q}{:}~ \ol{G_i \onto^{\pri+4} q})}_{q \in \tilde{q} \setminus \deps_s \setminus \deps_r \setminus \{s,r\}}
                                \end{array}$
                            }{$\oplus^\star$}
                            \infUn{
                                $\mdm{\tilde{q}}{G_s} = 1 \vdash \begin{array}[t]{@{}lr@{}}
                                    \ci{\mu}{s}{:}~ \&^\pri \{i{:}~ \ol{\msgprop{S_i}} \parr^{\pri+1} \ol{(G_i \onto^{\pri+4} s)} \}_{i \in I},
                                    & \text{(cf.\ \eqref{eq:mSType})}
                                    \\
                                    \ci{\mu}{r}{:}~ {\oplus}^{\pri+2} \{i{:}~ \msgprop{S_i} \tensor^{\pri+3} \ol{(G_i \onto^{\pri+4} r)} \}_{i \in I},
                                    & \text{(cf.\ \eqref{eq:mRType})}
                                    \\
                                    {(\ci{\mu}{q}{:}~ {\oplus}^{\pri+2} \{i{:}~ \ol{(G_i \onto^{\pri+4} q)} \}_{i \in I})}_{q \in \deps_s}
                                    & \text{(cf.\ \eqref{eq:mDepSType})}
                                    \\
                                    {(\ci{\mu}{q}{:}~ {\oplus}^{\pri+3} \{i{:}~ \ol{(G_i \onto^{\pri+4} q)} \}_{i \in I})}_{q \in \deps_r}
                                    & \text{(cf.\ \eqref{eq:mDepRType})}
                                    \\
                                    {(\ci{\mu}{q}{:}~ \ol{G_{i'} \onto^{\pri+4} q})}_{q \in \tilde{q} \setminus \deps_s \setminus \deps_r \setminus \{s,r\}}
                                    & \text{(cf.\ \eqref{eq:mDepOtherType})}
                                \end{array}$
                            }{$\&$}
                        \end{prooftree}
                    }
                \end{mdframed}
                \caption{Typing derivation used in the proof of \Cref{t:mediumTypes}.}
                \label{f:tderiii}
            \end{figure}

        \item
            \emph{Recursive definition}: $G_s = \mu Z \sdot G'$ (\cref{line:mdmRecDef}).
            Let
            \begin{align}
                \tilde{q}' := \{q \in \tilde{q} \mid G_s \onto^\pri q \neq \bullet\}
                \label{eq:mRecActive}
            \end{align}
            (as on \cref{line:mdmRecComp}).
            The analysis depends on whether $\tilde{q}' = \emptyset$ or not.
            \begin{itemize}
                \item
                    If $\tilde{q}' = \emptyset$ (\cref{line:mdmRecInact}),  let us take stock of  the types expected for each of the orchestrator's channels.
                    For now, we omit the substitutions in the types.
                    \begin{flalign}
                        & \text{For each $q \in \tilde{q}$, for $\ci{\mu}{q}$ we expect}
                        &
                        \dual{G_s \onto^{\pri_C} q}
                        &= \bullet.
                        & \hfill
                        \label{eq:mRecEmptyType}
                    \end{flalign}
                    Because all expected types are $\bullet$, the substitutions do not affect the types, so we can omit them altogether.

                    First we apply \scc{Empty}, giving us an arbitrary recursive context, thus the recursive context we need.
                    Then, we apply $\bullet$ for $\ci{\mu}{q}$ for each $q \in \tilde{q}$ (cf.\ \eqref{eq:mRecEmptyType}), and obtain the typing of $\mdm{\tilde{q}}{G_s}$ (omitting the recursive context):
                    \[
                        \mdm{\tilde{q}}{G_s} = \0 \vdash {(\ci{\mu}{q}{:}~ \bullet)}_{q \in \tilde{q}}.
                    \]

                \item
                    If $\tilde{q}' \neq \emptyset$ (\cref{line:mdmRecRet}),  let us take stock of the types expected for each of the orchestrator's channels.
                    Note that, because of the recursive definition on $Z$ in $G_s$, there cannot be another recursive definition in the context $C$ capturing the recursion variable $Z$.
                    Therefore, by \Cref{d:binders}, $Z \notin \widetilde{X_C}$.
                    \begin{flalign}
                        & \text{For each $q \in \tilde{q}'$,}
                        &
                        &
                        & \hfill
                        \nonumber \\
                        & \text{for $\ci{\mu}{q}$ we expect}
                        &
                        & \deepunfold{\dual{G_s \onto^{\pri_C} q}}{\ldots}
                        \nonumber \\
                        &
                        &
                        &= \deepunfold{\dual{\mu Z \sdot (G' \onto^{\pri_C} q)}}{\ldots}
                        \nonumber \\
                        &
                        &
                        &= \deepunfold{\mu Z \sdot \dual{G' \onto^{\pri_C} q}}{\ldots}
                        \nonumber \\
                        &
                        &
                        &= \mu Z \sdot \deepunfold{\dual{G' \onto^{\pri_C} q}}{{(X,t_X,\dual{G_X \onto^{\pri_X} q})}_{X \in \widetilde{X_C}}}.
                        \label{eq:mRecRecType}
                        \\
                        & \text{For each $q \in \tilde{q} \setminus \tilde{q}'$,}
                        \nonumber \\
                        & \text{for $\ci{\mu}{q}$ we expect}
                        &
                        & \deepunfold{\dual{G_s \onto^{\pri_C} q}}{\ldots}
                        \nonumber \\
                        &
                        &
                        &= \deepunfold{\bullet}{\ldots}
                        = \bullet.
                        \label{eq:mRecOtherType}
                    \end{flalign}
                    We also need an assignment in the recursive context for every $X \in \widetilde{X_C}$, but not for $Z$.

                    Let $C' = C[\mu Z \sdot []]$.
                    Clearly, $G' \gsub{C'} G$.
                    Let us establish some facts about the recursion binders, priorities, and active participants related to $C'$, $G'$, and $Z$:
                    \begin{itemize}
                        \item
                            $\widetilde{X_{C'}} = \ctxbind{C'} = (\ctxbind{C}, Z) = (\widetilde{X_C}, Z)$ (cf.\ \defref{d:binders}).
                        \item
                            $G_Z = \recdef{Z}{G} = G'$, as proven by the context $C'$ (cf.\ \defref{d:recdef}).
                        \item
                            $\widetilde{Y_Z} = \subbind{\mu Z \sdot G_Z}{G} = \ctxbind{C} = \widetilde{X_C}$.
                        \item
                            $\pri_{C'} = \ctxpri{}{C'} = \ctxpri{}{C} = \pri_C$, and $\pri_Z = \recpri{Z}{G} = \ctxpri{C} = \pri_C$, and hence $\pri_{C'} = \pri_Z$ (cf.\ \defref{d:ctxpri}).
                        \item
                            $\tilde{q}_Z = \tilde{q}'$ (cf.\ \defref{d:activeParts} and~\eqref{eq:mRecActive}).
                    \end{itemize}

                    Because $\widetilde{X_{C'}} = (\widetilde{X_C}, Z)$ and $\tilde{q}' = \tilde{q}_Z$, $\tilde{q}'$ is appropriate for the IH.
                    We apply the IH on $C'$, $G'$, and $\tilde{q}'$ to obtain a typing for $\mdm{\tilde{q}'}{G'}$, where we immediately make use of the facts established above.
                    We given the assignment to $Z$ in the recursive context separate from those for the recursion variables in $\widetilde{X_C}$.
                    Also, by \Cref{p:deepUnfold}, we can write the final unfolding on $Z$ in the types separately.
                    \begin{align*}
                        \mdm{\tilde{q}'}{G'} \vdash \begin{array}[t]{@{}l@{}}
                            {\left( X{:}~ {\big( \deepunfold{\dual{G_X \onto^{\pri_X} q}}{{(Y,t_Y,\dual{G_Y \onto^{\pri_Y} q})}_{Y \in \widetilde{Y_X}}} \big)}_{q \in \tilde{q}_X} \right)}_{X \in \widetilde{X_C}},
                            \\[6pt]
                            Z{:}~ {\big( \deepunfold{\dual{G' \onto^{\pri_C} q}}{{(X,t_X,\dual{G_X \onto^{\pri_X} q})}_{X \in \widetilde{X_C}}} \big)}_{q \in \tilde{q}'};
                            \\[6pt]
                            {\left( \ci{\mu}{q}{:}~ \unfold^{t_Z}(\mu Z \sdot \deepunfold{\dual{G' \onto^{\pri_C} q}}{{(X,t_X,\dual{G_X \onto^{\pri_X} q})}_{X \in \widetilde{X_C}}}) \right)}_{q \in \tilde{q}'}
                        \end{array}
                    \end{align*}

                    By assumption, we have
                    \begin{align*}
                        t_Z &= \max_\pr {\left( \deepunfold{\dual{G' \onto^{\pri_C} q}}{{(X,t_X,\dual{G_X \onto^{\pri_X} q})}_{X \in \widetilde{X_C}}} \right)}_{q \in \tilde{q}'} + 1,
                    \end{align*}
                    so $t_Z$ is clearly bigger than the maximum priority appearing in the types before unfolding.
                    Hence, we can apply \scc{Rec} to eliminate $Z$ from the recursive context, and to fold the types, giving the typing of $\mdm{\tilde{q}}{G_s} = \mu Z({(\ci{\mu}{q})}_{q \in \tilde{q}'}) \sdot \mdm{\tilde{q}'}{G'}$:
                    \begin{align*}
                        \mdm{\tilde{q}}{G_s} \vdash \begin{array}[t]{@{}l@{}}
                            {\left( X{:}~ {\big( \deepunfold{\dual{G_X \onto^{\pri_X} q}}{{(Y,t_Y,\dual{G_Y \onto^{\pri_Y} q})}_{Y \in \widetilde{Y_X}}} \big)}_{q \in \tilde{q}_X} \right)}_{X \in \widetilde{X_C}};
                            \\[6pt]
                            {\left( \ci{\mu}{q}{:}~ \mu Z \sdot \deepunfold{\dual{G' \onto^{\pri_C} q}}{{(X,t_X,\dual{G_X \onto^{\pri_X} q})}_{X \in \widetilde{X_C}}} \right)}_{q \in \tilde{q}'}
                        \end{array}
                    \end{align*}

                    In this typing, the type for $\ci{\mu}{q}$ for every $q \in \tilde{q}'$ concurs with~\eqref{eq:mRecRecType}.
                    For every $q \in \tilde{q} \setminus \tilde{q}'$, we can add the type for $\ci{\mu}{q}$ in~\eqref{eq:mRecOtherType} by applying $\bullet$.
                    This proves the thesis.
            \end{itemize}

        \item
            \emph{Recursive call}: $G_s = Z$ (\cref{line:mdmRecCall}).

            Following similar reasoning as in the case of recursive call in the proof of \Cref{t:routerTypesGen}, let us take stock of the types we expect for our orchestrator's channels.
            \begin{align}
                & \text{For each $q \in \tilde{q}$,}
                \nonumber \\
                & \text{for $\ci{\mu}{q}$ we expect}
                &
                & \deepunfold{\dual{G_s \onto^{\pri_C} q}}{\ldots}
                \nonumber \\
                &
                &
                &= \deepunfold{\dual{Z \onto^{\pri_C} q}}{\ldots}
                \nonumber \\
                &
                &
                &= \deepunfold{\dual{Z}}{\ldots}
                \nonumber \\
                &
                &
                &= \deepunfold{Z}{{(X,t_X,\dual{G_X \onto^{\pri_X} q})}_{X \in (\tilde{X}_1,Z,\widetilde{Y_Z})}}
                \nonumber \\
                &
                &
                &= \deepunfold{Z}{{(X,t_X,\dual{G_X \onto^{\pri_X} q})}_{X \in (Z,\widetilde{Y_Z})}}
                \nonumber \\
                &
                &
                &= \mu Z \sdot (\lift{t_Z} \deepunfold{\dual{G_Z \onto^{\pri_Z} q}}{{(X,t_X,\dual{G_X \onto^{\pri_X} q})}_{X \in \widetilde{Y_Z}}})
                \label{eq:mRecCallType}
            \end{align}
            Also, we need an assignment in the recursive context for every $X \in \widetilde{X_C}$.
            By \Cref{l:binderParts}, $\tilde{q} = \tilde{q}_Z$.
            Hence, for $Z$, the assignment should be as follows:
            \begin{align}
                Z{:}~ {\left( \deepunfold{\dual{G_Z \onto^{\pri_Z} q}}{{(X,t_X,\dual{G_X \onto^{\pri_X} q)}_{X \in \widetilde{Y_Z}}}} \right)}_{q \in \tilde{q}}
                \label{eq:mRecCallRec}
            \end{align}

            We apply \scc{Var} to obtain the typing of $\mdm{\tilde{q}}{G_s}$, where we make us the rule's allowance for an arbitrary recursive context up to the assignment to $Z$.
            \scc{Var} is applicable, because the types are recursive definitions on $Z$, concurring with the types assigned to $Z$, and lifted by a common lifter~$t_Z$.
            \begin{prooftree}
                \infAx{
                    $\mdm{\tilde{q}}{G_s} = X\call{{(\ci{\mu}{q})}_{q \in \tilde{q}}} \vdash \begin{array}[t]{@{}l@{}}
                        {\left( X{:}~ {\big( \deepunfold{\dual{G_X \onto^{\pri_X} q}}{{(Y,t_Y,\dual{G_Y \onto^{\pri_Y} q})}_{Y \in \widetilde{Y_X}}} \big)}_{q \in \tilde{q}_X} \right)}_{X \in \widetilde{X_C} \setminus (Z)},
                        \\
                        Z{:}~ {\left( \deepunfold{\dual{G_Z \onto^{\pri_Z} q}}{{(X,t_X,\dual{G_X \onto^{\pri_X} q)}_{X \in \widetilde{Y_Z}}}} \right)}_{q \in \tilde{q}};
                        \\
                        {\big( \ci{\mu}{q}{:}~ \mu Z \sdot (\lift{t_Z} \deepunfold{\dual{G_Z \onto^{\pri_Z} q}}{{(X,t_X,\dual{G_X \onto^{\pri_X} q})}_{X \in \widetilde{Y_Z}}}) \big)}_{q \in \tilde{q}}
                    \end{array}$
                }{\scc{Var}}
            \end{prooftree}
            In this typing, the types of $\ci{\mu}{q}$ for each $q \in \tilde{q}$ concur with the expected types in~\eqref{eq:mRecCallType}, and the assignment to $Z$ in the recursive context concurs with~\eqref{eq:mRecCallRec}.
            This proves the thesis.
            \qedhere
    \end{itemize}
\end{proof}

\subsubsection{Orchestrators and Centralized Compositions of Routers are Behaviorally Equivalent}
\label{ss:bisim}

First, we formalize what we mean with a centralized composition of routers, which we call a \emph{hub of routers}.
A hub of routers is just a specific composition of routers, formalized as the centralized composition of the routers of all a global type's participants synthesized from the global type:

\begin{definition}[Hub of a Global Type]\label{d:hub}
    Given global type $G$, we define the \emph{hub of routers} of $G$ as follows:
    \[
        \hub_G := \nu{\crt{p}{q} \crt{q}{p}}_{p,q \in \part(G)} \big(\prod_{p \in \part(G)} \rtr_p\big)
        \tag*{\lipicsEnd}
    \]
\end{definition}

Hubs of routers can be typed using local projection (cf.\ \defref{d:locproj}), identical to the typing of orchestrators (cf.\ \Cref{t:mediumTypes}):

\begin{theorem}\label{t:hubTypes}
    For relative well-formed global type $G$ and priority $\pri$,
    \[
        \mcl{H}_G \vdash \emptyset; {(\ci{\mu}{p}{:}~ \dual{(G \onto^\pri p)})}_{p \in \part(G)}.
    \]
\end{theorem}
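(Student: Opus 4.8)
The plan is to read off $\mcl{H}_G$'s typing directly from the typability of its constituent routers, gluing them together with the structural rules \scc{Mix} and \scc{Cycle}; the duality that \scc{Cycle} demands is supplied verbatim by \Cref{t:relpropDual}. Concretely, I would first invoke \Cref{t:routerTypes} to type each router $\rtr_p = \lmed{p}{\part(G)\setminus\{p\}}{G}$, obtaining
\[
\rtr_p \vdash \emptyset;~ \ci{\mu}{p}{:}~\dual{G \onto^0 p},~ {\big(\crt{p}{q}{:}~\relprop{p}{q}{0}{G \wrt (p,q)}\big)}_{q \in \part(G)\setminus\{p\}}.
\]
Since the endpoint names of distinct routers are pairwise disjoint (the $\ci{\mu}{p}$ are distinct, and router $\rtr_p$ owns exactly the purple endpoints $\crt{p}{q}$ whereas $\rtr_q$ owns $\crt{q}{p}$), repeated use of \scc{Mix} types $\prod_{p \in \part(G)} \rtr_p$ in the context collecting all the $\ci{\mu}{p}$ together with every $\crt{p}{q}$. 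Then, for each unordered pair $\{p,q\}$ I would apply \scc{Cycle} to bind $\crt{p}{q}$ with $\crt{q}{p}$. This rule requires the two endpoints to carry dual types, which is precisely what \Cref{t:relpropDual} provides: using $G \wrt (p,q) = G \wrt (q,p)$ (\Cref{prop:relprojSensible}), we have $\relprop{p}{q}{0}{G \wrt (p,q)} = \dual{\relprop{q}{p}{0}{G \wrt (p,q)}}$. After cycling all pairs, the purple endpoints disappear and the residual context is exactly ${(\ci{\mu}{p}{:}~\dual{G \onto^0 p})}_{p \in \part(G)}$, matching the shape of $\hub_G$ from \Cref{d:hub}.

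This yields the claim at base priority $\pri = 0$; to reach an arbitrary $\pri$ I would invoke the admissible rule \scc{Lift} with $t = \pri$, appealing to the fact that lifting commutes with projection and extraction, i.e.\ $\lift{\pri}(G \onto^0 p) = G \onto^\pri p$ and $\lift{\pri}\relprop{p}{q}{0}{G \wrt (p,q)} = \relprop{p}{q}{\pri}{G \wrt (p,q)}$, together with the observation that the lift operation commutes with $\dual{\cdot}$ (duality leaves priorities unchanged). These commutation identities follow by a routine induction on the structure of $G$ (resp.\ of the relative type) using \Cref{d:lift}: every connective produced by local projection or relative extraction at base $0$ carries a priority of the form $0+k$, and lifting by $\pri$ turns it into $\pri+k$, which is exactly the priority assigned by projection at base $\pri$.

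The only point requiring care --- the main obstacle --- is the bookkeeping of priorities on the extracted message types $\msgprop{S_i}$, which \Cref{d:msgprop} deliberately leaves unspecified. Lifting shifts these priorities as well, so the commutation identities above hold only once one fixes a consistent convention for message-type priorities on both ends of each cycled channel; this consistency is harmless precisely because these priorities are immaterial to router typability (they are consumed only by the forwarder $v \fwd w$, whose \scc{Id} typing requires duality of $\msgprop{S_i}$, not any particular priority). Apart from this routine alignment, the result is an immediate corollary of \Cref{t:routerTypes} and \Cref{t:relpropDual}, requiring no fresh induction on the global type --- in contrast to \Cref{t:mediumTypes}, whose orchestrator has all its channels in a single process and therefore had to be reproven from scratch via a context-based generalization.
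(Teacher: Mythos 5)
Your proposal is correct and follows essentially the same route as the paper, whose proof is a one-line appeal to \Cref{t:routerTypes} and \Cref{t:relpropDual}; you merely spell out the \scc{Mix}/\scc{Cycle} bookkeeping that the paper leaves implicit. Your additional \scc{Lift} argument for the arbitrary priority $\pri$ (which the paper's terse proof glosses over entirely) is a sensible and correct way to bridge the gap between the priority-$0$ statement of \Cref{t:routerTypes} and the statement of the theorem.
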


\begin{proof}
    By the typability of routers (\Cref{t:routerTypes}) and the duality of the types of the endpoints connecting pairs of routers (\Cref{t:relpropDual}).
\end{proof}

\begin{figure}[t]
    \begin{mdframed}
        \mbox{}\hfill%
        \begin{wfit}
            \begin{prooftree}
                \infAss{
                    $\raisebox{3.0ex}{}$
                }
                \infUn{
                    $x[a,b] \xrightarrow{x[a,b]} \0$
                }{out}
            \end{prooftree}
        \end{wfit}%
        \hfill%
        \begin{wfit}
            \begin{prooftree}
                \infAss{
                    $P \xrightarrow{x[a,b]} P'$
                }
                \infUn{
                    $\nu{y a}\nu{z b} P \xrightarrow{\nu{y a}\nu{z b}x[a,b]} P'$
                }{out-open}
            \end{prooftree}
        \end{wfit}%
        \hfill\mbox{}

        \smallskip
        \mbox{}\hfill%
        \begin{wfit}
            \begin{prooftree}
                \infAss{
                    $\raisebox{4.0ex}{}$
                }
                \infUn{
                    $x(v,w) \sdot P \xrightarrow{x(v,w)} P$
                }{in}
            \end{prooftree}
        \end{wfit}%
        \hfill%
        \begin{wfit}
            \begin{prooftree}
                \infAss{
                    $P \xrightarrow{\nu{y a}\nu{z b}x[a,b]} P'$
                }
                \infAss{
                    $Q \xrightarrow{x(v,w)} Q'$
                }
                \infBin{
                    $P \| Q \xrightarrow{\tau} \nu{y v}\nu{z w}( P' \| Q' )$
                }{out-close}
            \end{prooftree}
        \end{wfit}%
        \hfill\mbox{}

        \smallskip
        \mbox{}\hfill%
        \begin{wfit}
            \begin{prooftree}
                \infAss{
                    $\raisebox{3.0ex}{}$
                }
                \infUn{
                    $x[b] \triangleleft j \xrightarrow{x[b] \triangleleft j} \0$
                }{sel}
            \end{prooftree}
        \end{wfit}%
        \hfill%
        \begin{wfit}
            \begin{prooftree}
                \infAss{
                    $P \xrightarrow{x[b] \triangleleft j} P'$
                }
                \infUn{
                    $\nu{z b} P \xrightarrow{\nu{z b}x[b] \triangleleft j} P'$
                }{sel-open}
            \end{prooftree}
        \end{wfit}%
        \hfill\mbox{}

        \smallskip
        \mbox{}\hfill%
        \begin{wfit}
            \begin{prooftree}
                \infAss{
                    $j \in I \raisebox{3.9ex}{}$
                }
                \infUn{
                    $x(w) \triangleright \{i: P_i\}_{i \in I} \xrightarrow{x(w) \triangleright j} P_j$
                }{bra}
            \end{prooftree}
        \end{wfit}%
        \hfill%
        \begin{wfit}
            \begin{prooftree}
                \infAss{
                    $P \xrightarrow{\nu{z b}x[b] \triangleleft j} P'$
                }
                \infAss{
                    $Q \xrightarrow{x(w) \triangleright j} Q'$
                }
                \infBin{
                    $P \| Q \xrightarrow{\tau} \nu{z w}( P' \| Q' )$
                }{sel-close}
            \end{prooftree}
        \end{wfit}%
        \hfill\mbox{}

        \smallskip
        \mbox{}\hfill%
        \begin{wfit}
            \begin{prooftree}
                \infAss{
                    $P \xrightarrow{\alpha} Q$
                }
                \infAss{
                    $\bn(\alpha) \cap \fn(R) = \emptyset$
                }
                \infBin{
                    $P \| R \xrightarrow{\alpha} Q \| R$
                }{par-L}
            \end{prooftree}
        \end{wfit}%
        \hfill%
        \begin{wfit}
            \begin{prooftree}
                \infAss{
                    $P \xrightarrow{\alpha} Q$
                }
                \infAss{
                    $\bn(\alpha) \cap \fn(R) = \emptyset$
                }
                \infBin{
                    $R \| P \xrightarrow{\alpha} R \| Q$
                }{par-R}
            \end{prooftree}
        \end{wfit}%
        \hfill\mbox{}

        \smallskip
        \mbox{}\hfill%
        \begin{wfit}
            \begin{prooftree}
                \infAss{
                    $\raisebox{3.2ex}{}$
                }
                \infUn{
                    $\nu{y z}( x \fwd y \| P ) \xrightarrow{\tau} P \subst{x/z}$
                }{id}
            \end{prooftree}
        \end{wfit}%
        \hfill%
        \begin{wfit}
            \begin{prooftree}
                \infAss{
                    $P \xrightarrow{\alpha} Q$
                }
                \infAss{
                    $\{y,y'\} \cap \fn(\alpha) = \emptyset$
                }
                \infBin{
                    $\nu{y y'} P \xrightarrow{\alpha} \nu{y y'} Q$
                }{res}
            \end{prooftree}
        \end{wfit}%
        \hfill\mbox{}%
    \end{mdframed}

    \caption{Labeled transition system for APCP (cf.\ \Cref{d:lts}).}
    \label{fig:lts}
\end{figure}

In order to state the behavioral equivalence of orchestrators and hubs of routers, we first define the specific behavioral equivalence we desire.
To this end, we first define a labeled transition system (LTS) for APCP:

\begin{definition}[LTS for APCP]\label{d:lts}
    We define the labels $\alpha$ for transitions for processes as follows:
    \begin{align*}
        \alpha ::=
        &~ \tau
        & \text{communication}
        \\[-4pt]
        \sepr\!
        &~ x[a,b]
        & \text{output}
        & \qquad \sepr
        \nu{y a}\nu{z b} x[a,b]
        & \text{bound output}
        \\[-4pt]
        \sepr\!
        &~ x[b] \triangleleft j
        & \text{selection}
        & \qquad \sepr
        \nu{z b} x[b] \triangleleft j
        & \text{bound selection}
        \\[-4pt]
        \sepr\!
        &~ x(v,w)
        & \text{input}
        & \qquad \sepr
        x(w) \triangleright j
        & \text{branch}
    \end{align*}
    The relation \emph{labeled transition} ($P \xrightarrow{\alpha} Q$) is then defined by the rules in \Cref{fig:lts}.
    \lipicsEnd
\end{definition}

\begin{proposition}\label{p:redIsTau}
    $P \redd_\beta Q$ if and only if $P \xrightarrow{\tau} Q$.
\end{proposition}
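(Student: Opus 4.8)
The plan is to prove both implications by rule induction, exploiting the near-syntactic correspondence between the $\tau$-introducing rules of the LTS (\Cref{fig:lts}) and the $\beta$-axioms together with the closure rules of \Cref{f:procdef}. The decisive observation is that the three rules that can produce a $\tau$-label---namely \textsf{id}, \textsf{out-close}, and \textsf{sel-close}---are the LTS counterparts of $\brred{\scc{Id}}$, $\brred{\tensor\parr}$, and $\brred{\oplus\&}$ respectively, while the context rules \textsf{par-L}/\textsf{par-R} and \textsf{res} mirror the closure rules $\rred{\|}$ and $\rred{\onu}$. Note that the commuting-conversion ($\kappa$) reductions are deliberately excluded from $\redd_\beta$ and have no $\tau$-analogue, so they play no role. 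The only closure rule without a direct LTS analogue is $\rred{\equiv}$: the LTS instead bakes the relevant structural manipulations (alpha-conversion and the opening of bound message/continuation names) into its premises. Reconciling these two ways of handling congruence will be the heart of the argument.

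For the `only if' direction I would proceed by induction on the derivation of $P \redd_\beta Q$. The base cases are the three $\beta$-axioms. The forwarder axiom $\brred{\scc{Id}}$ coincides with rule \textsf{id}, so $\nu{y z}(x \fwd y \| P) \xrightarrow{\tau} P\subst{x/z}$ is immediate. For $\brred{\tensor\parr}$ the matching $\tau$-transition is produced by \textsf{out-close}, whose premises are a bound-output transition of the sender (obtained via \textsf{out} and \textsf{out-open}) and an input transition of the receiver (via \textsf{in}); identifying its reduct with $P\subst{a/v,b/z}$ is precisely the reconciliation deferred below. The case $\brred{\oplus\&}$ is analogous, using \textsf{sel}, \textsf{sel-open}, \textsf{sel-close}, and \textsf{bra} with the chosen branch $j \in I$. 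The inductive cases $\rred{\|}$ and $\rred{\onu}$ follow from \textsf{par-L} (or \textsf{par-R}) and \textsf{res}, after checking the bound-name side conditions, and $\rred{\equiv}$ relies on the harmony lemma below. For the `if' direction I would invert the last rule of the derivation of $P \xrightarrow{\tau} Q$: a $\tau$ is introduced only by \textsf{id}, \textsf{out-close}, or \textsf{sel-close}, and otherwise propagated by \textsf{par-L}/\textsf{par-R} or \textsf{res}. In the base cases I invert the premise transitions---reconstructing the sender from \textsf{out}/\textsf{out-open} (resp.\ \textsf{sel}/\textsf{sel-open}) and the receiver from \textsf{in} (resp.\ \textsf{bra})---to exhibit the corresponding $\beta$-redex, while the context rules map back onto $\rred{\|}$ and $\rred{\onu}$.

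The main obstacle, demanding genuine care rather than bookkeeping, is twofold. First, I must establish a harmony lemma: $\equiv$ is preserved by $\xrightarrow{\tau}$, i.e.\ if $P \equiv P'$ and $P \xrightarrow{\tau} Q$ then $P' \xrightarrow{\tau} Q'$ with $Q \equiv Q'$. This is what makes the $\rred{\equiv}$ case go through in the `only if' direction, and it is proved by induction on the derivation of $P \equiv P'$, analysing each structural axiom of \Cref{f:procdef} against the transition. Second, I must reconcile the asynchronous, continuation-passing shape of the \textsf{out-close}/\textsf{sel-close} reducts---which re-bind the transmitted message and continuation endpoints through fresh restrictions of the form $\nu{y v}\nu{z w}(\cdots)$---with the free substitution $\subst{a/v,b/z}$ produced directly by $\brred{\tensor\parr}$ and $\brred{\oplus\&}$. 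Concretely, I expect to show that these two results coincide up to $\equiv$, using scope extrusion together with $\brred{\scc{Id}}$-style elimination of the restrictions introduced by the opening rules; this is exactly the interface between the bound-name LTS and the substitution-based reduction relation, and it is where the continuation-passing discipline behind the derivable actions (\Cref{n:sugar}) does the real work.
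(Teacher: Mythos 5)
The paper gives no proof of this proposition at all---it is asserted immediately after \Cref{d:lts} as if self-evident---so there is no argument of the authors' to measure yours against. Your architecture (rule induction in both directions, a harmony lemma for $\equiv$, and a reconciliation of the \textsf{out-close}/\textsf{sel-close} reducts $\nu{y v}\nu{z w}(P' \| Q')$ with the substitutions $\subst{a/v, b/z}$) is the natural one, and the reduct-level reconciliation you flag is a genuine obligation, though it is discharged by $\alpha$-renaming the opened binders plus the $\0$-unit law rather than by any $\brred{\scc{Id}}$-style forwarder elimination (no forwarders occur there).

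There is, however, a gap your plan does not close, and it sits in the base cases of both directions. The axiom $\brred{\tensor\parr}$ fires on $\nu{x y}(x[a,b] \| y(v,z) \sdot P)$: the output and the input have \emph{distinct} subjects $x$ and $y$, connected only by the restriction, and the output $x[a,b]$ is \emph{free} (there is no enclosing restriction binding $a$ and $b$ that would let \textsf{out-open} fire). Rule \textsf{out-close}, by contrast, synchronizes a \emph{bound} output and an input whose subjects are the \emph{same} name, inside a bare parallel composition; \textsf{res} only propagates labels that do not mention the restricted names and never combines two complementary ones; and no axiom of $\equiv$ identifies the two names bound by a single $\nu{x y}$ or manufactures the binders \textsf{out-open} needs. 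Consequently, from \Cref{fig:lts} as written, $\nu{x y}(x[a,b] \| y(v,z) \sdot P)$ has a $\beta$-reduction but no derivable $\tau$-transition; conversely, \textsf{out-close} assigns a $\tau$ to processes such as $\nu{y a}\nu{z b}(x[a,b]) \| x(v,w) \sdot Q$, which are not $\beta$-redexes since no restriction connects the two occurrences of $x$. Your harmony lemma cannot repair this, because the discrepancy is not one of structural congruence. To make the proposition---and your proof---go through, you must either add an LTS rule closing a pair of complementary actions on the two endpoints of a $\nu{x y}$, or state explicitly the convention under which connected endpoints are identified before transitions are taken, and then redo the $\brred{\tensor\parr}$ and $\brred{\oplus\&}$ base cases against that convention.
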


\noindent
As customary, we write `$\Rightarrow$' for the reflexive, transitive closure of $ \xrightarrow{\tau}$, and we write `$\xRightarrow{\alpha}$' for $\Rightarrow \xrightarrow{\alpha} \Rightarrow$ if $\alpha \neq \tau$ and for $\Rightarrow$ otherwise.

We can now define the behavioral equivalence we desire:

\begin{definition}[Weak bisimilarity]\label{d:weakBisim}
    A binary relation $\mathbb{B}$ on processes is a \emph{weak bisimulation} if whenever $(P,Q) \in \mathbb{B}$,
    \begin{itemize}
        \item $P \xrightarrow{\alpha} P'$ implies that there is $Q'$ such that $Q \xRightarrow{\alpha} Q'$ and $(P',Q') \in \mbb{B}$, and
        \item $Q \xrightarrow{\alpha} Q'$ implies that there is $P'$ such that $P \xRightarrow{\alpha} P'$ and $(P',Q') \in \mbb{B}$.
    \end{itemize}
    Two processes $P$ and $Q$ are \emph{weakly bisimilar} if there exists a weak bisimulation $\mbb{B}$ such that $(P,Q) \in \mbb{B}$.
    \lipicsEnd
\end{definition}

Our equivalence result shall relate the behavior of an orchestrator and a hub on a single but arbitrary channel.
More specifically, our result will demonstrate that both settings exhibit the same actions on a channel endpoint connect to a particular participant's implementation.
In order to isolate such a channel, we place the orchestrator and hub of routers in an evaluation context consisting of restrictions and parallel compositions with arbitrary processes, such that it connects all but one of the orchestrator's or hub's channels.
For example, given a global type $G$ and implementations $P_q \vdash \emptyset; \ci{\mu}{q}{:}~ G \onto^0 q$ for every participant $q \in \part(G) \setminus \{p\}$, we could use the following evaluation context:
\begin{align*}
    E := \nu{\ci{\mu}{q} \ci{q}{\mu}}_{q \in \part(G) \setminus \{p\}} \big( {\prod}_{q \in \part(G) \setminus \{p\}} P_q \| \hole \big)
\end{align*}
Replacing the hole in this evaluation context with the orchestrator or hub of routers of $G$ leaves one channel free: the channel $\ci{\mu}{p}$ for the implementation of $p$.
Now, we can observe the behavior of these two processes on $\ci{\mu}{p}$.

In what follows we write $\mcl{O}_G^{\tilde{q}}$ instead of $\mdm{\tilde{q}}{G}$.
When we appeal to router and orchestrator synthesis, we often omit the parameter $\tilde{q}$. That is, we write $\lmed{p}{}{G}$ instead of $\lmed{p}{\tilde{q}}{G}$, and
$\mcl{O}_G$ instead of $\mcl{O}_G^{\tilde{q}}$.

\begin{theorem}\label{t:mediumBisim}
    Suppose given a relative well-formed global type $G$.
    Let $\mcl{H}_G$ be the hub of routers of $G$ (\defref{d:hub}) and take the orchestrator $\mcl{O}_G^{\part(G)}$ of $G$ (\defref{d:medium}).
    Let $p \in \part(G)$, and let $E$ be an evaluation context such that $E[\mcl{H}_G] \vdash \emptyset; \ci{\mu}{p}{:}~ \dual{(G \onto^\pri p)}$.
    Then, $E[\mcl{H}_G]$ and $E[\mcl{O}_G]$ are weakly bisimilar (\defref{d:weakBisim}).
\end{theorem}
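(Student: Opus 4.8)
The plan is to exhibit an explicit weak bisimulation $\mbb{B}$ relating $E[\mcl{H}_G]$ and $E[\mcl{O}_G]$, and then close the relation under the transitions that either process can make. The crucial observation is that the hub $\mcl{H}_G$ and the orchestrator $\mcl{O}_G$ perform \emph{exactly the same visible actions} on the participant channels $\ci{\mu}{q}$; they differ only in their internal wiring. In $\mcl{H}_G$, an exchange from $s$ to $r$ with dependencies is realized by a \emph{chain} of internal communications between routers over the purple channels $\crt{p}{q}$ (the sender's router tells the recipient's router, which tells its implementation, and dependency routers are told in turn), whereas in $\mcl{O}_G$ the same exchange is realized by a \emph{single} process that receives on $\ci{\mu}{s}$ and fans the label out directly on $\ci{\mu}{r}$ and on $\ci{\mu}{q}$ for each $q \in \deps$. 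Comparing \Cref{alg:router} (on the refined \cref{line:rtrBoth} via~\eqref{eq:patched}, plus lines~\labelcref{line:rtrSend,line:rtrRecv,line:rtrOnlyS,line:rtrOnlyR,line:rtrBoth,line:rtrElseElse}) with \Cref{alg:medium} (\cref{line:mdmCommRet}), one sees that the externally observable label and message communications are identical; the purple router-to-router synchronizations in the hub are all $\tau$-transitions by \Cref{p:redIsTau}.

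First I would make this precise by defining $\mbb{B}$ to relate processes of the form $E[\mcl{H}_{G'}']$ and $E[\mcl{O}_{G'}']$, where $G'$ ranges over the global types reachable from $G$ by the LTS of \Cref{d:globalLts} (suitably interleaved, since exchanges may be independent), and where $\mcl{H}'$, $\mcl{O}'$ denote ``partially executed'' hubs/orchestrators in which some routers have already forwarded a chosen label but not yet completed the corresponding message forward. Because the hub performs several $\tau$-steps per exchange while the orchestrator performs fewer, $\mbb{B}$ must absorb these intermediate hub states, matching them to the corresponding orchestrator state via the weak transition $\Rightarrow$. The key lemma I would prove by induction on the structure of $G$ (following exactly the case analysis of \Cref{t:routerTypesGen} and \Cref{t:mediumTypes}) is: for every reachable $G'$, the hub $\mcl{H}_{G'}$ and orchestrator $\mcl{O}_{G'}$ in context $E$ have matching sets of visible initial actions on $\ci{\mu}{p}$, and each such action leads to a pair again related by $\mbb{B}$ after the hub clears its internal $\tau$-chain.

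The matching of transitions then splits into the six syntactic cases for $G'$. The inactive cases ($\bullet$, $\gskip$, and recursion-with-no-active-participants) are immediate since both synthesize $\0$ or descend homomorphically; recursive definitions and calls are handled by unfolding (as in \Cref{d:globalLts}'s recursion rule and the \scc{Rec}/\scc{Var} typing) so that the hub's and orchestrator's recursive loops stay in lockstep. The exchange case is where the real work lies: when the implementation of the sender selects a label $i$ on $\ci{\mu}{s}$, both sides take this visible step; the orchestrator then internally propagates $i$ in one burst of asynchronous selections, while the hub propagates $i$ along its purple router chain via a \emph{sequence} of $\tau$-steps. I would argue that these hub $\tau$-steps are confluent and lead deterministically to the hub state matching the orchestrator's post-propagation state, using \Cref{p:indep} (independent reductions) to reorder them freely. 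Crucially, well-typedness (\Cref{t:routerTypes}, \Cref{t:hubTypes}, \Cref{t:mediumTypes}) together with the no-alarm guarantee is what ensures these internal chains always complete and never diverge or deadlock, so that every visible hub action is weakly matched by the orchestrator and vice versa.

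The main obstacle I anticipate is the bookkeeping for dependencies and the asynchrony of APCP: because outputs and selections are non-blocking, both the hub and the orchestrator can have \emph{multiple} exchanges and dependency forwards partially in flight simultaneously, so the relation $\mbb{B}$ cannot simply pair ``a global type with its hub/orchestrator'' but must track a whole frontier of partially-completed exchanges. Getting $\mbb{B}$ large enough to be closed under transitions, yet precise enough that each hub $\tau$-chain has a unique orchestrator counterpart, is delicate; I expect to need the independence machinery (\textbf{IFacts}-style reasoning and \Cref{p:indep}) developed for the soundness proof to show that the order in which these concurrent internal forwards resolve does not matter, so that the weak transitions line up. Once the frontier invariant is stated correctly, verifying the two simulation clauses of \Cref{d:weakBisim} becomes a (lengthy but routine) dispatch over action shapes, mirroring the typing derivations in \Cref{f:tder}, \Cref{f:tderii}, and \Cref{f:tderiii}.
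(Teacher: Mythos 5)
Your overall strategy coincides with the paper's: both proofs exhibit a weak bisimulation built by following the structure of the global type, matching only the visible actions on the single free channel $\ci{\mu}{p}$ and absorbing the hub's router-to-router chatter as $\tau$-steps. Where you diverge is in how the relation is closed under the internal activity. You propose an explicit ``frontier'' invariant tracking partially-completed exchanges, together with a confluence argument (via \Cref{p:indep}) showing that the hub's internal $\tau$-chains resolve deterministically to a unique orchestrator counterpart. The paper avoids both: its candidate relation $\B(G,\B_0,p)$ (\Cref{d:candidateBisim}) is defined by interleaving each visible transition on $\ci{\mu}{p}$ with arbitrary $\Rightarrow$-closures, so a hub state is related to \emph{every} orchestrator state reachable by $\tau$s from the last synchronization point, not a unique one --- no determinism or confluence of the internal chains is ever needed, and \Cref{p:indep} is reserved for the soundness proof. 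The bookkeeping you anticipate is instead handled by the \emph{consistent starting relation} (\Cref{d:consStartingRel}) and \Cref{l:startTrans}, which record only that related processes have passed through evaluation contexts $C[\lmed{p}{}{G'}]$ and $D[\mcl{O}_{G'}]$ containing no pending output or selection on $\ci{\mu}{p}$; the matching argument in \Cref{l:Bbisim} then hinges on the observation that both systems sit in the \emph{same} context $E$, so whatever context interaction enabled a visible action in one can be replayed in the other. Your route would likely work, but the confluence lemma you would need to prove is strictly more than the theorem requires, and making the frontier invariant precise enough to be closed under transitions is exactly the delicacy the paper's coarser $\Rightarrow$-closed relation is designed to sidestep.
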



\noindent
We first give an intuition for the proof of \Cref{t:mediumBisim} and its ingredients, after which we give the proof using these ingredients; then, we detail the ingredients.
The proof is by coinduction, i.e., by exhibiting a weak bisimulation $\B$ that contains the pair $(E[\hub_G],E[\mcl{O}_G])$. To construct $\B$ and prove that it is a weak bisimulation we require the following:
\begin{itemize}
    \item
        We define a function that, given a global type $G$ and a \emph{starting relation} $\B_0$, computes a corresponding \emph{candidate relation}. This function is denoted $\B(G,\B_0)$ (\defref{d:candidateBisim}).

    \item
        Suppose $G \xrightarrow{\beta_1} \ldots \xrightarrow{\beta_k} G'$, with $k \geq 0$.
        Given some starting relation $\B_0$, we want to show that the relation obtained from $\B(G',\B_0)$ is a weak bisimulation, for which we need to assert that $\B_0$ is an appropriate starting relation.
        To this end, we define a function that computes a \emph{consistent} starting relation for a bisimulation relation, given a pair $(P,Q)$ of processes and a participant $p$ of $G$.
        This function is denoted $\<G \xrightarrow{\beta_1} \ldots \xrightarrow{\beta_k} G', (P,Q), p\>$ (\defref{d:consStartingRel}).

    \item
        The property that processes in such a consistent starting relation follow a pattern of specific labeled transitions, passing through a context containing the router of $p$ or the orchestrator (\Cref{l:startTrans}).

    \item
        The property that the relation obtained from $\B(G',\B_0,p)$ is a weak bisimulation, given the consistent starting relation $\B_0 = \<G \xrightarrow{\beta_1} \ldots \xrightarrow{\beta_k} G', (E[\hub_G],E[\mcl{O}_G]), p\>$ (\Cref{l:Bbisim}).
\end{itemize}
\Cref{t:mediumBisim} follows from these definitions and results:
\begin{proof}[Proof of \Cref{t:mediumBisim}]
    Let $\B = \B(G,\B_0)$, where $\B_0 = \<G, (E[\hub_G],E[\mcl{O}_G]), p\>$.
    By \Cref{l:Bbisim}, $\B$ is a weak bisimulation.
    Because $(E[\hub_G],E[\mcl{O}_G]) \in \B_0 \subseteq \B$, it then follows that $E[\hub_G]$ and $E[\mcl{O}_G]$ are weakly bisimilar.
\end{proof}

We setup some notations:

\begin{notation}
    We adopt the following notational conventions.
    \begin{itemize}
        \item
            We write $\mathsf{Proc}$ to denote the set of all typable APCP processes.
        \item
            In the LTS for APCP (\defref{d:lts}), we simplify labels: we write an overlined variant for output and selection (e.g., for $\nu{ab}\ci{\mu}{p}[a] \puts \ell$ we write $\ol{\ci{\mu}{p}} \puts \ell$), and omit continuation channels for input and branching (e.g., for $\ci{\mu}{p}(a) \gets \ell$ we write $\ci{\mu}{p} \gets \ell$).
        \item
            Also, we write $P \xRightarrow{\alpha_1 \ldots \alpha_n} Q$ rather than $P\xRightarrow{\alpha_1} P_1 \xRightarrow{\alpha_2} P_2 \ldots \xRightarrow{\alpha_n} Q$.
        \item
            We write $\tilde{\alpha}$ to denote a sequence of labels, e.g., if $\tilde{\alpha} = \alpha_1 \ldots \alpha_n$ then ${\xRightarrow{\tilde{\alpha}}} = {\xRightarrow{\alpha_1 \ldots \alpha_n}}$.
            If $\tilde{\alpha} = \epsilon$ (empty sequence), then ${\xRightarrow{\tilde{\alpha}}} = {\Rightarrow}$.
        \end{itemize}
\end{notation}

The following function defines a relation on processes, which we will use as the weak bisimulation between $E[\hub_G]$ and $E[\mcl{O}_G]$:

\begin{definition}[Candidate Relation]\label{d:candidateBisim}
    Let $G$ be a global type
    and let $p$ be a participant of $G$.
    Also, let $\B_0 \subseteq \mathsf{Proc} \times \mathsf{Proc}$ denote a relation on processes.
    We define a \emph{candidate relation} for a weak bisimulation of the hub and orchestrator of $G$ observed on $\ci{\mu}{p}$ starting at $\B_0$, by abuse of notation denoted $\B(G, \B_0, p)$. The definition is inductive on the structure of $G$:
    \begin{itemize}
            \item
            $G = \bullet$.
            Then $\B(G,\B_0,p) = \B_0$.

        \item
            $G = s \mto r \{i\<S_i\> \sdot G_i\}_{i \in I}$.
            We distinguish four cases, depending on the involvement of $p$:
            \begin{itemize}
                \item
                    $p = s$.
                    For every $i \in I$, let
                    \begin{align*}
                        \B_1^i &= \{(P_1,Q_1) \mid \exists (P_0,Q_0) \in \B_0 ~\tst~ P_0 \xrightarrow{\ci{\mu}{p} \gets i} \Rightarrow P_1 ~\tand~ Q_0 \xrightarrow{\ci{\mu}{p} \gets i} \Rightarrow Q_1\};
                        \\
                        \B_2^i &= \{(P_2,Q_2) \mid \exists (P_1,Q_1) \in \B_1^i ~\tst~ P_1 \xrightarrow{\ci{\mu}{p}(y)} \Rightarrow P_2 ~\tand~ Q_1 \xrightarrow{\ci{\mu}{p}(y)} \Rightarrow Q_2\}
                    \end{align*}
                    Then
                    \[
                        \B(G,\B_0,p) = \B_0 \cup \bigcup_{i \in I} (\B_1^i \cup \B(G_i,\B_2^i,p)).
                    \]

                \item
                    $p = r$.
                    For every $i \in I$, let
                    \begin{align*}
                        \B_1^i &= \{(P_1,Q_1) \mid \exists (P_0,Q_0) \in \B_0 ~\tst~ P_0 \xrightarrow{\ol{\ci{\mu}{p}} \puts i} \Rightarrow P_1 ~\tand~ Q_0 \xrightarrow{\ol{\ci{\mu}{p}} \puts i} \Rightarrow Q_1\};
                        \\
                        \B_2^i &= \{(P_2,Q_2) \mid \exists (P_1,Q_1) \in \B_1 ~\tand~ y ~\tst~ P_1 \xrightarrow{\ol{\ci{\mu}{p}}[y]} \Rightarrow P_2 ~\tand~ Q_1 \xrightarrow{\ol{\ci{\mu}{p}}[y]} \Rightarrow Q_2\}.
                    \end{align*}
                    Then
                    \[
                        \B(G,\B_0,p) = \B_0 \cup \bigcup_{i \in I} (\B_1^i \cup \B(G_i,\B_2^i,p)).
                    \]

                \item
                    $p \notin \{s,r\}$ and $\hdep(p,s,G)$ or $\hdep(p,r,G)$.
                    For every $i \in I$, let
                    \[
                        \B_1^i = \{(P_1,Q_1) \mid \exists (P_0,Q_0) \in \B_0 ~\tst~ P_0 \xrightarrow{\ol{\ci{\mu}{p}} \puts i} \Rightarrow P_1 ~\tand~ Q_0 \xrightarrow{\ol{\ci{\mu}{p}} \puts i} \Rightarrow Q_1\}
                    \]
                    Then
                    \[
                        \B(G,\B_0,p) = \B_0 \cup \bigcup_{i \in I} \B(G_i,\B_1^i,p).
                    \]

                \item
                    $p \notin \{s,r\}$ and neither $\hdep(p,s,G)$ nor $\hdep(p,r,G)$.
                    Then
                    \[
                        \B(G,\B_0,p) = \B(G_j,\B_0,p)
                    \]
                    for any $j \in I$.
            \end{itemize}

        \item
            $G = \mu X \sdot G'$.
            Then $\B(G,\B_0,p) = \B(G'\{\mu X \sdot G'/X\},\B_0,p)$.

        \item
            $G = \gskip \sdot G'$.
            Then $\B(G,\B_0,p) = \B(G',\B_0,p)$.
    \end{itemize}
\end{definition}

\noindent
The function $\B(G,\B_0,p)$ constructs a relation between processes by following labeled transitions on $\ci{\mu}{p}$ that concur with the expected behavior of $p$'s router and the orchestrator depending on the shape of $G$.
For example, for $G = s \mto p \{i\<S_i\> \sdot G_i\}_{i \in I}$, for each $i \in I$, the function constructs $\B_1^i$ containing the processes reachable from $\B_0$ through a transition labeled $\ol{\ci{\mu}{p}} \puts i$ (selection of the label chosen by $s$), and $\B_2^i$ containing the processes reachable from $\B_0$ through a transition labeled $\ol{\ci{\mu}{p}}[y]$ (output of the endpoint sent by $s$); the resulting relation then consists of $\B_0$ and, for each $i \in I$, $\B_1^i$ and $\B(G_i,\B_2^i,p)$ (i.e., the candidate relation for $G_i$ starting with $B_2^i$).
Since we are interested in a \emph{weak} bisimulation, the $\tau$-transitions of one process do not need to be simulated by related processes.
Hence, e.g., if $(P,Q) \in \B_0$ and $P \xrightarrow{\tau} P'$ and $Q \xrightarrow{\tau} Q'$, then $\{(P,Q), (P',Q), (P,Q'), (P',Q')\} \subseteq \B(G,\B_0,p)$.
This way, we only \emph{synchronize} related processes when they can both take the same labeled transition.

We intend to show that, if $G \xrightarrow{\beta_1} \ldots \xrightarrow{\beta_k} G'$, the function $\B(G',\B_0,p)$ constructs a weak bisimulation.
However, for this to hold, the starting relation $\B_0$ cannot be arbitrary: the pairs of processes in $\B_0$ have to be reachable from $E[\hub_G]$ and $E[\mcl{O}_G]$ through labeled transitions that concur with the transitions from $G$ to $G'$.
Moreover, the processes must have ``passed through'' evaluation contexts containing the router for $p$ at $G'$ and the orchestrator at $G'$.
The following defines a \emph{consistent starting relation}, parametric on $k$, that satisfies these requirements.
Note that for constructing the relation $\B$, we only need the following definition for $k = 0$.
However,
in the proof that $\B$ is a weak bisimulation
we need to generalize it to $k \geq 0$ to assure that the starting relation of coinductive steps is consistent.

\begin{definition}[Consistent Starting Relation]\label{d:consStartingRel}
    Let $G \xrightarrow{\beta_1} \ldots \xrightarrow{\beta_k} G'$  (with $k \geq 0$) be a sequence of labeled transitions from $G$ to $G'$ including the intermediate global types (cf. \Cref{d:globalLts}) and let $p$ be a participant of $G$.
    Also, let $(P,Q)$ be a pair of initial processes.
    We define the \emph{consistent starting relation} for observing the hub and orchestrator of $G'$ on $\ci{\mu}{p}$ starting at $(P,Q)$ after the transitions from $G$ to $G'$, denoted $\<G \xrightarrow{\beta_1} \ldots \xrightarrow{\beta_k} G', (P,Q), p\>$.
    The definition is inductive on the number $k$ of transitions:
    \begin{itemize}
        \item
            $k = 0$.
            Then $\<G, (P,Q), p\> = \{(P',Q') \mid P \Rightarrow P' ~\tand~ Q \Rightarrow Q'\}$.

        \item
            $k = k'+1$.
            Then
            \begin{align*}
                & \<G \xrightarrow{\beta_1} \ldots \xrightarrow{\beta_{k'}} G_{k'} \xrightarrow{\beta_k} G_k, (P,Q), p\> = {}
                \\
                &\quad \{(P_k,Q_k) \mid \begin{array}[t]{@{}l@{}}
                        \exists (P_{k'},Q_{k'}) \in \<G \xrightarrow{\beta_1} \ldots \xrightarrow{\beta_{k'}} G_{k'}, (P,Q), p\>
                        \\
                        \tst~ (\begin{array}[t]{@{}l@{}l@{}}
                            &~ (\exists C ~\tst~ P_{k'} \xRightarrow{\tilde{\alpha}} C[\lmed{p}{}{G_k}] \Rightarrow P_k)
                            \\
                            \tand &~ (\exists D ~\tst~ Q_{k'} \xRightarrow{\tilde{\alpha}} D[\mcl{O}_{G_k}] \Rightarrow Q_k))
                \}, \end{array} \end{array}
            \end{align*}
            where $\tilde{\alpha}$ depends on $\beta_k = s \> r: j\<S_j\>$ and $G_{k'}$ (in unfolded form if $G_{k'} = \mu X \sdot G'_{k'}$):
            \begin{itemize}
                \item
                    If $p = s$, then $\tilde{\alpha} = \ci{\mu}{p} \gets j \, \ci{\mu}{p}(y)$.
                \item
                    If $p = r$, then $\tilde{\alpha} = \ol{\ci{\mu}{p}} \puts j \, \ol{\ci{\mu}{p}}[y]$.
                \item
                    If $p \notin \{s,r\}$ and $\hdep(p,s,G_k)$ or $\hdep(p,r,G_k)$, then $\tilde{\alpha} = \ol{\ci{\mu}{p}} \puts j$.
                \item
                    If $p \notin \{s,r\}$ and neither $\hdep(p,s,G_k)$ nor $\hdep(p,r,G_k)$, then $\tilde{\alpha} = \epsilon$.
            \end{itemize}
    \end{itemize}
\end{definition}

\begin{lemma}\label{l:startTrans}
    Let $G$ be a relative well-formed global type such that $G \xrightarrow{\beta_1} \ldots \xrightarrow{\beta_k} G'$ for $k \geq 0$ and let $p$ be a participant of $G$.
    Also, let $E$ be an evaluation context such that $\fn(E) = \{\ci{\mu}{p}\}$.
    Then there exists $\tilde{\alpha}$ such that, for every $(P,Q) \in \<G \xrightarrow{\beta_1} \ldots \xrightarrow{\beta_k} G', (E[\hub_G],E[\mcl{O}_{G}]), p\>$,
    \begin{itemize}
        \item
            $E[\hub_G] \xRightarrow{\tilde{\alpha}} C\big[\lmed{p}{}{G'}\big] \Rightarrow P$ where $C$ is an evaluation context without an output or selection on $\ci{\mu}{p}$; and
        \item
            $E[\mcl{O}_{G}] \xRightarrow{\tilde{\alpha}} D\big[\mcl{O}_{G'}\big] \Rightarrow Q$ where $D$ is an evaluation context without an output or selection on $\ci{\mu}{p}$.
    \end{itemize}
\end{lemma}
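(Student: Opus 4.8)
The plan is to prove the statement by induction on the number $k$ of global transitions, exhibiting a single sequence $\tilde{\alpha}$ that works uniformly for every pair in the consistent starting relation $\<G \xrightarrow{\beta_1}\cdots\xrightarrow{\beta_k} G',(E[\hub_G],E[\mcl{O}_G]),p\>$. The guiding intuition is that, because $p$'s router $\lmed{p}{}{G'}$ and the orchestrator $\mcl{O}_{G'}$ are synthesized from the same global type by \Cref{alg:router,alg:medium}, they expose \emph{identical} behavior on the single observable endpoint $\ci{\mu}{p}$; the sequence $\tilde{\alpha}$ merely records the labels on $\ci{\mu}{p}$ that correspond, from $p$'s perspective, to the exchanges $\beta_1,\ldots,\beta_k$.

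In the base case $k=0$ we have $G'=G$ and I would take $\tilde{\alpha}=\epsilon$ (so $\xRightarrow{\tilde{\alpha}}$ is $\Rightarrow$). By \Cref{d:hub,d:medium}, $E[\hub_G]$ and $E[\mcl{O}_G]$ are already of the forms $C[\lmed{p}{}{G}]$ and $D[\mcl{O}_G]$, where $C$ and $D$ collect the restrictions on the purple channels, the remaining routers or the orchestrator body, and the context $E$. Since $\fn(E)=\{\ci{\mu}{p}\}$, the endpoint $\ci{\mu}{p}$ is free, so no $\tau$-transition can synchronize on it; consequently the only process mentioning $\ci{\mu}{p}$ is $p$'s router (resp.\ orchestrator) sitting in the hole, and $C,D$ contain no output or selection on $\ci{\mu}{p}$. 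Taking these as the intermediate states (reached in zero steps) leaves exactly the trailing $\Rightarrow P$ and $\Rightarrow Q$, which are the defining conditions of the relation for $k=0$.

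For the inductive step $k=k'+1$, write $G \xrightarrow{\beta_1}\cdots\xrightarrow{\beta_{k'}} G_{k'} \xrightarrow{\beta_k} G'$ with $\beta_k=s\>r{:}j\<S_j\>$. Unfolding \Cref{d:consStartingRel}, every $(P,Q)$ in the relation arises from some $(P_{k'},Q_{k'})$ in the relation for $k'$ via $P_{k'}\xRightarrow{\tilde{\alpha}_k}C[\lmed{p}{}{G'}]\Rightarrow P$ and $Q_{k'}\xRightarrow{\tilde{\alpha}_k}D[\mcl{O}_{G'}]\Rightarrow Q$, where $\tilde{\alpha}_k$ is fixed by $p$'s involvement in $\beta_k$ and is the same for both components. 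I would first apply the induction hypothesis to obtain a common $\tilde{\alpha}_{k'}$ with $E[\hub_G]\xRightarrow{\tilde{\alpha}_{k'}}C_{k'}[\lmed{p}{}{G_{k'}}]\Rightarrow P_{k'}$ and the analogous reduction for the orchestrator, then set $\tilde{\alpha}=\tilde{\alpha}_{k'}\,\tilde{\alpha}_k$ and compose the two segments to get $E[\hub_G]\xRightarrow{\tilde{\alpha}}C[\lmed{p}{}{G'}]\Rightarrow P$. The nontrivial point, absent from \Cref{d:consStartingRel}, is that the landing context $C$ can be chosen without output or selection on $\ci{\mu}{p}$, and likewise for $D$. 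This I would settle by a case analysis on $p$'s role in $\beta_k$, matching the four clauses of \Cref{d:consStartingRel}: when $p=s$ the router consumes $\ci{\mu}{p}\gets j$ then $\ci{\mu}{p}(y)$; when $p=r$ it emits $\ol{\ci{\mu}{p}}\puts j$ then $\ol{\ci{\mu}{p}}[y]$; when $p\notin\{s,r\}$ has a dependency it emits $\ol{\ci{\mu}{p}}\puts j$; and when $p$ is uninvolved $\tilde{\alpha}_k=\epsilon$, with \Cref{l:routerNodepEqual} guaranteeing the router is unchanged across branches. In each case the orchestrator of \Cref{alg:medium} performs precisely the same actions on $\ci{\mu}{p}$, so one $\tilde{\alpha}_k$ serves both; recursion ($G_{k'}=\mu X\sdot G'_{k'}$) and $\gskip$ prefixes are absorbed by unfolding and by the homomorphic clauses, contributing no observable labels.

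The main obstacle, I expect, is reconciling the highly concurrent, asynchronous $\tau$-behavior of these networks with the fixed observable trace $\tilde{\alpha}$ on $\ci{\mu}{p}$. Because outputs and selections are non-blocking, $p$'s router can emit a \emph{floating} message on $\ci{\mu}{p}$ and continue internally, and implementation reductions may interleave arbitrarily with router reductions; I must therefore argue that observations on $\ci{\mu}{p}$ are insensitive to the order of these internal steps. This is where I would lean on the independence of reductions (\Cref{p:indep}) to commute $\tau$-steps past the $\ci{\mu}{p}$-labeled transitions, ensuring that any floating output or selection on $\ci{\mu}{p}$ is absorbed into the observed label $\tilde{\alpha}_k$ rather than left behind, so that the contexts $C$ and $D$ can always be kept free of output or selection on $\ci{\mu}{p}$. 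A secondary, bookkeeping-heavy point is tracking the active-participant set and the nested unfoldings so that the reached router $\lmed{p}{}{G'}$ (resp.\ $\mcl{O}_{G'}$) is recovered exactly, and not merely up to structural congruence.
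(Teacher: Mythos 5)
Your proposal is correct and follows essentially the same route as the paper's proof: induction on $k$, with the base case immediate from the shape of $E[\hub_G]$ and $E[\mcl{O}_G]$, and the inductive case obtained by composing the trace from the induction hypothesis with the segment $\tilde{\alpha}_k$ already built into \Cref{d:consStartingRel}, which fixes that segment uniformly by $p$'s role in $\beta_k$. The paper only details one representative involvement case and leaves the no-output/selection-on-$\ci{\mu}{p}$ claim for $C$ and $D$ largely implicit, whereas you spell out all four clauses and justify the context condition via \Cref{p:indep}; this is added care, not a different argument.
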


\begin{proof}
    By induction on $k$.
    In the base case ($k=0$), we have $G = G'$, so $E[\hub_G] = C[\lmed{p}{}{G'}] \Rightarrow P$ and $E[\mcl{O}] = D[\mcl{O}_{G'}] \Rightarrow Q$.

    For the inductive case ($k=k'+1$), we detail the representative case where $$G \xrightarrow{\beta_1} \ldots \xrightarrow{\beta_{k'}} G_{k'} = p \mto s \{i\<S_i\> \sdot G'_i\}_{i \in I} \xrightarrow{p \> s: i'\<S_{i'}\>} G'$$ for some $i' \in I$.
    By the IH, for every $(P_{k'},Q_{k'}) \in \<G \xrightarrow{\beta_1} \ldots \xrightarrow{\beta_{k'}} G_{k'}, (E[\hub_G],E[\mcl{O}_G]),p\>$, there exists $\tilde{\alpha}'$ such that $E[\hub_G] \xRightarrow{\tilde{\alpha}'} C'[\lmed{}{p}{G_{k'}}] \Rightarrow P_{k'}$ and $E[\mcl{O}_G] \xRightarrow{\tilde{\alpha}'} D'[\mcl{O}_{G_{k'}}] \Rightarrow Q_{k'}$ where $C'$ and $D'$ are without output or selection on $\ci{\mu}{p}$.
    Take any $(P,Q) \in \<G \xrightarrow{\beta_1} \ldots \xrightarrow{\beta_{k'}} G_{k'} \xrightarrow {s \> p:i'\<S_{i'}\>} G', (E[\hub_G], E[\mcl{O}_G]), p\>$.
    By definition, there exists $(P_{k'},Q_{k'}) \in \<G \xrightarrow{\beta_1} \ldots \xrightarrow{\beta_{k'}} G_{k'}, (E[\hub_G],E[\mcl{O}_G]),p\>$ such that $$P_{k'} \xRightarrow{\ol{\ci{\mu}{p}} \puts i' \, \ol{\ci{\mu}{p}}[y]} C[\lmed{}{p}{G'}] \Rightarrow P\text{ and }Q_{k'} \xRightarrow{\ol{\ci{\mu}{p}} \puts i' \, \ol{\ci{\mu}{p}}[y]} D[\mcl{O}_{G'}] \Rightarrow Q$$ where there are no outputs or selection on $\ci{\mu}{p}$ in $C$ and $D$.
    Let $\tilde{\alpha} = \tilde{\alpha}' \, \ol{\ci{\mu}{p} \puts i'} \, \ol{\ci{\mu}{p}}[y]$.
    \sloppy
    Then ${E[\hub_G] \xRightarrow{\tilde{\alpha}} C[\lmed{p}{}{G'}] \Rightarrow P}$ and $E[\mcl{O}_G] \xRightarrow{\tilde{\alpha}} D[\mcl{O}_{G'}] \Rightarrow Q$.
\end{proof}

\begin{lemma}\label{l:Bbisim}
    Let $G$ be a relative well-formed global type such that $G \xrightarrow{\beta_1} \ldots \xrightarrow{\beta_k} G'$ (with $k
    \geq 0$) and let $p$ be a participant of $G$.
    Also, let $E$ be an evaluation context such that $\fn(E) = \{\ci{\mu}{p}\}$.
    Then the relation $\B(G',\B_0)$, with $\B_0 = \<G \xrightarrow{\beta_1} \ldots \xrightarrow{\beta_k} G', (E[\hub_G],E[\mcl{O}_{G}]), p\>$, is a weak bisimulation (cf. \Cref{d:weakBisim}).
\end{lemma}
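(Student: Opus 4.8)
The plan is to show that $\B = \B(G',\B_0)$ satisfies the transfer conditions of \Cref{d:weakBisim} by induction on the structure of $G'$, following precisely the case analysis used to build $\B$ in \Cref{d:candidateBisim}. Two facts guide the whole argument. First, since $\fn(E) = \{\ci{\mu}{p}\}$, the only observable (non-$\tau$) transitions of $E[\hub_G]$ and $E[\mcl{O}_G]$ are on $\ci{\mu}{p}$; hence a \emph{weak} bisimulation need only synchronise actions on $\ci{\mu}{p}$ while absorbing all internal relaying into $\Rightarrow$. Second, by \Cref{l:startTrans} every pair $(P,Q) \in \B_0$ arises from $E[\hub_G] \xRightarrow{\tilde\alpha} C[\lmed{p}{}{G'}] \Rightarrow P$ and $E[\mcl{O}_G] \xRightarrow{\tilde\alpha} D[\mcl{O}_{G'}] \Rightarrow Q$, where $C$ and $D$ carry no pending output or selection on $\ci{\mu}{p}$. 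Thus the next action either side can perform on $\ci{\mu}{p}$ is dictated solely by $\lmed{p}{}{G'}$ and $\mcl{O}_{G'}$, and the heart of each case is to check that these two processes expose the same action on $\ci{\mu}{p}$.

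For the base case $G' = \bullet$ both $\lmed{p}{}{\bullet}$ and $\mcl{O}_{\bullet}$ equal $\0$, so no action on $\ci{\mu}{p}$ remains; as $\B(\bullet,\B_0) = \B_0$ is closed under $\tau$ on each component, related processes can only match one another's $\tau$-moves by idling, and transfer holds trivially. For an exchange $G' = s \mto r \{i\<S_i\> \sdot G_i\}_{i\in I}$ I would treat the four subcases uniformly: reading off \Cref{alg:router} and \Cref{alg:medium}, when $p = s$ both router and orchestrator begin with the branching $\ci{\mu}{p} \gets \{i : \ldots\}$; when $p = r$ or $p$ is a dependent participant both begin (after internal relaying absorbed by $\Rightarrow$) with the selection $\ol{\ci{\mu}{p}} \puts i$ of the same label $i$; and in the sender/recipient cases this is followed by the matching input $\ci{\mu}{p}(y)$ resp.\ output $\ol{\ci{\mu}{p}}[y]$. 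In each such case the sets $\B_1^i$ (and $\B_2^i$) of \Cref{d:candidateBisim} are exactly the consistent starting relation of $G_i$ obtained by appending the transition $G' \xrightarrow{s \rangle r : i\<S_i\>} G_i$, so the recursive call $\B(G_i,\B_2^i,p)$ falls under the induction hypothesis (as $G_i$ is a structurally smaller branch) and its pairs transfer. The label $i$ in a selection is forced to agree on both sides, since it is relayed---centrally by $\mcl{O}_{G'}$ and hop-by-hop by the routers---from the \emph{same} implementation residing in $E$.

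The subcase $p \notin \{s,r\}$ with neither dependency is the one where the observable trace does not advance ($\tilde\alpha = \epsilon$ in \Cref{d:consStartingRel}): here $\B(G',\B_0) = \B(G_j,\B_0,p)$ for an arbitrary $j$, and I would discharge it using the router equality of \Cref{l:routerNodepEqual} together with its orchestrator counterpart, namely that $\mcl{O}_{G_i}$ and $\mcl{O}_{G_j}$ expose the same behaviour on $\ci{\mu}{p}$ for all $i,j$ (since $G_i \onto^{\pri} p = G_j \onto^{\pri} p$ when $p$ has no dependency, so both orchestrators are typed identically on $\ci{\mu}{p}$ and act identically there); the orchestrator's internal commitment to some branch is then invisible on $\ci{\mu}{p}$ and absorbed into $\B_0$. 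The cases $G' = \gskip \sdot G''$ and $G' = \mu X \sdot G''$ reduce respectively to the continuation and to the unfolding $G''\{\mu X \sdot G''/X\}$ via the $\gskip$ clauses of both algorithms and the structural congruence for recursion (\Cref{f:procdef}); well-foundedness through the unfolding case rests on contractivity of global types (\Cref{d:globtypes}) together with the fact that a recursion in which $p$ is inactive is synthesised to $\0$ on both sides, which bounds the recursive construction of $\B$ before the next action on $\ci{\mu}{p}$.

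The main obstacle I anticipate is matching the \emph{distributed} relaying performed by the hub against the \emph{centralized} relaying of the orchestrator, up to weak transitions, in the dependency-laden subcases. Concretely, when $p$ depends on both $s$ and $r$ (the branch patched with $\error{\cdot}$ in~\eqref{eq:patched}), $p$'s router first selects on $\ci{\mu}{p}$ and only afterwards receives a confirming label from $\rtr_r$, whereas the orchestrator forwards only once; I must argue that the alarm branch is never enabled---so that the extra router step is a genuine $\tau$ matched by $\Rightarrow$---reusing the consistency argument behind \Cref{t:noError} (both $s$ and $r$ relay the same label chosen in $E$). Establishing that every internal reordering of these hop-by-hop forwards is confluent enough to be absorbed on the $\hub_G$ side, simultaneously for all four dependency configurations, is the delicate bookkeeping at the centre of \Cref{l:Bbisim}.
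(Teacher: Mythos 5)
Your plan follows the paper's proof essentially step for step: the same case analysis on the structure of $G'$ mirroring \Cref{d:candidateBisim}, the same use of \Cref{l:startTrans} to pin both processes to contexts $C[\lmed{p}{}{G'}]$ and $D[\mcl{O}_{G'}]$ with no pending action on $\ci{\mu}{p}$, the same three-way split of pairs in $\B$ (those in $\B_0$, those in $\B_1^i$, those in the recursive call), and the same treatment of the no-dependency subcase via \Cref{l:routerNodepEqual}. The one substantive step you assert rather than argue is the identity between $\B_2^j$ and the consistent starting relation $\<G \xrightarrow{\beta_1} \ldots \xrightarrow{\beta_k} G' \xrightarrow{s \> p:j\<S_j\>} G_j, (E[\hub_G],E[\mcl{O}_G]), p\>$; the paper proves this as a bi-implication, and the nontrivial direction needs a commutation argument — because the selection and output on $\ci{\mu}{p}$ are asynchronous, they can be observed as soon as they become available, so any trace reaching $P_2$ can be reordered to pass through a context $C''[\lmed{p}{}{G_j}]$ with no pending action on $\ci{\mu}{p}$. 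Without that reordering you cannot invoke the (co)induction hypothesis, since its hypothesis demands a \emph{consistent} starting relation.

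The other genuine issue is your choice of \emph{induction} on the structure of $G'$. The case $G' = \mu X \sdot G''$ reduces to the unfolding $G''\{\mu X \sdot G''/X\}$, which is not structurally smaller, so structural induction is not well-founded for recursive protocols; your appeal to contractivity only gives productivity (an exchange occurs before the next recursive call), not termination of the induction. The paper resolves this by arguing \emph{coinductively}: one exhibits $\B$ as a (possibly infinite) relation and checks the transfer property locally for each pair, re-entering the argument at $\B(G_j,\B_2^j,p)$ as a coinductive step rather than an inductive one. With that reframing, and the bi-implication above made explicit, your argument matches the paper's.
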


\begin{proof}
    By coinduction on the structure of $G'$; there are four cases (communication, recursion, $\gskip$, and $\bullet$).
    We only detail the interesting case of communication, which is the only case which involves transitions with labels other than $\tau$.
    There are four subcases depending on the involvement of $p$ in the communication ($p$ is sender, $p$ is recipient, $p$ depends on the communication, or $p$ does not depend on the communication).
    In each subcase, the proof follows the same pattern, so as a representative case, we detail when $p$ is the recipient of the communication, i.e., $G' = s \mto p \{i\<S_i\> \sdot G'_i\}_{i \in I}$.
    Recall
    \begin{align*}
        \lmed{p}{}{G'} &= \crt{p}{s} \gets \big\{i{:}~ \ol{\ci{\mu}{p}} \puts i \cdot {(\ol{\crt{p}{q}} \puts i)}_{q \in \deps} \cdot \crt{p}{s}(v) \sdot \ol{\ci{\mu}{p}}[w] \cdot (v \fwd w \| \lmed{p}{}{G'_i}) \big\}_{i \in I},
        \tag{\text{\Cref{alg:router} \cref{line:rtrRecv}}}
        \\
        \lmed{s}{}{G'} &= \ci{\mu}{p} \gets \big\{i{:}~ \ol{\crt{s}{p}} \puts i \cdot {(\ol{\crt{s}{q}} \puts i)}_{q \in \deps} \cdot \ci{\mu}{s}(v) \sdot \ol{\crt{s}{p}}[w] \cdot (v \fwd w \| \lmed{s}{}{G'_i}) \big\}_{i \in I},
        \tag{\text{\Cref{alg:router} \cref{line:rtrSend}}}
        \\
        \mcl{O}_{G'} &= \ci{\mu}{s} \triangleright \{i{:}~ \ol{\ci{\mu}{p}} \triangleleft i \cdot {(\ol{\ci{\mu}{q}} \triangleleft i)}_{q \in \deps} \cdot \ci{\mu}{s}(v) \sdot \ol{\ci{\mu}{p}}[w] \cdot (v \fwd w \| \mcl{O}_{G'_i}) \}_{i \in I}.
        \tag{\text{\Cref{alg:medium} \cref{line:mdmComm}}}
    \end{align*}
    Let $\B = \B(G',\B_0)$.
    We have $\B = \B_0 \cup \bigcup_{i \in I}(\B_1^i \cup \B(G'_i,\B_2^i))$ with $\B_1^i$ and $\B_2^i$ as defined above.
    Take any $(P,Q) \in \B$; we distinguish cases depending on the subset of $\B$ to which $(P,Q)$ belongs:
    \begin{itemize}
        \item
            $(P,Q) \in \B_0$.
            By \Cref{l:startTrans}, we have $E[\hub_G] \xRightarrow{\tilde{\alpha}} C[\lmed{p}{}{G'}] \Rightarrow P$ and $E[\mcl{O}] \xRightarrow{\tilde{\alpha}} D[\mcl{O}_{G'}] \Rightarrow Q$, where $C$ and $D$ do not contain an output or selection on $\ci{\mu}{p}$.

            Suppose $P \xrightarrow{\alpha} P'$; we need to exhibit a matching weak transition from $Q$.
            By assumption, there are no outputs or selections on $\ci{\mu}{p}$ in $C$ and $D$.
            Since there are no outputs or selections on $\ci{\mu}{p}$ in $C$, by definition of $\lmed{p}{}{G'}$, we need only consider two cases for $\alpha$:
            \begin{itemize}
                \item
                    $\alpha = \tau$.
                    We have $Q \Rightarrow Q$, so $Q \xRightarrow{\tau} Q$.
                    Since $C[\lmed{p}{}{G'}] \Rightarrow P'$ and $D[\mcl{O}_{G'}] \Rightarrow Q$, we have $(P',Q) \in \B_0 \subseteq \B$.

                \item
                    $\alpha = \ol{\ci{\mu}{p}} \puts j$ for some $j \in I$.
                    To enable this transition, which originates from $p$'s router, somewhere in the $\tau$-transitions between $C[\lmed{p}{}{G'}]$ and $P$ the label $j$ was received on $\crt{p}{s}$, sent by the router of $s$ on $\crt{s}{p}$.
                    For this to happen, the label $j$ was received on $\ci{\mu}{s}$, sent from the context on $\ci{s}{\mu}$.
                    Since $\hub_G$ and $\mcl{O}$ are embedded in the same context, the communication of $j$ between $\ci{s}{\mu}$ and $\ci{\mu}{s}$ can also take place after a number of $\tau$-transitions from $D[\mcl{O}_{G'}]$, after which the selection of $j$ on $\ci{\mu}{p}$ becomes enabled.
                    Hence, since there are no outputs or selection on $\ci{\mu}{p}$ in $D$, we have $Q \Rightarrow Q_0 \xrightarrow{\ol{\ci{\mu}{p}} \puts j} Q'$.
                    We have $D[\mcl{O}_{G'}] \Rightarrow Q_0$, so $(P,Q_0) \in \B_0$.
                    Since $P \xrightarrow{\ol{\ci{\mu}{p}} \puts j} \Rightarrow P'$ and $Q_0 \xrightarrow{\ol{\ci{\mu}{p}} \puts j} \Rightarrow Q'$, we have $(P',Q') \in \B_1^j \subseteq \B'$.
            \end{itemize}

            Now suppose $Q \xrightarrow{\alpha} Q'$; we need to exhibit a matching weak transition from $P$.
            Again, we need only consider two cases for $\alpha$:
            \begin{itemize}
                \item
                    $\alpha = \tau$.
                    Analogous to the similar case above.

                \item
                    $\alpha = \ol{\ci{\mu}{p}} \puts j$ for some $j \in I$.
                    To enable this transition, which originates from the orchestrator, somewhere in the $\tau$-transitions between $D[\mcl{O}_{G'}]$ and $Q$ the label $j$ was received on $\ci{\mu}{s}$, sent from the context on $\ci{s}{\mu}$.
                    Hence, this communication can also take place after a number of transitions from $E[\hub_G]$, where the label is received by the router of $s$.
                    After this, from $C[\lmed{p}{}{G'}]$, the router of $s$ forwards $j$ to $p$'s router (communication between $\crt{s}{p}$ and $\crt{p}{s}$), enabling the selection of $j$ on $\ci{\mu}{p}$ in $p$'s router.
                    Hence, since there are no outputs or selections in $C$, we have $P \Rightarrow P_0 \xrightarrow{\ol{\ci{\mu}{p}} \puts j} P'$.
                    We have $C[\lmed{p}{}{G'}] \Rightarrow P_0$, so $(P_0,Q) \in \B_0$.
                    Since $P_0 \xrightarrow{\ol{\ci{\mu}{p}} \puts j} \Rightarrow P'$ and $Q \xrightarrow{\ol{\ci{\mu}{p}} \puts j} \Rightarrow Q'$, we have $(P',Q') \in \B_1^j \subseteq \B$.
            \end{itemize}

        \item
            \begin{sloppypar}
                $(P,Q) \in \B_1^j$ for some $j \in I$.
                We have $E[\hub_G] \xRightarrow{\tilde{\alpha}} C[\lmed{p}{}{G'}] \Rightarrow P_0 \xrightarrow{\ol{\ci{\mu}{p}} \puts j} \Rightarrow P$ and ${E[\mcl{O}_G] \xRightarrow{\tilde{\alpha}} D[\mcl{O}_{G'}] \Rightarrow Q_0 \xrightarrow{\ol{\ci{\mu}{p}} \puts j} \Rightarrow Q}$ where $(P_0,Q_0) \in \B_0$.
                Since we have already observed the selection of $j$ on $\ci{\mu}{p}$ from both the hub and the orchestrator, we know that the routers of $p$ and $s$ are in branch $j$, and similarly the orchestrator is in branch $j$.
            \end{sloppypar}

            Suppose $P \xrightarrow{\alpha} P'$.
            To exhibit a matching weak transition from $Q$ we only need to consider two cases for $\alpha$:
            \begin{itemize}
                \item
                    $\alpha = \tau$.
                    We have $Q \xRightarrow{\tau} Q$, and $P_0 \xrightarrow{\ol{\ci{\mu}{p}} \puts j} \Rightarrow P'$ and $Q_0 \xrightarrow{\ol{\ci{\mu}{p}} \puts j} \Rightarrow Q$, so $(P',Q) \in \B_1^j \subseteq \B$.

                \item
                    $\alpha = \ol{\ci{\mu}{p}}[y]$ for some $y$.
                    The observed output of some $y$ on $\ci{\mu}{p}$ must originate from $p$'s router.
                    This output is only enabled after receiving some $v$ over $\crt{p}{s}$, which must be sent by the router of $s$ over $\crt{s}{p}$.
                    The output by the router of $s$ is only enabled after receiving some $v$ over $\ci{\mu}{s}$, sent by the context over $\ci{s}{\mu}$.
                    Since the hub and the orchestrator are embedded in the same context, the communication of $v$ from $\ci{s}{\mu}$ to $\ci{\mu}{s}$ can also occur (or has already occurred) for the orchestrator.
                    After this, the output of $y$ over $\ci{\mu}{p}$ is enabled in the orchestrator, i.e., $Q \Rightarrow Q_1 \xrightarrow{\ol{\ci{\mu}{p}}[y]} Q'$.
                    We have $Q_0 \xrightarrow{\ol{\ci{\mu}{p}} \puts j} \Rightarrow Q_1$, so $(P,Q_1) \in \B_1^j$.
                    Since $P \xrightarrow{\ol{\ci{\mu}{p}}[y]} \Rightarrow P'$ and $Q_1 \xrightarrow{\ol{\ci{\mu}{p}}[y]} \Rightarrow Q'$, we have $(P',Q') \in \B_2^j$.
                    By definition, $\B_2^j \subseteq \B(G'_j,B_2^j) \subseteq \B$, so $(P',Q') \in \B$.
            \end{itemize}

            Now suppose $Q \xrightarrow{\alpha} Q'$.
            To exhibit a matching weak transition from $P$ we only need to consider two cases for $\alpha$:
            \begin{itemize}
                \item
                    $\alpha = \tau$.
                    Analogous to the similar case above.

                \item
                    $\alpha = \ol{\ci{\mu}{p}}[y]$ for some $y$.
                    The observed output of some $y$ on $\ci{\mu}{p}$ must originate from the orchestrator.
                    This output is only enabled after receiving some $v$ over $\ci{\mu}{s}$, sent by the context of $\ci{s}{\mu}$.
                    Since the hub and the orchestrator are embeded in the same context, the communication of $v$ from $\ci{s}{\mu}$ to $\ci{\mu}{s}$ can also occur (or has already occurred) for the router of $s$.
                    After this, the router of $s$ sends another channel $v'$ over $\crt{s}{p}$, received by $p$'s router on $\crt{p}{s}$.
                    This enables the output of $y$ on $\ci{\mu}{p}$ by $p$'s router, i.e., $P \Rightarrow P_1 \xrightarrow{\ol{\ci{\mu}{p}}[y]} P'$.
                    We have $P_0 \xrightarrow{\ol{\ci{\mu}{p}} \puts j} \Rightarrow P_1$, so $(P_1,Q) \in \B_1^j$.
                    Since $P \xrightarrow{\ol{\ci{\mu}{p}}[y]} \Rightarrow P'$ and $Q_1 \xrightarrow{\ol{\ci{\mu}{p}}[y]} \Rightarrow Q'$, we have $(P',Q') \in \B_2^j$.
                    As above, this implies that $(P',Q') \in \B$.
            \end{itemize}

        \item
            For some $j \in I$, $(P,Q) \in \B(G'_j,\B_2^j)$.
            The thesis follows from proving that $\B(G'_j,\B_2^j)$ is a weak bisimulation.
            For this, we want to appeal to the coinduction hypothesis, so we have to show that $\B_2^j = \<G \xrightarrow{\beta_1} \ldots \xrightarrow{\beta_k} G' \xrightarrow{s \> p:j\<S_j\>} G'_j, (E[\hub_G],E[\mcl{O}]), p\>$.
            We prove that $(P_2,Q_2) \in \B_2^j$ if and only if $(P_2,Q_2) \in \<G \xrightarrow{\beta_1} \ldots \xrightarrow{\beta_k} G' \xrightarrow{s \> p:j\<S_j\>} G'_j, (E[\hub_G],E[\mcl{O}]), p\>$, i.e., we prove both directions of the bi-implication:
            \begin{itemize}
                \item
                    Take any $(P_2,Q_2) \in \B_2^j$.
                    We have $E[\hub_G] \xRightarrow{\tilde{\alpha}} C[\lmed{p}{}{G'}] \Rightarrow P_0 \xrightarrow{\ol{\ci{\mu}{p}} \puts j} \Rightarrow P_1 \xrightarrow{\ol{\ci{\mu}{p}}[y]} \Rightarrow P_2$ and $E[\mcl{O}] \xRightarrow{\tilde{\alpha}} D[\mcl{O}_{G'}] \Rightarrow Q_0 \xrightarrow{\ol{\ci{\mu}{p}} \puts j} \Rightarrow Q_1 \xrightarrow{\ol{\ci{\mu}{p}}[y]} \Rightarrow Q_2$, where $(P_0,Q_0) \in \B_0$ and $(P_1,P_1) \in \B_1^j$.

                    By definition, somewhere during the transitions from $C[\lmed{p}{}{G'}]$ to $P_1$, we find $C'[\lmed{p}{}{G'_j}]$, which may then further reduce by $\tau$-transitions towards $P_2$.
                    As soon as we do find $C'[\lmed{p}{}{G'_j}]$, the output on $\ci{\mu}{p}$ is available, and the selection on $\ci{\mu}{p}$ has already occurred or is still available.
                    Because they are asynchronous actions, we can observe the selection and output on $\ci{\mu}{p}$ as soon as they are available, before further reducing $p$'s router.
                    Hence, we can observe $C[\lmed{p}{}{G'}] \Rightarrow \xrightarrow{\ol{\ci{\mu}{p}} \puts j} \Rightarrow \xrightarrow{\ol{\ci{\mu}{p}}[y]} \Rightarrow C''[\lmed{p}{}{G'_j}] \Rightarrow P_2$, i.e., \[{E[\hub_G] \xRightarrow{\tilde{\alpha}} C[\lmed{p}{}{G'}] \xRightarrow{{\ol{\ci{\mu}{p}} \puts j} \, {\ol{\ci{\mu}{p}}[y]}} C''[\lmed{p}{}{G'_j}] \Rightarrow P_2}.\]
                    By definition, $\lmed{p}{}{G'_j}$ has no output or selection on $\ci{\mu}{p}$ available, so there are no outputs or selections on $\ci{\mu}{p}$ in $C''$.

                    By a similar argument, we can observe $D[\mcl{O}_{G'}] \Rightarrow \xrightarrow{\ol{\ci{\mu}{p}} \puts j} \Rightarrow \xrightarrow{\ol{\ci{\mu}{p}}[y]} \Rightarrow D''[\mcl{O}_{G'_j}] \Rightarrow Q_2$, i.e., $E[\mcl{O}] \xRightarrow{\tilde{\alpha}} D[\mcl{O}_{G'}] \xRightarrow{{\ol{\ci{\mu}{p}} \puts j} \, {\ol{\ci{\mu}{p}}[y]}} D''[\mcl{O}_{G'_j}] \Rightarrow Q_2$.
                    Also in this case, there are no outputs or selections on $\ci{\mu}{p}$ in $D''$.

                    By assumption and definition, \[(C''[\lmed{p}{}{G'}],D''[\mcl{O}_{G'}]) \in \B_0 = \<G \xrightarrow{\beta_1} \ldots \xrightarrow{\beta_k} G', (E[\hub_G],E[\mcl{O}]), p\>.\]
                    Hence, by definition, $(P_2,Q_2) \in \<G \xrightarrow{\beta_1} \ldots \xrightarrow{\beta_k} G' \xrightarrow{s \> p:j\<S_j\>} G'_j, (E[\hub_G],E[\mcl{O}]), p\>$.

                \item
                    Take any $(P,Q) \in \<G \xrightarrow{\beta_1} \ldots \xrightarrow{\beta_k} G' \xrightarrow{s \> p:j\<S_j\>} G'_j, (E[\hub_G],E[\mcl{O}]), p\>$.
                    By definition, there are $(P',Q') \in \<G \xrightarrow{\beta_1} \ldots \xrightarrow{\beta_k} G', (E[\hub_G],E[\mcl{O}]), p\>$ such that $P' \xRightarrow{\ol{\ci{\mu}{p}} \puts j \, \ol{\ci{\mu}{p}}[y]} C[\lmed{p}{}{G'}] \Rightarrow P$ and $Q' \xRightarrow{\ol{\ci{\mu}{p}} \puts j \, \ol{\ci{\mu}{p}}[y]} D[\mcl{O}_{G'}] \Rightarrow Q$.
                    Since, $\B_0 = \<G \xrightarrow{\beta_1} \ldots \xrightarrow{\beta_k} G', (E[\hub_G],E[\mcl{O}]), p\>$, by definition $(P,Q) \in \B_2^j$.
                    \qedhere
            \end{itemize}
    \end{itemize}
\end{proof}


\section{Routers in Action}
\label{s:routersInAction}

We demonstrate our router-based analysis of global types by means of several examples.
First, in \secref{ss:intrl} and \secref{ss:deleg} we consider two simple protocols: they illustrate the different components of our approach, and our support for delegation and interleaving.
Then in \secref{ss:GAuth} we revisit  the authorization protocol $G_\sff{auth}$ from \Cref{s:introduction} to illustrate how our analysis supports also more complex protocols featuring also non-local choices and recursion.

\subsection{Delegation and Interleaving}
\label{ss:intrl}

We illustrate our analysis by considering a global type with delegation and interleaving, based on an example by Toninho and Yoshida~\cite[Ex.~6.9]{journal/toplas/ToninhoY18}.
Consider the global type:
\begin{align*}
    G_{\sff{intrl}} := p \mto q {:} 1\<{!}\sff{int} \sdot \bullet\> \sdot r \mto t {:} 2\<\sff{int}\> \sdot p \mto q {:} 3 \sdot \bullet
\end{align*}
Following Toninho and Yoshida~\cite{journal/toplas/ToninhoY18}, we define implementations of the roles of the four participants ($p,q,r,t$) of $G_{\sff{intrl}}$ using three processes ($P_1$, $P_2$, and $P_3$):
$P_2$ and $P_3$ implement the roles of $q$ and $r$, respectively, and $P_1$ interleaves the roles of $p$ and $t$ by sending a channel $s$ to $q$ and receiving an \sff{int} value $v$ from $r$, which it should forward to $q$ over $s$.
\begin{align*}
    P_1
    &:= \ol{\ci{p}{\mu}} \puts 1 \cdot \ol{\ci{p}{\mu}}[s] \cdot (\ci{t}{\mu} \gets \{2{:}~ \ci{t}{\mu}(v) \sdot \ol{s}[w] \cdot v \fwd w \} \| \ol{\ci{p}{\mu}} \puts 3 \cdot \ol{\ci{p}{\mu}}[z] \cdot \0 )
    \\
    &\vdash \ci{p}{\mu}{:}~ \oplus^0 \big\{1{:}~ \msgprop{{!}\sff{int} \sdot \bullet} \tensor^1 \oplus^8 \{3{:}~ \bullet \tensor^9 \bullet \} \big\},
    ~\ci{t}{\mu}{:}~ \&^6 \{2{:}~ \bullet \parr^7 \bullet \}
    \\[4pt]
    P_2
    &:= \ci{q}{\mu} \gets \{1{:}~ \ci{q}{\mu}(y) \sdot y(x) \sdot \ci{q}{\mu} \gets \{3{:}~ \ci{q}{\mu}(u) \sdot \0 \} \}
    \vdash \ci{q}{\mu}{:}~ \&^2 \big\{1{:}~ \dual{\msgprop{{!}\sff{int} \sdot \bullet}} \parr^3 \&^{10} \{3{:}~ \bullet \parr^{11} \bullet \} \big\}
    \\[4pt]
    P_3
    &:= \ol{\ci{r}{\mu}} \puts 2 \cdot \ol{\ci{r}{\mu}}[\bm{33}] \cdot \0
    \vdash \ci{r}{\mu}{:}~ \oplus^4 \{2{:}~ \bullet \tensor^5 \bullet \}
\end{align*}
where `$\bm{33}$' denotes a closed channel endpoint representing the number ``$33$''.

To prove that $P_1$, $P_2$, and $P_3$ correctly implement $G_\sff{intrl}$, we compose them with the routers synthesized from $G_\sff{intrl}$.
For example, the routers for $p$ and $t$, to which $P_1$ will connect, are as follows (omitting curly braces for branches on a single label):
\begin{align*}
    \rtr_p
    &= \ci{\mu}{p} \gets 1 \sdot \crt{p}{q} \puts 1 \cdot \ci{\mu}{p}(s) \sdot \ol{\crt{p}{q}}[s'] \cdot (s \fwd s' \| \ci{\mu}{p} \gets 3 \sdot \crt{p}{q} \puts 3 \cdot \ci{\mu}{p}(z) \sdot \ol{\crt{p}{q}}[z'] \cdot (z \fwd z' \| \0))
    \\
    \rtr_t
    &= \crt{t}{r} \gets 2 \sdot \ci{\mu}{t} \puts 2 \cdot \crt{t}{r}(v) \sdot \ol{\ci{\mu}{t}}[v'] \cdot (v \fwd v' \| \0)
\end{align*}

We assign values to the priorities in $\msgprop{{!}\sff{int} \sdot \bullet} = \bullet \tensor^\pri \bullet$ to ensure that $P_1$ and $P_2$ are well-typed;
assigning $\pri = 8$ works, because the output on $s$ in $P_1$ occurs after the input on $\ci{t}{\mu}$ (which has priority 6--7) and the input on $y$ in $P_2$ occurs before the second input on $\ci{q}{\mu}$ (which has priority 10--11).

The types assigned to $\ci{p}{\mu}$ and $\ci{t}{\mu}$ in $P_1$ coincide with $(G_\sff{intrl} \onto^0 p)$ and $(G_\sff{intrl} \onto^0 t)$, respectively (cf.\ \defref{d:locproj}).
Therefore, by \Cref{t:routerTypes}, the process $P_1$ connect to the routers for $p$ and $t$ $\nu{\ci{p}{\mu} \ci{\mu}{p}} \nu{\ci{t}{\mu} \ci{\mu t}{}} (P_1 \| \rtr_p \| \rtr_t)$ is well-typed.
Similarly, $\nu{\ci{q}{\mu} \ci{\mu}{q}} (P_2 \| \rtr_q)$ and $\nu{\ci{r}{\mu} \ci{\mu}{r}} (P_3 \| \rtr_r)$ are well-typed.

The composition of these routed implementations results in the following network:
\begin{align*}
    N_\sff{intrl} := \hspace{-.5em} \begin{array}{c}
        \nu{\crt{p}{q} \crt{q}{p}} \nu{\crt{p}{r} \crt{r}{p}}
        \\
        \nu{\crt{p}{t} \crt{t}{p}} \nu{\crt{q}{r} \crt{r}{q}}
        \\
        \nu{\crt{q}{t} \crt{t}{q}} \nu{\crt{r}{t} \crt{t}{r}}
    \end{array} \hspace{-.5em} \left( \hspace{-.5em} \begin{array}{l}
            \phantom{{} \| {}} \nu{\ci{p}{\mu} \ci{\mu}{p}} \nu{\ci{t}{\mu} \ci{\mu}{t}} (P_1 \| \rtr_p \| \rtr_t)
            \\
            {} \| \nu{\ci{q}{\mu} \ci{\mu}{q}} (P_2 \| \rtr_q)
            \\
            {} \| \nu{\ci{r}{\mu} \ci{\mu}{r}} (P_3 \| \rtr_r)
    \end{array}\right)
\end{align*}
We have $N_\sff{intrl} \in \sys(G_\sff{intrl})$ (cf.\ \defref{d:networks}), so, by \Cref{t:globalDlFree}, $N_\sff{intrl}$ is deadlock free and, by \Cref{t:completeness} and \Cref{t:soundness}, it correctly implements $G_\sff{intrl}$.

\subsection{Another Example of Delegation}
\label{ss:deleg}

Here, we further demonstrate our support for interleaving, showing how a participant can delegate the rest of its interactions in a protocol.
The following global type formalizes a protocol in which a Client ($c$) asks an online Password Manager ($p$) to login with a Server~($s$):
\begin{align*}
    G_\sff{deleg} := c \mto p {:} \sff{login} \<S\> \sdot G'_\sff{deleg}
\end{align*}
where
\begin{align*}
    S
    &:= {!}({?} \sff{bool} \sdot \bullet) \sdot S'
    \\
    S'
    &:= \& \{ \sff{passwd}{:}~ {?} \sff{str} \sdot \oplus \{ \sff{auth}{:}~ {!} \sff{bool} \sdot \bullet \} \}
    \\
    G'_\sff{deleg}
    &:= c \mto s {:} \sff{passwd} \<\sff{str}\> \sdot s \mto c {:} \sff{auth} \<\sff{bool}\> \sdot \bullet
\end{align*}
Here $S'$ expresses the type of $\rtr_c$'s channel endpoint $\ci{\mu}{c}$.
This means that we can give implementations for $c$ and $p$ such that $c$ can send its channel endpoint $\ci{c}{\mu}$ to $p$, after which $p$ logs in with $s$ in $c$'s place, forwarding the authorization boolean received from $s$ to $c$.
Giving such implementations is relatively straightforward, demonstrating the flexibility of our global types and analysis using APCP and routers.

Using local projection, we can compute a type for $c$'s implementation to safely connect with its router
\begin{align*}
    G_\sff{deleg} \onto^0 c = \oplus^{0} \{ \sff{login}{:}~ \msgprop{S} \tensor^{1} (G'_\sff{deleg} \onto^4 c) \}
\end{align*}
where
\begin{align*}
    \msgprop{S}
    &= (\bullet \parr^{\pri} \bullet) \tensor^{\kappa} \msgprop{S'}
    \\
    \msgprop{S'}
    &= \&^{\pi} \{ \sff{passwd}{:}~ \bullet \parr^{\rho} \oplus^{\delta} \{ \sff{auth}{:}~ \bullet \tensor^{\phi} \bullet \} \}
    \\
    G'_\sff{deleg} \onto^4 c
    &= \oplus^{4} \{ \sff{passwd}{:}~ \bullet \tensor^{5} \&^{10} \{ \sff{auth}{:}~ \bullet \parr^{11} \bullet \} \}
\end{align*}
Notice how $\dual{\msgprop{S'}} = G'_\sff{deleg} \onto^4 c$, given the assignments $\pi = 4, \rho = 5, \delta = 10, \phi = 11$.

We can use these types to guide the design of a process implementation for $c$.
Consider the process:
\begin{align*}
    Q := \ol{\ci{c}{\mu}} \puts \sff{login} \cdot \ol{\ci{c}{\mu}}[u] \cdot \ol{u}[v] \cdot (u \fwd \ci{c}{\mu} \| v(a) \sdot \0) \vdash \emptyset; \ci{c}{\mu}{:}~ G_\sff{deleg} \onto^0 c
\end{align*}
This implementation is interesting: after the first exchange in $G_\sff{deleg}$---sending a fresh channel $u$ (to $p$)---$c$~sends another fresh channel $v$ over $u$; then, $c$ delegates the rest of its exchanges in $G'_\sff{deleg}$ by forwarding all traffic on $\ci{c}{\mu}$ over $u$; in the meantime, $c$ awaits an authorization boolean over $v$.

Again, using local projection, we can compute a type for $p$'s implementation to connect with its router:
\begin{align*}
    G_\sff{deleg} \onto^0 p = \&^{2} \{ \sff{login}{:}~ \dual{\msgprop{S}} \parr^{3} \bullet \}
\end{align*}
We can then use it to type the following implementation for $p$:
\begin{align*}
    P := \ci{p}{\mu} \gets \left\{ \hspace{-.5em} \begin{array}{rl}
            \sff{login}{:}
            & \ci{p}{\mu}(\ci{c}{\mu}) \sdot \ci{c}{\mu}(v)
            \\
            & {} \sdot \ol{\ci{c}{\mu}} \puts \sff{passwd} \cdot \ol{\ci{c}{\mu}}[\bm{pwd123}]
            \\
            & {} \cdot \ci{c}{\mu} \puts \{ \sff{auth}{:}~ \ci{c}{\mu}(a) \sdot \ol{v}[a'] \cdot a \fwd a' \}
    \end{array} \hspace{-.5em} \right\} \vdash \emptyset; \ci{p}{\mu}{:}~ G_\sff{deleg} \onto^0 p
\end{align*}
In this implementation, $p$ receives a channel $\ci{c}{\mu}$ (from $c$) over which it first receives a channel $v$.
Then, it behaves over $\ci{c}{\mu}$ according to $c$'s role in $G'_\sff{deleg}$.
Finally, $p$ forwards the authorization boolean received from $s$ over $v$, effectively sending the boolean to $c$.

Given an implementation for $s$, say $S \vdash \emptyset; \ci{s}{\mu}{:}~ G_\sff{deleg} \onto^0 s$, what remains is to assign values to the remaining priorities in $\msgprop{S}$: assigning $\pri = 12, \kappa = 4$ works.
Now, we can compose the implementations $P$, $Q$ and $S$ with their respective routers and then compose these routed implementations together to form a deadlock free network of $G_\sff{deleg}$.
This way, e.g., the router for $c$ is as follows (again, omitting curly braces for branches on a single label):
\begin{align*}
    \rtr_c
    &= \ci{\mu}{c} \gets \sff{login} \sdot \crt{c}{p} \puts \sff{login} \cdot \ci{\mu}{c}(u) \sdot \ol{\crt{c}{p}}[u'] \cdot (
    \\
    &\phantom{{}={}} \quad u \fwd u' \| \ci{\mu}{c} \gets \sff{passwd} \sdot \crt{c}{s} \puts \sff{passwd} \cdot \ci{\mu}{c}(v) \sdot \ol{\crt{c}{s}}[v'] \cdot (
    \\
    &\phantom{{}={}} \qquad v \fwd v' \| \crt{c}{s} \gets \sff{auth} \sdot \ci{\mu}{c} \puts \sff{auth} \cdot \crt{c}{s}(w) \sdot \ol{\ci{\mu}{c}}[w'] \cdot (w \fwd w' \| \0)))
\end{align*}
Interestingly, the router is agnostic of the fact that the endpoint $u$ it receives over $\ci{\mu}{c}$ is in fact the opposite endpoint of the channel formed by $\ci{\mu}{c}$.

\subsection{The Authorization Protocol in Action}
\label{ss:GAuth}

Let us repeat $G_\sff{auth}$ from \Cref{s:introduction}:
\begin{align*}
    G_{\sff{auth}} = \mu X \sdot s \mto c \left\{ \begin{array}{@{}l@{}}
            \sff{login} \sdot c \mto a {:} \sff{passwd}\<\sff{str}\> \sdot a \mto s {:} \sff{auth}\<\sff{bool}\> \sdot X,
            \\
            \sff{quit} \sdot c \mto a {:} \sff{quit} \sdot \bullet
    \end{array} \right\}
\end{align*}
The relative projections of $G_\sff{auth}$ are as follows:
\begin{align*}
    G_\sff{auth} \wrt (s,a)
    &= \mu X \sdot s \snd c \left\{ \begin{array}{@{}l@{}}
            \sff{login} \sdot \gskip \sdot a {:} \sff{auth}\<\sff{bool}\> \sdot X,
            \\
            \sff{quit} \sdot \gskip \sdot \bullet
    \end{array} \right\}
    \\
    G_\sff{auth} \wrt (c,a)
    &= \mu X \sdot c \rcv s \left\{ \begin{array}{@{}l@{}}
            \sff{login} \sdot c {:} \sff{passwd}\<\sff{str}\> \sdot \gskip \sdot X,
            \\
            \sff{quit} \sdot c {:} \sff{quit} \sdot \bullet
    \end{array} \right\}
    \\
    G_\sff{auth} \wrt (s,c)
    &= \mu X \sdot s \left\{ \begin{array}{@{}l@{}}
            \sff{login} \sdot \gskip^2 \sdot X,
            \\
            \sff{quit} \sdot \gskip \sdot \bullet
    \end{array} \right\}
\end{align*}

\begin{figure}[!ht]
    \begin{mdframed}
        \begin{align*}
            \rtr_c
            &= \mu X(\ci{\mu}{c}, \crt{c}{s}, \crt{c}{a}) \sdot \crt{c}{s} \gets \left\{ \hspace{-.5em} \begin{array}{ll}
                    \sff{login}{:}
                    & \ol{\ci{\mu}{c}} \puts \sff{login} \cdot \ol{\crt{c}{a}} \puts \sff{login} \cdot \crt{c}{s}(u) \sdot \ol{\ci{\mu}{c}}[u']
                    \\
                    & {} \cdot (u \fwd u' \| \ci{\mu}{c} \gets \left\{ \hspace{-.5em} \begin{array}{ll}
                            \sff{passwd}{:}
                            & \ol{\crt{c}{a}} \puts \sff{passwd} \cdot \ci{\mu}{c}(v) \sdot \ol{\crt{c}{a}}[v']
                            \\
                            & {} \cdot (v \fwd v' \| X\call{\ci{\mu}{c}, \crt{c}{s}, \crt{c}{a}})
                    \end{array} \hspace{-.5em} \right\} ),
                    \\
                    \sff{quit}{:}
                    & \ol{\ci{\mu}{c}} \puts \sff{quit} \cdot \ol{\crt{c}{a}} \puts \sff{quit} \cdot \crt{c}{s}(w) \sdot \ol{\ci{\mu}{c}}[w']
                    \\
                    & {} \cdot (w \fwd w' \| \ci{\mu}{c} \gets \left\{ \hspace{-.5em} \begin{array}{ll}
                            \sff{quit}{:}
                            & \ol{\crt{c}{a}} \puts \sff{quit} \cdot \ci{\mu}{c}(z) \sdot \ol{\crt{c}{a}}[z']
                            \\
                            & \cdot (z \fwd z' \| \0)
                    \end{array} \hspace{-.5em} \right\} )
            \end{array} \hspace{-.5em} \right\}
            \\
            &\vdash \hspace{-.5em} \begin{array}[t]{l}
                \ci{\mu}{c}{:}~
                \mu X \sdot \oplus^{2} \left\{ \hspace{-.5em} \begin{array}{ll}
                        \sff{login}{:}
                        & \bullet \tensor^{3} \&^{4} \{ \sff{passwd}{:}~ \bullet \parr^{5} X \},
                        \\
                        \sff{quit}{:}
                        & \bullet \tensor^{3} \&^{4} \{ \sff{quit}{:}~ \bullet \parr^{5} \bullet \}
                \end{array} \hspace{-.5em} \right\} = \dual{(G_\sff{auth} \onto^0 c)},
                \\
                \crt{c}{s}{:}~
                \mu X \sdot \&^{1} \{ \sff{login}{:}~ \bullet \parr^{2} X, \sff{quit}{:}~ \bullet \parr^{2} \bullet \} = \relprop{c}{s}{0}{G_\sff{auth} \wrt (c,s)},
                \\
                \crt{c}{a}{:}~
                \mu X \sdot \oplus^{2} \left\{ \hspace{-.5em} \begin{array}{ll}
                        \sff{login}{:}
                        & \oplus^{5} \{ \sff{passwd}{:}~ \bullet \tensor^{6} X \},
                        \\
                        \sff{quit}{:}
                        & \oplus^{5} \{ \sff{quit}{:}~ \bullet \tensor^{6} \bullet \}
                \end{array} \hspace{-.5em} \right\} = \relprop{c}{a}{0}{G_\sff{auth} \wrt (c,a)}
            \end{array}
            \\[3pt] \cline{1-2}
            \\[-15pt]
            \rtr_s
            &= \mu X(\ci{\mu}{s}, \crt{s}{c}, \crt{s}{a}) \sdot \ci{\mu}{s} \gets \left\{ \hspace{-.5em} \begin{array}{ll}
                    \sff{login}{:}
                    & \ol{\crt{s}{c}} \puts \sff{login} \cdot \ol{\crt{s}{a}} \puts \sff{login} \cdot \ci{\mu}{s}(u) \sdot \ol{\crt{s}{c}}[u']
                    \\
                    & {} \cdot (u \fwd u' \| \crt{s}{a} \gets \left\{ \hspace{-.5em} \begin{array}{rl}
                            \sff{auth}{:}
                            & \ol{\ci{\mu}{s}} \puts \sff{auth} \cdot \crt{s}{a}(v) \sdot \ol{\ci{\mu}{s}}[v']
                            \\
                            & {} \cdot (v \fwd v' \| X\call{\ci{\mu}{s}, \crt{s}{c}, \crt{s}{a}})
                    \end{array} \hspace{-.5em} \right\} ),
                    \\
                    \sff{quit}{:}
                    & \ol{\crt{s}{c}} \puts \sff{quit} \cdot \ol{\crt{s}{a}} \puts \sff{quit} \cdot \ci{\mu}{s}(v) \sdot \ol{\crt{s}{c}}[v']
                    \\
                    & \cdot (v \fwd v' \| \0)
            \end{array} \hspace{-.5em} \right\}
            \\
            &\vdash \hspace{-.5em} \begin{array}[t]{l}
                \ci{\mu}{s}{:}~
                \mu X \sdot \&^{0} \{ \sff{login}{:}~ \bullet \parr^{1} \oplus^{10} \{ \sff{auth}{:}~ \bullet \tensor^{11} X \}, \sff{quit}{:}~ \bullet \parr^{1} \bullet \} = \dual{(G_\sff{auth} \onto^0 s)},
                \\
                \crt{s}{c}{:}~
                \mu X \sdot \oplus^{1} \{ \sff{login}{:}~ \bullet \tensor^{2} X, \sff{quit}{:}~ \bullet \tensor^{2} \bullet \} = \relprop{s}{c}{0}{G_\sff{auth} \wrt (s,c)},
                \\
                \crt{s}{a}{:}~
                \mu X \sdot \oplus^{1} \{ \sff{login}{:}~ \&^{9} \{ \sff{auth}{:}~ \bullet \parr^{10} X \}, \sff{quit}{:}~ \bullet \} = \relprop{s}{a}{0}{G_\sff{auth} \wrt (s,a)}
            \end{array}
            \\[3pt] \cline{1-2}
            \\[-15pt]
            \rtr_a
            &= \mu X(\ci{\mu}{a}, \crt{a}{c}, \crt{a}{s}) \sdot \crt{a}{s} \gets \left\{ \hspace{-.5em} \begin{array}{l}
                    \sff{login}{:}~ \ol{\ci{\mu}{a}} \puts \sff{login}
                    \\
                    {} \cdot \crt{a}{c} \gets \left\{ \hspace{-.5em} \begin{array}{l}
                            \sff{login}{:}
                            \\
                            \crt{a}{c} \gets \left\{ \hspace{-.5em} \begin{array}{l}
                                    \sff{passwd}{:}~ \ol{\ci{\mu}{a}} \puts \sff{passwd} \cdot \crt{a}{c}(u) \sdot \ol{\ci{\mu}{a}}[u']
                                    \\
                                    {} \cdot \left(\begin{array}{l}
                                        u \fwd u'
                                        \\
                                        {} \| \ci{\mu}{a} \gets \left\{ \hspace{-.5em} \begin{array}{l}
                                                \sff{auth}{:}
                                                \\
                                                \ol{\crt{a}{s}} \puts \sff{auth} \cdot \ci{\mu}{a}(v) \sdot \ol{\crt{a}{s}}[v']
                                                \\
                                                {} \cdot (v \fwd v' \| X\call{\ci{\mu}{a}, \crt{a}{c}, \crt{a}{s}})
                                        \end{array} \hspace{-.5em} \right\}
                                    \end{array} \hspace{-.5em} \right)
                            \end{array} \hspace{-.5em} \right\},
                            \\
                            \sff{quit}{:}~ \error{\ci{\mu}{a},\crt{a}{c},\crt{a}{s}}
                    \end{array} \hspace{-.5em} \right\},
                    \\
                    \sff{quit}{:}~ \ol{\ci{\mu}{a}} \puts \sff{quit}
                    \\
                    {} \cdot \crt{a}{c} \gets \left\{ \hspace{-.5em} \begin{array}{l}
                            \sff{login}{:}~ \error{\ci{\mu}{a},\crt{a}{c},\crt{a}{s}},
                            \\
                            \sff{quit}{:}~ \crt{a}{c} \gets \{ \sff{quit}{:}~ \ol{\ci{\mu}{a}} \puts \sff{quit} \cdot \crt{a}{c}(w) \sdot \ol{\ci{\mu}{a}}[w'] \cdot (w \fwd w' \| \0) \},
                    \end{array} \hspace{-.5em} \right\}
            \end{array} \hspace{-.5em} \right\}
            \\
            &\vdash \begin{array}[t]{l}
                \ci{\mu}{a}{:}~
                \mu X \sdot \oplus^{2} \left\{ \hspace{-.5em} \begin{array}{ll}
                        \sff{login}{:}
                        & \oplus^{6} \{ \sff{passwd}{:}~ \bullet \tensor^{7} \&^{8} \{ \sff{auth}{:}~ \bullet \parr^{9} X \} \},
                        \\
                        \sff{quit}{:}
                        & \oplus^{6} \{ \sff{quit}{:}~ \bullet \tensor^{7} \bullet \}
                \end{array} \hspace{-.5em} \right\} = \dual{(G_\sff{auth} \onto^0 a)},
                \\
                \crt{a}{c}{:}~
                \mu X \sdot \&^{2} \left\{ \hspace{-.5em} \begin{array}{ll}
                        \sff{login}{:}
                        & \&^{5} \{ \sff{passwd}{:}~ \bullet \parr^{6} X \} ,
                        \\
                        \sff{quit}{:}
                        & \&^{5} \{ \sff{quit}{:}~ \bullet \parr^{6} \bullet \}
                \end{array} \hspace{-.5em} \right\} = \relprop{a}{c}{0}{G_\sff{auth} \wrt (a,c)},
                \\
                \crt{a}{s}{:}~
                \mu X \sdot \&^{1} \{ \sff{login}{:}~ \oplus^{9} \{ \sff{auth}{:}~ \bullet \tensor^{10} X \} , \sff{quit}{:}~ \bullet \} = \relprop{a}{s}{0}{G_\sff{auth} \wrt (a,s)}
            \end{array}
        \end{align*}
    \end{mdframed}

    \caption{Routers synthesized from $G_{\sff{auth}}$.}
    \label{f:GauthRouters}
\end{figure}

The typed routers synthesized from $G_\sff{auth}$ are given in \Cref{f:GauthRouters}.
Let us explain the behavior of $\rtr_a$, the router of $a$.
$\rtr_a$ is a recursive process on recursion variable $X$, using the endpoint for the implementation $\ci{\mu}{a}$ and the endpoint for the other routers $\crt{a}{c}$ and $\crt{a}{s}$ as context.
The initial message in $G_\sff{auth}$ from $s$ to $c$ is a dependency for $a$'s interactions with both $s$ and $c$.
Therefore, the router first branches on the first dependency with $s$: a label received over $\crt{a}{s}$ (\sff{login} or \sff{quit}).
Let us detail the \sff{login} branch.
Here, the router sends \sff{login} over $\ci{\mu}{a}$.
Then, the router branches on the second dependency with $c$: a label received over $\crt{a}{c}$ (again, \sff{login} or \sff{quit}).
\begin{itemize}
    \item
        In the second \sff{login} branch, the router receives the label \sff{passwd} over $\crt{a}{c}$, which it then sends over $\ci{\mu}{a}$.
        The router then receives an endpoint (the password) over $\crt{a}{c}$, which it forwards over $\ci{\mu}{a}$.
        Finally, the router receives the label \sff{auth} over $\ci{\mu}{a}$, which it sends over $\crt{a}{s}$.
        Then, the router receives an endpoint (the authorization result) over $\ci{\mu}{a}$, which it forwards over $\crt{a}{s}$.
        The router then recurses to the beginning of the loop on the recursion variable $X$, passing the endpoints $\ci{\mu}{a},\crt{a}{s},\crt{a}{c}$ as recursive context.
    \item
        In the \sff{quit} branch, the router is in an inconsistent state, because it has received a label over $\crt{a}{c}$ which does not concur with the label received over $\crt{a}{s}$.
        Hence, the router signals an alarm on its endpoints $\ci{\mu}{a},\crt{a}{s},\crt{a}{c}$.
\end{itemize}

Notice how the typing of the routers in \Cref{f:GauthRouters} follows \Cref{t:routerTypes}: for each $p \in \{c,s,a\}$, the endpoint $\ci{\mu}{p}$ is typed with local projection (\defref{d:locproj}), and for each $q \in \{c,s,a\} \setminus \{p\}$ the endpoint $\crt{p}{q}$ is typed with relative projection (Defs.\ \labelcref{d:relproj,d:relprop}).

\begin{figure}[t]
    \begin{mdframed}
        \vspace{-1em}
        \begin{align*}
            & \mdm{\{c,s,a\}}{G_\sff{auth}}
            \\
            &\hspace{1em} = \hspace{-.5em} \begin{array}[t]{l}
                \mu X(\ci{\mu}{c}, \ci{\mu}{s}, \ci{\mu}{a})
                \\
                {} \sdot \ci{\mu}{s} \gets \left\{ \hspace{-.5em} \begin{array}{ll}
                        \sff{login}{:}
                        &\hspace{-.7em} \ol{\ci{\mu}{c}} \puts \sff{login} \cdot \ol{\ci{\mu}{a}} \puts \sff{login} \cdot \ci{\mu}{s}(u) \sdot \ol{\ci{\mu}{c}}[u']
                        \\
                        &\hspace{-.7em} {} \cdot (u \fwd u' \| \ci{\mu}{c} \gets \left\{ \hspace{-.5em} \begin{array}{ll}
                                \sff{auth}{:}
                                &\hspace{-.7em} \ol{\ci{\mu}{a}} \puts \sff{passwd} \cdot \ci{\mu}{c}(v) \sdot \ol{\ci{\mu}{a}}[v']
                                \\
                                &\hspace{-.7em} {} \cdot (v \fwd v' \| \ci{\mu}{a} \gets \left\{ \hspace{-.5em} \begin{array}{ll}
                                        \sff{auth}{:}
                                        &\hspace{-.7em} \ol{\ci{\mu}{s}} \puts \sff{auth} \cdot \ci{\mu}{a}(w) \sdot \ol{\ci{\mu}{s}}[w']
                                        \\
                                        &\hspace{-.7em} {} \cdot (w \fwd w' \| X\call{\ci{\mu}{c}, \ci{\mu}{s}, \ci{\mu}{a}})
                                \end{array} \hspace{-.5em} \right\} )
                        \end{array} \hspace{-.5em} \right\} ),
                        \\
                        \\
                        \sff{quit}{:}
                        &\hspace{-.7em} \ol{\ci{\mu}{c}} \puts \sff{quit} \cdot \ol{\ci{\mu}{a}} \puts \sff{quit} \cdot \ci{\mu}{s}(z) \sdot \ol{\ci{\mu}{c}}[z']
                        \\
                        &\hspace{-.7em} {} \cdot (z \fwd z' \| \ci{\mu}{c} \gets \{ \sff{quit}{:}~ \ol{\ci{\mu}{a}} \puts \sff{quit} \cdot \ci{\mu}{c}(y) \sdot \ol{\ci{\mu}{a}}[y'] \cdot (y \fwd y' \| \0) \} )
                \end{array} \hspace{-.5em} \right\}
            \end{array}
            \\
            &\hspace{1em} \vdash \ci{\mu}{c}{:}~ \ol{G_\sff{auth} \onto^0 c}, \ci{\mu}{s}{:}~ \ol{G_\sff{auth} \onto^0 s}, \ci{\mu}{a}{:}~ \ol{G_\sff{auth} \onto^0 a}
        \end{align*}
    \end{mdframed}

    \caption{Orchestrator synthesized from $G_\sff{auth}$ (cf.\ \defref{d:medium}).}
    \label{f:GauthMedium}
\end{figure}

Consider again the participant implementations given in \Cref{ex:impl}: $P$ implements the role of $c$, $Q$ the role of $s$, and $R$ the role of $a$.
Notice that the types of the channels of these processes coincide with relative projections:
\begin{align*}
    P
    &\vdash \emptyset; \ci{c}{\mu}{:}~ G_\sff{auth} \onto^0 c
    &
    Q
    &\vdash \emptyset; \ci{s}{\mu}{:}~ G_\sff{auth} \onto^0 s
    &
    R
    &\vdash \emptyset; \ci{a}{\mu}{:}~ G_\sff{auth} \onto^0 a
\end{align*}
Let us explore how to compose these implementations with their respective routers.
The order of composition determines the network topology.
\begin{description}
    \item[Decentralized]
        By first composing each router with their respective implementation, and then composing the resulting routed implementations, we obtain a decentralized topology:
        \[
            N_\sff{auth}^\sff{decentralized} := \hspace{-.5em} \begin{array}{r}
                \nu{\crt{c}{s} \crt{s}{c}}
                \\
                \nu{\crt{c}{a} \crt{a}{c}}
                \\
                \nu{\crt{s}{a} \crt{a}{s}}
            \end{array} \hspace{-.5em} \left( \hspace{-.5em} \begin{array}{l}
                    \phantom{{} \| {}} \nu{\ci{\mu}{c} \ci{c}{\mu}}
                    \\
                    {} \| \nu{\ci{\mu}{s} \ci{s}{\mu}}
                    \\
                    {} \| \nu{\ci{\mu}{a} \ci{a}{\mu}}
                \end{array} \hspace{-.5em} \hspace{-.5em} \begin{array}{l}
                    ( \rtr_c \| P )
                    \\
                    ( \rtr_s \| Q )
                    \\
                    ( \rtr_a \| R )
                \end{array}
            \right)
        \]
        This composition is in fact a  {network of routed implementations} of $G$ (cf.\ \defref{d:networks}), so \Cref{t:completeness,t:soundness,t:globalDlFree} apply: we have $N_\sff{auth}^\sff{decentralized} \in \sys(G_\sff{auth})$, so $N_\sff{auth}^\sff{decentralized}$ behaves as specified by $G_\sff{auth}$ and is deadlock free.

    \item[Centralized]
        By first composing the routers, and then composing the connected routers with each implementation, we obtain a centralized topology:
        \[
            N_\sff{auth}^\sff{centralized} := \hspace{-.5em} \begin{array}{c}
                \nu{\ci{\mu}{c} \ci{c}{\mu}}
                \\
                \nu{\ci{\mu}{s} \ci{s}{\mu}}
                \\
                \nu{\ci{\mu}{a} \ci{a}{\mu}}
            \end{array} \hspace{-.5em} \left( \hspace{-.5em} \begin{array}{l}
                \hspace{-.5em} \begin{array}{c}
                    \nu{\crt{c}{s} \crt{s}{c}}
                    \\
                    \nu{\crt{c}{a} \crt{a}{c}}
                    \\
                    \nu{\crt{s}{a} \crt{a}{s}}
                \end{array} \hspace{-.5em} \left( \hspace{-.5em} \begin{array}{l}
                        \phantom{{} \| {}} \rtr_c
                        \\
                        {} \| \rtr_s
                        \\
                        {} \| \rtr_a
                \end{array} \hspace{-.5em} \right) \hspace{-.5em} \begin{array}{l}
                    {} \| P
                    \\
                    {} \| Q
                    \\
                    {} \| R
                \end{array} \hspace{-.5em}
            \end{array} \hspace{-.5em} \right)
        \]
        Note that the composition of routers is a \emph{hub of routers} (\defref{d:hub}).
        Consider the composition of $P$, $Q$ and $R$ with the orchestrator of $G_\sff{auth}$ (given in \Cref{f:GauthMedium}):
        \[
            N_\sff{auth}^\sff{orchestrator} := \hspace{-.5em} \begin{array}{c}
                \nu{\ci{\mu}{c} \ci{c}{\mu}}
                \\
                \nu{\ci{\mu}{s} \ci{s}{\mu}}
                \\
                \nu{\ci{\mu}{a} \ci{a}{\mu}}
            \end{array} \hspace{-.5em} \left(
                \mdm{\{c,s,a\}}{G_\sff{auth}} \hspace{-.5em} \begin{array}{l}
                    {} \| P
                    \\
                    {} \| Q
                    \\
                    {} \| R
                \end{array} \hspace{-.5em}
            \right)
        \]
        By \Cref{t:mediumBisim}, the hub of routers and the orchestrator of $G_\sff{auth}$ are weakly bisimilar (\defref{d:weakBisim}).
        Hence, $N_\sff{auth}^\sff{centralized}$ and $N_\sff{auth}^\sff{orchestrator}$ behave the same.
\end{description}
Since each of $N_\sff{auth}^\sff{top}$ with $\sff{top} \in \{\sff{decentralized}, \sff{centralized}, \sff{orchestrator}\}$ is typable in empty contexts, by \Cref{t:globalDlFree}, each of these compositions is deadlock free.
Moreover, $N_\sff{auth}^\sff{decentralized}$ and $N_\sff{auth}^\sff{centralized}$ are structurally congruent, so, by \Cref{t:completeness,t:soundness}, they behave as prescribed by $G_\sff{auth}$.
Finally, by \Cref{t:mediumBisim}, $N_\sff{auth}^\sff{centralized}$ and $N_\sff{auth}^\sff{orchestrator}$ are bisimilar, and so $N_\sff{auth}^\sff{orchestrator}$ also behaves as prescribed by $G_\sff{auth}$.

\section{Related Work}
\label{s:relwork}

\subparagraph*{Types for Deadlock Freedom}

Our decentralized analysis of global types is related to type systems that ensure deadlock freedom for multiparty sessions with delegation and interleaving~\cite{conf/concur/BettiniCDLDY08,conf/coordination/PadovaniVV14,journal/mscs/CoppoDYP16}.
Unlike these works, we rely on a type system for \emph{binary} sessions which is simple and enables an expressive analysis of global types.
Coppo \etal~\cite{conf/concur/BettiniCDLDY08,conf/coordination/CoppoDPY13,journal/mscs/CoppoDYP16} give type systems for multiparty protocols, with asynchrony and support for interleaved sessions by tracking of mutual dependencies between them; as per Toninho and Yoshida~\cite{journal/toplas/ToninhoY18}, our example in \Cref{ss:intrl} is typable in APCP but untypable in their system.
Padovani \etal~\cite{conf/coordination/PadovaniVV14} develop a type system that enforces liveness properties for multiparty sessions, defined on top of a $\pi$-calculus with labeled communication.
Rather than global types, their type structure follows approaches based on \emph{conversation types}~\cite{journal/tcs/CairesV10}.
Toninho and Yoshida~\cite{journal/toplas/ToninhoY18} analyze binary sessions, leveraging on deadlock freedom results for multiparty sessions to extend Wadler's CLL~\cite{conf/icfp/Wadler12} with cyclic networks.
Their process language is synchronous and uses replication rather than recursion.
We note that their Examples~6.8 and~6.9 can be typed in APCP (cf.\ \secref{ss:intrl});
a detailed comparison between their extended CLL and APCP is interesting future work.

\subparagraph*{MPST and Binary Analyses of Global Types}

There are many works on MPST and their integration into programming languages; see~\cite{journal/csur/HuttelLVCCDMPRT16,journal/ftpl/AnconaBB0CDGGGH16} for surveys.
Triggered by flawed proofs of type safety and limitations of usual theories, Scalas and Yoshida~\cite{conf/popl/ScalasY19} define a meta-framework of multiparty protocols based on local types, without global types and projection. Their work has been a source of inspiration for our developments; we address similar issues by adopting relative types, instead of cutting ties with global types.

As already mentioned, Caires and P\'{e}rez~\cite{conf/forte/CairesP16} and Carbone \etal~\cite{conf/concur/CarboneLMSW16} reduce the analysis of global types to binary session type systems based on intuitionistic and classical linear logic, respectively.
Our routers strictly generalize the centralized mediums of Caires and P\'{e}rez (cf.\ \secref{ss:mediums}).
We substantially improve over the expressivity of the decentralized approach of Carbone \etal based on coherence, but reliant on encodings into centralized arbiters; for instance, their approach does not support the example from Toninho and Yoshida~\cite{journal/toplas/ToninhoY18} we discuss in \secref{ss:intrl}.
Also, Caires and P\'{e}rez support neither recursive global types nor asynchronous communication, and neither do Carbone \etal.

Scalas \etal~\cite{conf/ecoop/ScalasDHY17} leverage on an encoding of binary session types into \emph{linear types}~\cite{conf/ppdp/DardhaGS12,journal/acmpls/KobayashiPT99} to reduce multiparty sessions to processes typable with linear types, with applications in Scala programming.
Their analysis is decentralized but covers processes with synchronous communication only; also, their deadlock freedom result is limited with respect to ours: it does not support interleaving, such as in the example in \secref{ss:intrl}.

\subparagraph*{Monitoring through MPST}

Our work and the works discussed so far all consider the verification of implementations of multiparty protocols through static type checking.
Bocchi \etal~\cite{journal/tcs/BocchiCDHY17} use a \emph{dynamic} approach: communication between implementations is enacted by \emph{monitors}, which are derived from the global type to prevent protocol violations.
In their approach, Bocchi \etal rely on the traditional workflow for MPST: projection onto binary session types based on the merge operation.
Interestingly, Bocchi \etal's semantics relies on \emph{routing}, which is similar in spirit, but not in details, to our routers: their routing approach abstracts away from the actual network structure, while our routers enable the concrete realization of a decentralized network structure.
We also note that Bocchi \etal's monitors, based on finite state machines, live on the level of semantics, while our routers, $\pi$-calculus processes, live on the same level as implementations.
The theory by Bocchi \etal has resulted in the development of tools for a practical application of monitoring in Python~\cite{journal/fmsd/DemangeonHHNY15}, including an extension to real-time systems~\cite{journal/fac/NeykovaBY17}.

\subparagraph*{Other Approaches to Multiparty Protocols}

In a broader context, Message Sequence Charts (MSCs) provide graphical specifications of multiparty protocols.
Alur \etal~\cite{journal/tcs/AlurEY05} and Abdallah \etal~\cite{journal/ssm/AbdallahHJ15} study the decidability of model-checking properties such as implementability of MSC Graphs and High-level MSCs (HMSCs) as Communicating FSMs (CFSMs).
Genest \etal~\cite{journal/jcss/GenestMSZ06} study the synthesis of implementations of HMSCs as CFSMs; as we do, they use extra synchronization messages in some cases.
We follow an entirely different research strand: our analysis is type-based and targets well-formed global types that are implementable by design.
We note that the decidability of key notions for MPST (such as well-formedness and typability) has been addressed in~\cite{journal/acm/HondaYC16}.

Collaboration diagrams are another visual model for communicating processes (see, e.g.~\cite{journal/soca/BultanF08}).
Salaün \etal~\cite{journal/tsc/SalaunBR12} encode collaboration diagrams into the LOTOS process algebra~\cite{report/Brinksma89} to enable model-checking~\cite{conf/cav/GaravelMLS07},  realizability checks for synchronous and asynchronous communication, and synthesis of participant implementations.
Their implementation synthesis is reminiscent of our router synthesis, and also adds extra synchronization messages to realize otherwise unrealizable protocols with non-local choices.

\section{Conclusion}
\label{s:concl}

We have developed a new analysis of multiparty protocols specified as global types.
One distinguishing feature of our analysis is that it accounts for multiparty protocols implemented by arbitrary process networks, which can be centralized (as in orchestration-based approaches) but also decentralized (as in choreography-based approaches).
Another salient feature is that we can ensure both protocol conformance (protocol fidelity, communication safety) and deadlock freedom, which is notoriously hard to establish for protocols/implementations involving delegation and interleaving.
To this end, we have considered asynchronous process implementations in APCP, the typed process language that we introduced in~\cite{report/vdHeuvelP21B}.
Our analysis enables the transference of correctness properties from APCP to multiparty protocols.
We have illustrated these features using the authorization protocol $G_{\sff{auth}}$ adapted from Scalas and Yoshida~\cite{conf/popl/ScalasY19} as a running example; additional examples further justify how our approach improves over previous analyses (cf.\ \Cref{s:routersInAction}).

Our analysis of multiparty protocols rests upon three key innovations:
\emph{routers}, which enable global type analysis as decentralized networks;
\emph{relative types} that capture protocols between pairs of participants;
\emph{relative projection}, which admits global types with non-local choices.
In our opinion, these notions are interesting on their own.
In particular, relative types shed new light on more expressive protocol specifications than usual MPST, which are tied to notions of local types and merge/subtyping.

There are several interesting avenues for {future work}.
Comparing relative and merge-based well-formedness would continue the tread of new projections of global types (cf.\ \appref{a:compare} for initial findings).
We would also like to develop a type system based on relative types, integrating the logic of routers into a static type checking that ensures deadlock freedom for processes.
Finally, we are interested in developing practical tool support based on our findings.
For this latter point, following \cite{conf/popl/JiaGP16}, we would  like  to first formalize a theory of runtime monitoring based on routers, which can already be seen as an elementary form of \emph{choreographed monitoring} (cf.\ \cite{book/FrancalanzaPS18}).

\paragraph{Acknowledgments}

We are grateful to the anonymous reviewers for their constructive feedback and suggestions, which were enormously helpful to improve the presentation.
Research partially supported by the Dutch Research Council (NWO) under project No. 016.Vidi.189.046 (Unifying Correctness for Communicating Software).

\phantomsection
\addcontentsline{toc}{section}{References}
\bibliographystyle{plainurl}
\bibliography{refs}


\clearpage
\appendix

\section{Comparing Merge-based Well-for\-med\-ness and Relative Well-for\-med\-ness}
\label{a:compare}

It is instructive to examine how the notion of well-formed global types induced by our relative projection compares to \emph{merge-based} well-formedness, the notion induced by (usual) local projection~\cite{journal/lmcs/HuBYD12,book/CarboneYH09}.

Before we recall the definition of merge-based well-formedness, we define the projection of global types to local types.
Local types express one particular participant's perspective of a global protocol.
Although $\gskip$ is not part of standard definitions of local types, we include it to enable a fair comparison with relative types.

\begin{definition}[Local types]\label{d:loctype}
    \emph{Local types} $L$ are defined as follows, where the $S_i$ are the message types from \defref{d:globtypes}:
    \[
        L ::= {?}p \{i\<S\> \sdot L\}_{i \in I} \sepr {!}p \{i\<S\> \sdot L\}_{i \in I} \sepr \mu X \sdot L \sepr X \sepr \bullet \sepr \gskip \sdot L
        \tag*{\lipicsEnd}
    \]
\end{definition}
The local types ${?}p \{i\<S_i\> \sdot L_i\}_{i \in I}$ and ${!}p \{i\<S_i\> \sdot L_i\}_{i \in I}$ represent receiving a choice from $p$ and sending a choice to $p$, respectively.
All of $\bullet$, $\mu X \sdot L$, $X$, and $\gskip$ are just as before.

Instead of external dependencies, the projection onto local types relies on an operation on local types called \emph{merge}.
Intuitively, merge allows combining overlapping but not necessarily identical receiving constructs.
This is one main difference with respect to our relative projection.

\begin{definition}[Merge of Local Types]\label{d:locmerge}
    For local types $L_1$ and $L_2$, we define $L_1 \merge L_2$ as the \emph{merge} of $L_1$ and $L_2$:
    \begin{align*}
        \gskip \sdot L_1 \merge \gskip \sdot L_2
        &:= L_1 \merge L_2
        &
        \bullet \merge \bullet
        &:= \bullet
        \\
        \mu X \sdot L_1 \merge \mu X \sdot L_2
        &:= \mu X \sdot (L_1 \merge L_2)
        &
        X \merge X
        &:= X
        \\
        {!}p \{i\<S_i\> \sdot L_i\}_{i \in I} \merge {!}p \{i\<S_i\> \sdot L_i\}_{i \in I} &:= {!}p \{i\<S_i\> \sdot L_i\}_{i \in I}
        \\
        {?}p \{i\<S_i\> \sdot L_i\}_{i \in I} \merge {?}p \{j\<S'_j\> \sdot L'_j\}_{j \in J}
        &:= {?}p \left( \hspace{-.5em} \begin{array}{l}
                \phantom{{} \cup {}} \{i\<S_i\> \sdot L_i \}_{i \in I \setminus J}
                \\
                {} \cup \{j\<S'_j\> \sdot L'_j \}_{j \in J \setminus I}
                \\
                {} \cup \{k\<S_k \merge S'_k\> \sdot (L_k \merge L'_k) \}_{k \in I \cap J}
        \hspace{-.5em} \end{array} \right)
    \end{align*}
    The merge between message types $S_1 \merge S_2$ corresponds to the identity function.
    If the local types do not match the above definition, their merge is undefined.
    \lipicsEnd
\end{definition}

We can now define local projection based on merge:

\begin{definition}[Merge-based Local Projection]\label{d:mergeLocproj}
    For global type $G$ and participant $p$, we define $G \under p$ as the \emph{merge-based local projection} of $G$ under $p$:
    \begin{align*}
        \bullet \under p
        &:= \bullet
        &
        (\gskip \sdot G) \under p
        &:= \gskip \sdot (G \under p)
        &
        X \under p
        &:= X
        \\
        (\mu X \sdot G) \under p
        &:= \mathrlap{ \begin{cases}
                \bullet
                & \text{if $G \under p = \gskip^\ast \sdot \bullet$ or $G \under p = \gskip^\ast \sdot X$}
                \\
                \mu X \sdot (G \under p)
                & \text{otherwise}
        \end{cases} }
        \\
        (r \mto s \{i\<U_i\> \sdot G_i\}_{i \in I}) \under p
        &:= \mathrlap{ \begin{cases}
            {?}r \{i\<U_i\> \sdot (G_i \under p)\}_{i \in I}
            & \text{if $p = s$}
            \\
            {!}s \{i\<U_i\> \sdot (G_i \under p)\}_{i \in I}
            & \text{if $p = r$}
            \\
            \gskip \sdot (\merge_{i \in I} (G_i \under p))
            & \text{otherwise}
        \end{cases} }
        \\
        (G_1 \| G_2) \under p
        &:= \mathrlap{ \begin{cases}
            G_1 \under p
            & \text{if $p \in \part(G_1)$ and $p \notin \part(G_2)$}
            \\
            G_2 \under p
            & \text{if $p \in \part(G_2)$ and $p \notin \part(G_1)$}
            \\
            \bullet
            & \text{if $p \notin \part(G_1) \cup \part(G_2)$}
        \end{cases} }
        \tag*{\lipicsEnd}
    \end{align*}
\end{definition}

\begin{definition}[Merge Well-Formedness]\label{d:mwf}
    A global type $G$ is \emph{merge well-formed} if, for every $p \in \part(G)$, the merge-based local projection $G \under p$ is defined.
    \lipicsEnd
\end{definition}

The classes of relative and merge-based well-formed global types overlap: there are protocols that can be expressed using dependencies in relative types, as well as using merge in local types.
Interestingly, the classes are \emph{incomparable}: some relative well-formed global types are not merge-based well-formed, and vice versa.
We now explore these differences.

\subsection{Relative Well-Formed, Not Merge Well-Formed}\label{as:rwfNotMwf}

The merge of local types with outgoing messages of different labels is undefined.
Therefore, if a global type has communications, e.g., from $s$ to $a$ with different labels across branches of a prior communication between $b$ and $a$, the global type is not merge well-formed.
In contrast, such global types can be relative well-formed, because the prior communication may induce a dependency.
Similarly, global types with communications with different participants across branches of a prior communication are never merge well-formed, but may be relative well-formed.
The following example demonstrates a global type with messages of different labels across branches of a prior communication:

\begin{example*}
    We give an adaptation of the two-buyer-seller protocol in which Seller ($s$) tells Alice ($a$) to pay or not, depending on whether Bob ($b$) tells $a$ to buy or not.
    \begin{align*}
        G_\sff{rwf} := b \mto a \left\{ \begin{array}{@{}l@{}}
                \sff{ok} \sdot s \mto a {:} \sff{pay}\<\sff{int}\> \sdot \bullet,
                \\
                \sff{cancel} \sdot s \mto a {:} \sff{cancel} \sdot \bullet
        \end{array} \right\}
    \end{align*}
    This protocol is relative well-formed, as the relative projections under every combination of participants are defined.
    Notice how there is a dependency in the relative projection under $s$ and $a$:
    \begin{align*}
        G_\sff{rwf} \wrt (s,a) = a \rcv b \left\{ \begin{array}{@{}l@{}}
                \sff{ok} \sdot s {:} \sff{pay}\<\sff{int}\> \sdot \bullet,
                \\
                \sff{cancel} \sdot s {:} \sff{cancel} \sdot \bullet
        \end{array} \right\}
    \end{align*}
    However, we do not have merge well-formedness: the merge-based local projection under $s$ is not defined:
    \begin{align*}
        G_\sff{rwf} \under s = \gskip \sdot ({!}a {:} \sff{pay}\<\sff{int}\> \sdot \bullet \merge {!}a {:} \sff{cancel} \sdot \bullet)
        \tag*{\lipicsEnd}
    \end{align*}
\end{example*}

\subsection{Merge Well-Formed, Not Relative Well-Formed}

For a communication between, e.g., $a$ and $b$ to induce a dependency for subsequent communications between other participants, at least one of $a$ and $b$ must be involved.
Therefore, global types where communications with participants other than $a$ and $b$ have different labels across branches of a prior communication between $a$ and $b$ are never relative well-formed.
In contrast, merge can combine the reception of different labels, so such global types may be merge well-formed---as long as the sender is aware of which branch has been taken before.
The following example demonstrates such a situation, and explains how such global types can be modified to be relative well-formed:

\begin{example*}
    Consider a variant of the two-buyer-seller protocol in which Seller ($s$) invokes a new participant, Mail-service ($m$), to deliver the requested product.
    In the following global type, Bob ($b$) tells Alice ($a$) of its decision to buy or not, after which $b$ sends the same choice to $s$, who then either invokes $m$ to deliver the product or not:
    \begin{align*}
        G_\sff{mwf} := b \mto a \left\{ \begin{array}{@{}l@{}}
                \sff{ok} \sdot b \mto s {:} \sff{ok} \sdot s \mto m {:} \sff{deliver}\<\sff{str}\> \sdot \bullet,
                \\
                \sff{quit} \sdot b \mto s {:} \sff{quit} \sdot s \mto m {:} \sff{quit} \sdot \bullet
        \end{array} \right\}
    \end{align*}
    $G_\sff{mwf}$ is merge well-formed: the merge-based local projections under all participants are defined.
    Notice how the two different messages from $s$ are merged in the merge-based local projection under $m$:
    \begin{align*}
        G_\sff{mwf} \under m = \gskip^2 \sdot {?}s \left\{ \begin{array}{@{}l@{}}
                \sff{deliver}\<\sff{str}\> \sdot \bullet,
                \\
                \sff{quit} \sdot \bullet
        \end{array} \right\}
    \end{align*}
    $G_\sff{mwf}$ is not relative well-formed: the relative projection under $s$ and $m$ is not defined.
    The initial exchange between $b$ and $a$ cannot induce a dependency, since neither of $s$ and $m$ is involved.
    Hence, the relative projections of both branches must be identical, but they are not:
    \begin{align*}
        \gskip \sdot s {:} \sff{deliver}\<\sff{str}\> \sdot \bullet \neq \gskip \sdot s {:} \sff{quit} \sdot \bullet
    \end{align*}

    We recover relative well-formedness by modifying $G_\sff{mwf}$: we give $s$ the same options to send to $m$ in both branches of the initial communication:
    \begin{align*}
        G'_\sff{mwf} := b \mto a
        \left\{ \begin{array}{@{}l@{}}
                \sff{ok} \sdot b \mto s {:} \sff{ok} \sdot s \mto m \{ \sff{deliver}\<\sff{str}\> \sdot \bullet,\quad \sff{quit} \sdot \bullet \},
                \\
                \sff{quit} \sdot b \mto s {:} \sff{quit} \sdot s \mto m \{ \sff{deliver}\<\sff{str}\> \sdot \bullet,\quad \sff{quit} \sdot \bullet \}
        \end{array} \right\}
    \end{align*}
    The new protocol is still merge well-formed, but it is now relative well-formed too; the relative projection under $s$ and $m$ is defined:
    \begin{align*}
        G'_\sff{mwf} \wrt (s,m) = \gskip^2 \sdot s \left\{ \begin{array}{@{}l@{}}
                \sff{deliver}\<\sff{address}\> \sdot \bullet,
                \\
                \sff{quit} \sdot \bullet
        \end{array} \right\}
    \end{align*}
    This modification may not be ideal, though, because $s$ can quit the protocol even if $b$ has ok'ed the transaction, and that $s$ can still invoke a delivery even if $b$ has quit the transaction.
\end{example*}

\end{document}